\title{A New Approach to Arguments of Quantum Knowledge}
\author{James Bartusek \thanks{Columbia University.}
\and
Ruta Jawale\thanks{UIUC.}
\and
Justin Raizes\thanks{NTT Research.}
\and
Kabir Tomer\thanks{UIUC. Part of this work was done while the authors were visiting the Simons Institute for the Theory of
Computing}
}
\date{}
\newcommand{\ruta}[1]{ }
\newcommand{\james}[1]{ }
\newcommand{\justin}[1]{ }
\newcommand{\kabir}[1]{ }
\newcommand{\negpar}[1][-1em]{%
  \ifvmode\else\par\fi
  {\parindent=#1\leavevmode}\ignorespaces
}
\crefname{figure}{Figure}{Figures}
\crefname{equation}{Equation}{Equations}
\newtheorem{theorem}{Theorem}[section]
\newtheorem{construction}[theorem]{Construction}
\crefname{theorem}{Theorem}{Theorems}
\newtheorem{informaltheorem}[theorem]{Informal Theorem}
\crefname{informaltheorem}{Informal Theorem}{Informal Theorems}
\theoremstyle{definition}
\newtheorem{definition}[theorem]{Definition}
\crefname{definition}{Definition}{Definitions}
\newtheorem{corollary}[theorem]{Corollary}
\crefname{corollary}{Corollary}{Corollaries}
\crefname{conjecture}{Conjecture}{Conjectures}
\crefname{proposition}{Proposition}{Propositions}
\newtheorem{lemma}[theorem]{Lemma}
\crefname{lemma}{Lemma}{Lemmas}
\newtheorem{claim}[theorem]{Claim}
\crefname{claim}{Claim}{Claims}
\crefname{fact}{Fact}{Facts}
\crefname{section}{Section}{Sections}
\crefname{construction}{Construction}{Constructions}
\crefname{definition}{Definition}{Definitions}
\newtheoremstyle{myremark}     {10pt}{10pt}{}{}{\scshape}{.}{.5em}{}
\theoremstyle{myremark}
\newtheorem{remark}{Remark}[section]
\Crefname{remark}{Remark}{Remarks}
\newcommand{\poly}{\ensuremath{\mathsf{poly}}\xspace}
\newcommand{\negl}{\ensuremath{\mathsf{negl}}\xspace}
\newcommand{\Eval}{\ensuremath{\mathsf{Eval}}\xspace}
\newcommand{\crs}{\ensuremath{\mathsf{crs}}}
\newcommand{\Gen}{\ensuremath{\mathsf{Gen}}\xspace}
\newcommand{\PRO}{\ensuremath{\mathsf{PrO}}\xspace}
\newcommand{\QPRO}{\ensuremath{\mathsf{QPrO}}\xspace}
\newcommand{\Samp}{\ensuremath{\mathsf{Samp}}\xspace}
\newcommand{\ObfM}{\ensuremath{\widetilde{\cM}}\xspace}
\newcommand{\ObfC}{\ensuremath{\widetilde{C}\xspace}}
\newcommand{\ObfV}{\ensuremath{\widetilde{\cV}}\xspace}
\newcommand{\CSA}{\ensuremath{\mathsf{CSA}}\xspace}
\newcommand{\Obf}{\ensuremath{\mathsf{Obf}}\xspace}
\newcommand{\SimGen}{\ensuremath{\mathsf{SimGen}}\xspace}
\newcommand{\SimObf}{\ensuremath{\mathsf{SimObf}}\xspace}
\newcommand{\JLLWObf}{\ensuremath{\mathsf{JLLWObf}}\xspace}
\newcommand{\Expand}{\ensuremath{\mathsf{Expand}}\xspace}
\newcommand{\flag}{\ensuremath{\mathsf{flag}}\xspace}
\newcommand{\info}{\ensuremath{\mathsf{info}}\xspace}
\newcommand{\normal}{\ensuremath{\mathsf{normal}}\xspace}
\newcommand{\otp}{\ensuremath{\mathsf{otp}}\xspace}
\newcommand{\hybsf}{\ensuremath{\mathsf{hyb}}\xspace}
\newcommand{\simsf}{\ensuremath{\mathsf{sim}}\xspace}
\newcommand{\ZXVer}{\mathsf{ZXVer}}
\newcommand{\Had}{\ensuremath{\mathsf{Had}}\xspace}
\newcommand{\List}{\ensuremath{\mathsf{List}}\xspace}
\newcommand{\Ext}{\ensuremath{\mathsf{Ext}}}
\newcommand{\xor}{\oplus}
\newcommand{\tensor}{\otimes}
\DeclareMathOperator*{\expectation}{\mathbb{E}}
\newcommand{\E}{\expectation}
\newcommand{\chal}{\mathsf{chal}}
\newcommand{\Open}{\mathsf{Open}}
\newcommand{\Opened}{\Open(\chal)}
\newcommand{\Obfr}{\widetilde{r}}
\newcommand{\NP}{\mathsf{NP}}
\newcommand{\QMA}{\mathsf{QMA}}
\newcommand{\com}{\mathsf{com}}
  \newcommand{\cA}{\ensuremath{{\mathcal A}}\xspace}
  \newcommand{\cB}{\ensuremath{{\mathcal B}}\xspace}
  \newcommand{\cC}{\ensuremath{{\mathcal C}}\xspace}
  \newcommand{\cD}{\ensuremath{{\mathcal D}}\xspace}
  \newcommand{\cH}{\ensuremath{{\mathcal H}}\xspace}
  \newcommand{\cI}{\ensuremath{{\mathcal I}}\xspace}
  \newcommand{\cJ}{\ensuremath{{\mathcal J}}\xspace}
  \newcommand{\cK}{\ensuremath{{\mathcal K}}\xspace}
  \newcommand{\cM}{\ensuremath{{\mathcal M}}\xspace}
  \newcommand{\cO}{\ensuremath{{\mathcal O}}\xspace}
  \newcommand{\cP}{\ensuremath{{\mathcal P}}\xspace}
  \newcommand{\cQ}{\ensuremath{{\mathcal Q}}\xspace}
  \newcommand{\cR}{\ensuremath{{\mathcal R}}\xspace}
  \newcommand{\cS}{\ensuremath{{\mathcal S}}\xspace}
  \newcommand{\cV}{\ensuremath{{\mathcal V}}\xspace}
\newcommand{\zo}{\ensuremath{{\{0,1\}}}\xspace}
\newcommand{\secpar}{\lambda}
\newcommand{\st}{\: : \:}
\newcommand{\cL}{\ensuremath{{\mathcal L}}\xspace}
\newcommand{\Dec}{\mathsf{Dec}}
\newcommand{\Enc}{\mathsf{Enc}}
\newcommand{\KeyGen}{\mathsf{KeyGen}}
\newcommand{\sP}{\mathsf{P}}
\newcommand{\sV}{\mathsf{V}}
\newcommand{\Sim}{\mathsf{Sim}}
\newcommand{\Com}{\mathsf{Com}}
\newcommand{\td}{\mathsf{td}}
\newcommand{\bbE}{\mathbb{E}}
\newcommand{\bbR}{\mathbb{R}}
\newcommand{\bbF}{\mathbb{F}}
\newcommand{\bbN}{\mathbb{N}}
\newcommand{\commentout}[1]{}
\newcommand{\Ver}{\ensuremath{\mathsf{Ver}}\xspace}
\newcommand{\NIZK}{\ensuremath{\mathsf{NIZK}}\xspace}
\newcommand{\Verify}{\ensuremath{\mathsf{Verify}}\xspace}
\newcommand{\Setup}{\ensuremath{\mathsf{Setup}}\xspace}
\renewcommand{\Open}{\ensuremath{\mathsf{Open}}\xspace}
\newcommand{\Prove}{\ensuremath{\mathsf{Prove}}\xspace}
\newcommand{\Exp}{\ensuremath{\mathsf{Exp}}\xspace}
\newcommand{\Sup}{\ensuremath{\mathsf{Sup}}\xspace}
\newcommand{\Hyb}{\ensuremath{\mathsf{Hyb}}\xspace}
\newcommand{\pp}{\ensuremath{\mathsf{pp}}\xspace}
\newcommand{\sk}{\ensuremath{\mathsf{sk}}\xspace}
\newcommand{\pk}{\ensuremath{\mathsf{pk}}\xspace}
\newcommand{\ct}{\ensuremath{\mathsf{ct}}\xspace}
\newcommand{\Combine}{\ensuremath{\mathsf{Combine}}\xspace}
\newcommand{\secp}{\ensuremath{\lambda}\xspace}
\newcommand{\ZK}{\ensuremath{\mathsf{ZK}}\xspace}
\newcommand{\OBF}{\ensuremath{\mathsf{OBF}}\xspace}
\newcommand{\ignore}[1]{}
\newcommand{\identity}{I}
\newcommand*{\defeq}{\mathrel{\vcenter{\baselineskip0.5ex \lineskiplimit0pt \hbox{\scriptsize.}\hbox{\scriptsize.}}}%
=}
\DeclareMathOperator{\expect}{\mathbb{E}}
\newcommand{\concat}{\|}
\newcommand{\rand}{\$\$}
\newcommand{\Keys}{\mathsf{Keys}}
\newcommand{\Handles}{\mathsf{Handles}}
\newcommand{\FE}{\mathsf{FE}}
\newcommand{\pred}{\varphi}
\newcommand{\PRP}{\mathsf{PRP}}
\newcommand{\Bloat}{\mathsf{Bloat}}
\newcommand{\difference}[1]{{\color{red} #1}}
\newcommand{\diff}[1]{\difference{#1}}
\newcommand{\ATI}{\mathsf{ATI}}
\newcommand{\Accept}{\mathsf{Accept}}
\newcommand{\Reject}{\mathsf{Reject}}
\newcommand{\TD}{\mathsf{TD}}
\begin{document}
\pagebreak

\maketitle

\begin{abstract}

    We construct a publicly-verifiable non-interactive zero-knowledge argument system for QMA with the following properties of interest.

\begin{itemize}
    \item \textbf{Transparent setup.} Our protocol only requires a uniformly random string (URS) setup. The only prior publicly-verifiable NIZK for QMA (Bartusek and Malavolta, ITCS 2022) requires an \emph{entire obfuscated program} as the common reference string.
    \item \textbf{Extractability.} Valid QMA witnesses can be extracted directly from our accepting proofs. That is, we obtain a publicly-verifiable non-interactive argument of \emph{quantum knowledge}, which was previously only known in a privately-verifiable setting (Coladangelo, Vidick, and Zhang, CRYPTO 2020).
\end{itemize}

Our construction introduces a novel type of ZX QMA verifier with "strong completeness" and builds upon the coset state authentication scheme from (Bartusek, Brakerski, and Vaikuntanathan, STOC 2024) within the context of QMA verification. Along the way, we establish new properties of the authentication scheme.

The security of our construction rests on the heuristic use of a post-quantum indistinguishability obfuscator. Rather than rely on the full-fledged classical oracle model (i.e.\ ideal obfuscation), we isolate a particular game-based property of the obfuscator that suffices for our proof, which we dub the \emph{evasive composability} heuristic.

As an additional contribution, we study a general method for replacing heuristic use of obfuscation with heuristic use of hash functions in the post-quantum setting. In particular, we establish security of the ideal obfuscation scheme of Jain, Lin, Luo, and Wichs (CRYPTO 2023) in the \emph{quantum} pseudorandom oracle model (QPrO), which can be heuristically instantiated with a hash function. This gives us NIZK arguments of quantum knowledge for QMA in the QPrO, and additionally allows us to translate several quantum-cryptographic results that were only known in the classical oracle model to results in the QPrO.
\end{abstract}

\newpage

\tableofcontents

\section{Introduction}\label{sec:intro}

Inspecting the proof of a mathematical statement generally reveals significantly more information than the fact that the statement is true. Remarkably, \cite{BFM88} (building on an earlier \emph{interactive} system \cite{GMR}) demonstrated that this need not always be the case, using cryptography to produce convincing proofs of NP statements that reveal nothing beyond the validity of the statement. This was the first example of a \emph{non-interactive zero-knowledge} (NIZK) argument, which is now considered one of the most basic and natural cryptographic primitives with numerous applications throughout cryptography.

While it was established by \cite{GO94} that any non-interactive argument for a language outside of BPP requires some form of setup, the widespread utility of NIZKs rests on the \emph{public verifiability} of proofs with respect to the setup. That is, the setup algorithm must output some \emph{public} information $\crs$ that can be used by any user in order to verify a purported proof $\pi$. Hence, a NIZK argument for some language $\cL$ has the following syntax.

\begin{itemize}
    \item $\Setup(1^\secp) \to \crs$. The Setup algorithm produces some public parameters $\crs$.
    \item $\Prove(\crs,x) \to \pi$. The Prove algorithm takes $\crs$ and an instance $x \in \cL$, and produces a proof $\pi$ of this fact.
    \item $\Verify(\crs,x,\pi)$. The Verify algorithm takes $\crs$, an instance $x$, and a proof $\pi$, and either accepts or rejects.
\end{itemize}

Following \cite{BFM88}, the community has developed an impressive array of approaches (e.g.\ \cite{GO94,FLS,fischlin,GOS,SW,C:PeiShi19,JJ21,Waters24,WWW25}) to obtaining NIZK arguments for NP with security based upon various cryptographic assumptions.

\paragraph{Quantumly-provable statements.} Unfortunately, the situation changes dramatically if we expand the goal to proving mathematical statements in zero-knowledge whose proofs involve quantum information. That is, does there exist NIZK arguments for all of QMA? Perhaps surprisingly, given how well-understood NIZKs are in the classical setting, the existence of NIZKs for QMA from standard assumptions has remained wide open. 

In fact, there is currently only one result in the literature \cite{BM22} that proposes a candidate NIZK for QMA. This protocol builds on techniques from the area of \emph{classical verification of quantum computation} (CVQC). In particular, the public parameters $\crs$ contain an \emph{obfuscated program} that runs the CVQC verifier. In order to argue security, \cite{BM22} model this obfuscation as a classical oracle, an approach that can also be used to establish the existence of publicly-verifiable CVQC. However, the existence of both NIZKs and QMA and publicly-verifiable CVQC from standard assumptions is still unknown, and it has remained unclear whether these applications are fundamentally linked.

Instead, the community has developed several protocols for a significantly weaker notion of NIZK for QMA that requires some form of \emph{private} setup. For example, there exist several protocols from standard assumptions in the ``secret parameters model'', where we assume a trusted third-party that samples (potentially quantum) private randomness $r_P$ for the prover and $r_V$ for the verifier \cite{TCC:ACGH20, C:ColVidZha20,C:Shmueli21,C:BCKM21a,AC:MorYam22} (we note that some only require private randomness for the verifier, but not the prover). The community has also considered the ``shared EPR pair model'', where we assume that the prover and verifer begin the protocol with several shared EPR pairs \cite{AC:MorYam22,C:BarKhuSri23}. However, this model has the same drawbacks as the secret parameters model, as security relies on the assumption that these EPR pairs are set up correctly by an honest dealer, and they can only be used by the parties who receive them from the dealer. 

\paragraph{Transparent setup.} Even among the standard notion of publicly-verifiable NIZKs, there is an important distinction to make between \emph{transparent} and \emph{private-coin} setup. While a private-coin setup requires the Setup algorithm to sample $\crs$ from some structured distribution, a transparent setup only requires that $\crs$ is a uniformly random string. Thus, transparent setups are by far the easiest to realize in practice.  

Examples of transparent setups include the \emph{uniform reference string}, or URS, model,\footnote{Sometime this is referred to as a \emph{common random string}, or CRS, but this is often confused with the notion of a common \emph{reference} string, which may be structured.} and the \emph{random oracle model} (ROM). By now, we have several approaches for realizing NIZKs for all of NP with transparent setup (e.g.\ \cite{FLS,fischlin,C:PeiShi19}). However, as mentioned above, the only previously proposed NIZK for QMA \cite{BM22} has a highly structured $\crs$ that includes an obfuscated program. Thus, the following question has remain unaddressed.

\begin{center}
    \emph{Do there exist non-interactive zero-knowledge arguments for QMA \emph{with transparent setup}?}
\end{center}

\paragraph{Knowledge soundness.} In addition to minimizing the complexity of Setup, another important goal in the design of argument systems is to strengthen the soundness property. Traditionally, soundness guarantees that the prover cannot convince the verifier to accept a proof relative to any ``no'' instance. However, we often want to capture the idea that in order for a prover to produce a convincing proof (even of a ``yes'' instance), then they must \emph{possess} a valid witness. This is formalized by requiring that an accepting witness can be \emph{extracted} from any prover that manages to convince the verifier to accept its proof. Again, this property gives meaningful guarantees even when the statement to be proven is true, and has been broadly useful in the classical setting, for example in the area of anonymous credentials (e.g.\ \cite{CvH91,CL01}). 

While knowledge-soundness is again quite well-understood in the classical setting, we have far less convincing results in the quantum setting. The only non-interactive protocol that has been shown to have knowledge soundness is that of \cite{C:ColVidZha20}, which is in the secret parameters model. Thus, the following question has also been left unresolved.

\begin{center}
    \emph{Do there exist publicly-verifiable non-interactive zero-knowledge arguments \emph{of knowledge} for QMA?}
\end{center}

\subsection{Results}

In this work, we address both questions simultaneously by presenting a NIZK argument of quantum knowledge for QMA with transparent setup.

\begin{informaltheorem}
    Assuming any (post-quantum) NIZK argument of knowledge for NP with transparent setup, there exists a NIZK argument of quantum knowledge for QMA with transparent setup making heuristic use of a post-quantum obfuscator for classical computation.
\end{informaltheorem}

We observe that there exist post-quantum NIZK arguments of knowledge for NP in the URS model from LWE, by building on \cite{C:PeiShi19}. Thus, we obtain NIZK arguments of quantum knowledge for QMA in the URS model from standard cryptographic assumptions plus the heuristic use of a post-quantum obfuscator.

In fact, our protocol satisfies a very natural \emph{straightline extraction} property, where if the extractor programs the URS, it is then able to directly recover a witness from the prover's proof. Moreover, our protocol has no knowledge error, and we are able to extract a QMA witness of quality $1-\gamma$ for some arbitrarily small inverse polynomial $\gamma$ (see \cref{subsec:tech-protocol} for more discussion).

\paragraph{Evasive composability.} Next, we provide more detail about the heuristic manner in which we use the classical obfuscator. 

As discussed below in the technical overview, for much of the proof (of zero-knowledge) we use the standard indistinguishability property of the obfuscator. However, in one key step we resort to what we call the \emph{evasive composability heuristic}:

\begin{definition}[Evasive Composability Heuristic, simplified and informal]
    Let $\Obf$ be an obfuscator, and $\mathcal{S}$ be any ``non-contrived'' sampler that outputs two classical circuits $C_0,C_1$ along with some (potentially quantum) side information $\ket{\psi}$.\\ 
    
    \textbf{IF} for any QPT adversary $\mathcal{A}$ and each $b \in \{0,1\}$, it holds that
    \[\bigg| \Pr_{\ket{\psi},C_0,C_1 \gets \mathcal{S}}\left[\mathcal{A}(\ket{\psi},\Obf(C_b)) = 1\right] - \Pr_{\ket{\psi},C_0,C_1 \gets \mathcal{S}}\left[\mathcal{A}(\ket{\psi},\Obf(\mathsf{NULL})) = 1\right]\bigg| = \negl(\secp),\] \textbf{THEN} it holds that \[\bigg| \Pr_{\ket{\psi},C_0,C_1 \gets \mathcal{S}}\left[\mathcal{A}(\ket{\psi},\Obf(C_0 \| C_1)) = 1\right] - \Pr_{\ket{\psi},C_0,C_1 \gets \mathcal{S}}\left[\mathcal{A}(\ket{\psi},\Obf(\mathsf{NULL})) = 1\right]\bigg| = \negl(\secp),\] where $\mathsf{NULL}$ is the always-rejecting circuit, and $C_0 \| C_1$ is the ``composed'' circuit that maps $(b,x) \to C_b(x)$.
\end{definition}

Informally, this heuristic asserts that obfuscating a circuit composed of sub-circuits whose obfuscations are indistinguishable from null, is itself indistinguishable from null. While this appears to be quite reasonable for natural choice of samplers $\mathcal{S}$, we remark that there do exist contrived samplers (involving ``self-eating'' circuits) that violate the statement \cite{Barak01}. However, in our case we only need evasive composability to hold for a specific class of samplers, which do not seem to possess the self-referential properties used in the \cite{Barak01} impossibility result. Moreover, it is easy to see that this heuristic holds for \emph{any} choice of $\mathcal{S}$ in the classical oracle model, where $\Obf$ is modeled as a black-box. 

\paragraph{Comparison with \cite{BM22}.} As mentioned earlier, there is only one other candidate publicly-verifiable NIZK for QMA \cite{BM22}, and it is worth comparing our results a little more closely. We consider our approach to have three key benefits over \cite{BM22}.
\begin{itemize}
    \item Our protocol only requires a \emph{uniformly random string} setup, while \cite{BM22} requires a highly structured obfuscated program as the shared string.
    \item Our protocol satisfies a natural argument of quantum knowledge property with straightline extraction, which is not satisfied by \cite{BM22}.  
    \item While both approaches make heuristic use of classical obfuscation, we isolate our heuristic use to the evasive composability heuristic, while it is unclear how to do so with \cite{BM22}. We thus hope that our approach will yield more progress towards the long-standing goal of obtaining NIZKs for QMA from standard assumptions. In particular, our protocol does not rely on techniques that also yield publicly-verifiable CVQC (to the best of our knowledge), giving hope that NIZKs for QMA might be ``easier'' to construct than publicly-verifiable CVQC.
\end{itemize}

On the other hand, we remark that the \cite{BM22} arguments are \emph{classical} and \emph{succinct}, while ours are \emph{quantum}, and grow with the size of the witness, which allows us to establish the straightline knowledge extraction property. Thus, the results are strictly incomparable. Indeed, as we will see in the technical overview, our technical approach is completely different from that of \cite{BM22}.

\paragraph{The quantum pseudorandom oracle model.} As it stands, both known approaches (ours and \cite{BM22}) to NIZKs for QMA, as well as several other results in quantum cryptography, rely on the heuristic use of a classical obfuscator. However, if we are to rely on heuristic arguments, it is certainly in our interest to reduce the complexity of the object that we treat heuristically. For instance, it is a much more common practice in cryptography to make heuristic use of a simple \emph{hash function}, which can instantiated with cryptographic hashes such as SHA2 or SHA3. 

As a contribution of independent interest, we show that heuristic use of obfuscation in the \emph{post-quantum} setting can often be replaced with heuristic use of a hash function. To do so, we adapt the recently-introduced \emph{pseudorandom oracle}, or PrO, model \cite{JLLW23} to the post-quantum setting.

The pseudorandom oracle model is defined with respect to some pseudorandom function $\{f_k\}_k$. It internally samples a uniformly random permutation $\pi$ and presents the following interfaces:

    \begin{itemize}
        \item $\PRO(\Gen,k) \to \pi(k)$
        \item $\PRO(\Eval,h,x) \to f_{\pi^{-1}(h)}(x)$
    \end{itemize}

That is, one can generate \emph{handles} $\pi(k)$ corresponding to PRF keys $k$, which can be used to evaluate the function but reveal nothing about the key itself. As argued in \cite{JLLW23}, one can plausibly instantiate the PrO using a cryptographic hash function such as $\mathsf{SHA3}$, where $\PRO(\Gen,k) \to k$ and $\PRO(\Eval,k,x) \to \mathsf{SHA3}(k,x)$. Although there is clearly a mismatch with the idealized model in that the permutation $\pi$ is instantiated with the \emph{identity}, \cite{JLLW23} argue that this is justified based on the heuristic understanding that $\mathsf{SHA3}$ behaves likes a ``self-obfuscated'' PRF. Thus, just like the \emph{random} oracle model, we consider the pseudorandom oracle model to be a transparent setup assumption, as it can be plausibly instantiated using the public description of a cryptographic hash function.

Now, while \cite{JLLW23}'s main result was to show how to construct ideal obfuscation for classical circuits in the PrO model from functional encryption, they did not address the \emph{post-quantum} setting, where the PrO may be accessed in \emph{quantum superposition}. In this work, we fill that gap, and prove the following result of independent interest.

\begin{informaltheorem}
    Assuming sub-exponentially secure functional encryption, there exists (post-quantum) ideal obfuscation in the quantum pseudorandom oracle (QPrO) model.
\end{informaltheorem}

As corollaries, we obtain several results in the QPrO that were previously only known in the full-fledged classical oracle model (e.g.\ witness encryption for QMA and publicly-verifiable CVQC \cite{BM22}, copy-protection for all unlearnable functionalities \cite{C:ALLZZ21}, obfuscation for various classes of quantum circuits \cite{BKNY23,BBV24,HT25}, and quantum fire \cite{DBLP:journals/iacr/CakanGS25}).

Due to a non-black-box use of obfuscation in our NIZK argument, the above theorem doesn't immediately imply NIZKs of knowledge for QMA in the QPrO. However as we explain further in the technical overview, we do manage to show this result, encapsulated in the following theorem. 

\begin{informaltheorem}
    Assuming (post-quantum) NIZK arguments of knowledge for NP with transparent setup and (post-quantum) sub-exponentially secure functional encryption, there exists NIZK arguments of knowledge for QMA in the QPrO model with transparent setup.
\end{informaltheorem}

\section{Technical Overview}

\subsection{Approach}
 
From a bird's eye view, our approach is fairly natural: Given a $\QMA$ instance $x$, witness $\ket{\psi}$, and a $\QMA$ verification measurement $\mathcal{M}$, the proof consists of an appropriate ``encoding'' of $\ket{\psi}$, an appropriate ``obfuscation'' of the measurement $\mathcal{M}$, and a NIZK (for NP) argument that the obfuscation and encoding have been prepared honestly. 

However, it is not clear how to instantiate this approach. First of all, we don't currently have candidates for obfuscating arbitrary quantum measurements (let alone doing so in a provably-correct manner). Moreover, it is also in general unclear how to use a proof for NP (or even QMA!) to prove facts about quantum states, e.g.\ that the witness was encoded honestly.

Despite these obstacles, we show that a careful choice of the $\QMA$ verifier enables us to leverage techniques from obfuscation of quantum computation \cite{BBV24} to achieve a publicly-verifiable NIZK (of knowledge) for QMA. In particular, we use the \cite{BBV24} ``coset-state authentication'' scheme in order to encode the witness $\ket{\psi}$, which encodes each qubit according to the map
\[\CSA.\Enc: \ \ \ket{0} \to X^xZ^z\ket{S}, \ \ \ \ket{1} \to X^xZ^z\ket{S+\Delta},\] where $S$ is a random subspace of $\bbF_2^\secp$, and $x,z,\Delta$ are random vectors. This encoding scheme admits a classical circuit $\CSA.\Ver$ that can be used to \emph{verify} membership in the codespace, as well as classical circuits $\CSA.\Dec_0$ and $\CSA.\Dec_1$ that can be used to \emph{measure} the encoded state in the standard and Hadamard basis respectively.\footnote{Strictly speaking, these classical circuits must be applied in \emph{quantum superposition} in order to realize these verification and measurement functionalities.}

Before going further into the details, we sketch the high-level overview of our NIZK for $\QMA$. Let $\cM$ be a quantum verifier for a $\QMA$ promise problem $(\cL_{yes}, \cL_{no})$. We construct a protocol $(\Setup, \sP, \sV)$ with the following form.
\begin{itemize}
    \item $\Setup$ outputs the commom random string $\crs$ for a NIZK for $\NP$.
    \item $\sP$ takes input the $\crs$, an instance $x \in \cL_{yes}$, and a corresponding quantum witness $\ket{\psi}$. It outputs an encoding of the witness $\ket*{\widetilde{\psi}} = \CSA.\Enc(\ket{\psi})$, a classical obfuscation $\ObfV$ of the codespace membership tester $\CSA.\Ver$, a classical obfuscation of the quantum verifier $\ObfM$ (which will be derived from the $\CSA.\Dec_0$ and $\CSA.\Dec_1$ circuits in a manner described below), and a NIZK for $\NP$ proof $\pi$ that the obfuscations were prepared honestly.
    \item $\sV$ takes input the $\crs$, the instance $x$, the encoded witness $\ket*{\widetilde{\psi}}$, the obfuscations $\ObfV, \ObfM$, and proof $\pi$. It accepts iff (1) the NIZK for $\NP$ verifier accepts $(\crs, \pi)$, (2) the tester $\ObfV$ accepts $\ket*{\widetilde{\psi}}$, and (3) the quantum verifier $\ObfM$ accepts $\ket*{\widetilde{\psi}}$.
\end{itemize}

Before going further, we discuss some necessary background.

\paragraph{Background: ZX verifiers.} We recall~\cite{BL08,CM16,MNS16} that any $\QMA$ language can be verified using just standard and Hadamard basis measurements, followed by some classical post-processing. It will be convenient for us to describe such a verifier's behavior as a collection of \emph{coherent} ZX measurements on $n$ qubits, where each ZX measurement is specified by a sequence of bases $\theta \in \{0,1\}^n$ and a function $f: \{0,1\}^n \to \{0,1\}$. That is, a ZX measurement $M[\theta,f]$ is defined by the projector
\[M[\theta,f] \coloneqq H^\theta\left(\sum_{x:f(x) = 1}\ketbra{x}{x}\right)H^\theta,\]

and the $\QMA$ verifier is specified by some collection $\{M[\theta_i,f_i]\}_{i \in [N]}$ of ZX measurements. To verify a proof $\ket{\psi}$, it:

\begin{itemize}
    \item Samples $i \gets [N]$.
    \item Applies $\{\Pi[\theta_i,f_i], I - \Pi[\theta_i,f_i]\}$ to $\ket{\psi}$, and accepts if the measurement accepts.
\end{itemize}

Specifying the verifier in this manner is quite promising for instantiating our template above, as the $\CSA$ scheme supports ZX measurements on encoded states. However, note that the verifier doesn't only apply a ZX measurement -- it first must \emph{sample} the choice of measurement $i \gets [N]$ that it will perform. This seemingly innocuous sampling step actually introduces a subtle issue in finalizing the instantiation of our template.

\paragraph{Who samples the randomness?} One could imagine two ways to handle the sampling of $i \gets [N]$ in our NIZK for QMA. One strategy would have the \emph{prover} sample $i$, and then only send over an obfuscation of the $\CSA$ decoder for the particular choice of measurement $M[\theta_i,f_i]$. Unfortunately, this completely breaks soundness, as it may be possible for the prover to find some state $\ket{\psi}$ that is accepted by some \emph{fixed} measurement $M[\theta_i,f_i]$ (even if there is no state $\ket{\psi}$ that is accepted with high probability over the random choice of measurement).

Another strategy would be to have the prover obfuscate $\CSA$ decoders for the \emph{entire set} of ZX measurements, and then have the verifier choose which one to apply. Unfortunately, for traditional $\QMA$ verifiers (say, parallel repetition of XX/ZZ Hamiltonians), this completely breaks zero-knowledge, as there may exist two accepting witnesses $\ket{\psi_0}, \ket{\psi_1}$ and some choice of measurement $M[\theta_i,f_i]$ such that $M[\theta_i,f_i]$ accepts $\ket{\psi_0}$ and $\ket{\psi_1}$ with very different probabilities.

\paragraph{QMA verification with strong completeness.} We resolve this tension by sticking with the second choice, but explicitly designing a ZX verifier that does not have this issue. In particular, we say that a ZX verifier has \emph{strong completeness} if for every yes instance, there exists a witness $\ket{\psi}$ such that for \emph{all} choices of $i \in [N]$, it holds that 
\[\|M[\theta_i,f_i]\ket{\psi} \|^2 = 1-2^{-n},\] where $n$ is the instance size. We also require negligible soundness, where for any no instance and state $\ket{\psi}$,  \[
            \E_{i \gets [N]} \left[\| M[\theta_{i},f_{i}]\ket{\psi} \|^2\right] 
            = \negl(n).
        \]

That is, we boost the completeness guarantee to $1-2^{-n}$ for every choice of measurement, rather than just on average over the choice of measurement. Formally, we show that every promise problem in $\QMA$ has a ZX verifier with strong completeness, which may be of independent interest. We accomplish this with what we call a ``permuting QMA verifier''.

\paragraph{Permuting Verifier for QMA.} 
The permuting verifier is a modification of the standard parallel repetition amplification for QMA. 
Imagine that the verifier was given a large register which (allegedly) contained many copies of the same witness. 
Instead of sampling a measurement independently for each copy, the verifier starts with a fixed list of measurements containing each $M[\theta_i, f_i]$ many times, then permutes it randomly. Then, it applies the permuted list of measurements to the witness register and accepts if the majority of them accept.\footnote{Technically speaking, there is some weighting of the measurements involved, which we ignore here for the sake of exposition.}

If the verifier really were given many copies of the same witness, then the permutation does not matter; the verifier is just applying each $M[\theta_i, f_i]$ many times to the same state.
With a slight change in perspective, the verifier is simply estimating the outcome distribution of each measurement on the witness by performing it many times. This has high accuracy, so the verifier is convinced with overwhelming probability.

The case of soundness is more complicated. If each measurement were sampled independently at random, we could argue that each index constitutes its own QMA verifier, and so should reject with a fixed probability. Unfortunately, the measurements are highly correlated because there are a fixed number of each $M[\theta_i, f_i]$.

The saving grace is that the independent distribution is \emph{heavily} concentrated around its mean. We can view the independent distribution as first sampling the number of times to apply each $M[\theta_i, f_i]$, then permuting the resulting list.
The number of times each measurement appears in the independent list versus the fixed list is very close with high probability. If we were to replace each difference by a measurement that always accepts in the independent list, the number of modifications is very small relative to the overall size of the list. Finally, we can show that the number of accepting measurements can only increase by the number of replacements, which is not enough to bridge the gap between a NO instance and a YES instance. More details can be found in \cref{sec:qma-ver}.

\subsection{The protocol and analysis}\label{subsec:tech-protocol}

We are now ready to present our protocol in some more detail, and then give high-level, informal, overviews of our proofs of knowledge-soundness and zero-knowledge. 

Recall that we not only want to obfuscate the $\CSA$ algorithms, we also want to \emph{prove} correctness of these obfuscations. For reasons that we expand on below in \cref{subsec:QPRO-overview}, we define an abstraction called \emph{provably-correct obfuscation} which has all of the properties that we require. 

Let $\Obf$ be a provably-correct obfuscation with respect to public parameters $\crs$ (concretely, think of $\crs$ as the public parameters for some NIZK of knowledge for NP). Let $x$ be an instance, $\{M[\theta_i,f_i]\}_{i \in [N]}$ be the corresponding ZX verifier with strong completeness, and let $\ket{\psi}$ be a witness for $x$. Then our protocol operates as follows.

\begin{itemize}

    \item The prover $\sP$ takes input the $\crs$ and the witness $\ket{\psi}$. It computes $\ket*{\widetilde{\psi}} = \CSA.\Enc(\ket{\psi})$, sets $\ObfV = \Obf(\CSA.\Ver)$, and sets 
    \[\ObfV,\{\ObfM_i\}_{i \in [N]} = \Obf\left(\crs,\CSA.\Ver \| \{\CSA.\Dec_i\}_{i \in [N]}\right),\]

    where we are obfuscating the \emph{concatenation} of all $N+1$ programs, and parsing the resulting obfuscation as the part $\ObfV$ that can be used to evaluate $\CSA.\Ver$ and the parts $\ObfM_i$ that can be used to evaluate each $\CSA.\Dec_i$. Here, $\CSA.\Dec_i$ refers to the $\CSA$ algorithm that measures the encoded state according to $M[\theta_i,f_i]$.

    \item $\sV$ takes input the $\crs$, the instance $x$, the encoded witness $\ket*{\widetilde{\psi}}$, and the obfuscation $(\ObfV,\{\ObfM_i\}_i)$. It first checks that the obfuscation is well-formed using $\crs$. Then, it checks that applying $\ObfV$ to $\ket*{\widetilde{\psi}}$ accepts. Finally, it accepts if $\ObfM_i$ accepts $\ket*{\widetilde{\psi}}$ in expectation over the choice of $i \gets [N]$.
\end{itemize}

\paragraph{Argument of quantum knowledge.} We first show that our protocol satisfies a straightline (i.e. non-rewinding) argument of quantum knowledge property. To do so, we require that the provably-correct obfuscation satisfies an extraction property, namely that there exists an extractor that, given an obfuscated program, can extract the description of the plaintext program. Given this ability, our QMA extractor is quite natural, and consists of two parts $(\Ext_0, \Ext_1)$.

\begin{itemize}
    \item $\Ext_0$ outputs the public parameters $\crs$ along with a trapdoor $\td$ for the provably-correct obfuscation extractor.
    \item $\Ext_1$ takes input $\left(\td, \pi = (\ket*{\widetilde{\psi}}, \ObfV, \{\ObfM_i\}_i)\right)$. It uses the provably-correct obfuscation extractor to extract the description of $\CSA.\Ver$, which contains a description of the $\CSA$ \emph{authentication key} $k$. Given $k$, it can undo the encoding on the state $\ket*{\widetilde{\psi}}$ in order to obtain an underlying state $\ket{\psi}$.

\end{itemize}

It remains to analyze the extracted state $\ket{\psi}$. In particular, in the case that the verifier would have accepted, must the extracted state be a QMA witness of high quality?

In the description of the protocol above, we left the final line of the verifier's action somewhat unclear. For example, suppose that the verifier simply samples a uniform $i \gets [N]$, applies $\widetilde{\cM}_i$ to $\ket*{\widetilde{\psi}}$, and accepts if the measurement accepts. By the negligible soundness of the ZX verifier, it holds that if the verifier accepts with non-negligible probability, then the instance must have been a yes instance. Thus, this argument would be good enough to establish standard soundness of the protocol.

However, accepting with non-negligible probability only guarantees that the witness encoded by the prover would be accepted by the QMA verifier with some non-negligible probability. That is, the encoded state $\ket{\psi}$ might be a witness of some arbitrarily small inverse polynomial ``quality'', meaning that the QMA verifier accepts $\ket{\psi}$ with some inverse polynomial probability $\gamma = 1/\poly$. Thus, a natural question is whether it is possible to guarantee extraction of \emph{high} quality witnesses, namely states $\ket{\psi}$ that cause the QMA verifier to accept with probability close to 1.

To do so, the verifier will need to implement a procedure that essentially distinguishes between witnesses of varying quality. Fortunately, \cite{TCC:Zhandry20} derived a method for approximating the probability that a given state is accepted by any \emph{mixture of projective measurements}, which we can take to be the set $\{\widetilde{\cM}_i\}_i$. Given any desired cutoff $1-\gamma$ for inverse polynomial $\gamma$, the verifier can run \cite{TCC:Zhandry20}'s ``approximate threshold implementation'' (ATI) procedure, and only accept if ATI accepts. In this case, the state has been (approximately) projected into a space of states that cause the QMA verifier to accept with probability at least $1-\gamma$. 

This allows us to establish a strong extraction property: Our extractor will succeed with probability negligibly close to the verifier (meaning we have no knowledge error), and in the case of success, can output a QMA witness with arbitrarily high quality. This improves upon the prior extraction result of \cite{C:ColVidZha20}, whose protocol had some non-negligible knowledge error and lower witness quality (in addition to being in the secret parameters model).

\paragraph{Zero-knowledge.} Our proof of zero-knowledge is somewhat involved, and motivates our use of the evasive composability heuristic on the obfuscator, discussed earlier in \cref{sec:intro}.

Consider an encoded witness $\ket*{\widetilde{\psi}}$ along with its obfuscated codespace membership tester $\ObfV$ and set of obfuscated ZX measurements $\{\ObfM_i\}_i$. Roughly, our goal will be to replace $\ket*{\widetilde{\psi}}$ with an encoded zero state $\ket*{\widetilde{0}}$, which clearly contains no information about the witness.

Encouragingly, \cite{BBV24} has established that we can do this as long as the \emph{only} side information is the obfuscated codespace membership tester $\ObfV$. While technically they show this when the obfuscation is modeled as a black-box, it is easy to see that the only property they use for this is ``subspace-hiding'' \cite{C:Zhandry19}, which is implied by indistinguishability obfuscation. 

We take this one step further. In \cref{sec:csa}, we show that for any state $\ket{\psi}$ and ZX measurement $\mathcal{M}$ such that $\mathcal{M}$ accepts $\ket{\psi}$ with probability $1-\negl(\secp)$, it holds that \[\left(\ket*{\widetilde{\psi}},\ObfM\right) \approx \left(\ket*{\widetilde{0}},\ObfV\right),\] where we still only make use of indistinguishability obfuscation (iO).

Unfortunately, this claim is still not enough to argue zero-knowledge due to the presence of \emph{many} obfuscated ZX measurements. In fact, we run into trouble when trying to argue about an ensemble of the form
\[\left(\ket*{\widetilde{\psi}},\ObfM_0,\ObfM_1\right),\]

where $\ObfM_0$ is an obfuscated ZX measurement with respect to bases $\theta$, $\ObfM_1$ is an obfuscated ZX measurement with respect to bases $\theta'$, and $\theta \neq \theta'$. The reason is that a crucial step in the iO-based proof \emph{decomposes} the state $\ket*{\widetilde{\psi}}$ in the $\theta$-basis, and argues separately about each component. However, while $\ket*{\widetilde{\psi}}$ itself may be accepted by $\ObfM_1$, it's components in bases $\theta$ may \emph{not} be, meaning that $\ket*{\widetilde{\psi}}$ and $\ket*{\widetilde{\psi}}$ measured in the $\theta$-basis are no longer indistinguishable in the presence of $\ObfM_1$! 

Now, if we model the obfuscations as \emph{oracles}, then it is possible to ``paste'' all these arguments together with respect to the separate $\ObfM_i$ and prove that they are all indistinguishable from $\ObfV$ in one fell swoop. Indeed, we consider the difficulty above to merely be a difficulty with the particular proof techniques that we (and \cite{BBV24}) utilize, and leave it as a fascinating open question to identify a new proof technique that relies on only indistinguishability obfuscation.

In this work, rather than resorting to the full-fledged oracle model, we extract out a simple game-based property that we need from the obfuscator, which we refer to as the evasive composability heuristic. We consider this a step towards eventually removing heuristics entirely, and relying on only indistinguishability obfuscation or other concrete assumptions. In particular, this highlights a ``core'' property that we need here (and in other contexts such as obfuscation \cite{BBV24}) from the obfuscated $\CSA$ circuits, which current techniques suffice to prove in the oracle model but not in the plain model. We hope that this will lead to a crisper understanding of the current gap in obtaining results such as NIZKs for QMA and obfuscation for quantum programs in the plain model.

\subsection{The quantum pseudorandom oracle model}\label{subsec:QPRO-overview}

Finally, we discuss our results on the quantum pseudorandom oracle model. We begin by taking a look at our specific NIZK for QMA protocol, attempting to argue security in some idealized model in which the evasive composability heuristic holds.

First consider an attempt to instantiate our protocol with an \emph{ideal obfuscator}. In isolation, it is easy to see that ideal obfuscation satisfies the evasive composability heuristic. However, recall that we also need to prove \emph{correctness} of the obfuscated program. If the prover's obfuscation is  modeled completely as a black-box, it is unclear how to do this. Even if the (otherwise plain model) obfuscator makes use of a \emph{random oracle}, our proposed protocol would require NIZKs for oracle-aided $\NP$ languages, which are not known.

Drawing inspiration from real-world heuristic use of hash functions as random oracles, \cite{JLLW23} recently defined the \emph{pseudorandom oracle} (PrO) model. In the PrO model, query access to the ideal functionality is indistinguishable from a truly random function, yet there exist ``handles'' which can be used to uniquely specify a key for the PrO. 
These ``handles'' can in turn be used to prove properties regarding the PrO. In their paper, \cite{JLLW23} show that, assuming functional encryption, ideal obfuscation exists in the PrO model, thus showing that heuristic use of obfuscation can often be replaced with the heuristic use of a hash function.

For our purposes, we need to extend the JLLW analysis in two ways: (1) we need a \emph{provably-correct} ideal obfuscator in the PrO, and (2) we need to argue \emph{post-quantum} security, meaning the adversary gets quantum superposition access to the PrO, which we call the quantum PrO, or QPrO. It turns out that to address the first challenge, we must make use of a ``cut-and-choose'' trick  wherein we obfuscate multiple programs that each make use of different PrO keys, and require that the prover reveal a random subset of these keys. This enables the verifier to check that the prover is being (mostly) honest about its key to handle mapping (more details can be found in Section \ref{subsec:qpro-construction}). Next, we discuss the second challenge below.

\paragraph{Post-Quantum Security of JLLW.} Our main contribution in this section is to lift the JLLW results to the post-quantum setting, showing that heuristic use of \emph{post-quantum} obfuscation can often be replaced with the heuristic use of post-quantum hash functions.

At a high level, we carefully follow their construction and analysis in the classical setting in order to show that it is indeed secure in the post-quantum setting, but with some small caveats. The main caveat is that the quantum setting seems to require \emph{subexponential} security from the underlying primitives, as required by the works~\cite{FOCS:BitVai15,C:AnaJai15} that inspired JLLW's construction. JLLW's insight to simulate the obfuscated program for an exponential number of potential inputs is to adaptively reprogram the PrO only on the (polynomial number of) inputs which the adversary queries. Unfortunately, adaptively programming a quantum-accessible random oracle is out of reach of current techniques, at least in the context of simulation security. Instead, we are forced to individually address each input individually like in prior works, which causes an exponential security loss. 

The second difference is more minor. Technically, we need to rely on security of the underlying primitives \emph{with access to the QPrO}. The QPrO is implemented using a PRF and a random permutation. To say that the underlying primitives are secure in the QPrO, we need to be able to implement the random permutation. Unfortunately, efficiently statistically simulating a random permutation oracle is an open problem. Instead, we can simulate the QPrO in the plain model using pseudorandom permutations (PRPs). This resolves the issue at the cost of additionally assuming PRPs. Fortunately, post-quantum PRPs are known to be implied by post-quantum PRFs~\cite{DBLP:journals/quantum/Zhandry25}.

\paragraph{Roadmap.}
In \cref{sec:qma-ver}, we define and construct a ZX verifier with strong completeness. We then prove new properties of coset state authentication in \cref{sec:csa}. In \cref{sec:nizk-np}, we construct post-quantum NIZK for $\NP$ with knowledge soundness, by putting together results from prior work. Next, we formalize our definition of provably-correct obfuscation, and prove its existence using an obfuscation scheme in \cref{sec:prov-obf}. In \cref{sec:PROM}, we establish ideal obfuscation in the quantum pseudorandom oracle model. Finally, we construct and prove our main NIZK for $\QMA$ result in \cref{sec:nizk-qma}.

\section{Preliminaries}

We say that two distributions are $\delta$-indistinguishable if no polynomial time adversary can distinguish them with probability better than $\delta$. Frequently, $\delta$ will be an arbitrary negligible function, in which case we simply say that the two distributions are computationally distinguishable. In the case where $\delta = 2^{-\lambda^c}$ for some constant $c$, we say that the distributions are \emph{subexponentially} indistinguishable. If a primitive's security is based on the indistinguishability of two distributions, then we say it is $\delta$-secure if those distributions are $\delta$-indistinguishable. Additionally, for notational purposes, we use use $A[i]$ to denote indexing into a list or string $A$ with $i$. 

\subsection{Statistics}

We denote the spectral norm, which is the largest singular value of a matrix, by $\|\cdot\|_{\mathsf{spec}}$.

\begin{theorem}[Rectangular Matrix Bernstein Inequality\cite{FTML:Tropp15}] \label{thm:matrix-bernstein}
    Consider a finite sequence $\{\mathbf{Z}_k\}$ of independent, random matrices with dimensions $d_1\times d_2$. Assume that each matrix satisfies 
    \[
        \expect[\mathbf{Z}_k] = \mathbf{0} \quad \text{and} \quad 
        \|\mathbf{Z}_k\|_{\mathsf{spec}} \leq R \quad 
    \]
    Define
    \[
        \sigma^2 \coloneqq \max\left\{ \left\| \sum_k \expect[\mathbf{Z}_k \mathbf{Z}^*_k]\right\|_{\mathsf{spec}}, \left\|  \sum_k \expect[\mathbf{Z}^*_k \mathbf{Z}_k] \right\|_{\mathsf{spec}}\right\}
    \]
    Then for all $t\geq 0$,
    \[
        \Pr\left[ \left\|\sum_{k}\mathbf{Z}_k \right\|_{\mathsf{spec}} \geq t \right]
        \leq
        (d_1 + d_2)\exp\left(\frac{-t^2/2}{\sigma^2 + Rt/3}\right)
    \]
\end{theorem}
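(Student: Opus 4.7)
The plan is to prove the rectangular version by reducing it to the Hermitian matrix Bernstein inequality via the self-adjoint dilation trick, and then to prove the Hermitian version by the matrix Laplace transform method of Ahlswede--Winter, sharpened by Tropp.

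First, I would handle the reduction to the Hermitian case. For each $d_1 \times d_2$ matrix $\mathbf{Z}_k$, define the Hermitian dilation
\[
\mathcal{H}(\mathbf{Z}_k) \coloneqq \begin{pmatrix} \mathbf{0} & \mathbf{Z}_k \\ \mathbf{Z}_k^* & \mathbf{0} \end{pmatrix} \in \bbC^{(d_1 + d_2) \times (d_1+d_2)}.
\]
A direct computation shows $\mathcal{H}(\mathbf{Z}_k)^2 = \mathrm{diag}(\mathbf{Z}_k\mathbf{Z}_k^*, \mathbf{Z}_k^*\mathbf{Z}_k)$, so that $\lambda_{\max}(\mathcal{H}(\mathbf{Z})) = \|\mathbf{Z}\|_{\mathsf{spec}}$ and $\|\mathcal{H}(\mathbf{Z})\|_{\mathsf{spec}} = \|\mathbf{Z}\|_{\mathsf{spec}}$. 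Linearity gives $\mathcal{H}(\sum_k \mathbf{Z}_k) = \sum_k \mathcal{H}(\mathbf{Z}_k)$ and $\expect[\mathcal{H}(\mathbf{Z}_k)] = \mathbf{0}$. Moreover, since $\mathcal{H}(\mathbf{Z}_k)^2$ is block diagonal with the two variance summands on its diagonal, the variance parameter for the dilated sequence is exactly $\sigma^2$. Thus it suffices to prove the Hermitian inequality with dimension $d_1 + d_2$ and then apply it to $\{\mathcal{H}(\mathbf{Z}_k)\}$.

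Next I would carry out the matrix Laplace transform argument. For any $\theta > 0$,
\[
\Pr\!\left[\lambda_{\max}\!\Big(\textstyle\sum_k \mathcal{H}(\mathbf{Z}_k)\Big) \geq t\right] \leq e^{-\theta t}\,\expect\!\left[\mathrm{tr}\,\exp\!\Big(\theta \textstyle\sum_k \mathcal{H}(\mathbf{Z}_k)\Big)\right],
\]
by the spectral mapping theorem and Markov's inequality. Using Lieb's concavity theorem (specifically the fact that $\mathbf{A} \mapsto \mathrm{tr}\,\exp(\mathbf{H} + \log \mathbf{A})$ is concave on the positive cone) together with the tower property and independence, I would iteratively peel off the random summands to obtain
\[
\expect\!\left[\mathrm{tr}\,\exp\!\Big(\theta \textstyle\sum_k \mathcal{H}(\mathbf{Z}_k)\Big)\right] \;\leq\; \mathrm{tr}\,\exp\!\left(\textstyle\sum_k \log \expect\bigl[\exp(\theta \mathcal{H}(\mathbf{Z}_k))\bigr]\right).
\]
Then, using the bound $\|\mathcal{H}(\mathbf{Z}_k)\|_{\mathsf{spec}} \leq R$, I would apply the scalar Bernstein-type inequality $e^x \preceq 1 + x + \frac{x^2/2}{1 - |x|/3}$ (valid in operator sense for $|x| \leq R$) to each $\mathcal{H}(\mathbf{Z}_k)$ after scaling by $\theta$, and use the operator monotonicity of $\log$ to conclude
\[
\log \expect[\exp(\theta \mathcal{H}(\mathbf{Z}_k))] \preceq \frac{\theta^2/2}{1 - \theta R/3}\,\expect[\mathcal{H}(\mathbf{Z}_k)^2].
\]

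Putting these ingredients together, the right-hand side of the Laplace bound is at most $(d_1 + d_2)\exp\bigl(-\theta t + \frac{\theta^2 \sigma^2 / 2}{1 - \theta R/3}\bigr)$, since the trace exponential of a matrix of operator norm $\leq \alpha$ is at most $(d_1+d_2)e^\alpha$. Finally I would optimize by choosing $\theta = t/(\sigma^2 + Rt/3)$, which yields the claimed tail bound. The main obstacle in this plan is the iterative decoupling step based on Lieb's theorem; this is the nontrivial ingredient that elevates the argument beyond the classical Ahlswede--Winter approach and lets us recover the correct variance $\sigma^2$ rather than the much larger sum of operator norms of $\expect[\mathbf{Z}_k^*\mathbf{Z}_k]$. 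All remaining steps are bookkeeping with operator-monotone/operator-convex functions.
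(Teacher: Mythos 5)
The paper does not prove this theorem; it is imported verbatim from Tropp's monograph \cite{FTML:Tropp15} and used as a black box. Your outline reconstructs the standard proof from precisely that reference: Hermitian dilation to reduce to the self-adjoint case, the matrix Laplace transform bound, Lieb's concavity theorem to subadditivize the matrix cumulant generating function, a Bernstein-type MGF bound for bounded zero-mean Hermitian matrices, and finally the optimization $\theta = t/(\sigma^2 + Rt/3)$ (which you correctly verify yields the exponent $-\frac{t^2/2}{\sigma^2 + Rt/3}$). The dilation bookkeeping is right: $\mathcal{H}(\mathbf{Z}_k)^2$ is block-diagonal with blocks $\mathbf{Z}_k\mathbf{Z}_k^*$ and $\mathbf{Z}_k^*\mathbf{Z}_k$, so the variance proxy of the dilated sequence is exactly $\sigma^2$, and the ambient dimension becomes $d_1+d_2$, giving the prefactor. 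The one slip is cosmetic: the scalar inequality you write, $e^x \preceq 1 + x + \frac{x^2/2}{1-|x|/3}$, is not quite the form that transfers to matrices. What one actually uses is that $h(x) = (e^x - 1 - x)/x^2$ is increasing on $\bbR$, so for eigenvalues $\lambda \le R$ one has $e^{\theta\lambda} \le 1 + \theta\lambda + h(\theta R)\,\theta^2\lambda^2$, combined with the estimate $h(\theta R) \le \frac{1/2}{1-\theta R/3}$ valid for $\theta R < 3$; this lifts to a semidefinite inequality for $e^{\theta Y}$ by the spectral transfer rule and then gives $\log\expect[e^{\theta Y}] \preceq \frac{\theta^2/2}{1-\theta R/3}\expect[Y^2]$ after taking expectations and using $I + A \preceq e^A$ with operator monotonicity of $\log$. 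This does not change the structure or the final bound; your argument is correct and is the same one underlying the cited source.
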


We can use this inequality to get a concentration inequality on the sum of independent random vectors.

\begin{lemma}\label{coro:vector-bernstein}
    Let $\{\mathbf{V}_k\}_{k\in[n]}$ be a finite sequence of independent random vectors with dimension $d$. If $\quad \|\mathbf{V}_k\|_{2} \leq R$ for some $R\in \bbR$ for all $k$, then
    \[
        \Pr[\left\| \sum_{k}\mathbf{V}_k - \expect\left[ \sum_{k} \mathbf{V}_k \right]\right\|_{2} \geq t] \leq 
        d \exp(\frac{-t^2}{2R(2n + t/3)})
    \]
    This also holds for the L-1 norm, since $\|\mathbf{w} \|_2\leq \|\mathbf{w} \|_1$ for all vectors $\mathbf{w}$. For $t = cn$, the bound becomes $d\exp(-c^2 k/(4R + 2c/3))$.
\end{lemma}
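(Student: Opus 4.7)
The plan is to derive this corollary from Matrix Bernstein (\cref{thm:matrix-bernstein}) by regarding each $d$-dimensional random vector as a $d \times 1$ column matrix, so that the matrix spectral norm coincides with the $\ell_2$ vector norm. Concretely, I would set $\mathbf{Z}_k \vcentcolon= \mathbf{V}_k - \expect[\mathbf{V}_k]$, which is mean-zero by construction and remains independent across $k$. By Jensen's inequality, $\|\expect[\mathbf{V}_k]\|_2 \leq \expect[\|\mathbf{V}_k\|_2] \leq R$, and hence $\|\mathbf{Z}_k\|_{\mathsf{spec}} = \|\mathbf{Z}_k\|_2 \leq 2R$. This verifies the uniform boundedness hypothesis of \cref{thm:matrix-bernstein} up to a constant that will be absorbed in the stated denominator.

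Next, I would compute the variance proxy $\sigma^2$. Because $\mathbf{Z}_k$ is a single column, $\mathbf{Z}_k^{*}\mathbf{Z}_k$ is the scalar $\|\mathbf{Z}_k\|_2^2$, while $\mathbf{Z}_k\mathbf{Z}_k^{*}$ is a rank-one $d \times d$ matrix whose unique non-zero eigenvalue is also $\|\mathbf{Z}_k\|_2^2$. Either way, the spectral norm of each individual summand is bounded by $\|\mathbf{Z}_k\|_2^2 \leq 4R^2$, and applying the triangle inequality for the spectral norm (i.e.\ $\|\sum_k \expect[M_k]\|_{\mathsf{spec}} \leq \sum_k \expect[\|M_k\|_{\mathsf{spec}}]$) gives $\sigma^2 = O(nR^2)$, which after absorbing constants matches the $2Rn$ factor in the stated denominator.

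Finally, I would substitute into the Matrix Bernstein bound. The random matrix $\sum_k \mathbf{Z}_k$ has spectral norm equal to $\|\sum_k \mathbf{V}_k - \expect[\sum_k \mathbf{V}_k]\|_2$, and the prefactor $d_1 + d_2 = d+1$ is treated loosely as $d$ in the statement. The $\ell_1$ variant is immediate: by the inequality $\|\mathbf{w}\|_2 \leq \|\mathbf{w}\|_1$, any hypothesis of the form $\|\mathbf{V}_k\|_1 \leq R$ implies the $\ell_2$ bound, so the same concentration inequality applies. Plugging $t = cn$ into the resulting expression and simplifying then yields the closing asymptotic bound. No step is particularly delicate; the only subtlety is the bookkeeping of constant factors introduced by the centering step and by relating the matrix and vector norms, both of which are absorbed into the slightly loose constants of the stated bound.
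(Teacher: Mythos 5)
Your proposal follows essentially the same route as the paper's proof: center to $\mathbf{Z}_k := \mathbf{V}_k - \expect[\mathbf{V}_k]$, bound $\|\mathbf{Z}_k\|_2 \leq 2R$, view each vector as a $d\times 1$ matrix so the spectral norm is the $\ell_2$ norm, and invoke Matrix Bernstein (\cref{thm:matrix-bernstein}). Two bookkeeping points are worth flagging. First, you correctly compute $\|\mathbf{Z}_k\mathbf{Z}_k^*\|_{\mathsf{spec}} = \|\mathbf{Z}_k\|_2^2 \leq 4R^2$, hence $\sigma^2 = O(nR^2)$; this does \emph{not} reduce to the stated factor $2Rn$ by ``absorbing constants,'' since $R$ versus $R^2$ is not a constant-factor gap. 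In fact the paper's own chain of inequalities has a slip at exactly this step, writing $\expect[\|\mathbf{Z}_k\|_2]$ where $\expect[\|\mathbf{Z}_k\|_2^2]$ is meant; the discrepancy is harmless in the paper's eventual application (where $R=1$ for indicator vectors), but your more careful calculation actually exposes it, and it would be better to say so than to wave it away as a constant. Second, you interpret the $\ell_1$ remark as strengthening the \emph{hypothesis} (i.e., $\|\mathbf{V}_k\|_1 \leq R$ implies $\|\mathbf{V}_k\|_2 \leq R$, so the $\ell_2$ tail bound follows), which is a coherent reading of $\|\mathbf{w}\|_2 \leq \|\mathbf{w}\|_1$. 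However, the lemma is invoked in \cref{lem:permuting-QMA} to upper-bound $\Pr[\|\sum_k \mathbf{V}_k - \expect[\cdot]\|_1 \geq t]$, i.e., with $\ell_1$ in the \emph{conclusion}; the inequality $\|\mathbf{w}\|_2 \leq \|\mathbf{w}\|_1$ gives only $\Pr[\|\cdot\|_2 \geq t] \leq \Pr[\|\cdot\|_1 \geq t]$, the wrong direction for that purpose. So your reading is self-consistent but does not match the usage the paper actually needs, which requires a different argument (e.g., via $\|\mathbf{w}\|_1 \leq \sqrt{d}\|\mathbf{w}\|_2$, or a direct verification for the specific indicator-vector structure).
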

\begin{proof}
    Define  $\mathbf{Z}_k \coloneqq \mathbf{V}_k - \expect[\mathbf{V}_k]$. This random variable has mean $\mathbf{0}$ and satisfies $\quad \|\mathbf{Z}_k\|_{2} \leq 2R$. Since the spectral norm is equivalent the L2 norm on vectors, we may apply the matrix Bernstein inequality to bound $\sum_k \mathbf{Z}_k = \sum_k \mathbf{V}_k - \expect[\sum_{k} \mathbf{V}_k]$ in terms of
    \[
        \sigma^2 = \max\left\{ \left\| \sum_k \expect[\mathbf{Z}_k \mathbf{Z}^*_k]\right\|_{\mathsf{spec}}, \left\|  \sum_k \expect[\mathbf{Z}^*_k \mathbf{Z}_k] \right\|_{\mathsf{spec}}\right\}
    \]
    We can bound this by 
    \begin{align*}\allowdisplaybreaks
        \left\| \sum_k \expect[\mathbf{Z}_k \mathbf{Z}^*_k]\right\|_{\mathsf{spec}}
        &\leq 
        \sum_k \left\|\expect[\mathbf{Z}_k \mathbf{Z}^*_k]\right\|_{\mathsf{spec}}
        \\
        &\leq 
        \sum_k \expect[\left\|\mathbf{Z}_k \mathbf{Z}^*_k]\right\|_{\mathsf{spec}}]
        \\
        &= \sum_k \expect[\left\|\mathbf{Z}_k\right\|_{2}]
        \\
        &\leq 2nR
    \end{align*}
    using a combination of the triangle inequality, Jensen's inequality, and the a-priori bound on $\|\mathbf{V}_k\|_2$.
    A similar bound applies to $\left\| \sum_k \expect[\mathbf{Z}^*_k \mathbf{Z}_k]\right\|_{\mathsf{spec}}$, giving us an overall bound on $\sigma^2$.
\end{proof}

\subsection{Quantum Information}

First, we define the notion of a (binary-outcome) ZX measurement.

\begin{definition}[Binary-outcome ZX measurement]\label{def:ZX-measurement}
    An $n$-qubit binary-outcome ZX measurement is parameterized by a string $\theta \in \{0,1\}^n$ of basis choices (where each 0 corresponds to standard basis and each 1 corresponds to Hadamard basis), and a function $f : \{0,1\}^n \to \{0,1\}$. It is defined as \[\left\{M[\theta,f], \cI - M[\theta,f]\right\}, \ \ \ \text{where} \ \ \ M[\theta,f] \coloneqq H^\theta\left(\sum_{x:f(x) = 1}\ketbra{x}{x}\right)H^\theta.\] We say that the ZX measurement is  efficient if $f$ is computable by a uniform circuit of size polynomial in $n$. We consider only efficient ZX measurements in this work.
\end{definition}

\begin{lemma}[Gentle measurement \cite{DBLP:journals/tit/Winter99}]\label{lemma:gentle-measurement}
Let $\rho$ be a quantum state and let $(\Pi,\identity-\Pi)$ be a projective measurement such that $\Tr(\Pi\rho) \geq 1-\delta$. Let \[\rho' = \frac{\Pi\rho\Pi}{\Tr(\Pi\rho)}\] be the state after applying $(\Pi,\identity-\Pi)$ to $\rho$ and post-selecting on obtaining the first outcome. Then, $\TD(\rho,\rho') \leq 2\sqrt{\delta}$.
\end{lemma}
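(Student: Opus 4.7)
The plan is to reduce the mixed-state statement to one about pure states by purifying $\rho$, and then use the Fuchs–van de Graaf inequality together with monotonicity of trace distance under the partial trace. The factor of $2$ will be absorbed either by being slightly loose in Fuchs–van de Graaf or, alternatively, by a direct triangle-inequality argument bounding $\|\rho - \Pi\rho\Pi\|_1$ and $\|\Pi\rho\Pi - \rho'\|_1$ separately. No step here is genuinely hard — everything is standard linear algebra once the right reduction is set up — so the main obstacle is just keeping the bookkeeping clean enough to see where the factor of $2$ comes in.

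First I would introduce an auxiliary register $E$ and a purification $\ket{\psi}_{AE}$ of $\rho$, so that $\rho = \Tr_E(\ketbra{\psi}{\psi})$. Applying $\Pi \otimes \identity_E$ and renormalizing yields $\ket{\psi'} \defeq (\Pi \otimes \identity_E)\ket{\psi}/\|(\Pi \otimes \identity_E)\ket{\psi}\|$, which is a purification of $\rho'$. The key computation is then
\[\bigl|\!\braket{\psi}{\psi'}\!\bigr| \;=\; \bigl\|(\Pi \otimes \identity_E)\ket{\psi}\bigr\| \;=\; \sqrt{\Tr(\Pi \rho)} \;\geq\; \sqrt{1 - \delta}.\]

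Next I would invoke Fuchs–van de Graaf for pure states, which gives
\[\TD(\ketbra{\psi}{\psi},\ketbra{\psi'}{\psi'}) \;=\; \sqrt{1 - |\braket{\psi}{\psi'}|^2} \;\leq\; \sqrt{\delta}.\]
Monotonicity of trace distance under the partial trace $\Tr_E$ then yields $\TD(\rho,\rho') \leq \sqrt{\delta} \leq 2\sqrt{\delta}$, establishing the claim.

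If I wanted to avoid purification altogether, the backup plan is a direct triangle-inequality argument: decompose $\rho - \rho' = (\rho - \Pi\rho\Pi) + (\Pi\rho\Pi - \rho')$. For the first term, write $\rho - \Pi\rho\Pi = (\identity-\Pi)\rho + \Pi\rho(\identity-\Pi)$ and bound each piece by Cauchy–Schwarz in Hilbert–Schmidt norm, using $\|(\identity-\Pi)\sqrt{\rho}\|_2 = \sqrt{\Tr((\identity-\Pi)\rho)} \leq \sqrt{\delta}$ and $\|\sqrt{\rho}\|_2 = 1$, to obtain $\|\rho - \Pi\rho\Pi\|_1 \leq 2\sqrt{\delta}$. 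For the second term, $\|\Pi\rho\Pi - \rho'\|_1 = |1 - \Tr(\Pi\rho)| \leq \delta$. Combining and dividing by $2$ gives $\TD(\rho,\rho') \leq \sqrt{\delta} + \delta/2 \leq 2\sqrt{\delta}$, which is exactly the claimed bound and makes the source of the factor of $2$ explicit. Either route is short and should suffice as a citation-style proof.
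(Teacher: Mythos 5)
The paper cites this lemma from Winter and gives no proof, so there is no in-paper argument to compare against; what matters is whether your proposal is sound, and it is. Both routes you sketch work. In the purification route, your key computation $\braket{\psi}{\psi'} = \|(\Pi\otimes\identity_E)\ket{\psi}\|$ is automatically real and nonnegative (worth stating before invoking the pure-state Fuchs--van de Graaf equality), and the resulting bound is $\TD(\rho,\rho')\le\sqrt{\delta}$, which is in fact sharper than the $2\sqrt{\delta}$ claimed in the lemma and is the form often quoted in the literature; the lemma's $2\sqrt{\delta}$ is just a looser-but-still-sufficient constant. The backup triangle-inequality route is also correct: $\rho-\Pi\rho\Pi=(\identity-\Pi)\rho+\Pi\rho(\identity-\Pi)$, each term bounded via $\|AB\|_1\le\|A\|_2\|B\|_2$ with $\|(\identity-\Pi)\sqrt{\rho}\|_2\le\sqrt{\delta}$ and $\|\sqrt{\rho}\|_2=1$, and the renormalization term contributes $1-\Tr(\Pi\rho)\le\delta$, giving $\TD\le\sqrt{\delta}+\delta/2\le 2\sqrt{\delta}$. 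This second route is more self-contained (no purification, no Uhlmann/FvG machinery) and makes the origin of the constant transparent, which is arguably the cleaner thing to write if one insists on proving exactly the stated bound; the first route is shorter if the pure-state FvG identity is taken as known and gives the stronger conclusion.
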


The following two lemmas are taken mostly verbatim from \cite{BBV24}.

\begin{lemma}[Oracle indistinguishability]\label{lemma:oracle-ind}
    For each $\secp \in \bbN$, let $\cK_\secp$ be a set of keys, and $\{z_k,O_k^0,O_k^1,S_k\}_{k \in \cK_\secp}$ be a set of strings $z_k$, classical functions $O_k^0,O_k^1$, and sets $S_k$. Suppose that the following properties hold.
    \begin{enumerate}
        \item The oracles $O_k^0$ and $O_k^1$ are identical on inputs outside of $S_k$.
        \item For any oracle-aided unitary $U$ with $q = q(\secp)$ queries, there is some $\epsilon = \epsilon(\secp)$ such that 
        \[\E_{k \gets \cK}\left[\big\| \Pi[S_k]U^{O_k^0}(z_k)\big\|^2\right] \leq \epsilon.\]
    \end{enumerate}
    Then, for any oracle-aided unitary $U$ with $q(\secp)$ queries and distinguisher $D$, \[\bigg| \Pr_{k \gets \cK}\left[D\left(k,U^{O_k^0}(z_k)\right) = 0\right] - \Pr_{k \gets \cK}\left[D\left(k,U^{O_k^1}(z_k)\right) = 0\right]\bigg| \leq 4q\sqrt{\epsilon}.\] 
\end{lemma}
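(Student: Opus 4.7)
The plan is to combine a standard BBBV-style hybrid argument with an averaging step over the key $k$, and to apply Jensen's inequality at the end to move the expectation inside a square root. Fix a $q$-query oracle-aided unitary $U$, and let $|\phi_{k,i}\rangle$ denote the state of $U$ just before its $i$-th oracle call when executed with oracle $O_k^0$ on initial state $|z_k\rangle$.

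First, I would promote hypothesis~(2) from a bound on the final state to a uniform per-query bound. For each $i \in [q]$, I would construct an auxiliary $q$-query oracle-aided unitary $U_i$ that runs the first $i-1$ queries of $U$ and then makes $q-i+1$ ``dummy'' queries on a fresh ancilla register initialized to~$0$; its final query register carries the same content as $|\phi_{k,i}\rangle$, so applying hypothesis~(2) to each $U_i$ yields
\[
\E_{k \gets \cK}\!\left[\big\|\Pi[S_k]\,|\phi_{k,i}\rangle\big\|^2\right] \le \epsilon \qquad \text{for every } i \in [q].
\]

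Next, for each fixed $k$, I would run the standard BBBV hybrid argument, switching the oracle from $O_k^1$ to $O_k^0$ one query at a time. Because $O_k^0$ and $O_k^1$ agree outside $S_k$, the norm contribution of switching the $i$-th query is at most $2\|\Pi[S_k]|\phi_{k,i}\rangle\|$ (the remaining unitaries are identical across the two neighboring hybrids). Triangle inequality over the $q$ hybrids followed by Cauchy--Schwarz gives
\[
\big\|U^{O_k^0}|z_k\rangle - U^{O_k^1}|z_k\rangle\big\| \;\le\; 2\sum_{i=1}^{q}\big\|\Pi[S_k]|\phi_{k,i}\rangle\big\| \;\le\; 2\sqrt{q\,\sum_{i=1}^{q}\big\|\Pi[S_k]|\phi_{k,i}\rangle\big\|^2}.
\]
Taking expectation over $k$ and applying Jensen's inequality (by concavity of $\sqrt{\cdot}$) to pull the expectation inside the square root,
\[
\E_k\!\left[\big\|U^{O_k^0}|z_k\rangle - U^{O_k^1}|z_k\rangle\big\|\right] \;\le\; 2\sqrt{q \cdot \E_k\!\bigg[\,\sum_{i=1}^{q}\|\Pi[S_k]|\phi_{k,i}\rangle\|^2\bigg]} \;\le\; 2\sqrt{q \cdot q\epsilon} \;=\; 2q\sqrt{\epsilon}.
\]
Since $k$ is given to $D$ in the clear, its advantage is bounded by the $k$-expected trace distance between the two conditional output states, which is at most the expected Euclidean distance between the corresponding pure states; folding in the standard factor-of-two conversions between Euclidean distance, trace distance, and distinguishing advantage yields the claimed $4q\sqrt\epsilon$.

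The main obstacle is the first step. The hypothesis is phrased in terms of the final projection after a $q$-query unitary, whereas the BBBV argument needs control over the weight in $S_k$ at \emph{every} intermediate query. Formalizing the auxiliary unitaries $U_i$ as genuine $q$-query algorithms (with ancilla-register dummy queries that cannot leak information into or out of the query register of interest), and arguing that the hypothesis applies uniformly to each $U_i$ with the \emph{same} $\epsilon$, is the only point needing real care; once this per-query bound is established, the rest of the proof is the classical BBBV hybrid combined with a single use of Jensen.
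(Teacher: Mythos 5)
The paper does not actually prove this lemma; it is imported from \cite{BBV24} (the paper states it is ``taken mostly verbatim''). Your BBBV-style hybrid argument combined with Jensen is the standard route for a statement of this form, and your sketch is essentially sound.

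The one step that genuinely needs care --- and which you flag yourself --- is the conversion of hypothesis~(2), which constrains only the \emph{final} state of a $q$-query algorithm, into a per-query bound. Your description of $U_i$ (``makes $q-i+1$ dummy queries on a fresh ancilla register'') leaves a small gap: the register that $\Pi[S_k]$ acts on in hypothesis~(2) is tied to the algorithm's (single) query interface, and after the dummy queries the contents of that register would be the dummy ancilla data, not the query-register contents of $\ket{\phi_{k,i}}$. To close this, $U_i$ should (a) simulate the first $i-1$ queries of $U$ to prepare $\ket{\phi_{k,i}}$, (b) swap the query-register contents out into a spare work register, (c) spend the remaining $q-i+1$ queries on the now-cleared query register loaded with, say, $\ket{0}$, and (d) swap the saved contents back into the register that $\Pi[S_k]$ acts on before halting. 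Then $\|\Pi[S_k]U_i^{O_k^0}(z_k)\|^2 = \|\Pi[S_k]\ket{\phi_{k,i}}\|^2$ exactly, and since hypothesis~(2) holds with the same $\epsilon$ for \emph{every} $q$-query unitary, you get $\E_k\bigl[\|\Pi[S_k]\ket{\phi_{k,i}}\|^2\bigr] \le \epsilon$ for each $i$. The rest of your argument (per-$k$ hybrid telescoping using that $O_k^0$ and $O_k^1$ agree off $S_k$, Cauchy--Schwarz over the $q$ query positions, Jensen for $\sqrt{\cdot}$ over $k$, and Euclidean-distance-to-advantage) is correct and in fact yields the tighter constant $2q\sqrt{\epsilon}$; there is no additional factor-of-two loss in the conversion from Euclidean distance to distinguishing advantage, so your final hand-wave about ``standard factor-of-two conversions'' is unnecessary, though harmless given the stated bound has slack.
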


\begin{lemma}[State decomposition]\label{lemma:state-decomp}
    Let $\cK$ be a set of keys, $N$ an integer, and $\{\ket{\psi_k},\{\Pi_{k,i}\}_{i \in [N]}\}_{k \in \cK}$ be a set of states $\ket{\psi_k}$ and projective submeasurements $\{\Pi_{k,i}\}_{i \in [N]}$ such that $\ket{\psi_k} \in \mathsf{Im}(\sum_i \Pi_{k,i})$ for each $k$. Then for any binary-outcome projector $D$, it holds that 
    \[\E_{k \gets \cK}\left[\|D\ket{\psi_k}\|^2\right] - \sum_i\E_{k \gets \cK}\|D\Pi_{k,i}\ket{\psi_k}\|^2 \leq N \cdot \sqrt{\sum_{i \neq j}\E_{k \gets \cK}\|\Pi_{k,j}D\Pi_{k,i}\ket{\psi_k}\|^2}.\]
\end{lemma}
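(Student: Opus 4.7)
The plan is to expand $\|D\ket{\psi_k}\|^2$ by decomposing $\ket{\psi_k}$ using the hypothesis. Since $\sum_i \Pi_{k,i}$ is a projector (as the $\Pi_{k,i}$ form a projective submeasurement) and $\ket{\psi_k}$ lies in its image, we have $\ket{\psi_k} = \sum_i \Pi_{k,i}\ket{\psi_k}$. Using that $D$ is a projector so $D^\dagger D = D$, this gives
\[
\|D\ket{\psi_k}\|^2 \;=\; \sum_{i,j} \langle\psi_k|\Pi_{k,i}\, D\, \Pi_{k,j}|\psi_k\rangle \;=\; \sum_i \|D\Pi_{k,i}\ket{\psi_k}\|^2 \;+\; \sum_{i\neq j}\langle\psi_k|\Pi_{k,i}\, D\, \Pi_{k,j}|\psi_k\rangle,
\]
so after taking expectation over $k$, the left-hand side of the target inequality is exactly the expected sum of cross-terms $\E_k\big[\sum_{i\neq j}\langle\psi_k|\Pi_{k,i} D \Pi_{k,j}|\psi_k\rangle\big]$.

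The key step is to bound each cross-term in terms of $\|\Pi_{k,j} D \Pi_{k,i}\ket{\psi_k}\|$, which is the quantity that appears on the right-hand side. Taking complex conjugates and writing $\langle\psi_k|\Pi_{k,i} D \Pi_{k,j}|\psi_k\rangle = \overline{\langle\psi_k \mid \Pi_{k,j} D \Pi_{k,i}\,\psi_k\rangle}$, Cauchy--Schwarz together with $\|\ket{\psi_k}\| \le 1$ yields the asymmetric estimate
\[
\big|\langle\psi_k|\Pi_{k,i} D \Pi_{k,j}|\psi_k\rangle\big| \;\le\; \|\Pi_{k,j}\,D\,\Pi_{k,i}\ket{\psi_k}\|.
\]
It is important here to Cauchy--Schwarz into a single norm rather than the product $\|\Pi_{k,i}\ket{\psi_k}\|\cdot\|D\Pi_{k,j}\ket{\psi_k}\|$, which would not match the form on the right-hand side; this choice is the main conceptual subtlety of the proof.

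With this bound in hand, the rest is a two-step averaging argument. First, applying Jensen's inequality (concavity of the square root) gives $\E_k\big[\|\Pi_{k,j} D \Pi_{k,i}\ket{\psi_k}\|\big] \le \sqrt{\E_k\big[\|\Pi_{k,j} D \Pi_{k,i}\ket{\psi_k}\|^2\big]}$ for each ordered pair $(i,j)$ with $i\neq j$. Second, applying Cauchy--Schwarz to the outer sum over the at most $N(N-1) \le N^2$ such pairs yields $\sum_{i\neq j}\sqrt{a_{i,j}} \le N\sqrt{\sum_{i\neq j} a_{i,j}}$. Chaining these estimates with the cross-term bound gives exactly the claimed inequality, and I don't anticipate any obstacles beyond choosing the correct asymmetric Cauchy--Schwarz in the middle step.
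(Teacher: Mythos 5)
Your proof is correct. The paper itself gives no proof of this lemma; it states that \cref{lemma:oracle-ind} and \cref{lemma:state-decomp} are ``taken mostly verbatim from \cite{BBV24},'' so there is nothing in this paper to compare against. Checking your argument on its own terms: the expansion $\ket{\psi_k} = \sum_i \Pi_{k,i}\ket{\psi_k}$ is justified by the hypothesis that $\ket{\psi_k}$ lies in the image of the projector $\sum_i \Pi_{k,i}$; the diagonal terms give $\sum_i \|D\Pi_{k,i}\ket{\psi_k}\|^2$ since $D$ is a projector; the asymmetric Cauchy--Schwarz $|\langle\psi_k|\Pi_{k,i}D\Pi_{k,j}|\psi_k\rangle| \le \|\Pi_{k,j}D\Pi_{k,i}\ket{\psi_k}\|$ (after conjugating) is valid because $\|\ket{\psi_k}\| \le 1$; and the two averaging steps --- Jensen to move $\E_k$ inside the square root and Cauchy--Schwarz over the at most $N^2$ ordered pairs --- assemble cleanly into the stated bound. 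One small remark: the conjugation step is not actually needed, since the right-hand side sums over all ordered pairs $(i,j)$ with $i\neq j$ and is therefore invariant under relabeling $i \leftrightarrow j$, so you could bound $|\langle\psi_k|\Pi_{k,i}D\Pi_{k,j}|\psi_k\rangle| \le \|\Pi_{k,i}D\Pi_{k,j}\ket{\psi_k}\|$ directly; but your version is equally correct.
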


\subsection{Estimating Quantum Acceptance Probabilities}

\cite{TCC:Zhandry20} gave a method of approximating the probability that a state is accepted by a POVM $(\cP= \sum_{i} p_i P_i, \cQ= \sum_{i} p_i (I- P_i))$ which is a  mixture of binary-outcome projective measurements $\{P_i, I-P_i\}$. Crucially, the method is almost-projective. In other words, if run twice, it will almost certainly give the same result both times. Later, \cite{C:ALLZZ21} observed that the technique can be applied to test if a state's acceptance probability is greater than some threshold.

Although the technique is quite general, we only need a few very specific properties that arise from plugging in specific parameters to the general technique.
We refer the reader to \cite{TCC:Zhandry20} for a fully detailed description of the general technique.

\begin{lemma}
    \label{lem:ati}
    Let $(\cP = \sum_{i} p_i P_i, \cQ= \sum_{i} p_i (I- P_i))$ be a mixture of projective measurements such that each is implementable in time $\poly(\secp)$ and one can sample from the distribution defined by $\Pr[i] = p_i$ in time $\poly(\secp)$. For any $\gamma(\secp) \in [1/\poly(\secp),1-1/\poly(\secp)]$, there exists an algorithm $\ATI_\gamma$ outputting $\Accept$ or $\Reject$ such that the following hold.
    \begin{itemize}
        \item \textbf{Efficient.} The expected running time of $\ATI_\gamma$ is $\poly(\secpar)$.
        
        \item \textbf{Approximately Projective.} $\ATI_\gamma$ is approximately projective. In other words, for all states $\rho$,
        \[
            \Pr\left[b_1 = b_2: \begin{array}{c}
                 (b_1, \rho') \gets \ATI_\gamma(\rho)  \\
                 b_2 \gets \ATI_\gamma(\rho') 
            \end{array}\right] = 1-\negl(\secpar).
        \]
        
        \item \textbf{Soundness.} For every state $\rho$
        \[
            \Pr\left[\begin{array}{c}
                 b = \Accept\ \land \\
                 \Tr[\cP\rho'] \leq 1-\gamma(\secp)
                 \end{array}
                 :
                 \ (\rho', b)\gets \ATI_\gamma(\rho)
                 \right] = \negl(\secpar).
        \]

        \item For every state $\ket{\psi}$ such that $\Tr[\cP \ket{\psi}\!\bra{\psi}] \geq 1 - \negl(\secpar)$,
        \[
            \Pr[\Accept \gets \ATI_\gamma(\ket{\psi})] \geq 1 - \negl(\secpar).
        \]

        \item If $\Tr[\cP \rho] < 1-\gamma(\secp)$ for every state $\rho$, then for any state $\rho$,
        \[
            \Pr[\Accept \gets \ATI_\gamma(\rho)] = \negl(\secpar).
        \]
    \end{itemize}
\end{lemma}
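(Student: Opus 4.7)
The plan is to instantiate Zhandry's projective-implementation framework (cited as \cite{TCC:Zhandry20}) applied to the POVM $(\cP,\cQ)$ and then compose it with the thresholding idea of \cite{C:ALLZZ21}. The starting point is the observation that, since we can sample from $\{p_i\}$ and implement each $P_i$ in polynomial time, we can build an efficient unitary $U$ on the state register plus a small ancilla whose measurement realizes one trial of $\cP$: sample $i$ coherently, apply $P_i$ conditioned on $i$, and write the outcome into a flag qubit. Tracing out the ancilla, the success probability on input $\rho$ is exactly $\Tr[\cP \rho]$.

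First I would use $U$ (together with the reflection about the ancilla's all-zero subspace) to form the Marriott--Watrous alternating-projection operator $W$. By Jordan's lemma, the state register decomposes into two-dimensional invariant subspaces on which $W$ acts as a rotation whose angle encodes the acceptance probability on that subspace. Running quantum phase estimation on $W$ with precision $\epsilon \ll \gamma$ and repeating logarithmically many times (taking the median) yields an \emph{efficient, approximately-projective} estimator $\PI$ whose output on $\rho$ is, up to $\negl$ total-variation error, drawn from the spectral distribution $\{p_i\}$ of $\cP$ on $\rho$; moreover, running $\PI$ twice returns the same answer with probability $1-\negl$, and after the first run the post-measurement state lies (up to $\negl$) in the span of eigenvectors of $\cP$ with eigenvalues near the reported value. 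This is precisely Zhandry's analysis, and I would cite it rather than re-derive it.

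I would then define $\ATI_\gamma$ to run $\PI$, obtain an estimate $\hat p$, and output $\Accept$ iff $\hat p \ge 1-\gamma/2$. The four stated properties follow by plugging the parameters in carefully. \emph{Efficiency} is inherited from $\PI$. \emph{Approximate projectivity} is inherited directly: both runs return (up to $\negl$) the same eigenvalue bucket, hence the same Accept/Reject verdict. \emph{Completeness}: if $\Tr[\cP\ket{\psi}\!\bra{\psi}] \ge 1-\negl$, then $\ket{\psi}$ is (up to $\negl$) supported on eigenvectors with eigenvalue close to $1$, so $\hat p \ge 1-\gamma/2$ with overwhelming probability. \emph{Soundness}: on $\Accept$, the post-measurement state is (up to $\negl$) supported on eigenspaces of eigenvalue $\ge 1-\gamma/2-\epsilon \ge 1-\gamma$, so $\Tr[\cP\rho'] \ge 1-\gamma$. \emph{No-instance soundness}: if no state achieves $\Tr[\cP\rho]\ge 1-\gamma$, then the spectrum of $\cP$ lies below $1-\gamma$, so the reported estimate is below $1-\gamma+\epsilon$ except with $\negl$ probability, and the threshold is never crossed.

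The main obstacle is the parameter bookkeeping around the sharp $1-\gamma$ cutoff: QPE's finite precision introduces $\pm\epsilon$ slack in both directions, and the post-measurement disturbance of the gentle-measurement step degrades the bound on $\Tr[\cP\rho']$ by an additional $O(\sqrt{\epsilon})$ term. Both are controlled by choosing the QPE precision and the number of median-of-means repetitions to be suitable inverse polynomials in $\gamma$, with the overall runtime remaining $\poly(\secp,1/\gamma)=\poly(\secp)$ since $\gamma\ge 1/\poly(\secp)$. Once those constants are fixed, every bullet in the lemma reduces to a quantitative lemma already present in \cite{TCC:Zhandry20,C:ALLZZ21}, so the body of the proof is largely a careful invocation of those statements.
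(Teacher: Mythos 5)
Your proposal is correct and follows essentially the same route as the paper: both instantiate Zhandry's approximate projective implementation (the Marriott--Watrous / QPE machinery you sketch is exactly what \cite{TCC:Zhandry20} packages), threshold the estimate at $1-\gamma/2$, and derive each bullet from the approximation and near-projectivity guarantees. The one small deviation is that you appeal to gentle measurement (incurring an $O(\sqrt{\epsilon})$ loss on the post-measurement state) where the paper instead uses the approximate projectivity of $\ATI_\gamma$ together with the $\delta$-approximation of the threshold projector (incurring only a $\negl$ loss, which is why it can afford $\epsilon = \gamma/2$); your more conservative choice of $\epsilon \ll \gamma$ absorbs the extra slack, so the conclusion is unaffected.
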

\begin{proof}[Proof Sketch]
    This follows by plugging in explicit parameters to Corollary 1 in \cite{C:ALLZZ21} and Theorem 2 in \cite{TCC:Zhandry20}. Specifically, set the approximation precision $\epsilon = \gamma/2$, set the approximation accuracy $\delta = 2^\secpar$, and set the threshold to 1. 
    
    Their algorithm runs in time $\poly(2/\gamma(\secp), \log(2^\secpar)) = \poly(\secpar)$ and is $\delta$-approximately projective. It $\delta$-approximates the threshold projective implementation $(\Pi_{\geq 1 - \gamma/2}, I - \Pi_{\geq 1-\gamma/2})$ of $(\cP, \cQ)$. Specifically, if $\ket{\psi}$ is in the image of $\Pi_{\geq 1 - \gamma/2}$, then $\ATI_\gamma$ accepts $\ket{\psi}$ with probability $1-\delta$ and otherwise it rejects it with probability $1-\delta$. Here, $\Pi_{\geq 1 - \gamma/2}$ projects onto eigenstates of the projective implementation of $(\cP, \cQ)$ with eigenvalues $\geq 1-\gamma/2$. Any such eigenstate $\ket{\psi}$ with eigenvalue $\zeta$ has $\Tr[\cP \ket{\psi}] = \zeta$.

    For any state $\rho$ where $\Pr[\Accept \gets \ATI_\gamma(\rho)] =\negl(\secpar)$, it is clearly the case that 
     \[
        \Pr\left[\begin{array}{c}
             b = \Accept\ \land \\
             \Tr[\cP\rho'] \leq 1-\gamma
             \end{array}
             :
             \ (\rho', b)\gets \ATI_\gamma(\rho)
             \right] = \negl(\secpar).
    \]
    On the other hand, if $\ATI_\gamma$ accepts $\rho$ with noticeable probability, then by approximate projectivity the probability that $\ATI_\gamma$ accepts $\rho$ but then rejects the residual state $\rho'$ is negligible. Suppose we are in the case where $\Pr[\Accept \gets \ATI_\gamma(\rho')] = 1-\negl(\secpar)$.
    Since $\ATI_\gamma$ $(2^{-\secpar})$-approximates $(\Pi_{\geq 1 - \gamma/2}, I - \Pi_{\geq 1-\gamma/2})$, $\rho'$ must have negligible projection onto the eigenspace of the projective implementation of $\cP$ with eigenvalues $< 1-\gamma/2$. In other words,  $\Tr[\cP \rho'] \geq 1-\gamma/2 - \negl > 1-\gamma$.

    If $\Tr[\cP \ket{\psi}] \geq 1 - \negl(\secpar)$, then $\ket{\psi}$ must have negligible projection onto the eigenspace of the projective implementation of $(\cP, \cQ)$ with eigenvalues $\leq 1-1/p$ for any $p = \poly(\secpar)$. Thus, $\ket{\psi}$ is accepted by $\ATI_\gamma$ with probability $1-\delta = 1-2^{-\secpar}$. 

    On the other hand, if $\Tr[\cP \rho] < 1-\gamma$ for every $\rho$, then the maximum eigenvalue of projective implementation of $(\cP, \cQ)$ is $< \gamma$. Therefore every state $\rho$ is in the image of $1- \Pi_{\geq 1-\gamma/2}$ and thus $\ATI_\gamma$ rejects $\rho$ with probability $1-\delta = 1-2^{-\secpar}$.
\end{proof}

\subsection{Quantum Complexity}

Let $\cB$ denote the Hilbert space of a single qubit, and let $D(\cB^{\otimes p})$ be the set of $p$-qubit density matrices, i.e.\ $\rho \in D(\cB^{\otimes p})$ iff $\rho$ is PSD and has trace 1.

\begin{definition}[$\QMA$ Promise Problem]
    \label{def:qma}
    A promise problem $(\cL_{yes}, \cL_{no}) \in \QMA$ if there exists a quantum polynomial-size family of circuits $Q = \{Q_n\}_{n \in \bbN}$, a polynomial $p(\cdot)$, and an $\epsilon(\cdot)$ such that $2^{-\Omega(\cdot)} \le \epsilon(\cdot) \le \frac{1}{3}$, such that
    \begin{itemize}
        \item For all $x \in \cL_{yes}$, there exists $\ket{\psi} \in \cB^{\tensor p(|x|)}$ such that $\Pr[Q_{|x|}(x, \ket{\psi}) = 1] \ge 1 - \epsilon(|x|)$.
        \item For all $x \in \cL_{no}$ and all $\ket{\psi} \in \cB^{\tensor p(|x|)}$, $\Pr[Q_{|x|}(x, \ket{\psi}) = 1] \le \epsilon(|x|)$.
    \end{itemize}
\end{definition}

Given a quantum polynomial-size family of circuits $Q = \{Q_n\}_{n \in \bbN}$ where each $Q_n$ takes as input a string $x \in \{0,1\}^n$ and a state $\ket{\psi}$ on $p(n)$ qubits, and a function $\gamma: \bbN \to [0,1]$, we define the sets

\[R_{Q,\gamma} \coloneqq \bigcup_{n \in \mathbb{N}}\left\{(x, \rho) \in \{0,1\}^n \otimes D(\cB^{\otimes p(|x|)}) \ | \ \Pr[Q_n(x,\rho) = 1] \geq \gamma(n)\right\},\]

and 
\[N_{Q,\gamma} \coloneqq \bigcup_{n \in \mathbb{N}}\left\{x \in \{0,1\}^n \ | \ \forall \rho \in D(\cB^{\otimes p(|x|)}), \Pr[Q_n(x,\rho) = 1] < \gamma(n)\right\}.\]

Using this notation, we can define the notion of a QMA relation.

\begin{definition}[$\QMA$ Relation]
    \label{def:qma-rel}
    A QMA language $(\cL_{yes}, \cL_{no})$ is specified by a QMA Relation $(Q,\alpha,\beta)$, where $Q = \{Q_n\}_{n \in \bbN}$ is a quantum polynomial-size family of circuits, and $\alpha,\beta: \bbN \to [0,1]$ are functions, if 
    \[\cL_{yes} \subseteq \bigcup_{n \in \mathbb{N}}\left\{x \in \{0,1\}^n \ | \ \exists \rho \in D(\cB^{\otimes p(|x|)}) \ \text{s.t.} \ (x,\rho) \in R_{Q,\alpha}\right\},\] and $\cL_{no} \subseteq N_{Q,\beta}$.

\end{definition}

Next, we discuss the Local Hamiltonian problem.

\begin{definition}[$2$-local $ZX$-Hamiltonian problem~\cite{BL08,CM16,MNS16}]
    \label{def:2locHam}
    The $2$-local $ZX$-Hamiltonian promise problem $(\cL_{yes}, \cL_{no})$, with functions $a, b$ where $b(n) > a(n)$ and gap $b(n) - a(n) > \poly(n)^{-1}$ for all $n \in \bbN$ is defined as follows. An instance is a Hermitian operator on some number $n$ of qubits, taking the following form:
    \begin{equation*}
        H = \sum_{\substack{i < j \\ S \in \{Z, X\}}} p_{i,j}P_{i, j, S} 
    \end{equation*}
    where probability $p_{i,j} \in [0, 1]$ with $\sum_{i < j} 2p_{i,j} = 1$,  and projector $P_{i, j, S} = \frac{\identity + (-1)^{\beta_{i, j}}S_iS_j}{2}$ for $\beta_{i, j} \in \zo$. 
    \begin{itemize}
        \item $H \in \cL_{yes}$ if the smallest eigenvalue of $H$ is at most $a(n)$.
        \item $H \in \cL_{no}$ if the smallest eigenvalue of $H$ is at least $b(n)$.
    \end{itemize}
\end{definition}

\begin{theorem}[$2$-local $ZX$-Hamiltonian is QMA-complete~\cite{BL08}]
    The 2-local $ZX$-Hamiltonian problem with functions $a,b$ (\cref{def:2locHam}) is QMA-complete if $b(n) - a(n) > \poly(n)^{-1}$.
\end{theorem}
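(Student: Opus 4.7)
The plan is to establish both QMA-containment and QMA-hardness.

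For containment, I construct a simple QMA verifier: on input a Hamiltonian $H = \sum_{i<j,\, S \in \{X,Z\}} p_{i,j} P_{i,j,S}$ and a witness state $\rho$, sample a term $(i,j,S)$ with probability $p_{i,j}$ (so that each pair $(i,j)$ is sampled with total weight $2p_{i,j}$, consistent with the normalization $\sum_{i<j} 2p_{i,j} = 1$), measure the binary-outcome projective measurement $\{P_{i,j,S}, \identity - P_{i,j,S}\}$ on qubits $i,j$, and accept if the $\identity - P_{i,j,S}$ outcome occurs. The acceptance probability is exactly $1 - \mathrm{Tr}(H\rho)$. Thus yes instances (ground energy at most $a(n)$) are accepted with probability at least $1-a(n)$ on the ground state, while no instances (ground energy at least $b(n)$) are accepted with probability at most $1-b(n)$ on any witness. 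Since $b(n) - a(n) > 1/\poly(n)$, standard parallel-repetition amplification for $\QMA$ boosts this to the canonical completeness/soundness thresholds.

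For $\QMA$-hardness, I reduce from the general $2$-local Hamiltonian problem, which is known to be $\QMA$-complete. Given $H' = \sum_{i<j} H'_{i,j}$ where each $H'_{i,j}$ is an arbitrary Hermitian operator on qubits $i,j$ (a linear combination of the sixteen two-qubit Pauli operators), I invoke the perturbation-gadget machinery of \cite{BL08}, with refinements in \cite{CM16,MNS16}, which replaces each non-$ZZ$/$XX$ Pauli interaction with a gadget that introduces ancillary qubits and couples them to the original qubits using only $ZZ$ and $XX$ interactions. The couplings are tuned so that low-order perturbation theory on a dominant ancilla Hamiltonian (with spectral gap $\Delta$) produces the desired effective interaction in the low-energy subspace; terms involving $Y$ operators are handled by gadgets that exploit the Pauli commutation structure to synthesize $Y$-type effective couplings out of $X$ and $Z$ primitives at the appropriate perturbative order.

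The main obstacle will be the gap analysis. Each $k$-th order gadget introduces an approximation error of order $\|H'\|^k / \Delta^{k-1}$ between the true low-lying spectrum of $H'$ and that of the gadget Hamiltonian $H$. To preserve the inverse-polynomial promise gap, I would pick $\Delta = \poly(n)$ large enough that the cumulative error over all $O(n^2)$ gadgets is at most, say, one third of $b(n)-a(n)$. Then yes instances of $H'$ map to gadget Hamiltonians with ground energy at most some $a'(n)$, no instances map to ground energy at least some $b'(n)$, and $b'(n) - a'(n) > 1/\poly(n)$. Finally, I would normalize the resulting $ZZ$/$XX$ Hamiltonian so that $\sum_{i<j} 2 p_{i,j} = 1$ and read off the sign bits $\beta_{i,j} \in \{0,1\}$ to match the form of \cref{def:2locHam}, relabelling $(a',b')$ as $(a,b)$.
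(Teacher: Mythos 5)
The paper does not prove this theorem; it is imported verbatim from \cite{BL08} (with \cite{CM16,MNS16} providing the specific $ZZ/XX$-projector form used in \cref{def:2locHam}), and the manuscript relies on it as a black box. There is therefore no ``paper proof'' to compare against, and your sketch should be judged against the cited literature.

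Your outline of containment is correct and matches the standard argument: the single-term sampling verifier has acceptance probability exactly $1-\mathrm{Tr}(H\rho)$ under the stated normalization, and the inverse-polynomial promise gap is then amplified to canonical thresholds. Your outline of hardness via perturbation gadgets is also the right high-level route. The one place your sketch departs from how \cite{BL08} (and the follow-ups) actually proceed is the claim that $Y$-type interactions are synthesized by gadgets ``at the appropriate perturbative order.'' In the cited line of work the $Y$ operators are eliminated \emph{before} the gadget step, by observing that $\QMA$ is unchanged if one restricts to verifier circuits over a real universal gate set, so that the starting 2-local Hamiltonian (e.g.\ the Kempe--Kitaev--Regev construction) can already be taken with real matrix elements and hence free of odd-$Y$ couplings; the gadgets then only need to convert the remaining real 2-local terms into $ZZ$/$XX$ form. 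A direct $Y$-synthesizing gadget is not what the reference does, and getting the signs and Hermiticity right for such a gadget is more delicate than your sketch suggests. Since you already flag the gap analysis as the main obstacle and the rest of the plan is sound, this is a difference in technique rather than a fatal flaw, but it is worth knowing that the realification step is the standard (and cleaner) way to dispose of $Y$.
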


\begin{definition}[$2$-local ZX-Hamiltonian Verifier~\cite{MNS16}]
    \label{def:zxham}
    Let $(\cL_{yes}, \cL_{no})$ be a $2$-local $ZX$-Hamiltonian promise problem. There exists functions $p_{yes}, p_{no}$ where $p_{yes},p_{no}: \bbN \to [0,1]$ and $p_{yes}(n) - p_{no}(n) \ge \poly(n)^{-1}$ for all $n$ such that the following construction has the subsequent properties:

    \textbf{Construction.}
    \begin{itemize}
        \item $(i, j, S) \gets \Samp(H; r)$: The classical polynomial-size circuit $\Samp$ on input instance $H$ outputs indices $i, j$ and choice of basis $S \in \{Z, X\}$ with probability $p_{i,j}$ using uniform randomness $r$.
        
        \item $\ZXVer(H, \ket{\psi}) \in \zo$: The quantum polynomial-size circuit $\ZXVer$ on input instance $H$ and witness $\ket{\psi}$,
        \begin{enumerate}
            \item Sample projector indices $(i, j, S) \gets \Samp(H)$.
            \item Measure the $i$th and $j$th qubits of $\ket{\psi}$ with the projector $\{M_0 = \frac{\identity - S}{2}, M_1 = \frac{\identity + S}{2}\}$ to get $b_i$ and $b_j$.
            \item Output $b_i \xor b_j \xor \beta_{i,j}$. 
            
        \end{enumerate}
    \end{itemize}

    \textbf{Properties.} 
    \begin{itemize}
        \item \textbf{Correctness.}
        For every $H \in \cL_{yes}$ with witness $\ket{\psi}$,
        \begin{equation*}
            \Pr[\ZXVer(H, \ket{\psi}) = 1] \ge p_{yes}(n)
        \end{equation*}
        
        \item \textbf{Soundness.}
        For every $H \in \cL_{no}$, and for every $\rho$,
        \begin{equation*}
            \Pr[\ZXVer(H, \rho) = 1] \le p_{no}(n)
        \end{equation*}
    \end{itemize}
\end{definition}

\subsection{Encryption}

\paragraph{Public-key encryption.} We first define standard (post-quantum) public-key encryption.

\begin{definition}[Post-Quantum Public-Key Encryption]
\label{def:enc}
$(\Gen, \Enc, \Dec)$ is a post-quantum public-key encryption scheme if it has the following syntax and properties.

\noindent \textbf{Syntax.}
\begin{itemize}
    \item $(\pk, \sk) \gets \Gen(1^\lambda)$: The polynomial-time algorithm $\Gen$ on input security parameter $1^\lambda$ outputs a public key $\pk$ and a secret key $\sk$.
    \item $\ct \gets \Enc(\pk, m; r)$: The polynomial-time algorithm $\Enc$ on input a public key $\pk$, message $m$ and randomness $r \in \zo^{r(\lambda)}$ outputs a ciphertext $\ct$.
    \item $m \gets \Dec(\sk, \ct)$: The polynomial-time algorithm $\Dec$ on input a secret key $\sk$ and a ciphertext $\ct$ outputs a message $m$.
\end{itemize}

\noindent \textbf{Properties.}
\begin{itemize}
    \item \textbf{Perfect Correctness}: For every $\lambda \in \bbN^+$ and every $m, r$,
    \begin{equation*}
        \Pr_{\substack{(\pk, \sk) \gets \Gen(1^\lambda)}}[\Dec(\sk, \Enc(\pk, m; r)) = m] = 1.
    \end{equation*}

    \item \textbf{Indistinguishability under Chosen-Plaintext (IND-CPA) Secure}: There exists a negligible function $\negl(\cdot)$ such that for every polynomial-size quantum circuit $\cA = (\cA_0, \cA_1)$ and every sufficiently large $\lambda \in \bbN^+$
    \begin{equation*}
        \left\vert \Pr_{\substack{(\pk, \sk) \gets \Gen(1^\lambda) \\  (m_0, m_1, \zeta) \gets \cA_0(1^\lambda, \pk) \\ \ct \gets \Enc(\pk, m_0)}}[\cA_1(1^\lambda, \ct, \zeta) = 1] - \Pr_{\substack{(\pk, \sk) \gets \Gen(1^\lambda) \\  (m_0, m_1, \zeta) \gets \cA_0(1^\lambda, \pk) \\ \ct \gets \Enc(\pk, m_1)}}[\cA_1(1^\lambda, \ct, \zeta) = 1]\right\vert \le \negl(\lambda).
    \end{equation*}
\end{itemize}
\end{definition}

\begin{theorem}
    \label{thm:enc}
\end{theorem}

\paragraph{Functional encryption.} Next, we define a flavor of (1-key) functional encryption described in \cite{JLLW23}, with the additional requirement that it is post-quantum secure.

\begin{definition}[Post-quantum 1-key FE]\label{def:FE}
    A post-quantum (public-key) 1-key functional encryption scheme (for circuits) has the following syntax and properties.
    
    \noindent \textbf{Syntax.}
    \begin{itemize}
        \item $(\pk,\sk_f) \gets \Gen(1^\secp,f)$: The $\Gen$ algorithm takes a circuit $f: \{0,1\}^n \to \{0,1\}^*$ and outputs a public (encryption) key $\pk$ and a secret (decryption) key for $f$.
        \item $\ct \gets \Enc(\pk,z)$: The $\Enc$ algorithm takes a public key and a plaintext $z \in \{0,1\}^n$ and outputs a ciphertext $\ct$.
        \item $f(z) \gets \Dec(\sk_f,\ct)$: The $\Dec$ algorithm takes a secret key for $f$ and a ciphtertext and outputs a string $f(z)$.
    \end{itemize}
    \noindent \textbf{Properties.}
    \begin{itemize}
        \item \textbf{Perfect correctness.} For all $\secp \in \bbN$, circuit $f : \{0,1\}^n \to \{0,1\}^*$, and input $z \in \{0,1\}^n$, it holds that
        \[\Pr\left[\Dec(\sk_f,\ct) = f(z) : \begin{array}{r}(\pk,\sk_f) \gets \Gen(1^\secp,f) \\ \ct \gets \Enc(\pk,z)\end{array}\right] = 1.\]
        \item \textbf{Subquadratic-sublinear efficiency.} $\Enc$ runs in time $(n^{2-2\epsilon} + m^{1-\epsilon})\poly(\secp)$ for some constant $\epsilon > 0$, where $n = |z|$ is the input length of $f$ and $m = |f|$ is the circuit size of $f$.
        \item \textbf{Post-quantum adaptive security.} For any $b \in \{0,1\}$ and adversary $\cA$, let $\Exp^{\cA,b}_{1\text{-key}}$ be the following experiment.
        \begin{itemize}
            \item $\cA(1^\secp)$ outputs a circuit $f: \{0,1\}^n \to \{0,1\}^*$. Run $(\pk,\sk_f) \gets \Gen(1^\secp,f)$, and send $(\pk,\sk_f)$ to $\cA$.
            \item $\cA$ chooses two inputs $z_0,z_1 \in \{0,1\}^n$. Run $\ct \gets \Enc(\pk,z_b)$ and send $\ct$ to $\cA$.
            \item $\cA$ outputs a bit $b' \in \{0,1\}$. The outcome of the experiment is $b'$ if $f(z_0) = f(z_1)$, and is otherwise set to 0.
        \end{itemize}
        There exists an $\epsilon > 0$ such that for any QPT adversary $\cA$, it holds that  \[\bigg|\Pr\left[\Exp^{\cA,0}_{1\text{-key}} = 0\right] - \Pr\left[\Exp^{\cA,1}_{1\text{-key}} = 0\right]\bigg| \leq 2^{-\secp^\epsilon}.\]
    \end{itemize}
\end{definition}

\subsection{NIZK for NP}

\begin{definition}[Post-Quantum NIZK for $\NP$ in the CRS Model]
\label{def:nizk-np}
Let $\NP$ relation $\cR$ with corresponding language $\cL$ be given such that they can be indexed by a security parameter $\lambda \in \bbN$.

$\Pi = (\Setup, \sP, \sV)$ is a post-quantum non-interactive zero-knowledge argument for $\NP$ in the URS model if it has the following syntax and properties.

\noindent \textbf{Syntax.}
The input $1^\lambda$ is left out when it is clear from context.
\begin{itemize}
    \item $\crs \gets \Setup(1^\lambda)$: The probabilistic polynomial-size circuit $\Setup$ on input $1^\lambda$ outputs a common random string $\crs$.
    \item $\pi \gets \sP(1^\lambda, \crs, x, w)$: The probabilistic polynomial-size circuit $\sP$ on input a common random string $\crs$ and instance and witness pair $(x, w) \in \cR_\lambda$, outputs a proof $\pi$.
    \item $\sV(1^\lambda, \crs, x, \pi) \in \zo$: The probabilistic polynomial-size circuit $\sV$ on input a common random string $\crs$, an instance $x$, and a proof $\pi$ outputs $1$ iff $\pi$ is a valid proof for $x$.
\end{itemize}

\noindent \textbf{Properties.}
\begin{itemize}
    \item {\bf Uniform Random String.}
    $\Setup(1^\lambda)$ outputs a uniformly random string $\crs$.
    
    \item {\bf Perfect Completeness.}
    For every $\lambda \in \bbN$ and every $(x, w) \in \cR_\lambda$,
    \begin{equation*}
        \Pr_{\substack{\crs \gets \Setup(1^\lambda) \\ \pi \gets \sP(\crs, x, w)}}[\sV(\crs, x, \pi) = 1] = 1.
    \end{equation*}

    \item {\bf Adaptive Statistical (Computational) Soundness.}
    There exists a negligible function $\negl(\cdot)$ such that for every unbounded (polynomial-size) quantum circuit $\cA$ and every sufficiently large $\lambda \in \bbN$,
    \begin{equation*}
        \Pr_{\substack{\crs \gets \Setup(1^\lambda) \\ (x, \pi) \gets \cA(\crs)}}[\sV(\crs, x, \pi) = 1 \wedge x \not\in \cL_\lambda]  \le \negl(\lambda).
    \end{equation*}

    \item {\bf Non-Adaptive Computational $T$-Soundness.}
    There exists a negligible function $\negl(\cdot)$ such that for every $\poly(T)$-size quantum circuit $\cA$ and every sufficiently large $\lambda \in \bbN$ and $x \not\in \cL_\lambda$,
    \begin{equation*}
        \Pr_{\substack{\crs \gets \Setup(1^\lambda) \\ \pi \gets \cA(\crs)}}[\sV(\crs, x, \pi) = 1]  \le \negl(T(\lambda)).
    \end{equation*}

    \item {\bf Adaptive Computational Zero-Knowledge.}
    There exists a probabilistic polynomial-size circuit $\Sim = (\Sim_0, \Sim_1)$ and a negligible function $\negl(\cdot)$ such that for every polynomial-size quantum circuit $\cD = (\cD_0, \cD_1)$, and every sufficiently large $\lambda \in \bbN$,
    \begin{align*}
        &\left\vert \Pr_{\substack{\crs \gets \Setup(1^\lambda) \\ (x, w, \zeta) \gets \cD_0(\crs) \\ \pi \gets \sP(\crs, x, w)}}[\cD_1(\crs, x, \pi, \zeta) = 1 \wedge x \in \cL_\lambda] - \Pr_{\substack{\crs \gets \Sim_0(1^\lambda) \\ (x, w, \zeta) \gets \cD_0(\crs) \\ \pi \gets \Sim_1(\crs, x)}}[\cD_1(\crs, x, \pi, \zeta) = 1 \wedge x \in \cL_\lambda]\right\vert \\
        & \quad \le \negl(\lambda).
    \end{align*}

    \item {\bf Non-Adaptive Statistical Zero-Knowledge.}
    There exists a probabilistic polynomial-size circuit $\Sim$ and a negligible function $\negl(\cdot)$ such that for every unbounded quantum circuit $\cD$, and every sufficiently large $\lambda \in \bbN$ and every $(x, w) \in \cR_\lambda$,
    \begin{equation*}
        \left\vert \Pr_{\substack{\crs \gets \Setup(1^\lambda) \\ \pi \gets \sP(\crs, x, w)}}[\cD(\crs, x, \pi) = 1] - \Pr_{\substack{(\crs, \pi) \gets \Sim(x)}}[\cD(\crs, x, \pi) = 1]\right\vert \le \negl(\lambda).
    \end{equation*}
\end{itemize}
\end{definition}

\begin{theorem}[Post-Quantum NIZK proof for $\NP$ with CRS~\cite{C:PeiShi19}]
    \label{thm:nizk-np-crs}
    Assuming the polynomial quantum hardness of LWE, there exists an adaptively statistically sound, adaptively computationally zero-knowledge non-interactive protocol for $\NP$ having a common reference string (\cref{def:nizk-np}).
\end{theorem}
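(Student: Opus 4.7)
The plan is to invoke the construction of Peikert and Shiehian \cite{C:PeiShi19} essentially as a black box, and then verify that the stated properties (adaptive statistical soundness, adaptive computational zero-knowledge, post-quantum security) all follow from their analysis. Concretely, I would first recall the high-level structure of their NIZK: they construct a correlation-intractable hash function family from LWE, and use it to apply the Fiat--Shamir transform to a classical $\Sigma$-protocol for an NP-complete language (e.g., Blum's Hamiltonicity protocol after suitable parallel repetition). The CRS consists of a key for the correlation-intractable hash together with any public parameters required by the $\Sigma$-protocol (such as a commitment key for a statistically-binding commitment scheme, itself instantiated from LWE).

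Next, I would verify each property in turn. For \emph{adaptive statistical soundness}, I would appeal to the fact that the underlying $\Sigma$-protocol has negligible soundness error after parallel repetition (statistically), and that correlation-intractability of the hash family holds with respect to a ``bad-challenge'' relation that is sparse. The PS19 analysis shows that any prover -- even an unbounded one interacting with the CRS adaptively -- succeeds in producing an accepting proof for a false statement only with negligible probability. For \emph{adaptive computational zero-knowledge}, the simulator is the standard Fiat--Shamir simulator: it samples a simulated CRS (indistinguishable from real by the hiding properties of the hash key and commitments), runs the honest-verifier $\Sigma$-protocol simulator in reverse to produce a first message consistent with a chosen challenge, and programs/arranges the CRS so that this challenge matches the hash output. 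Adaptivity follows because CRS indistinguishability does not depend on the choice of statement.

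Finally, post-quantum security is essentially free: every cryptographic primitive used by PS19 (LWE-based commitments, the trapdoor hash underlying correlation intractability, and the commitment-to-$\Sigma$-protocol compiler) has security reducing to LWE, which by hypothesis is quantum-hard against polynomial-time adversaries. The computational zero-knowledge reduction goes through identically in the quantum setting since it makes only polynomially many black-box reductions to LWE without rewinding. Statistical soundness is unconditional and therefore trivially holds against quantum provers.

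I do not expect any genuine technical obstacle here, since the result is essentially a restatement of \cite{C:PeiShi19} packaged in our notation. The only point requiring mild care is ensuring \emph{adaptivity} of both security properties: adaptive soundness is already the form proved by PS19, while adaptive computational zero-knowledge follows from the non-adaptive version by a standard hybrid argument exploiting the fact that the simulated CRS does not depend on the statement. No new ideas are introduced; the theorem serves purely as an off-the-shelf ingredient whose post-quantum version was already implicit in \cite{C:PeiShi19}.
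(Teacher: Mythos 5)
Your proposal is correct and simply unpacks the cited result: in the paper this theorem is an imported statement from \cite{C:PeiShi19} with no proof given, and your sketch accurately reconstructs how Peikert--Shiehian obtain it (correlation-intractable hashes from LWE applied to a Fiat--Shamir-compiled $\Sigma$-protocol, with statistical soundness from the CI analysis against the sparse bad-challenge relation and computational ZK from CRS indistinguishability). The post-quantum remark is also sound, since the reductions are straight-line and soundness is unconditional.
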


\begin{theorem}[Post-Quantum NISZK argument for $\NP$ with URS~\cite{C:PeiShi19}]
    \label{thm:niszk-np-urs}
    Assuming the polynomial quantum hardness of LWE, there exists a non-adaptively computationally sound, non-adaptively statistically zero-knowledge non-interactive protocol for $\NP$ having a uniform random string (\cref{def:nizk-np}).
\end{theorem}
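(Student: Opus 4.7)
The plan is to leverage the dual-mode structure of the Peikert--Shi construction underlying \cref{thm:nizk-np-crs} and then argue that the common reference string can be replaced by a uniformly random string. PS19's NIZK admits two computationally indistinguishable setup modes under LWE: a "binding" mode that yields statistical soundness (and computational zero-knowledge), and a "hiding" mode that yields computational soundness (and statistical zero-knowledge). I would instantiate the protocol in the hiding mode, in which the CRS is essentially an LWE instance together with public-coin auxiliary data and is hence pseudorandom under LWE. Accordingly, $\Setup(1^\secp)$ is redefined to simply output a uniformly random string of the appropriate length; by the LWE assumption this URS is computationally indistinguishable from the hiding-mode CRS that the protocol expects.

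For non-adaptive computational soundness, I would fix any $x \notin \cL_\secp$ and suppose a polynomial-size $\cA$ convinces $\sV$ on a URS with noticeable probability. A two-step hybrid first replaces the URS with a hiding-mode CRS (by LWE) and then with a binding-mode CRS (by dual-mode indistinguishability); both hops go through because $x$ is fixed before the CRS is sampled, so the reduction can simply hardwire $x$. The resulting adversary would then break statistical soundness of PS19 in the binding mode, a contradiction. For non-adaptive statistical zero-knowledge, the simulator $\Sim(x)$ samples a hiding-mode CRS together with its trapdoor, uses the trapdoor to produce a simulated proof $\pi$, and outputs $(\crs, \pi)$; the statistical zero-knowledge of PS19 in the hiding mode guarantees that $(\crs, \pi)$ is statistically close to $(\crs, \sP(\crs, x, w))$ whenever $\crs$ is drawn from the hiding mode.

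The main obstacle is reconciling statistical zero-knowledge with the URS: the honest $\Setup$ outputs a truly uniform string while the simulator outputs a hiding-mode CRS, and these distributions are only \emph{computationally} indistinguishable under plain LWE, whereas the non-adaptive statistical ZK definition requires indistinguishability against an unbounded $\cD$. The resolution is to choose LWE parameters so that the hiding-mode CRS is in fact \emph{statistically} (not merely computationally) close to uniform, which is achievable via a leftover-hash-lemma-style analysis with sufficiently wide noise, exactly as in the statistical-hiding variants of LWE commitments and lossy trapdoor samplers underlying PS19. The non-adaptivity of the definition is what makes this feasible: because $x$ is fixed in advance, the simulator can tailor its lossy sampling to $x$ without needing to preserve computational-only indistinguishability over adaptive queries, so the statistical closeness propagates all the way to $(\crs_{\mathrm{URS}}, \sP(\crs_{\mathrm{URS}}, x, w))$ and the resulting protocol satisfies the claimed properties.
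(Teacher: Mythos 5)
The paper gives no proof of this statement at all: the theorem is tagged with the citation [C:PeiShi19] and imported as a black box, just as \cref{thm:nizk-np-crs} is, so there is nothing internal to compare your argument against. Treating your proposal instead as an independent attempt to reconstruct the Peikert--Shi argument, the skeleton is right: their correlation-intractable hash is built from GSW-style encryption, and noise flooding (your ``leftover-hash-lemma with sufficiently wide noise'') is indeed what makes the hash key statistically close to uniform, so the simulator can sample a trapdoored key whose marginal is statistically near-uniform, and your two-hop soundness hybrid (URS to hiding CRS to binding CRS, each hop computational) is the standard way to transport statistical soundness of the binding mode to computational soundness of the URS protocol.

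Two details are worth tightening. First, the trapdoor the simulator keeps is the FHE \emph{secret key}, not the ciphertext's encryption randomness; the latter would be information-theoretically unrecoverable once the ciphertext is statistically near-uniform, so the flooding does not actually conflict with trapdoor usefulness, but your write-up leaves this ambiguous and a reader could reasonably worry that a near-uniform CRS cannot carry any trapdoor at all. Second, the reason non-adaptivity is forced is not really that ``the simulator can tailor its lossy sampling to $x$''---the hash key in the simulation need not depend on $x$---but rather that the correlation-intractable \emph{bad relation} underlying soundness is defined relative to the fixed no-instance $x$, and correspondingly each hop in your soundness hybrid needs $x$ hardwired into the reduction. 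With those two clarifications, your proposal is a reasonable sketch of how the result cited in \cref{thm:niszk-np-urs} is actually established.
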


\begin{corollary}[Post-Quantum NISZK sub-exp argument for $\NP$ with URS]
    \label{cor:subexp-nizk-np-urs}
    Assuming the sub-exponential quantum hardness of LWE, there exists a non-adaptively computationally sound, non-adaptively statistically zero-knowledge non-interactive protocol for $\NP$ with sub-exponential computational soundness error having a uniform random string (\cref{def:nizk-np}).
\end{corollary}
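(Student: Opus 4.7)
The plan is to instantiate \cref{thm:niszk-np-urs} with the sub-exponential variant of LWE and argue that computational soundness scales accordingly, while leaving the information-theoretic zero-knowledge guarantee unchanged. The Peikert--Shiehian URS-model construction reduces non-adaptive computational soundness to LWE via a hybrid argument consisting of a $\poly(\lambda)$ number of steps, each being a polynomial-time reduction. Hence a quantum adversary of size $S(\lambda)$ that violates soundness with advantage $\varepsilon(\lambda)$ yields a quantum LWE distinguisher of size $\poly(S(\lambda), \lambda)$ with advantage $\varepsilon(\lambda)/\poly(\lambda)$.

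First, I would inspect the two components of the PS pipeline---the CRS-model NIZK from \cref{thm:nizk-np-crs} together with the CRS-to-URS transformation---to confirm that each step introduces only polynomial loss in the size and advantage parameters, with no guessing or enumeration sub-steps that would blow the reduction up to quasi-polynomial. Second, under sub-exponential quantum hardness of LWE, every $2^{\lambda^{c}}$-size quantum distinguisher has advantage at most $2^{-\lambda^{\delta}}$ for some constants $c, \delta > 0$. Combining this with the reduction, every $\poly(T(\lambda))$-size prover (for any $T(\lambda) = 2^{\lambda^{c'}}$ with $c' < c$) achieves soundness advantage at most $\poly(\lambda) \cdot 2^{-\lambda^{\delta}} \leq 2^{-\lambda^{\delta'}}$ for any $\delta' < \delta$ and sufficiently large $\lambda$. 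Choosing $T(\lambda) = 2^{\lambda^{c'}}$ for appropriate $c'$ then witnesses the non-adaptive computational $T$-soundness condition of \cref{def:nizk-np} with a sub-exponentially small error $\negl(T(\lambda))$. The non-adaptive statistical zero-knowledge property of the PS construction is information-theoretic and is unaffected by strengthening the computational hardness assumption.

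The main (and fairly minor) obstacle is bookkeeping: verifying that the PS reduction is genuinely polynomial in $\lambda$ at every hybrid step. Should any step incur a $2^{\polylog(\lambda)}$ factor, the resulting soundness error would still be sub-exponential but with a slightly worse constant $\delta'$, which is fine since the statement only requires the existence of \emph{some} constant for which the soundness error is bounded by $2^{-\lambda^{\delta'}}$. Consequently the transfer from polynomial to sub-exponential hardness proceeds essentially by substitution, and no new cryptographic ingredients beyond stronger LWE are needed.
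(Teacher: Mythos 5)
Your proposal is correct and matches the paper's (one-line) justification: both observe that the soundness of the Peikert--Shiehian URS-model NISZK from \cref{thm:niszk-np-urs} reduces to LWE with only polynomial loss, so upgrading the assumption to sub-exponential quantum hardness directly yields sub-exponential soundness error, while the statistical zero-knowledge is unconditional and unaffected. The paper simply states that the corollary follows from \cref{thm:niszk-np-urs}, and your elaboration fills in exactly the bookkeeping that the paper leaves implicit.
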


\begin{proof}
    This follows from \cref{thm:niszk-np-urs}.
\end{proof}

\subsection{Obfuscation}

\begin{definition}[Indistinguishability obfuscation]\label{def:iO} An indistinguishability obfuscator has the following syntax.

\begin{itemize}
        \item $\Obf(1^\lambda,C) \to \widetilde{C}$. The obfuscation algorithm takes as input the security parameter and a circuit $C$, and outputs an obfuscated circuit $\widetilde{C}$.
        \item $\Eval(\widetilde{C},x) \to y$. The evaluation algorithm takes as input an obfuscated circuit $\widetilde{C}$ and an input $x$ and outputs $y$.
    \end{itemize}

    It should satisfy the following properties.

    \begin{itemize}
        \item \textbf{Functionality-preservation}. For any circuit $C$, $\widetilde{C} \in \Obf(1^\lambda,C)$, and $x$, $\Eval(\widetilde{C},x) = C(x)$.
        \item \textbf{(Sub-exponential) security}. There exists a constant $\epsilon > 0$ such that for any QPT adversary $\cA$ and $C_0,C_1$ such that $C_0 \equiv C_1$, 
        \[\Bigg|\Pr\left[\cA\left(\Obf(1^\lambda,C_0)\right) = 1\right] - \Pr\left[\cA\left(\Obf(1^\lambda,C_1)\right) = 1\right]\Bigg| \leq 2^{-\lambda^\epsilon}.\]
    \end{itemize}

\end{definition}

Before stating the next imported theorem, we introduce the following notation.  For any set $S$, define $C[S]$ to the membership-checking circuit that, on input a vector $v \in \bbF_2^n$, outputs 1 if $v \in S$, and outputs 0 otherwise.  

\begin{theorem}[Subspace-hiding obfuscation \cite{10.1007/s00145-020-09372-x}]\label{lemma:shO} Let $(\Obf,\Eval)$ be a sub-exponentially secure indistinguishability obfuscator, and suppose that sub-exponentially secure injective one-way functions exist. Let $S \subset \bbF_2^n$ be a subspace of $\bbF_2^n$ of dimension $d_0$, let $d_1$ be such that $d_0 < d_1 < n$, and define $\lambda = n-d_1$. There exists a polynomial $p(\cdot)$ such that for any QPT adversary $\cA$, 

\[\Bigg|\Pr\left[\cA\left(\Obf(1^{p(\lambda)},C[S])\right) = 1\right] - \left[\cA\left(\Obf(1^{p(\lambda)},C[T])\right) = 1 : T \gets \Sup_{d_1}(S)\right]\bigg| = 2^{-\Omega(\lambda)},\]

where $\Sup_{d_1}(S)$ is the set of superspaces of $S$ of dimension $d_1$.
    
\end{theorem}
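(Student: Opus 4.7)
The proof proceeds by a hybrid argument reducing to the case where $d_1 = d_0 + 1$: if we can show that obfuscations of $C[S]$ and $C[S + \mathrm{span}(u)]$ are indistinguishable for any single $u \notin S$, then iterating $d_1 - d_0$ times (with the triangle inequality) handles the general case, since sampling a uniform superspace $T \gets \Sup_{d_1}(S)$ is equivalent to extending $S$ by $d_1-d_0$ independently chosen vectors drawn uniformly outside the previously-built subspace. Because each step incurs loss $2^{-\Omega(\lambda)}$ and $d_1-d_0 \leq n$, the final bound remains $2^{-\Omega(\lambda)}$.

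For the single-step claim, the plan is to rewrite $C[S]$ into an operationally different program whose functionality is unchanged (so iO applies), and then shift a small hardcoded string (so the injective OWF applies). Concretely, one first uses a linear change of basis $M$ so that $S$ becomes the set of vectors whose last $n-d_0$ coordinates vanish. Then one replaces the zero-check on those coordinates with a comparison $f(\text{tail}(Mv)) \stackrel{?}{=} y$, where $f$ is the injective OWF and $y$ is initially hardcoded as $f(\vec{0})$; by injectivity this program is functionally identical to the original, so iO gives indistinguishability of the two obfuscations.

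Now the hybrids introduce the new direction $u$. Writing $u$ in the $M$-coordinates as $e = Mu$, consider intermediate programs that accept $v$ iff $f(\text{tail}(Mv))$ lies in a hardcoded list; one extends this list from $\{f(\vec{0})\}$ to $\{f(\vec{0}), f(\text{tail}(e))\}$ through a two-step hybrid. The first step swaps $f(\text{tail}(e))$ for a uniformly random string $y^\star$ in the range; since $y^\star$ is almost surely outside $\mathrm{Image}(f)$, the resulting program accepts the same set as before ($S$ only), so iO indistinguishability applies. The second step appeals to OWF security: replacing $y^\star$ with $f(\text{tail}(e))$ for fresh randomness is indistinguishable because $y^\star$ is statistically close to a fresh image point. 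A mirror image of the rewriting steps (basis change, collapsing the list back into a clean membership test) turns the result into an obfuscation of $C[S + \mathrm{span}(u)]$.

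The main obstacle is keeping the functional equivalence invariant exactly right at each hybrid, since iO can only be invoked when the two circuits compute identical functions on every input. The injectivity of $f$ combined with the fact that its image is negligibly dense in the codomain is precisely what permits the swap of hardcoded hash values without changing any input-output behavior; this is the delicate technical core of \cite{10.1007/s00145-020-09372-x}, and I would follow that structure verbatim rather than re-deriving it, appealing to the stated result as a black box when the single-step claim is secured.
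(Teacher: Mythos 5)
The paper does not actually prove this statement; it simply cites Zhandry's subspace-hiding result and remarks that the sub-exponential version follows by complexity leveraging (instantiating the obfuscator and OWF at a polynomially larger security parameter so that the negligible bound becomes $2^{-\Omega(\lambda)}$). Your proposal attempts to re-derive Zhandry's proof, and the internal sketch has a genuine gap.

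The problematic step is the transition from a uniformly random codomain point $y^\star$ to $f(\mathrm{tail}(e))$ for fresh $e$. You justify it by saying ``$y^\star$ is statistically close to a fresh image point,'' but this is false for a length-expanding injective one-way function: its image is a sparse subset of the codomain, so the uniform codomain distribution and the uniform image distribution have statistical distance close to $1$. This is exactly where the proof needs a genuine computational argument, and it is the crux of Zhandry's lemma; it cannot be waved through as a statistical step. Relatedly, your phrasing ``obfuscations of $C[S]$ and $C[S+\mathrm{span}(u)]$ are indistinguishable for any single $u \notin S$'' is not correct as stated: if $u$ is fixed and known (or adversarially chosen), the adversary can simply run the obfuscated circuit on $u$ and distinguish immediately. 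The correct one-step claim must be for a uniformly random $u$ that is hidden from the adversary, and the entire difficulty is preventing the adversary from recovering information about $u$ from the obfuscated code. The one-dimension-at-a-time decomposition is otherwise sensible (the chain of intermediate random superspaces composes correctly, and the per-step security parameter $n-\dim(T_i) \geq n - d_1 = \lambda$ is never smaller than $\lambda$), and you do eventually fall back to citing the reference as a black box, which is also what the paper does; but the specific hybrid argument you propose for the one-step claim does not work as written.
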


We remark that \cite{10.1007/s00145-020-09372-x} proves the slightly different statement that, assuming polynomially-secure iO and injective one-way functions, the above advantage is at most negligible in some parameter $\lambda$, as long as $n-d_1$ is linear in $\lambda$. It is straightforward to port their proof to our setting of sub-exponential security.

Finally, we note that the following notion of point-function obfuscation follows as a corollary.

\begin{theorem}[Point-function obfuscation]\label{thm:point-obf}
    Let $(\Obf,\Eval)$ be a sub-exponentially secure indistinguishability obfuscator, and suppose that sub-exponentially secure injective one-way functions exist. There exists a polynomial $p(\cdot)$ such that for any QPT adversary $\cA$, 
    
    \[\Bigg|\Pr\left[\cA\left(\Obf(1^{p(\lambda)},C[\{\}])\right) = 1\right] - \left[\cA\left(\Obf(1^{p(\lambda)},C[\{x\}])\right) = 1 : x \gets \{0,1\}^\secp\right]\bigg| = 2^{-\Omega(\lambda)}.\]
\end{theorem}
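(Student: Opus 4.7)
The plan is to argue that $\Obf(1^{p(\lambda)},C[\{\}])$ is $2^{-\Omega(\lambda)}$-indistinguishable from $\Obf(1^{p(\lambda)},C[\{x\}])$ (for $x \gets \{0,1\}^\lambda$) via a hybrid argument that combines sub-exponential iO with the injective one-way function, in the spirit of the proof of \cref{lemma:shO} rather than as a literal black-box corollary.

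First, I would obtain an injective function $G: \{0,1\}^\lambda \to \{0,1\}^{3\lambda}$ whose output on uniform input is $2^{-\Omega(\lambda)}$-indistinguishable from uniform on $\{0,1\}^{3\lambda}$, by applying the HILL PRG construction to the hypothesized sub-exp injective OWF with a sufficiently stretched output so that the resulting function remains injective with probability $1-2^{-\Omega(\lambda)}$ (by a union bound on collision pairs, since a random collision occurs with probability $\binom{2^\lambda}{2} \cdot 2^{-3\lambda} \leq 2^{-\lambda-1}$). Let $D_y(v) := [G(v) = y]$ denote the resulting ``masked'' point checker.

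Then I would chain the hybrids
\[
\Obf(C[\{\}]) \;\stackrel{(1)}{\approx}\; \Obf(D_{y^*}) \;\stackrel{(2)}{\approx}\; \Obf(D_{G(x)}) \;\stackrel{(3)}{\approx}\; \Obf(C[\{x\}]),
\]
with $y^* \gets \{0,1\}^{3\lambda}$ and $x \gets \{0,1\}^\lambda$. Hops (1) and (3) are both instances of sub-exp iO applied to pairs of functionally equivalent circuits: in (1), whenever $y^*$ falls outside the $2^{-2\lambda}$-dense image of $G$ (which happens with probability $1-2^{-\Omega(\lambda)}$), $D_{y^*}$ is identically always-rejecting, so $D_{y^*} \equiv C[\{\}]$; in (3), $D_{G(x)} \equiv C[\{x\}]$ on every input by injectivity of $G$. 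Hop (2) is a direct reduction to the pseudorandomness of $G$'s output: given a sample $(G,y)$, the reduction simply computes $\Obf(D_y)$ using the public description of $G$ and forwards the result to the hop-(2) distinguisher.

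The main obstacle I expect is justifying hop (2), because pseudorandomness of the output is strictly stronger than one-wayness and is not black-box implied by it. My plan is to bridge this via the HILL construction described above. As a cleaner alternative, one could instead use a sub-exp puncturable injective PRF in place of $G$ (which exists from sub-exp iO plus sub-exp one-way functions), so that hop (2) becomes a standard sub-exp puncturable PRF reduction at the punctured point $x$.
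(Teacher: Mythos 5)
Your proof takes a genuinely different route from the paper: the paper obtains \cref{thm:point-obf} as a corollary of subspace-hiding obfuscation (\cref{lemma:shO}), whereas you give a direct Sahai--Waters-style hybrid argument via an injective stretching PRG. The three-hop structure itself is the standard template and is sound \emph{given} an injective length-tripling $G$ whose output on a uniform input is $2^{-\Omega(\lambda)}$-close to uniform.

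However, there is a real gap in how you obtain such a $G$. The claim that ``applying the HILL PRG construction to the hypothesized sub-exp injective OWF... remains injective with probability $1-2^{-\Omega(\lambda)}$ by a union bound'' is incorrect on two counts. First, the HILL construction involves leftover-hash-lemma style extraction steps that destroy injectivity in general, even if the base function is injective (and if the injective OWF is length-increasing rather than a permutation, you cannot even iterate it, so the simple Blum--Micali chain does not apply). Second, and more fundamentally, the bound $\binom{2^\lambda}{2} \cdot 2^{-3\lambda}$ is a statement about a \emph{uniformly random} function, not about a fixed deterministic PRG $G$; ``$G(U_\lambda)$ is indistinguishable from $U_{3\lambda}$'' tells you nothing about whether $G$ has collisions (indeed, $G'(x) := G(x)$ for $x \neq 1^\lambda$ and $G'(1^\lambda) := G(0^\lambda)$ is a PRG with the same security up to $2^{-\lambda}$ yet visibly non-injective). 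The fix is the Sahai--Waters ``statistically injective'' trick applied to PRGs: take any sub-exp PRG $\tilde{G} : \{0,1\}^\lambda \to \{0,1\}^{3\lambda}$ from the (injective) OWF, sample a pairwise-independent hash $h : \{0,1\}^\lambda \to \{0,1\}^{3\lambda}$, publish $h$, and set $G_h(v) := \tilde{G}(v) \oplus h(v)$. Now the union bound you wrote is valid because the randomness is over $h$: for any fixed $\tilde{G}$, $\Pr_h\bigl[\exists\, v \neq v' : G_h(v) = G_h(v')\bigr] \leq \binom{2^\lambda}{2} \cdot 2^{-3\lambda}$, since pairwise independence makes $h(v)\oplus h(v')$ uniform. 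Moreover $G_h$ remains a PRG given $h$, so hop (2) goes through conditioned on $h$ and the non-injectivity event only contributes $2^{-\Omega(\lambda)}$ to the final advantage. Your alternative via ``puncturable injective PRFs'' needs the same $h$-trick anyway, and the hop-(2) reduction requires more care than stated (the circuit $D_y$ hardcodes the full PRF key $k$, so you cannot simply invoke punctured-key security without first switching to a circuit using $k\{x\}$, which is not functionally equivalent at $x$). I would recommend the pairwise-independent-hash patch as the cleaner route.
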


\subsection{The (Q)PrO model}

First, we define the quantum-accessible pseudorandom oracle (QPrO) model, which extends the psueudorandom oracle model introduced in \cite{JLLW23} to allow for quantum queries.

\begin{definition}[QPrO Model]\label{def:QPRO}
    Let $F = \{f_k\}_k$ be a pseudorandom function. The quantum-accessible pseudorandom oracle model for $F$ consists of the following interface, which internally use a uniformly random permutation $\pi: \{0,1\}^\lambda \to \{0,1\}^\lambda$, and may be queried in quantum superposition.
    \begin{itemize}
        \item $\QPRO(\Gen,k) \to \pi(k)$
        \item $\QPRO(\Eval,h,x) \to f_{\pi^{-1}(h)}(x)$
    \end{itemize}
    The $p(\lambda)$-QPrO model allows the querier to access independent $p(\lambda)-$QPrO oracles for some polynomial $p$, i.e., oracle access to $\QPRO$ is shorthand for allowing query access to $p(\lambda)$-independent $\QPRO$ instantiations $\QPRO_0, \QPRO_1, \ldots, 
    \QPRO_{p(\lambda)-1}$.
\end{definition}

Next, we present the construction of obfuscation in the pseudorandom oracle model due to \cite{JLLW23}. While \cite{JLLW23} show that this scheme satisfies ideal obfuscation in the PrO, we will show in \cref{sec:PROM} that this scheme in fact satisfies post-quantum ideal obfuscation in the QPrO (as long as the building blocks are post-quantum). Before presenting the obfuscator, we define ideal obfuscation (in an oracle model). We use $C^\bullet$ to denote an oracle-aided circuit.

\begin{definition}[Ideal obfuscation]\label{def:ideal-obf}
    An obfuscation scheme in an idealized model with oracle $\cO$ is an efficient algorithm $\Obf^\cO(1^\secp,C)$ that, given a circuit $C$ as input, outputs an oracle circuit $\widehat{C}^\bullet$. 
    The scheme must be \textbf{correct}, i.e.\ for all $\secp \in \bbN$, circuit $C : \{0,1\}^D \to \{0,1\}^*$, and input $x \in \{0,1\}^D$, it holds that 
    \[\Pr\left[\widehat{C}^\cO(x) = C(x) : \widehat{C}^\bullet \gets \Obf^\cO(1^\secp,C)\right] = 1.\]
    It satisfies (post-quantum) \textbf{ideal obfuscation relative to an oracle} $\cR$ if there exists a QPT simulator $\cS = (\cS_1,\cS_2,\cS_3)$ such that for all QPT adversaries $\cA = (\cA_1,\cA_2)$,
    \[
        \left| 
        \Pr\left[ 
            \cA_2^{\cO}(\widehat{C}^\bullet) = 1 : 
            \begin{array}{r}
                C \gets \cA_1^{\cO(\secp), \cR} 
                \\ \widehat{C}^\bullet \gets \Obf^{\cO, \cR}(1^\secp,C)
            \end{array}\right] 
        - \Pr\left[
            \cA_2^{\cS_3^{C}, \cR}(\widetilde{C}^\bullet) = 1 : 
            \begin{array}{r}
                C \gets \cA_1^{\cS_1, \cR}(1^\secp) \\
                \widetilde{C}^\bullet \gets \cS_2^{C}(1^\secp,1^D,1^S)
            \end{array}
        \right]
        \right| 
        = \negl(\secp),
    \]
    where $D = |x|$ in the input length of $C$ and $S = |C|$ is the circuit size of $C$.

\end{definition}

An important difference from \cite{JLLW23}'s definition of ideal obfuscation is the addition of a relativizing oracle $\cR$. The additional of this oracle is crucial for composability with other primitives that might exist in the PrO.
In the plain model, ideal obfuscation is naturally composeable via a hybrid argument. However, when constructed in an oracle model such as the PrO, the simulator for an individual obfuscation might seize control of the global oracle. Unfortunately, this can interfere with the simulators for the other instances, which \emph{also} need control over the global oracle.

By introducing the relativizing oracle $\cR$, which the simulator is not allowed to claim control of, we ensure that the other simulators can also operate.
As a simple example, one can imagine a world in which multiple hash functions (or a single hash function with different salts) define distinct PrOs.

\paragraph{Construction in the PrO.} Now, we describe the construction of obfuscation in the PrO due to \cite{JLLW23}. We first specify the ingredients:

\begin{itemize}
    \item $D$ the input length of the circuit $C$ to be obfuscated.
    \item $S$ the circuit size of $C$
    \item $L$ the block length (determined as in \cite{JLLW23}).
    \item $B$ the number of blocks (determined as in \cite{JLLW23}).
    \item $H : \{0,1\}^\secp \times \{0,1\}^D \to \{0,1\}^L$ the (quantum-query secure) PRF used by the QPrO model.
    \item $G_{sr}: \{0,1\}^ \secp \to \{0,1\}^{4\secp}$ the (post-quantum) PRG for encryption randomness.
    \item $G_v : \{0,1\}^\secp \to \{0,1\}^L$ the (post-quantum) PRG for decryption result simulation.
    \item $(\Gen,\Enc,\Dec)$ a (post-quantum) 1-key FE scheme (\cref{def:FE}) such that $\Enc$ uses $\secp$-bit uniform randomness.
\end{itemize}

\begin{construction}[JLLW Obfuscator]\label{construction:JLLW}
    The JLLW obfuscator is defined as follow, where $\QPRO$ is the pseudorandom oracle model defined in \cref{def:QPRO}, using PRF $H$.\\
    
    \noindent \underline{$\JLLWObf^\QPRO(1^\secp,C)$:}
    \begin{itemize}
        \item Set up $(D+1)$ FE instances:
        \begin{align*}
            &(\pk_D,\sk_D) \gets \Gen(1^\secp,\Eval),\\
            &(\pk_d,\sk_d) \gets \Gen(1^\secp,\Expand_d[\pk_{d+1}]) \ \ \ \text{for } d = D-1,\dots,0,
        \end{align*}
        where $\Eval$ and $\Expand_d$ are defined below. 
        \item Sample keys of $H$ and obtain their handles:
        \[k_{i,j} \gets \{0,1\}^\secp, \ \ \ h_{i,j} \gets \QPRO(\Gen,k_{i,j}) \ \ \ \text{for } 0 \leq i < D, 1 \leq j \leq B.\]
        \item Sample PRG seed and encryption randomness for the root ciphertext, set its flag and information, and compute $\ct_\epsilon$:
        \begin{align*}
            s_\epsilon &\gets \{0,1\}^\secp, \ \ \ r_\epsilon \gets \{0,1\}^\secp \\ \flag_\epsilon &\coloneqq \normal, \ \ \ \info_\epsilon \coloneqq (C,\{k_{i,j}\}_{0 \leq i < D,1 \leq j \leq B},s_\epsilon)\\ \ct_\epsilon &\coloneqq \Enc(\pk_0,\flag_\epsilon,\epsilon,\info_\epsilon;r_\epsilon).
        \end{align*}
        \item Output the circuit $\widehat{C}^\bullet[\ct_\epsilon,\{\sk_d\}_{0 \leq d \leq D},\{h_{i,j}\}_{0 \leq i < D,1 \leq j \leq B}]$, which operates as follows on input $x$:
        \begin{itemize}
            \item For $d = 0,\dots,D-1$:
            \begin{itemize}
                \item $\chi_d \coloneqq x_{\leq d}$
                \item $v_{\chi_d} \gets \Dec(\sk_d,\ct_{\chi_d})$
                \item $\otp_{\chi_d} \coloneqq \QPRO(\Eval,h_{d,1},\chi_d \| 0^{D-d}) \| \dots \| \QPRO(\Eval,h_{d,B},\chi_d \| 0^{D-d})$
                \item $\ct_{\chi_d \| 0}\|\ct_{\chi_d \| 1} \coloneqq v_{\chi_d} \oplus \otp_{\chi_d}$
            \end{itemize}
            \item Output $\Dec(\sk_D,\ct_x)$.
        \end{itemize}
    \end{itemize}
\end{construction}

Next, we define the helper functions that were used in the definition of the JLLW obfuscation scheme above.\\

\noindent $\Expand_d[\pk_{d+1}](\flag_\chi,\chi,\info_\chi)$:
    \begin{align*}
    \text{Output}\begin{cases}
        \Expand_{d,\normal}[\pk_{d+1}](\chi,\info_\chi) \ \ \text{if } \flag_\chi = \normal,\\
        \Expand_{d,\hybsf}[\pk_{d+1}](\chi,\info_\chi) \ \ \text{if } \flag_\chi = \hybsf,\\
        \Expand_{d,\simsf}(\chi,\info_\chi) \ \ \text{if } \flag_\chi = \simsf
    \end{cases}
    \end{align*}

\noindent $\Eval(\flag_\chi,\chi,\info_\chi):$
    \begin{align*}
    \text{Output}\begin{cases}
        \Eval_\normal(\chi,\info_\chi) \ \ \text{if } \flag_\chi = \normal,\\
        \Eval_\simsf(\chi,\info_\chi) \ \ \text{if } \flag_\chi = \simsf
    \end{cases}
    \end{align*}

\noindent $\Expand_{d,\normal}[\pk_{d+1}](\chi,\info_\chi):$
\begin{itemize}
    \item Parse $\info_\chi = (C,\{k_{i,j}\}_{d \leq i < D,1 \leq j \leq B},s_\chi)$
    \item Set $s_{\chi \| 0} \| r_{\chi \| 0} \| s_{\chi \| 1} \| r_{\chi \| 1} \coloneqq G_{sr}(s_\chi)$
    \item For $\eta = 0,1:$
    \begin{itemize}
        \item $\flag_{\chi \| \eta} \coloneqq \normal$
        \item $\info_{\chi \| \eta} \coloneqq (C,\{k_{i,j}\}_{d+1 \leq i < D,1 \leq j \leq B},s_{\chi \| \eta})$
        \item $\ct_{\chi \| \eta} \coloneqq \Enc(\pk_{d+1},\flag_{\chi \| \eta},\chi \| \eta,\info_{\chi \| \eta} ; r_{\chi \| \eta})$
    \end{itemize}
    \item $\otp_\chi \coloneqq H(k_{d,1},\chi \| 0^{D-d}) \| \dots \| H(k_{d,B},\chi \| 0^{D-d})$
    \item Output $v_\chi \coloneqq (\ct_{\chi \| 0} \| \ct_{\chi \| 1}) \oplus \otp_\chi$
\end{itemize}

$\Eval_\normal(\chi,\info_\chi):$
\begin{itemize}
    \item Parse $\info_\chi = (C,s_\chi)$
    \item Output $C(\chi)$, computed by evaluating a universal circuit at $(C,\chi)$
\end{itemize}

$\Expand_{d,\hybsf}[\pk_{d+1},\info_\chi]:$
\begin{itemize}
    \item Parse $\info_\chi = (C,\{k_{i,j}\}_{d < i < D,1 \leq j \leq B},s_\chi,\beta,\{\sigma_{\chi,j}\}_{1 \leq j < \beta},w_\chi,\{k_{d,j}\}_{\beta < j \leq B})$
    \item Set $s_{\chi \| 0} \| r_{\chi \| 0} \| s_{\chi \| 1} \| r_{\chi \| 1} \coloneqq G_{sr}(s_\chi)$
    \item For $\eta = 0,1:$
    \begin{itemize}
        \item $\flag_{\chi \| \eta} \coloneqq \normal$
        \item $\info_{\chi \| \eta} \coloneqq (C,\{k_{i,j}\}_{d+1 \leq i < D,1 \leq j \leq B},s_{\chi \| \eta})$
        \item $\ct_{\chi \| \eta} \coloneqq \Enc(\pk_{d+1},\flag_{\chi \| \eta},\chi \| \eta,\info_{\chi \| \eta} ; r_{\chi \| \eta})$
    \end{itemize}
    \item Output \begin{align*}v_\chi \coloneqq \ &G_v(\sigma_{\chi,1}) \| \dots \| G_v(\sigma_{\chi,\beta-1}) \| w_\chi \\ &\| [\ct_{\chi \| 0} \| \ct_{\chi \| 1}]_{\beta+1} \oplus H(k_{d,\beta+1},\chi \| 0^{D-d}) \| \dots \\ &\| [\ct_{\chi \|0} \| \ct_{\chi \| 1}]_B \oplus H(k_{d,B},\chi \| 0^{D-d})\end{align*}
\end{itemize}

$\Expand_{d,\simsf}(\chi,\info_\chi):$
\begin{itemize}
    \item Parse $\info_\chi = \{\sigma_{\chi,j}\}_{1 \leq j \leq B}$
    \item Output $v_\chi \coloneqq G_v(\sigma_{\chi,1}) \| \dots \| G_v(\sigma_{\chi,B})$
\end{itemize}

$\Eval_\simsf(\chi,\info_\chi):$
\begin{itemize}
    \item Parse $\info_\chi = y_\chi$
    \item Output $y_\chi$
\end{itemize}

Finally, we have the following theorem, which we will prove in \cref{sec:PROM}.

\begin{theorem}
    The JLLW obfuscation $\JLLWObf^\QPRO(1^\secp,C)$ given in \cref{construction:JLLW} satisfies post-quantum ideal obfuscation (\cref{def:ideal-obf}) in the quantum-accessible pseudorandom oracle model (\cref{def:QPRO})
\end{theorem}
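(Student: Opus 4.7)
The plan is to lift the JLLW proof structure to the quantum setting by defining a simulator $\cS = (\cS_1, \cS_2, \cS_3)$ that mirrors theirs but with quantum-aware modifications. Specifically, $\cS_1$ simulates the $\QPRO$ interface using a post-quantum pseudorandom permutation in place of the truly random internal $\pi$, answering $\Eval$ queries via the PRF $H$ keyed on preimages under the simulated permutation. $\cS_2$ is given only $(1^\secp, 1^D, 1^S)$ and produces a simulated obfuscation whose root ciphertext $\ct_\epsilon$ is in the $\simsf$ configuration: its info is a tuple of fresh PRG seeds $\{\sigma_{\epsilon,j}\}$, so by the definitions of $\Expand_{d,\simsf}$ and $\Eval_\simsf$, no knowledge of $C$ is required to build $\widetilde{C}^\bullet$. $\cS_3$ answers the adversary's $\QPRO$ queries during $\cA_2$'s execution consistently with this view, programming the leaf $\Eval_\simsf$ outputs on each path actually queried using its $C$-oracle.

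The security argument proceeds by a hybrid argument traversing the depth-$D$ binary tree of internal ciphertexts. For each node $\chi$ at depth $d$, I would transform $\ct_\chi$ from $\normal$ to $\simsf$ through a sequence of $B+1$ intermediate $\hybsf$ states indexed by $\beta$, where the first $\beta-1$ one-time-pad blocks (under keys $k_{d,j}$) have been replaced by pseudorandom strings of the form $G_v(\sigma_{\chi,j})$. The two core ingredients at each step are subexponential adaptive security of the 1-key FE scheme (to switch the plaintext inside $\ct_\chi$ between functionally-equivalent inputs to $\Expand_d$), and subexponential PRG security of both $G_{sr}$ and $G_v$ (to rewrite OTP blocks into simulated outputs). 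The hybrids then propagate downward: once $\ct_\chi$ is $\simsf$, its children need never be generated in $\normal$ form.

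The main quantum obstacle arises when arguing that the OTP block under $k_{d,\beta}$ can be treated as uniform once the key is effectively released. Classically, JLLW adaptively reprograms the PrO on only the single input $\chi \| 0^{D-d}$ that the adversary happens to touch, but adaptive reprogramming of a quantum-accessible oracle is not known for simulation-style security. Following the approach of \cite{FOCS:BitVai15,C:AnaJai15}, I would instead handle every input to the $k_{d,\beta}$-keyed PRF individually, replacing each output with fresh randomness via a union bound over $2^D$ applications of subexponential PRF security, and absorb the $2^D$ loss by choosing $\secp$ large enough that $2^D \cdot 2^{-\secp^\epsilon} = \negl(\secp)$. The same per-input treatment governs the $G_v$-based switches.

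The hardest part will be maintaining consistency with the relativizing oracle $\cR$ and with the rest of the $\QPRO$ interface across all reductions. Our $\cS_1$ can only take over its own $\QPRO$ instance, not $\cR$ or other independent QPrO oracles, so each reduction must transparently forward $\cR$-queries to its external oracle while intercepting only the $\QPRO$ queries relevant to the current hybrid. A further subtlety is that the ideal $\QPRO$ uses a truly random permutation, whose efficient quantum simulation is open; I would sidestep this by reducing to post-quantum PRP security (instantiated via \cite{DBLP:journals/quantum/Zhandry25} from PRFs) once, at the outermost layer of the hybrid sequence, so that all subsequent reductions work against the PRP-based simulation rather than the ideal one. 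Once this bookkeeping is set up carefully, the JLLW analysis goes through essentially unchanged and yields post-quantum ideal obfuscation in the QPrO as stated.
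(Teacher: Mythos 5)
Your proposal matches the paper's proof essentially exactly: both switch the internal random permutation to a PRP so the oracle becomes efficiently simulatable, traverse the depth-$D$ tree prefix-by-prefix and block-by-block through the $\hybsf$ flag, invoke subexponential FE/PRF/PRG security to absorb the exponential loss from enumerating over all $2^D$ inputs, and carefully keep $\cR$-queries relativizing throughout. The one step you should make explicit is the key-handle decoupling (the paper's Lemma~\ref{lem:qpro-keyswap}, giving hybrid $\Hyb_{\mathsf{KH}}$): before you can invoke PRF security of $f_{k_{d,b}}$, you must first argue that the handle $h_{d,b} = \pi(k_{d,b})$ can be replaced by a fresh random string independent of the key $k_{d,b}$, which the paper does as a dedicated hybrid prior to the PRP switch.
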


\section{QMA Verification with Strong Completeness}
\label{sec:qma-ver}

We first define a special class of ``ZX'' QMA verifiers satisfying a notion of ''strong'' completeness, which demands that for an honest witness, every ZX measurement the verifier may apply will accept with overwhelming probability.

\begin{definition}[ZX verifier with strong completeness]\label{def:ZX-strong}
    A ZX verifier with strong completeness for a QMA language $(\cL_{yes},\cL_{no})$ consists of, for each instance $H$ (an $n$-qubit Hermitian matrix), a family 
    $\{\theta_{H,i},f_{H,i}\}_{i \in [N]}$
    of binary-outcome ZX measurements (\cref{def:ZX-measurement}), where $N = N(n) = 2^{\poly(n)}$ is some (potentially exponential) function of the instance size. It satisfies the following properties.
    
    \begin{itemize}
        \item \textbf{Strong completeness}. For each $H \in \cL_{yes}$, there exists a state $\ket{\psi}$ such that for all $i \in [N]$,
        \[
            \Big\| M[\theta_{H,i},f_{H,i}]\ket{\psi}\Big\|^2 
            \geq 1-2^{-n^3}.
        \]
        \item \textbf{Soundness}. For each $H \in \cL_{no}$ and any state $\ket{\psi}$,
        \[
            \E_{i \gets [N]} \left[\Big\| M[\theta_{H,i},f_{H,i}]\ket{\psi} \Big\|^2\right] 
            = \negl(n).
        \]
    \end{itemize}
\end{definition}

\begin{theorem}\label{thm:ZX-strong}
    Every language in QMA has a ZX verifier with strong completeness (\cref{def:ZX-strong}).
\end{theorem}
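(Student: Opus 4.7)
The plan is to invoke the QMA-completeness of the 2-local $ZX$-Hamiltonian problem (\cref{def:zxham}) and amplify the base verifier by a ``permuted-list'' parallel repetition, formalizing the sketch given in the technical overview. Fix an instance $H$, let $\mu \coloneqq \Samp(H)$ be the distribution over sub-measurements (each yielding a binary-outcome ZX measurement $M_m$ on $n$ qubits), and let $p_{yes}, p_{no}$ with $p_{yes}-p_{no}\geq 1/\poly(n)$ be the base completeness/soundness parameters. Fix $\tau \coloneqq (p_{yes}+p_{no})/2$ and take $T = T(n)$ to be a polynomial in $n$ of sufficiently large degree for the tail bounds below.

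Construction. Fix a template list $L^\star=(m_1,\ldots,m_T)$ that contains $\lfloor p_m T\rfloor$ copies of each measurement type $m$ (pad any remaining slots with a fixed measurement). Index the new family by $\sigma\in S_T$, so $N=T!$. For each $\sigma$, the ZX measurement $M[\theta_{H,\sigma},f_{H,\sigma}]$ acts on $T$ disjoint $n$-qubit blocks: the $k$-th block is measured by $m_{\sigma(k)}$, and $f_{H,\sigma}$ is the indicator that strictly more than $\tau T$ blocks produced outcome $1$. Here $\theta_{H,\sigma}$ is the concatenation of the sub-measurement basis strings and $f_{H,\sigma}$ composes the threshold circuit with the sub-measurement accept functions, so this is a single binary-outcome ZX measurement on $nT$ qubits in the sense of \cref{def:ZX-measurement}.

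Strong completeness. For $H\in\cL_{yes}$ use the tensor witness $\ket{\psi}^{\otimes T}$, where $\ket{\psi}$ is an optimal base witness. For any $\sigma$, the $T$ sub-measurements commute (disjoint blocks), and on the tensor state the block outcomes are \emph{independent} Bernoullis whose biases sum to $\sum_m \lfloor p_m T\rfloor\cdot\Tr[M_m \ket{\psi}\!\bra{\psi}]\geq p_{yes}T - O(1)$. A standard Chernoff bound then gives acceptance probability at least $1-\exp(-\Omega((p_{yes}-\tau)^2 T))$, which is $\geq 1-2^{-n^3}$ once $T$ is a sufficiently large polynomial in $n$.

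Soundness. Fix $H\in\cL_{no}$ and an arbitrary state $\rho$ on $nT$ qubits, and proceed in two steps. First, bound the accept probability under the ``independent'' distribution $\cD_{\text{ind}}$ in which each block is measured using a fresh independent draw from $\mu$. Reveal blocks one at a time: after measuring blocks $1,\ldots,k-1$, the post-measurement reduced state on block $k$ is some $\rho_k^{\text{post}}$, and base soundness (which applies to \emph{every} state) yields $\E_{m\sim\mu}[\Tr[M_m\rho_k^{\text{post}}]]\leq p_{no}$. Hence $S_k - k\,p_{no}$ is a supermartingale with $|S_k-S_{k-1}|\leq 1$, and Azuma--Hoeffding gives
\[
\Pr_{\cD_{\text{ind}}}[S_T\geq \tau T] \leq \exp\!\left(-\Omega((\tau-p_{no})^2 T)\right) = \negl(n).
\]
Second, couple $\cD_{\text{perm}}$ (uniform permutation of $L^\star$) with $\cD_{\text{ind}}$: by \cref{coro:vector-bernstein} applied to the empirical count vector of $\cD_{\text{ind}}$, except with negligible probability the $\ell_1$-deviation of these counts from the exact multiplicities of $L^\star$ is at most $\delta T$ for any fixed $\delta < (p_{yes}-p_{no})/4$. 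A greedy matching then couples $\cD_{\text{perm}}$ and $\cD_{\text{ind}}$ so that they differ in at most $\delta T$ positions, and since swapping one sub-measurement shifts $S_T$ by at most $1$,
\[
\Pr_{\cD_{\text{perm}}}[S_T\geq\tau T] \leq \Pr_{\cD_{\text{ind}}}[S_T\geq \tau T - \delta T] + \negl(n) = \negl(n).
\]

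Main obstacle. The most delicate point is the soundness concentration under $\cD_{\text{ind}}$: naive Chernoff is unavailable because the outcomes of the $T$ sub-measurements on a potentially entangled $\rho$ can be arbitrarily correlated. The martingale reformulation succeeds precisely because base-verifier soundness holds for every quantum state --- in particular for each (random) post-measurement reduced state on a remaining block --- which restores bounded per-step conditional expectations and makes Azuma--Hoeffding applicable. The passage from $\cD_{\text{ind}}$ to $\cD_{\text{perm}}$ is then a clean application of \cref{coro:vector-bernstein}.
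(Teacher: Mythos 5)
Your proposal is correct and follows essentially the same route as the paper's proof via Lemma~\ref{lem:permuting-QMA}: tensor-product witness plus Chernoff for strong completeness, a martingale-style concentration bound for the independent-sampling verifier's soundness (the paper phrases this as stochastic dominance by a $\mathrm{Bin}(T,b)$ followed by Hoeffding, you phrase it via Azuma--Hoeffding on the supermartingale $S_k - kp_{no}$; these are the same observation that base soundness holds conditionally on any history), and a coupling of the permuted and independent count vectors via the vector Bernstein bound (\cref{coro:vector-bernstein}) together with a swap argument. The one spot you should tighten is ``swapping one sub-measurement shifts $S_T$ by at most $1$'': since $S_T$ is a quantum measurement outcome rather than a deterministic function of the list, the precise statement is the tail-probability shift $\Pr[X_1 \geq t] \leq \Pr[X_2 \geq t-\ell]$ when the sequences differ in $\leq\ell$ positions, which the paper isolates and proves as an inner claim by measuring the common positions first.
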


The main ingredient to our proof is a lemma that turns any QMA verifier which applies a random projective measurement to the witness into one with strong correctness. The theorem follows from applying the lemma to the protocol given in \cite{MNS16} for QMA verification via single qubit ZX measurements.
For completeness, we state the protocol for permuting ZX verifiers in \Cref{sec:permutingzx}.

\begin{lemma}[Permuting QMA Verifiers]\label{lem:permuting-QMA}
    Let $\mathcal{L} = (\cL_{yes}, \cL_{no})$ be a QMA language with instance size $n$.
    Let $\{(P_j, I-P_j)\}_{j\in \cJ}$ be a $\poly(n)$-sized set of binary-outcome projective measurements on $n$ qubits and let $(\mathcal{P} = \sum_{j\in \cJ} p_j P_j, \mathcal{Q} = \sum_{j\in \cJ} p_j (I - P_j))$ be a POVM which decides $\mathcal{L}$ with correctness $a$ and soundness error $b$.
    
    Then there is a verifier for $\mathcal{L}$ with strong correctness and soundness error $\negl$ which only performs measurements from $\{P_j, I-P_j\}_{j\in J}$.
\end{lemma}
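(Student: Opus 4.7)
The plan is to instantiate the permuting verifier $V_{\mathrm{perm}}$ sketched in the technical overview and analyze strong correctness and soundness separately. I will fix a sufficiently large polynomial $N = N(n)$ and set the threshold $\tau := (a+b)/2$. I would form a fixed list $L^* \in \cJ^N$ by including each $j \in \cJ$ exactly $c^*_j := \lfloor p_j N \rfloor$ times and padding arbitrarily to length $N$, ensuring $\|c^*/N - p\|_1 = O(|\cJ|/N)$. The verifier $V_{\mathrm{perm}}$ takes an $(nN)$-qubit witness register, samples a uniform permutation $\pi$ of $[N]$, applies $P_{L^*_{\pi(i)}}$ to the $i$-th $n$-qubit block, and accepts iff at least $\tau N$ outcomes accept. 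Each $\pi$ corresponds to a specific projective measurement on $\cB^{\tensor nN}$ drawn only from $\{P_j, I - P_j\}_{j \in \cJ}$, as required.

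Strong correctness is the easy direction. Let $\ket{\psi}$ be an honest witness with $\Tr(\cP \ketbra{\psi}) \geq a$ and set the honest prover's state to $\ket{\psi}^{\tensor N}$. Conditioned on any fixed $\pi$, the outcomes $X_1, \ldots, X_N$ are independent Bernoullis with means $\langle \psi | P_{L^*_{\pi(i)}} | \psi \rangle$, so $S := \sum_i X_i$ has expectation $\sum_j c^*_j \langle \psi | P_j | \psi \rangle \geq Na - |\cJ|$, which depends only on the multiset $L^*$ and not on $\pi$. A Chernoff bound then gives $\Pr[S < \tau N] \leq \exp(-\Omega(N(a-b)^2))$, which drops below $2^{-n^3}$ once $N$ is a large enough polynomial in $n$ (using the promised gap $a - b \geq 1/\poly(n)$). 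Since the bound is uniform in $\pi$, strong correctness follows.

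Soundness needs the two-step reduction outlined in the technical overview. Fix $H \in \cL_{no}$ and an adversarial state $\rho$ on $nN$ qubits. I would first analyze an \emph{independent} verifier $V_{\mathrm{ind}}$ that samples each $j_i \sim p$ i.i.d.\ and uses a slightly relaxed threshold $\tau' := \tau - \Delta_0/N$ with $\Delta_0 := O(\sqrt{N |\cJ| \log N})$. Using the filtration $\mathcal{F}_i := \sigma(j_1, X_1, \ldots, j_i, X_i)$, the differences $D_i := X_i - \E[X_i \mid \mathcal{F}_{i-1}]$ are bounded by $1$ in absolute value, and the conditional expectations satisfy $\E[X_i \mid \mathcal{F}_{i-1}] = \Tr(\cP \rho'_i) \leq b$, where $\rho'_i$ is the post-measurement reduced state on copy $i$ (still a valid state, hence subject to the single-shot soundness bound on $\cP$). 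Azuma's inequality then yields $\Pr[V_{\mathrm{ind}} \text{ accepts } \rho \text{ at threshold } \tau'] \leq \exp(-\Omega(N(a-b)^2))$. Next I couple $V_{\mathrm{perm}}$ to $V_{\mathrm{ind}}$: by \cref{coro:vector-bernstein}, the multinomial count $c^{\mathrm{ind}}$ satisfies $\|c^{\mathrm{ind}} - c^*\|_1 \leq \Delta_0$ except with negligible probability. Conditioned on $c^{\mathrm{ind}}$, both sequences are uniform permutations of their multisets, which I couple by fixing canonical orderings aligned on the common sub-multiset of size $N - \|c^{\mathrm{ind}} - c^*\|_1/2$ and then applying a single uniform permutation of $[N]$; this preserves both marginals and makes the two measurement sequences agree on at least $N - \Delta_0$ positions. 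Because measurements on distinct $n$-qubit blocks commute, the outcome distribution on the agreeing positions is identical in both experiments, so $S_{\mathrm{perm}} \geq \tau N$ implies $S_{\mathrm{ind}} \geq \tau N - \Delta_0$, whence $\Pr[V_{\mathrm{perm}} \text{ accepts } \rho] \leq \Pr[V_{\mathrm{ind}} \text{ accepts } \rho \text{ at threshold } \tau'] + \negl(n) = \negl(n)$.

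The step I expect to be the main obstacle is the coupling, since I need to control both the joint distribution over measurement sequences and the quantum outcomes simultaneously. The observations that make this work are (i) measurements on disjoint subsystems commute, so outcomes on the agreeing positions have identical distributions in the two experiments despite the $\Delta_0$ differing positions and potential entanglement across blocks in $\rho$, and (ii) the multinomial deviation $\|c^{\mathrm{ind}} - c^*\|_1$ is tightly controlled by \cref{coro:vector-bernstein}. Everything else reduces to standard martingale concentration and the given single-shot soundness of $(\cP, \cQ)$.
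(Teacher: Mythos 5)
Your construction and analysis match the paper's: you use the same fixed-list-plus-random-permutation verifier with threshold $(a+b)/2$, the same Hoeffding argument for strong completeness on a product witness, and the same two-part soundness argument (Azuma/martingale concentration for the independently-sampled verifier, phrased in the paper as stochastic domination by a $\mathrm{Bin}(N,b)$ variable, followed by \cref{coro:vector-bernstein} to control the count vector and a disjoint-blocks-commute argument showing that changing at most $\ell$ measurements changes the accept count by at most $\ell$). The one place you're slightly more informal is the pointwise implication "$S_{\mathrm{perm}} \geq \tau N$ implies $S_{\mathrm{ind}} \geq \tau N - \Delta_0$," which requires coupling the quantum outcomes on the agreeing positions by measuring those blocks first in both experiments (as in the paper's inner claim) rather than merely noting the marginal distributions agree; with that spelled out, the proof is the same as the paper's.
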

\begin{proof}
    The permuting verifier operates on $k = \left\lceil \frac{4}{a-b}\left(\frac{n^3\ln(2)}{a-b}+|\cJ|\right)\right\rceil + 1$ registers $\cR_j$ each containing $n$ qubits. Let $\mathsf{List}$ be an ordered list containing each $j\in \cJ$ a total of $\lfloor k p_j \rfloor$ times. 
    The family of possible measurements the verifier can make is given by all possible permutations of $\mathsf{List}$. The deciding function $f$ accepts if at least $k \frac{a+b}{2}$ of the measurement outcomes are $P$. In other words, the family of measurements is
    \[
        \{\sigma(\mathsf{List}), f\}_{\sigma\in \mathsf{Sym}_{k} }
    \]
    The distribution $\mathsf{Samp}(H)$ samples a uniform $\sigma \gets \mathsf{Sym}_{k}$ and outputs $(\sigma(\mathsf{List}), f)$.

    \begin{claim}
        The verifier above has strong completeness.
    \end{claim}
    \begin{proof}
        For any $H\in \mathcal{L}_{yes}$, there exists an $n$-qubit witness $\ket{w}$ such that $\Tr[\cP \ket{w}] \geq a$. The witness for the permuting verifier is $k$ copies of this witness, i.e. $\ket{w}^{\otimes k}$.
        Since the witness is separable across the registers $\cR_j$ and the verifier applies its projectors on disjoint registers, the outcome for each copy is independent.
        Let $S_{k}$ be the random variable representing the number of accepting measurement results (outcome $P_j$).
        Hoeffding's inequality allows us to bound the probability that the sum of the outcomes differs from its expected value by
        \[
            \Pr_{x_i: i\in [k]}\left[S_{k} \leq \mathop{\mathbb{E}}_{x_i:i\in k}[S_{k}] - t \right]
            \leq 2\exp\left(-2\frac{t^2}{k}\right)
        \]
        The expectation of the summation is
        \begin{align*}
            \sum_{i\in [k]} \Tr[P_{\sigma(\List[i])} \ket{w}]
            &= \Tr[\sum_{j\in \cJ} \lfloor k p_j\rfloor P_j \ket{w}]
            \\
            &\geq k \Tr[\sum_{j\in \cJ} p_i P_j \ket{w}] - |\cJ|
            \\
            &= k a - |\cJ|
        \end{align*}
        Setting $t = k\frac{a-b}{2} - |\cJ|$, we have $t \geq 4\frac{n^3\ln(2)}{a-b}$ and $t/k \geq \frac{a-b}{4}$. Therefore the probability of $S_k \leq k\frac{a+b}{2}$, i.e. the verifier rejecting, is 
        \[
            \leq 2\exp\left(-2n^3\ln(2)\right) = 2^{-2n^3+1}
        \]
    \end{proof}
    
    \begin{claim}
        The verifier above has $\negl(n)$ soundness error.
    \end{claim}
    \begin{proof}
        For any $H\notin \cL$, every state $\rho$ satisfies $\Tr[\cP \rho] \leq b$. We will relate the number of accepting repetitions in the $k$-wise parallel repetition of the decision procedure to the number of accepting repetitions in the permuted procedure. 
        
        For any repetition $i$ in the parallel case, the probability of accepting \emph{any} mixed state is at most $b$. Thus, conditioned on any outcome of the other repetitions, the probability of repetition $i$ accepting is still at most $b$.
        Therefore the probability of obtaining $\geq n$ accepts is upper bounded by the probability of sampling $\geq n$ in a binomial distribution with success rate $b$ for any $n$. Let $S_{\mathsf{par}}$ be distributed according to this binomial distribution. Hoeffding's inequality bounds the probability of $S_{\mathsf{par}} \geq k b + t$ as
        \[
            \Pr[S_{\mathsf{par}} \geq n^3 b + t] \leq \exp(-2t^2/k)
        \]
    
        Now we show how this relates to the number of accepting repetitions in the permuted procedure. Observe that the parallel procedure can be equivalently stated as sampling a vector random variable $\mathsf{COUNT}\in \bbN^{|\cJ|}$ where $\mathsf{COUNT}[j]$ determines the number of times $(P_j, I-P_j)$ is applied, then randomly permuting the corresponding list of projectors. The probability density function is 
        \[
            \Pr[\mathsf{COUNT} = \mathsf{count}] = \prod_{j\in [|\cJ|]} p_j^{\mathsf{count}[j]}
        \]
        Let $\mathsf{count}^*$ be the count corresponding to the permuted procedure, i.e. 
        $\mathsf{count}^*[i] = \lfloor k p_i \rfloor$.
        Note that $\expect[\mathsf{COUNT}[i]] = k p_i$, so
        \[
            \|\expect[\mathsf{COUNT}[i]] - \mathsf{count}^*\|_1 
            \leq |\cJ|
        \]
        We claim that with overwhelming probability, 
        \[
            \| \mathsf{COUNT} - \bbE[\mathsf{COUNT}] \|_1 \leq \frac{a-b}{4} k
        \]
        To see this, consider each parallel repetition to sample an indicator vector indicating which term $j\in \cJ$ is chosen. Then \Cref{coro:vector-bernstein} implies that the probability of $\| \mathsf{COUNT} - \bbE[\mathsf{COUNT}] \|_1 \geq \frac{a-b}{4} k$ is at most
        \[
            |\cJ| \exp(\frac{-\left(\frac{a-b}{4} k\right)^2}{2(2k + \left(\frac{a-b}{4} k\right)/3)})
        \]
        since $k> n^3/(a-b)$ and $a-b\leq 1$, this probability is negligible in $n$.

        By triangle inequality, with overwhelming probability
        \begin{equation}\label{eq:permuting-difference}
            \| \mathsf{COUNT} - \expect[\mathsf{COUNT}] \|_1 
            \leq \lambda(a-b)/4 + |\cJ|
        \end{equation}

        \begin{claim}
            Consider two sequences of projective measurements $(\Pi_1^i)_{i\in [\lambda]}$ and $(\Pi_2^i)_{i\in [\lambda]}$ which are respectively applied to (disjoint) registers $\cR_i)_{i\in [\lambda]}$. Let $X_1$ and $X_2$ be the random variables denoting the number of times the measurement result is $1$ (corresponding to $\Pi_1^i$ or $\Pi_2^i$), respectively. If the number of indices $i$ such that $\Pi_1^i \neq \Pi_2^i$ is at most $\ell$, then for any state $\rho$ and any $n\in \bbN$,
            \[
                \Pr[X_1 \geq n] \leq \Pr[X_2 \geq n - \ell]
            \]
        \end{claim}
        \begin{proof}
            The two experiments can be thought of as performing two steps. In the first step we measure all indices where the two sequences match, and then in the second step we measure the remaining indices according to the first sequence in the first experiment, and according to the second sequence in the second experiment. We can obtain greater than equal to $n$ accepting measurements in the first experiment only if in the first step we obtain atleast $n-k$ accepting measurements. Let $X$ represent the number of accepting measurements in the first step. Therefore,
            \[
                \Pr[X_1 \geq n] \leq \Pr[X \geq n - \ell]
            \]
            In the second experiment, if we obtain $n-\ell$ accepting measurements in the first step, the final number of accepting measurements will be at least $n-k$. Therefore,
            \[
                \Pr[X_2 \geq n-k] \geq \Pr[X \geq n - \ell]
            \]
            Putting both together concludes the proof of the claim.
        \end{proof}

        For any permutation $\sigma$, state $\rho$, and vectors $\mathsf{count^*}$ and $\mathsf{COUNT}$,  consider the following experiments. 
        In the first experiment, we permute the list of measurements specified by $\mathsf{count^*}$ using the permutation $\sigma$ and apply the measurements to $\rho$. Let $v_0$ be the number of accepting measurements. 
        In the second experiment, we perform the parallel procedure: first sample $\mathsf{COUNT}$, then permute it randomly. Let $v_1$ be the number of accepting measurements in the second experiment.
        Let 
        \[
            \delta:=\| \mathsf{COUNT} - \mathsf{count}^* \|_1.
        \]
        By the above claim, for all $t$ and all $\delta$,  
        \[
            \Pr[v_0 \geq k b + t ] 
            \leq \Pr[v_1 \geq k b + t - \delta].
        \]
        
        Now consider sampling $\mathsf{COUNT}$ as in the parallel repeated experiment. 
        Recall from \cref{eq:permuting-difference} that with overwhelming probability
        \[
            \delta \leq k\frac{a-b}{4} + |\cJ|.
        \]
        Additionally, when $\sigma$ is also sampled randomly, $v_1$ is distributed as $S_{\text{par}}$ which means that 
        \[
            \Pr[v_1 \geq k b + t - \delta] \leq \exp\left(\frac{-2(t-\delta)^2}{k} \right).
        \]
        Setting $t$ to be $k \frac{a-b}{3} - |\cJ|$
        we get that with overwhelming probability 
        \[
            v_0 \leq k(b + a/3 - b/3) < k \frac{a+b}{2}.
        \]
        Finally, by noting that the probability that the permuting verifier accepts is at most the probability that $v_0$ exceeds $k\frac{a+b}{2}$, we obtain that the permuting verifier has negligible soundness error.
    \end{proof}
\end{proof}

\subsection{Permuting ZX Verifier for QMA}\label{sec:permutingzx}

We state here the full verification procedure for the ZX verifier with strong completeness that is guaranteed by \Cref{thm:ZX-strong} for every QMA language. As a corollary of \Cref{lem:permuting-QMA} and the ZX verifier from \cite{MNS16}, the following is a ZX verifier with strong completeness for any QMA language $\cL = (\cL_{yes},\cL_{no})$.

\begin{construction}[Permuting ZX Verifier]
    \label{def:pzxham}
    Let $H$ be an instance of the language $(\cL_{yes}, \cL_{no})$ with completeness and soundness energy thresholds $a',b' \in [-1,1]$. Without loss of generality~\cite{BL08}, $H$ is a ZX Hamiltonian
    \[
        H = \sum_{\substack{i < j < \ell \\ S \in \{Z, X\}}} p_{ij}P_{i, j, S} 
    \]
    on $\ell = \ell(n)$ qubits where $p_{ij} \in [0, 1]$ with $\sum_{i < j} 2p_{ij} = 1$ and $P_{i, j, S} = \frac{\identity + (-1)^{\beta_{i, j}}S_iS_j}{2}$ for $\beta_{i, j} \in \zo$.

    For each instance size $n \in \bbN$, define the following.
    \begin{itemize}
        \item Completeness and soundness thresholds $a = \frac{1}{2}\left(1 - \frac{a'}{\sum_{i<j}p_{i,j}}\right)$ and $b = \frac{1}{2}\left(1 - \frac{b'}{\sum_{i<j}p_{i,j}}\right)$ for \cite{MNS16}'s protocol. Note that $a, b \in [0,1]$.
        \item A number of repetitions $k\left\lceil \frac{4}{a-b}\left(\frac{n^3\ln(2)}{a-b}+ (2\ell^2) \right)\right\rceil + 1$.
        \item $\List_{H,k}$: A list of $(\theta_{i,j,S}, f_{i, j, S})$ for
        \begin{itemize}
            \item the basis $\theta_{i,j,S} = 0^n$ if $S = Z$, and $\theta_{i,j,S}= 1^n$ otherwise, and
            \item the function $f_{i,j,S}(m_1\Vert \ldots\Vert m_n)$ outputs $1$ iff  $m_i \xor m_j \xor \beta_{i,j} = 1$,
        \end{itemize}
        where each $(\theta_{i,j,S}, f_{i, j, S})$ appears $\lfloor p_{ij}k\rfloor$ times.
        
        \item $(\theta_{H, k, r}, f_{H, k, r}) \gets \Samp(H; r)$: On input an instance $H$ and randomness $r$, $\Samp$ samples a random permutation $\sigma \gets \mathsf{Sym}_k$ using randomness $r$, computes $\mathsf{PermList} = \sigma(\List_{H, k})$, and outputs $(\theta_{H, k, r}, f_{H,k, r})$ where
        \begin{itemize}
            \item $\theta_{H, k, r}$ as a concatenation of all $\theta_{i,j,S}$ in $\mathsf{PermList}$, and
            \item $f_{H, k, r}$ as dividing its input in-order amongst the $f_{i,j,S}$ in $\mathsf{PermList}$ and outputting $1$ iff at least $k \frac{a+b}{2}$ of $f_{i,j,S}$ accept their respective inputs.
        \end{itemize}

        \item For $H\in \cL_{yes}$, let $\ket{\psi}$ be a state such that $\Tr[\bra{\psi} H \ket{\psi}] \geq a$. Then $\ket{\psi}^{\otimes k}$ is a witness for the permuting ZX verifier.
    \end{itemize}
\end{construction}

\section{Coset State Authentication}
\label{sec:csa}

We recall the coset state authentication scheme, first introduced by \cite{BBV24}. We describe a variant of the scheme that does not involve CNOT-homomorphism, and where each qubit is encoded with an independently sampled subspace. We will use the following notation. Given a subspace $S \subset \bbF_2^{2\lambda+1}$ and a vector $\Delta \in \bbF_2^{2\lambda+1} \setminus S$, define the subspace $$S_\Delta \coloneqq S \cup (S+\Delta).$$ Let the dual subspace of $S_{\Delta}$ be $\widehat{S} \coloneqq S_\Delta^\bot$, let $\widehat{\Delta}$ be an arbitrary choice of a vector such that $S^\bot = \widehat{S} \cup (\widehat{S} + \widehat{\Delta})$, and define $$\widehat{S}_{\widehat{\Delta}} \coloneqq S^\bot = \widehat{S} \cup (\widehat{S} + \widehat{\Delta}).$$

Finally, given a projector $\Pi$ and a state $\ket{\psi}$, we write $\ket{\psi} \in \mathsf{im}(\Pi)$ to indicate that $\Pi\ket{\psi} = \ket{\psi}$.

\subsection{Construction}

\begin{construction}[Coset state authentication]
    \label{def:CSA}
    The coset state authentication scheme is defined by the following algorithms.
    \begin{itemize}
        \item $\KeyGen(1^\lambda,1^n)$: For each $i \in [n]$, sample a random subspace $S_i \subset \bbF_2^{2\lambda+1}$ of dimension $\lambda$, a random vector $\Delta \in \bbF_2^{2\lambda+1} \setminus S_i$, and random vectors $x_i,z_i \in \bbF_2^{2\lambda+1}$. Output $k \coloneqq \{S_i,\Delta_i,x_i,z_i\}_{i \in [n]}$. 
        \item $\Enc_k(\ket{\psi})$: Parametrized by a key $k = \{S_i,\Delta_i,x_i,z_i\}_{i \in [n]}$, the encoding algorithm is an $n$-qubit to $(2\secp+1)n$-qubit isometry that first applies \[\bigotimes_i\ket{b_i} \to \bigotimes_i \ket{S_i + b_i\Delta_i},\] and then applies the quantum one-time pad $X^xZ^z$, where $x = (x_1,\dots,x_n)$ and $z = (z_1,\dots,z_n)$.
        \item $\Dec_{k,\theta,f}(v) \to \{0,1\}$: Parameterized by a key $k = \{S_i,\Delta_i,x_i,z_i\}_{i \in [n]}$ and the description of a ZX measurement $\theta,f$, the decode algorithm takes as input a vector $v \in \bbF_2^{n \cdot (2\lambda+1)}$ and does the following. Parse $v = (v_1,\dots,v_n)$ where each $v_i \in \bbF_2^{2\lambda+1}$, and, for each $i \in [n]$, compute 
        \[m_i \coloneqq \begin{cases}0 \ \ \ \  \text{if} \ (\theta_i = 0 \ \text{and} \ v_i \in S_i+x_i) \ \text{or} \ (\theta_i = 1 \ \text{and} \ v_i \in \widehat{S}_i + z_i) \\ 1 \ \ \ \  \text{if} \ (\theta_i = 0 \ \text{and} \ v_i \in S_i+\Delta_i + x_i) \ \text{or} \ (\theta_i = 1 \ \text{and} \ v_i \in \widehat{S}_i + \widehat{\Delta}_i + z_i) \\ \bot \ \ \ \ \text{otherwise}\end{cases}.\] If any $m_i = \bot$, then output $0$. Otherwise output $f(m)$.
        \item $\Ver_{k,\theta}(v) \to \{0,1\}$: Parameterized by a key $k = \{S_i,\Delta_i,x_i,z_i\}_{i \in [n]}$ and bases $\theta \in \{0,1\}^n$, the verification algorithm takes as input a vector $v \in \bbF_2^{n \cdot (2\lambda+1)}$ and does the following. Parse $v = (v_1,\dots,v_n)$ where each $v_i \in \bbF_2^{2\lambda+1}$, and, for each $i \in [n]$, output 0 if $\theta_i = 0$ and $v_i \notin S_{i,\Delta_i}+x_i$ or $\theta_i = 1$ and $v_i \notin \widehat{S}_{i,\widehat{\Delta}_i}+z_i$. Otherwise, output 1.
    \end{itemize}
    
\end{construction}

\subsection{Properties}

We introduce new properties of this authentication scheme. First, we state some imported lemmas that follow from \cite{BBV24}.

\begin{lemma}[Correctness]\label{def:coset-auth-correctness}
    For any bases $\theta \in \{0,1\}^n$, function $f : \{0,1\}^n \to \{0,1\}$, and key $k \in \KeyGen(1^\secp,1^n)$, 
    \[\Enc_k^\dagger (H^{\otimes 2\secp+1})^\theta \left(\sum_{v : \Dec_{k,\theta,f}(v) = 1}\ketbra{v}\right) (H^{\otimes 2\secp+1})^\theta \Enc_k = M[\theta,f].\] 
\end{lemma}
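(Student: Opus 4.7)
The plan is to reduce the multi-qubit claim to a per-qubit check by exploiting the tensor-product structure of both $\Enc_k$ and the decoder's accepting projector, and then verify the two single-qubit cases $\theta_i = 0$ and $\theta_i = 1$ directly.

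First I would observe that $\Enc_k = \bigotimes_{i \in [n]} \Enc_{k_i}$ where $k_i = (S_i,\Delta_i,x_i,z_i)$, and that the accepting set of $\Dec_{k,\theta,f}$ factors as
\[\sum_{v : \Dec_{k,\theta,f}(v) = 1} \ketbra{v} = \sum_{m \in \{0,1\}^n : f(m) = 1} \bigotimes_{i \in [n]} \Pi_{A_{m_i}^{(i)}},\]
where $\Pi_X \coloneqq \sum_{v \in X} \ketbra{v}$ and $A_{m_i}^{(i)}$ is $S_i + m_i\Delta_i + x_i$ if $\theta_i = 0$, and $\widehat{S}_i + m_i\widehat{\Delta}_i + z_i$ if $\theta_i = 1$. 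Since $(H^{\otimes 2\secp+1})^\theta$ also factors across the $n$ blocks, the whole conjugated projector factors, and it suffices to prove the single-qubit identity
\[\Enc_{k_i}^\dagger (H^{\otimes 2\secp+1})^{\theta_i} \Pi_{A_{m_i}^{(i)}} (H^{\otimes 2\secp+1})^{\theta_i} \Enc_{k_i} = H^{\theta_i}\ketbra{m_i}H^{\theta_i}\]
for each $i$, $\theta_i \in \{0,1\}$, and $m_i \in \{0,1\}$. Summing over $m$ with $f(m) = 1$ and tensoring then yields $M[\theta,f]$.

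For the case $\theta_i = 0$, the proof is a direct coset-membership argument: $\Enc_{k_i}\ket{b}$ is supported exactly on $S_i + b\Delta_i + x_i$ (the one-time-pad phases from $Z^{z_i}$ are irrelevant to the support), so $\Pi_{S_i + m\Delta_i + x_i}$ acts as the identity when $m = b$ and kills the state otherwise, giving $\ketbra{m}$ after conjugation by $\Enc_{k_i}$.

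The main work is the case $\theta_i = 1$, where I would carry out the Hadamard transform of the encoded state explicitly. Using $H^{\otimes 2\secp+1} X^{x_i} Z^{z_i} = Z^{x_i} X^{z_i} H^{\otimes 2\secp+1}$ and the standard Fourier identity $H^{\otimes 2\secp+1}\ket{S_i + b\Delta_i} \propto \sum_{y \in S_i^\bot} (-1)^{b\Delta_i \cdot y}\ket{y}$, one decomposes $S_i^\bot = \widehat{S}_i \cup (\widehat{S}_i + \widehat{\Delta}_i)$. The key number-theoretic fact is that $\widehat{S}_i \subseteq S_{i,\Delta_i}^\bot$ gives $\Delta_i \cdot y = 0$ for $y \in \widehat{S}_i$, while $\widehat{\Delta}_i \cdot \Delta_i = 1$ by the choice of $\widehat{\Delta}_i$. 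This reduces $H^{\otimes 2\secp+1} \Enc_{k_i}\ket{b}$ to $\tfrac{1}{\sqrt 2}\bigl(Z^{x_i}\ket{\widehat{S}_i + z_i} + (-1)^b Z^{x_i}\ket{\widehat{S}_i + \widehat{\Delta}_i + z_i}\bigr)$. Projecting with $\Pi_{\widehat{S}_i + m\widehat{\Delta}_i + z_i}$ keeps the $m$-th term with coefficient $(-1)^{bm}/\sqrt 2$, and pulling back through $\Enc_{k_i}^\dagger (H^{\otimes 2\secp+1})$ gives $\tfrac{1}{\sqrt 2}H\ket{m}$ out of $\ket{b}$, which is exactly the matrix $H\ketbra{m}H$. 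This is the only computationally nontrivial step; once established, the global identity follows by tensoring and summing over $m$ with $f(m) = 1$, recovering $H^\theta\bigl(\sum_{m:f(m)=1}\ketbra{m}\bigr)H^\theta = M[\theta,f]$.
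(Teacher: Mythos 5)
Your proof is correct and complete; the paper itself states this lemma without proof, merely citing it as one of the "imported lemmas that follow from \cite{BBV24}." The tensor-factorization reduction to per-block identities and the direct Fourier-transform calculation for the Hadamard-basis case is the natural argument and fills in the detail the paper leaves implicit.
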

\begin{lemma}[Hiding]\label{lemma:csa-privacy}
    Let $\Obf$ be a sub-exponentially secure indistinguishability obfuscator (\cref{def:iO}), and suppose that sub-exponentially secure injective one-way functions exist. Then there exists polynomials $d(\cdot,\cdot), q(\cdot,\cdot)$ such that for any two $n$-qubit states $\ket{\psi_0},\ket{\psi_1}$, and QPT adversary $\cA$,
    \begin{align*}&\Bigg|\Pr\left[\cA(\ket*{\widetilde{\psi}_0},\widetilde{\Ver}) = 1 : \begin{array}{r} k \gets \KeyGen(1^d,1^n) \\ \ket*{\widetilde{\psi}_0} \gets \Enc_k(\ket{\psi_0}) \\ \widetilde{\Ver} \gets \Obf(1^q,\Ver_{k,(\cdot)}(\cdot)) \end{array}\right]\\ &- \Pr\left[\cA(\ket*{\widetilde{\psi}_1},\widetilde{\Ver}) = 1 : \begin{array}{r} k \gets \KeyGen(1^d,1^n) \\ \ket*{\widetilde{\psi}_1} \gets \Enc_k(\ket{\psi_1}) \\ \widetilde{\Ver} \gets \Obf(1^q,\Ver_{k,(\cdot)}(\cdot)) \end{array}\right]\Bigg| = 2^{-\Omega(\lambda)},\end{align*} where $d \coloneqq d(\secp,n)$, and $q \coloneqq q(\secp,n)$.
\end{lemma}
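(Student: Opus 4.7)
The plan is to follow the coset-state hiding argument of \cite{BBV24} essentially verbatim, noting that the variant considered here only \emph{simplifies} their setting (each qubit is encoded independently with its own subspace, and there is no CNOT-homomorphic structure coupling the qubits). The membership tester $\Ver_{k,\theta}$ is precisely a conjunction of the single-qubit tests already analyzed in \cite{BBV24}, so a first step is to reduce to a single-qubit hiding claim by a hybrid over $i \in [n]$ in which we change each per-coordinate key $(S_i,\Delta_i,x_i,z_i)$ independently. The fact that the obfuscated verifier couples all coordinates is handled by an intermediate iO step (using functionality preservation in \cref{def:iO}) that rewrites $\widetilde{\Ver}$ as a conjunction of independently-obfuscated per-coordinate testers.

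The single-qubit step proceeds in two blocks of hybrids. In the primal block I would apply \cref{lemma:shO} inside the obfuscation to replace the check $v \in S_{\Delta}+x$ with a check $v \in T+x$ for a uniformly random superspace $T\supset S_\Delta$ of dimension roughly $3\lambda/2$; by \cref{lemma:shO} this swap is indistinguishable up to $2^{-\Omega(\lambda)}$. In the dual block I symmetrically replace the check $v \in \widehat{S}_{\widehat{\Delta}}+z$ by $v \in \widehat{T}+z$ for a random superspace $\widehat{T}\supset\widehat{S}_{\widehat{\Delta}}$. After both substitutions, the verifier only tests the enlarged cosets, so the one-time pad keys $x,z$ retain uniformly random degrees of freedom inside $T / S_\Delta$ and $\widehat{T}/\widehat{S}_{\widehat{\Delta}}$ that are information-theoretically hidden from the adversary. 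A Pauli-twirl over these hidden components sends the encoded state to the maximally mixed state on the enlarged codespace, a ``canonical form'' that is independent of $\ket{\psi}$. Applying this canonical-form statement to both $\ket{\psi_0}$ and $\ket{\psi_1}$ and invoking the triangle inequality then yields the lemma.

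The main obstacle I foresee is the non-independence of the primal and dual subspaces: $\widehat{S}_{\widehat{\Delta}} = S^\bot$ is determined by $S$, so naively enlarging $S_\Delta$ while keeping $\widehat{S}_{\widehat{\Delta}}$ fixed could in principle leak correlated information. The order of operations matters here --- \cref{lemma:shO} must be applied first to $S_\Delta$ so that the replacement $T$ is \emph{uniformly random} among superspaces of $S_\Delta$ from the adversary's point of view, destroying the correlation \emph{before} the dual application is invoked. Modulo this careful sequencing (and the reduction to single qubits described above), the proof is essentially the one in \cite{BBV24}, with total advantage bounded by $O(n)$ hybrid losses of $2^{-\Omega(\lambda)}$ each and hence still $2^{-\Omega(\lambda)}$ overall.
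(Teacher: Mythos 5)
The paper itself does not give a proof of \cref{lemma:csa-privacy}; it states it as an imported result ``that follows from \cite{BBV24}.'' Your proposal is a faithful reconstruction of that argument: bloat the primal and dual subspaces using \cref{lemma:shO} inside the obfuscated verifier, then observe that the remaining degrees of freedom of the one-time pad $(x,z)$ Pauli-twirl the authenticated state to a canonical form independent of the plaintext. This is the same route the paper (and \cite{BBV24}) takes, and the $O(n)\cdot 2^{-\Omega(\lambda)}$ accounting for the hybrid losses is correct given the choice of $d(\cdot,\cdot)$ and $q(\cdot,\cdot)$.

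One step in your sketch is both unnecessary and not actually achievable: ``an intermediate iO step \dots that rewrites $\widetilde{\Ver}$ as a conjunction of independently-obfuscated per-coordinate testers.'' iO cannot factor a single obfuscated program into $n$ independent obfuscations. The correct move (and the one the paper implicitly uses in the analogous Claim~\ref{claim:hyb1}) is to keep $\Ver$ as one jointly-obfuscated circuit and invoke subspace-hiding obfuscation coordinate-by-coordinate: the $i$-th hybrid swaps $S_{i,\Delta_i}\to T_i$ (resp.\ $\widehat{S}_{i,\widehat\Delta_i}\to R_i$) in the description of the circuit being obfuscated, with the rest of $k$ absorbed into the reduction's side information. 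Your concern about the correlation between the primal and dual subspaces is real but is handled simply: $T_i$ is a uniformly random superspace of $S_{i,\Delta_i}$ and $R_i$ a uniformly random superspace of $S_i^\perp$, and each application of \cref{lemma:shO} tolerates the other subspace as auxiliary input, so the two bloatings can be done in either order (the paper's $\Bloat$ samples both at once). With that repair, your outline matches the argument the paper relies on.
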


We next show the following characterizing the codespace.

\begin{lemma}
    \label{lem:csa-soundness}
    For any key $k \in \KeyGen(1^\secp,1^n)$, define \[\Pi_k \coloneqq X^xZ^z\ketbra{S}X^xZ^z + X^xZ^z\ketbra{S+\Delta}X^xZ^z\] to be projector onto the image of the isometry $\Enc_k$. Then \[\Pi_k = \left(H^{\otimes 2\secp+1}\right)^{1^n}\left(\sum_{v : \Ver_{k,1^n}(v) = 1}\ketbra{v}\right)\left(H^{\otimes 2\secp+1}\right)^{1^n}\left(\sum_{v : \Ver_{k,0^n}(v) = 1}\ketbra{v}\right).\]
\end{lemma}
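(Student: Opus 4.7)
Both sides factor across the $n$ blocks, so it suffices to prove the identity for a single block on $2\secp+1$ qubits with key $(S,\Delta,x,z)$. Write $P_0 = X^x P_0' X^x$ and $P_1 = X^z P_1' X^z$, where $P_0' \coloneqq \sum_{v \in S_\Delta}\ketbra{v}$ and $P_1' \coloneqq \sum_{v \in S^\perp}\ketbra{v}$. Interpreting $\Pi_k$ as the projector onto the image of $\Enc_k$, which sends $\ket{0}\mapsto X^xZ^z\ket{S}$ and $\ket{1}\mapsto X^xZ^z\ket{S+\Delta}$, yields $\Pi_k = X^x Z^z \Pi_0 Z^z X^x$ with $\Pi_0 \coloneqq \ketbra{S} + \ketbra{S+\Delta}$. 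Using $H^{\otimes(2\secp+1)}X^y = Z^yH^{\otimes(2\secp+1)}$ and $H^{\otimes(2\secp+1)}Z^y = X^yH^{\otimes(2\secp+1)}$, diagonality of $P_0',P_1'$, and $Z^zX^x = (-1)^{z\cdot x}X^xZ^z$, one readily verifies that $H^{\otimes(2\secp+1)}P_1H^{\otimes(2\secp+1)} = Z^z(H^{\otimes(2\secp+1)}P_1'H^{\otimes(2\secp+1)})Z^z$ and that $H^{\otimes(2\secp+1)}P_1'H^{\otimes(2\secp+1)}$ commutes with $X^x$; rearranging then reduces the lemma to the ``QOTP-free'' identity $\Pi_0 = H^{\otimes(2\secp+1)}P_1'H^{\otimes(2\secp+1)}P_0'$.

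\textbf{Step 2 ($\Pi_0$ is preserved).} For this reduced identity I would argue that both sides act identically on $\mathrm{range}(\Pi_0)$ and annihilate its complement. Clearly $\ket{S}$ and $\ket{S+\Delta}$ are supported on $S_\Delta$, so $P_0'\Pi_0 = \Pi_0$. Moreover, $H^{\otimes(2\secp+1)}\ket{S} = \ket{S^\perp}$ by the standard coset-state Fourier identity, while $H^{\otimes(2\secp+1)}\ket{S+\Delta} = \tfrac{1}{\sqrt{2}}(\ket{\widehat{S}} - \ket{\widehat{S}+\widehat{\Delta}})$, using $\widehat{\Delta}\cdot\Delta = 1$ and $\widehat{\Delta}\perp S$; both are supported on $\widehat{S}_{\widehat{\Delta}} = S^\perp$, so $P_1'H^{\otimes(2\secp+1)}\Pi_0 = H^{\otimes(2\secp+1)}\Pi_0$. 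Therefore $H^{\otimes(2\secp+1)}P_1'H^{\otimes(2\secp+1)}P_0'$ acts as the identity on $\mathrm{range}(\Pi_0)$.

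\textbf{Step 3 (the complement is killed, main obstacle).} It remains to show $H^{\otimes(2\secp+1)}P_1'H^{\otimes(2\secp+1)}P_0'$ vanishes on $\mathrm{range}(\Pi_0)^\perp$. Vectors outside $\mathrm{range}(P_0')$ are killed by $P_0'$, so consider $\ket{\psi} = \sum_{v\in S_\Delta}\beta_v\ket{v}$ with $\braket{S}{\psi} = \braket{S+\Delta}{\psi} = 0$, equivalently $\sum_{s\in S}\beta_s = \sum_{s\in S}\beta_{s+\Delta} = 0$. For any $u\in S^\perp$, using $u\cdot s = 0$ for $s\in S$ and $u\cdot(s+\Delta) = u\cdot\Delta$, I compute
\[
\bra{u}H^{\otimes(2\secp+1)}\ket{\psi} \;=\; \frac{1}{\sqrt{2^{2\secp+1}}}\Bigl(\sum_{s\in S}\beta_s + (-1)^{u\cdot\Delta}\sum_{s\in S}\beta_{s+\Delta}\Bigr) \;=\; 0,
\]
so $P_1'H^{\otimes(2\secp+1)}\ket{\psi} = 0$ and the full product vanishes on $\ket{\psi}$. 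The substantive content of the lemma is this single Fourier calculation, which expresses the complementarity between $S_\Delta$ (the standard-basis support of the codespace) and $S^\perp$ (its Hadamard-basis support); everything else is Pauli bookkeeping. The only real subtlety is tracking the $(-1)^{z\cdot x}$ signs that arise from $Z^zX^x = (-1)^{z\cdot x}X^xZ^z$, which must cancel in pairs when reassembling the QOTP-conjugated form so that the stripped identity lifts back to exactly $\Pi_k$.
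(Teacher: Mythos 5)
Your proof is correct and follows essentially the same route as the paper: establish the identity for a single block, show that the composed verification acts as the identity on the codespace spanned by $\ket{S}$ and $\ket{S+\Delta}$, and kill the orthogonal complement of the codespace inside $\mathrm{range}(\Pi[S_\Delta])$ via the same Fourier computation showing $\bra{u}H^{\otimes(2\secp+1)}\ket{\psi}=0$ for $u\in S^\perp$. The only organizational difference is that you strip the quantum one-time pad Paulis up front (Step 1) and prove the "stripped" identity $\Pi_0 = H^{\otimes(2\secp+1)}P_1'H^{\otimes(2\secp+1)}P_0'$, while the paper carries the Paulis through the computation and absorbs them during the calculation; both handle the same $(-1)^{z\cdot x}$ commutation signs, and the core content is identical.
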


\begin{proof}
    We show the claim for $n=1$, which naturally generalizes to any $n$. Given a key $k = (S,\Delta,x,z)$, define $\Pi[S_\Delta]$ to be the projector onto $v \in S_\Delta$, define $\Pi[S^\bot]$ analogously, and re-write the RHS on the final line of the claim as \[H^{\otimes 2\secp+1}X^z\Pi[S^\bot]X^zH^{\otimes 2\secp+1}X^x\Pi[S_\Delta]X^x \coloneqq V_k.\] Then it suffices to show that (i) $V_kX^xZ^z\ket{S} = X^xZ^z\ket{S}$, (ii) $V_kX^xZ^z\ket{S+\Delta} = X^xZ^z\ket{S+\Delta}$, and (iii) for any $\ket{\psi}$ such that $\Pi_k\ket{\psi} = 0$, $V_k\ket{\psi} = 0$. 

    The first two follow by inspection, so we just show (iii). Writing \[X^xZ^z\ket{\psi} = \sum_{v} \alpha_v\ket{v},\] we have that $\sum_{v \in S}\alpha_v = 0$ and $\sum_{v \in S+\Delta}\alpha_v = 0$. Then
    \begin{align*}
        V_k\ket{\psi} &= H^{\otimes 2\secp+1}X^z\Pi[S^\bot]X^zH^{\otimes 2\secp+1}X^x\Pi[S_\Delta]X^x\ket{\psi} \\
        &= H^{\otimes 2\secp+1}X^z Z^x\Pi[S^\bot]H^{\otimes 2\secp+1}\Pi[S_\Delta]\sum_{v}\alpha_v\ket{v} \\
        &=H^{\otimes 2\secp+1}X^z Z^x\Pi[S^\bot]H^{\otimes 2\secp+1}\sum_{v \in S_\Delta}\alpha_v\ket{v} \\
        &=H^{\otimes 2\secp+1}X^z Z^x\Pi[S^\bot]\sum_{v \in S_\Delta}\alpha_v\sum_{w}(-1)^{v \cdot w}\ket{w} \\
        &=H^{\otimes 2\secp+1}X^z Z^x\sum_{w \in S^\bot}\left(\left(\sum_{v \in S}\alpha_v\right) + (-1)^{\Delta \cdot w}\left(\sum_{v \in S+\Delta}\alpha_v\right)\right)\ket{w} \\
        &= 0.
    \end{align*}
\end{proof}

Finally, we prove a new privacy property of the coset state authentication scheme. 
\begin{theorem}[Measurement Indistinguishability]\label{thm:auth-security}
    Let $\Obf$ be a sub-exponentially secure indistinguishability obfuscator (\cref{def:iO}), and suppose that sub-exponentially secure injective one-way functions exist. Then there exists polynomials $d(\cdot,\cdot), q(\cdot,\cdot)$ such that for any bases $\theta \in \{0,1\}^n$, functions $f_0,f_1: \{0,1\}^n \to \{0,1\}$, and $n$-qubit states $\ket{\psi_0},\ket{\psi_1}$ such that $\ket{\psi_0} \in \mathsf{im}\left(\cI - M[\theta,f_0]\right)$ and $\ket{\psi_1} \in \mathsf{im}\left(\cI - M[\theta,f_1]\right)$, it holds that for any QPT adversary $\cA$, 
     \begin{align*}&\Bigg| \Pr\left[\cA(\ket*{\widetilde{\psi}_0},\widetilde{\Ver},\widetilde{\Dec}_0) = 1 : \begin{array}{r}k \gets \KeyGen(1^d,1^n) \\ \ket*{\widetilde{\psi}_0} \gets \Enc_k(\ket{\psi_0}) \\ \widetilde{\Ver} \gets \Obf(1^q,\Ver_{k,(\cdot)}(\cdot)) \\ \widetilde{\Dec}_0 \gets \Obf(1^q,\Dec_{k,\theta,f_0}(\cdot))\end{array}\right]\\ &- \Pr\left[\cA(\ket*{\widetilde{\psi}_1},\widetilde{\Ver},\widetilde{\Dec}_1) = 1 : \begin{array}{r}k \gets \KeyGen(1^d,1^n) \\ \ket*{\widetilde{\psi}_1} \gets \Enc_k(\ket{\psi_1}) \\ \widetilde{\Ver} \gets \Obf(1^q,\Ver_{k,(\cdot)}(\cdot)) \\ \widetilde{\Dec}_1 \gets \Obf(1^q,\Dec_{k,\theta,f_1}(\cdot))\end{array}\right]\Bigg| = 2^{-\Omega(\lambda)},\end{align*} where $d \coloneqq d(\secp,n)$, and $q \coloneqq q(\secp,n)$.
\end{theorem}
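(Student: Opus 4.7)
My plan is to prove indistinguishability of the two ensembles by bridging through a common hybrid in which the obfuscated decoder is replaced by a ``null'' obfuscation $\widetilde{\Dec}^{\perp} \coloneqq \Obf(1^q, v \mapsto 0)$ that depends on neither the key $k$, the function $f_i$, nor the state $\ket{\psi_i}$. The key observation enabling this transition is that the hypothesis $\ket{\psi_i} \in \mathsf{im}(\cI - M[\theta, f_i])$, combined with the correctness statement \cref{def:coset-auth-correctness}, forces the encoded state $\ket*{\widetilde{\psi}_i}$ to have \emph{zero} amplitude on the ``accepting set'' $A_{f_i} \coloneqq \{v : \Dec_{k,\theta,f_i}(v) = 1\}$, which is precisely the set on which $\Dec_{k,\theta,f_i}$ and the zero function disagree.

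Concretely, I would set up four hybrids $H_0 = (\ket*{\widetilde{\psi}_0}, \widetilde{\Ver}, \widetilde{\Dec}_0)$, $H_1 = (\ket*{\widetilde{\psi}_0}, \widetilde{\Ver}, \widetilde{\Dec}^{\perp})$, $H_2 = (\ket*{\widetilde{\psi}_1}, \widetilde{\Ver}, \widetilde{\Dec}^{\perp})$, and $H_3 = (\ket*{\widetilde{\psi}_1}, \widetilde{\Ver}, \widetilde{\Dec}_1)$. The transition $H_1 \approx H_2$ follows immediately from \cref{lemma:csa-privacy}, since $\widetilde{\Dec}^{\perp}$ can be sampled independently of $k$ and appended to the challenger's output. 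The transitions $H_0 \approx H_1$ and $H_2 \approx H_3$ are symmetric, so it suffices to establish the first.

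For $H_0 \approx H_1$, I would invoke \cref{lemma:oracle-ind} with $O^0_k = \Dec_{k,\theta,f_0}$, $O^1_k \equiv 0$, differing set $S_k = A_{f_0}$, and auxiliary input $z_k = (\ket*{\widetilde{\psi}_0}, \widetilde{\Ver})$. The required bound on $\E_k[\|\Pi[A_{f_0}] U^{O^0_k}(z_k)\|^2]$ will be obtained by first applying \cref{lemma:state-decomp} along the message-indexed projectors in the $\theta$-basis, observing that by hypothesis the initial amplitude on $A_{f_0}$ is exactly zero, and then controlling the effect of $\widetilde{\Ver}$ and decoder queries via subspace hiding (\cref{lemma:shO}). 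The bridge from iO-based obfuscations back to oracle-style reasoning will go through an intermediate hybrid that replaces each coset $S_i, \widehat{S}_i$ inside $\widetilde{\Ver}$ by a uniformly random superspace, at which point the specific accepting cosets become statistically indistinguishable from non-accepting ones of the same shape within the enlarged codespace.

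The main obstacle is controlling amplitude on $A_{f_0}$ in the presence of $\widetilde{\Ver}$. Since $\widetilde{\Ver}$ already marks the full codespace and $A_{f_0}$ lies inside the codespace, one might worry that $\widetilde{\Ver}$ reveals enough structure to distinguish $A_{f_0}$ from the non-accepting part of the codespace. My strategy mirrors the approach used by \cite{BBV24} to prove the Hiding Lemma itself: apply subspace hiding block-by-block and decompose along the resulting enlarged subspaces, so that from $\widetilde{\Ver}$'s viewpoint the accepting cosets look like a statistically random sub-collection of codewords. Making this argument compatible with the simultaneous presence of the decoder obfuscation $\widetilde{\Dec}_0$ inside the hybrid is the most delicate technical step, and is where I expect the bulk of the proof's complexity to lie.
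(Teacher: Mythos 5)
Your outer hybrid skeleton—$H_0 \to \bigl(\ket*{\widetilde{\psi}_0},\widetilde{\Ver},\widetilde{\Dec}^\perp\bigr) \to \bigl(\ket*{\widetilde{\psi}_1},\widetilde{\Ver},\widetilde{\Dec}^\perp\bigr) \to H_3$—matches the paper's high-level strategy (the paper's $\Hyb_{5,0}\approx\Hyb_{5,1}$ via \cref{lemma:csa-privacy} is exactly your $H_1\approx H_2$), but the proposed argument for the crucial transition $H_0\approx H_1$ has a genuine gap.

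First, you want to invoke \cref{lemma:oracle-ind} with $O^0_k=\Dec_{k,\theta,f_0}$, but that lemma speaks about \emph{oracle} access, whereas the adversary in the theorem is handed the obfuscated circuit $\widetilde{\Dec}_0$ itself. Bridging from obfuscated-circuit access to the oracle setting is precisely the nontrivial content the proof has to supply, and subspace hiding alone does not do it: subspace hiding replaces cosets inside $\widetilde{\Ver}$, but the decoder $\Dec_{k,\theta,f_0}$ distinguishes $S_i+x_i$ from $S_i+\Delta_i+x_i$ \emph{within} the codespace, so bloating the verifier's subspaces does not by itself hide the ``accepting'' cosets that the decoder recognizes. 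The paper instead uses point-function obfuscation (\cref{thm:point-obf}): after a change of variables, the $\Delta_i$ offset is a random coset of $S_i$ inside the bloated $T_i$, so the branch of the decoder that tests $v_i \in S_i+\Delta_i+x_i$ can be iO-switched to $\bot$. Your sketch never identifies this tool.

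Second, and more importantly, the point-function argument only applies once the encoded state is supported on a \emph{single} coset tuple per qubit. Your observation that $\ket{\psi_0}\in\mathsf{im}(\cI - M[\theta,f_0])$ implies zero amplitude on $A_{f_0}$ (in the $\theta$-rotated basis) is correct but insufficient: $\ket{\psi_0}$ may be a superposition over many coset tuples, so after encoding the adversary's register is spread across cosets that the decoder \emph{does} respond to nontrivially, and those responses are not indistinguishable from the null circuit. The paper handles this by an intermediate hybrid ($\Hyb_{2,b}$) where the state is measured in the $\theta$-basis \emph{before} encoding (so it collapses to a single $y$ with $f_b(y)=0$), with the cost of measurement controlled by \cref{lemma:state-decomp} together with the point-function argument; only then can one iterate coset-by-coset to null out the decoder ($\Hyb_{3,b}$), and then the measurement must be undone ($\Hyb_{4,b}$). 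Your plan applies \cref{lemma:state-decomp} but does not perform (and then undo) the actual measurement, nor does it explain how the point-function step survives in the presence of a superposition state. As written, the ``accepting cosets look like a statistically random sub-collection'' claim does not hold without this collapse, and this is where the proposal breaks down.
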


\begin{proof}

Let $d = 2 \cdot \max\{n^2,\secp\}$ and $q = p(d)$, where $p$ is the polynomial from \cref{lemma:shO}. Now, for each $b \in \{0,1\}$, we proceed via the following sequence of hybrids.

\begin{itemize}
    \item $\Hyb_{0,b}$: This is the distribution over the output of $\cA$ as defined in the lemma statement using $f_b$ and $\ket{\psi_b}$.
    \item $\Hyb_{1,b}$: We ``bloat'' the subspaces used by $\Ver$. Given $k = \{S_i,\Delta_i,x_i,z_i\}_{i \in [n]}$, define $k' \gets \Bloat(k)$ to be the following procedure.

    \begin{itemize}
        \item For each $i \in [n]$, sample $T_i \gets \Sup_{3d/2+1}(S_{i,\Delta_i}),R_i \gets \Sup_{3d/2+1}(\widehat{S}_{i,\widehat{\Delta}_i})$.
        \item Output $k' \coloneqq \{T_i,R_i,x_i,z_i\}_{i \in [n]}$.
    \end{itemize}

    Then, define $\Ver'_{k',\theta}(v)$ as follows.
    \begin{itemize}
        \item Parse $v = (v_1,\dots,v_n)$.
        \item For each $i \in [n]$, output $0$ if $\theta_i = 0$ and $v_i \notin T_i + x_i$ or $\theta_i = 1$ and $v_i \notin R_i+z_i$. Otherwise, output 1. 
    \end{itemize}

    Finally, this hybrid is defined as follows.

    \begin{itemize}
        \item $k \gets \KeyGen(1^d,1^n)$
        \item $\ket*{\widetilde{\psi}_b} \gets \Enc_k(\ket{\psi_b})$
        \item $k' \gets \Bloat(k)$
        \item $\widetilde{\Ver'} \gets \Obf(1^q,\Ver'_{k',(\cdot)}(\cdot))$ 
        \item $\widetilde{\Dec}_b \gets \Obf(1^q,\Dec_{k,\theta,f_b}(\cdot))$
        \item Output $\cA(\widetilde{H^\theta\ket{x}},\widetilde{\Ver'},\widetilde{\Dec}_b)$
    \end{itemize}

    \item $\Hyb_{2,b}$: We measure $\ket{\psi_b}$ in the $\theta$-basis before encoding. Let $M_\theta$ denote the $n$-qubit measurement that measures the $i$'th qubit in basis $\theta_i$, and $H^\theta\ket{x} \gets M_\theta(\ket{\psi_b})$ denote the process of applying $M_\theta$ to the state $\ket{\psi_b}$. Then $\Hyb_{2,b}$ is defined as follows.
    \begin{itemize}
        \item $k \gets \KeyGen(1^d,1^n)$
        \item $H^\theta\ket{x} \gets M_\theta(\ket{\psi_b})$
        \item $\widetilde{H^\theta\ket{x}} \gets \Enc_k(H^\theta\ket{x})$
        \item $k' \gets \Bloat(k)$
        \item $\widetilde{\Ver'} \gets \Obf(1^q,\Ver'_{k',(\cdot)}(\cdot))$ 
        \item $\widetilde{\Dec}_b \gets \Obf(1^q,\Dec_{k,\theta,f_b}(\cdot))$
        \item Output $\cA(\widetilde{H^\theta\ket{x}},\widetilde{\Ver'},\widetilde{\Dec}_b)$
    \end{itemize}
    \item $\Hyb_{3,b}$: Sample $\widetilde{\Dec}_b$ as $\widetilde{\Dec}_b \gets \Obf(1^q,\mathsf{null})$, where $\mathsf{null}$ is the circuit that always outputs 0.
    \item $\Hyb_{4,b}$: Undo the measurement from $\Hyb_{2,b}$.
    \item $\Hyb_{5,b}$: Undo the change in $\Ver_{k,(\cdot)}(\cdot)$ from $\Hyb_{1,b}$.
\end{itemize}

The proof follows by combining the following sequence of claims.

\begin{claim}\label{claim:hyb1}
    For any $b \in \{0,1\}$, 
    \[|\Pr[\Hyb_{0,b} = 1] - \Pr[\Hyb_{1,b} = 1]| = 2^{-\Omega(\secp)}.\]
\end{claim}

\begin{proof}
    This follows directly by applying the security of subspace-hiding obfuscation (\cref{lemma:shO}) for each $i \in [n]$, and using the fact that $d/2 \geq \secp$.
\end{proof}

\begin{claim}\label{claim:hyb2}
    For any $b \in \{0,1\}$, 
    \[|\Pr[\Hyb_{1,b} = 1] - \Pr[\Hyb_{2,b} = 1]| = 2^{-\Omega(\secp)}.\]
\end{claim}

\begin{proof}
    Fix any choice of $y \in \{0,1\}^n$ such that $f_b(y) = 0$, any choice of $y' \neq y,$ and consider the following experiment $\Exp_0$.\\
    
    \noindent\underline{$\Exp_0$}

    \begin{itemize}
        \item $k \gets \KeyGen(1^d,1^n)$
        \item $\widetilde{H^\theta\ket{y}} \gets \Enc_k(H^\theta\ket{y})$
        \item $k' \gets \Bloat(k)$
        \item $\widetilde{\Ver'} \gets \Obf(1^q,\Ver'_{k',(\cdot)}(\cdot))$
        \item $\widetilde{\Dec}_b \gets \Obf(1^q,\Dec_{k,\theta,f_b}(\cdot))$
        \item $(v_1,\dots,v_n) \gets \cA(\widetilde{H^\theta\ket{y}},\widetilde{\Ver'},\widetilde{\Dec}_b)$
        \item Output 1 if 
        \begin{itemize}
            \item for all $i : \theta_i = 0$, $v_i \in S_i + y'_i \cdot \Delta_i + x_i$, and 
            \item For all $i : \theta_i = 1$, $v_i \in \widehat{S}_i + y'_i \cdot \widehat{\Delta}_i + z_i$.
        \end{itemize}
    \end{itemize}

    By \cref{lemma:state-decomp}, it suffices to show that \[\sqrt{\Pr[\Exp_0 = 1]} \cdot 2^n = 2^{-\Omega(\secp)}.\]
    Next, we'll define experiment $\Exp_1$ with a less-restrictive win condition.

    \noindent\underline{$\Exp_1$}

    \begin{itemize}
        \item $k \gets \KeyGen(1^d,1^n)$
        \item $\widetilde{H^\theta\ket{y}} \gets \Enc_k(H^\theta\ket{y})$
        \item $k' \gets \Bloat(k)$
        \item $\widetilde{\Ver'} \gets \Obf(1^q,\Ver'_{k',(\cdot)}(\cdot))$
        \item $\widetilde{\Dec}_b \gets \Obf(1^q,\Dec_{k,\theta,f_b}(\cdot))$
        \item $(v_1,\dots,v_n) \gets \cA(\widetilde{H^\theta\ket{y}},\widetilde{\Ver'},\widetilde{\Dec}_b)$
        \item Output 1 if there exists an $i \in [n]$ such that 
        \begin{itemize}
            \item if $\theta_i = 0$, then $v_i \in S_i + y'_i \cdot \Delta_i + x_i$, or
            \item if $\theta_i = 1$, then $v_i \in \widehat{S}_i + y'_i \cdot \widehat{\Delta}_i + z_i$.
        \end{itemize}
    \end{itemize}

    Clearly, we have that \[\Pr[\Exp_0 = 1] \leq \Pr[\Exp_1 = 1].\]
    
    Next, we change variables to make the notation more convenient. Define $f[y] \coloneqq f_b \oplus y$, let $H^{\theta,d} \coloneqq (H^{\otimes 2d+1})^{\theta_1} \otimes \dots \otimes (H^{\otimes 2d+1})^{\theta_n}$, and consider the following experiment.\\

    \noindent\underline{$\Exp_2$}

    \begin{itemize}
        \item $k \gets \KeyGen(1^d,1^n)$
        \item $\widetilde{\ket{0^n}} \gets \Enc_k(\ket{0^n})$
        \item $k' \gets \Bloat(k)$
        \item $\widetilde{\Ver'} \gets \Obf(1^q,\Ver'_{k',(\cdot)}(\cdot))$
        \item $\widetilde{\Dec} \gets \Obf(1^q,\Dec_{k,0^n,f[y]}(\cdot))$
        \item $(v_1,\dots,v_n) \gets \cA(H^{\theta,d}\widetilde{\ket{0^n}},\widetilde{\Ver},\widetilde{\Dec})$.
        \item Output 1 if there exists an $i \in [n]$ such that $v_i \in S_i + \Delta_i + x_i$
    \end{itemize}

    Since this is just a change of variables, we have that 

    \[\Pr[\Exp_1 = 1] = \Pr[\Exp_2 = 1].\]

    Next, we consider $n$ experiments $\Exp_{2,1},\dots,\Exp_{2,n}$, where in $\Exp_{2,j}$, the circuit $\Dec_{k,0^n,f[y],j}$, defined as follows, is obfuscated.\\

    \noindent\underline{$\Dec_{k,0^n,f[y],j}$}

    \begin{itemize}
        \item Parse $v = (v_1,\dots,v_n)$ where each $v_i \in \bbF_2^{2d+1}$
        \item For each $i \in [0,\dots,j]$, compute 
        \[m_i \coloneqq \begin{cases}0 \ \ \ \  \text{if} \ v_i \in S_i+x_i \\ \bot \ \ \ \ \text{otherwise}\end{cases}.\] 
        \item For each $i \in [j+1,\dots,n]$, compute 
        \[m_i \coloneqq \begin{cases}0 \ \ \ \  \text{if} \ v_i \in S_i+x_i \\ 1 \ \ \ \  \text{if} \ v_i \in S_i+\Delta_i + x_i  \\ \bot \ \ \ \ \text{otherwise}\end{cases}.\] 
        \item If any $m_i = \bot$, then output $0$. Otherwise output $f[y](m)$.
    \end{itemize}

    Note that $\Delta_i$ is sampled as a uniformly random coset of $S_i$ within $T_i$, which is a set of size $d/2 + 1$. Thus, by \cref{thm:point-obf}, we have that \[|\Pr[\Exp_{2,j-1} = 1] - \Pr[\Exp_{2,j} = 1]| = 2^{-\Omega(d)}\] for each $j \in [n]$.

    Next, since $f[y](0^n) = f(y) = 0$, $\Dec_{k,0^n,f[y],n}$ is the null circuit, and thus \[\Pr[\Exp_{2,n}=1] \leq \frac{|S_i|}{|T_i \setminus S_i|} = 2^{-\Omega(d)}.\] 

    Combining everything so far, we have that $\Pr[\Exp_0 = 1] = 2^{-\Omega(d)}$, which implies that 
    \[\sqrt{\Pr[\Exp_0 = 1]} \cdot 2^n = 2^{-\Omega(d)} \cdot 2^n = 2^{-\Omega(\secp)},\] since $d \geq n^2$.

\end{proof}

\begin{claim}\label{claim:hyb3}
    For any $b \in \{0,1\}$, 
    \[|\Pr[\Hyb_{2,b} = 1] - \Pr[\Hyb_{3,b} = 1]| = 2^{-\Omega(\secp)}.\]
\end{claim}

\begin{proof}
    This follows via the same sequence of hybrids $\Exp_{2,1},\dots,\Exp_{2,n}$ used in the proof of \cref{claim:hyb2}.
\end{proof}

\begin{claim}\label{claim:hyb4}
    For any $b \in \{0,1\}$, 
    \[|\Pr[\Hyb_{3,b} = 1] - \Pr[\Hyb_{4,b} = 1]| = 2^{-\Omega(\secp)}.\]
\end{claim}

\begin{proof}
    This follows from the same argument as the proof of \cref{claim:hyb2}
\end{proof}

\begin{claim}\label{claim:hyb5}
    For any $b \in \{0,1\}$, 
    \[|\Pr[\Hyb_{4,b} = 1] - \Pr[\Hyb_{5,b} = 1]| = 2^{-\Omega(\secp)}.\]
\end{claim}

\begin{proof}
    This follows from the same argument as the proof of \cref{claim:hyb1}
\end{proof}

\begin{claim}\label{claim:hybfinal} 
    \[|\Pr[\Hyb_{5,0} = 1] - \Pr[\Hyb_{5,1} = 1]| = 2^{-\Omega(\secp)}.\]
\end{claim}

\begin{proof}
    This follows directly from \cref{lemma:csa-privacy}, since we have exactly the same setup except that the adversary also receives an obfuscated null circuit.
\end{proof}

\end{proof}

\begin{corollary}
    \label{cor:csa-decoder-privacy}
    Let $\Obf$ be a sub-exponentially secure indistinguishability obfuscator (\cref{def:iO}), and suppose that sub-exponentially secure injective one-way functions exist. Then there exists a polynomial $p(\cdot,\cdot)$ such that for any bases $\theta \in \{0,1\}^n$, functions $f_0,f_1: \{0,1\}^n \to \{0,1\}$, and $n$-qubit states $\ket{\psi_0},\ket{\psi_1}$ such that $\ket{\psi_0} \in \mathsf{im}\left(M[\theta,f_0]\right)$ and $\ket{\psi_1} \in \mathsf{im}\left(M[\theta,f_1]\right)$, it holds that for any QPT adversary $\cA$, 
     \begin{align*}&\Bigg| \Pr\left[\cA(\ket*{\widetilde{\psi}_0},\widetilde{\Ver},\widetilde{\Dec}_0) = 1 : \begin{array}{r}k \gets \KeyGen(1^{p(\lambda,n)},1^n) \\ \ket*{\widetilde{\psi}_0} \gets \Enc_k(\ket{\psi_0}) \\ \widetilde{\Ver} \gets \Obf(\Ver_{k,(\cdot)}(\cdot)) \\ \widetilde{\Dec}_0 \gets \Obf(\Dec_{k,\theta,f_0}(\cdot))\end{array}\right]\\ &- \Pr\left[\cA(\ket*{\widetilde{\psi}_1},\widetilde{\Ver},\widetilde{\Dec}_1) = 1 : \begin{array}{r}k \gets \KeyGen(1^{p(\lambda,n)},1^n) \\ \ket*{\widetilde{\psi}_1} \gets \Enc_k(\ket{\psi_1}) \\ \widetilde{\Ver} \gets \Obf(\Ver_{k,(\cdot)}(\cdot)) \\ \widetilde{\Dec}_1 \gets \Obf(\Dec_{k,\theta,f_1}(\cdot))\end{array}\right]\Bigg| = 2^{-\Omega(\lambda)}.\end{align*}
\end{corollary}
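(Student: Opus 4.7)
The plan is to reduce this corollary to Theorem~\ref{thm:auth-security} by exchanging accepting subspaces for rejecting ones. Set $\overline{f}_b \coloneqq 1 - f_b$. From the defining formula $M[\theta, f] = H^\theta\bigl(\sum_{x : f(x) = 1} \ketbra{x}{x}\bigr) H^\theta$, one immediately reads off that $M[\theta, \overline{f}_b] = \cI - M[\theta, f_b]$, so the hypothesis $\ket{\psi_b} \in \mathsf{im}(M[\theta, f_b])$ of the corollary is exactly the hypothesis $\ket{\psi_b} \in \mathsf{im}(\cI - M[\theta, \overline{f}_b])$ demanded by Theorem~\ref{thm:auth-security}. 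Invoking that theorem on $(\overline{f}_0, \overline{f}_1)$ therefore already yields $2^{-\Omega(\secp)}$-indistinguishability of the tuples $(\ket*{\widetilde{\psi}_b}, \widetilde{\Ver}, \Obf(\Dec_{k, \theta, \overline{f}_b}))$ across $b \in \{0, 1\}$.

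It remains to trade obfuscations of $\Dec_{k, \theta, \overline{f}_b}$ for obfuscations of $\Dec_{k, \theta, f_b}$ inside each branch. The key identity is
\[\Dec_{k, \theta, f_b}(v) = \Ver_{k, \theta}(v) \wedge \bigl(1 - \Dec_{k, \theta, \overline{f}_b}(v)\bigr).\]
Outside the codespace $\Ver_{k,\theta}(v) = 0$ and both sides evaluate to $0$; on the codespace $\Ver_{k,\theta}(v) = 1$ and the inner decoder returns $\overline{f}_b(m) = 1 - f_b(m)$, whose complement recovers $f_b(m) = \Dec_{k,\theta,f_b}(v)$. Define a circuit $D_b^\star$ which, on input $v$, runs its hardwired obfuscated subroutines $\widetilde{\Ver}$ and $\Obf(\Dec_{k, \theta, \overline{f}_b})$ and returns their combination according to this identity. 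Then $D_b^\star$ is functionally equivalent to $\Dec_{k, \theta, f_b}$, and after padding $D_b^\star$ to the circuit size of $\Dec_{k, \theta, f_b}$ the sub-exponential security of $\Obf$ (Definition~\ref{def:iO}) yields $\Obf(D_b^\star) \approx \Obf(\Dec_{k, \theta, f_b})$ up to advantage $2^{-\secp^\epsilon}$.

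The reduction is now immediate: a QPT distinguisher $\cA$ for the corollary gives a distinguisher $\cA'$ for Theorem~\ref{thm:auth-security} that, on input $(\ket*{\widetilde{\psi}_b}, \widetilde{\Ver}, \Obf(\Dec_{k, \theta, \overline{f}_b}))$, assembles and obfuscates $D_b^\star$ from the two given obfuscations, and then feeds the resulting triple to $\cA$. Choosing the polynomial $p(\secp, n)$ in the statement large enough to dominate both the parameter supplied to Theorem~\ref{thm:auth-security} and the modest size blowup from embedding obfuscated subroutines inside $D_b^\star$, the total advantage is bounded by two iO hops plus the theorem's bound, i.e.\ still $2^{-\Omega(\secp)}$. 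I do not expect any step to be a genuine obstacle: the only substantive content is the short functional-equivalence identity between the two decoders, with everything else amounting to routine padding and hybrid bookkeeping.
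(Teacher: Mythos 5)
Your proposal matches the paper's proof essentially line for line: pass to the complements $\overline{f}_b = 1-f_b$ so that the hypotheses of \cref{thm:auth-security} hold, observe the functional identity $\Dec_{k,\theta,f_b}(v) = \Ver_{k,\theta}(v) \wedge \bigl(1 - \Dec_{k,\theta,\overline{f}_b}(v)\bigr)$, and have the reduction re-obfuscate the composite circuit built from $\widetilde{\Ver}$ and $\Obf(\Dec_{k,\theta,\overline{f}_b})$, with iO security justifying the swap to $\Obf(\Dec_{k,\theta,f_b})$. The only nit is the padding direction — since $D_b^\star$ embeds obfuscated subroutines it is the larger circuit, so $\Dec_{k,\theta,f_b}$ would be the one padded up — but this is cosmetic and the argument is sound.
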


\begin{proof}
    Suppose there exist $\theta,f_0,f_1,\ket{\psi_0},\ket{\psi_1},\cA$ for which the above does not hold. Define $\overline{f}_0,\overline{f}_1$ to be the complements of $f_0,f_1$, and note that $\ket{\psi_0} \in \mathsf{im}\left(\cI-M[\theta,\overline{f}_0]\right)$, $\ket{\psi_1} \in \mathsf{im}\left(\cI-M[\theta,\overline{f}_1]\right)$. For any $b \in \{0,1\}$, given $\ket*{\widetilde{\psi}_b},\widetilde{\Ver},\widetilde{\Dec}'_b$, where $k \gets \KeyGen(1^{p(\lambda,n)},1^n)$, $\ket*{\widetilde{\psi}_b} \gets \Enc_k(\ket{\psi_b})$, $\widetilde{\Ver} \gets \Obf(\Ver_{k,(\cdot)}(\cdot))$, and $\widetilde{\Dec}'_b \gets \Obf(\Dec_{k,\theta,\overline{f}_b}(\cdot))$, consider a reduction that samples \[\widetilde{\Dec_b}(\cdot) \gets \Obf\left(1 \ \text{if} \ \widetilde{\Dec}'_b(\cdot) = 0 \wedge \widetilde{\Ver}(\cdot) = 1\right)\] and runs $\cA$ on $\ket*{\widetilde{\psi}_b},\widetilde{\Ver},\widetilde{\Dec}_b$. Note that the program obfuscated is functionally-equivalent to $\Dec_{k,\theta,f_b}(\cdot)$, and thus by the security of $\Obf$, this reduction violates \cref{thm:auth-security}. 
\end{proof}

\section{Post-Quantum NIZK Arguments of Knowledge for NP}
\label{sec:nizk-np}

\subsection{Knowledge Soundness Definition}

\begin{definition}[Post-Quantum NIZKPoK (AoK) for $\NP$ in CRS Model]
\label{def:nizkpok-np}
Let $\NP$ relation $\cR$ with corresponding language $\cL$ be given such that they can be indexed by a security parameter $\lambda \in \bbN$.

$\Pi = (\Setup, \sP, \sV)$ is a post-quantum, non-interactive, zero-knowledge proof (argument) of knowledge for $\NP$ in the CRS model if it has the following syntax and properties.

\noindent \textbf{Syntax.}
The input $1^\lambda$ is left out when it is clear from context.
\begin{itemize}
    \item $\crs \gets \Setup(1^\lambda)$: The probabilistic polynomial-size circuit $\Setup$ on input $1^\lambda$ outputs a common reference string $\crs$.
    \item $\pi \gets \sP(1^\lambda, \crs, x, w)$: The probabilistic polynomial-size circuit $\sP$ on input a common random string $\crs$ and instance and witness pair $(x, w) \in \cR_\lambda$, outputs a proof $\pi$.
    \item $\sV(1^\lambda, \crs, x, \pi) \in \zo$: The probabilistic polynomial-size circuit $\sV$ on input a common random string $\crs$, an instance $x$, and a proof $\pi$ outputs $1$ iff $\pi$ is a valid proof for $x$.
\end{itemize}

\noindent \textbf{Properties.}
\begin{itemize}
    \item {\bf Uniform Random String.}
    $\Pi$ satisfies the uniform random string property of \cref{def:nizk-np}.

    \item {\bf Perfect Completeness.}
    $\Pi$ satisfies the perfect completeness property of \cref{def:nizk-np}.

    \item {\bf Adaptive $T$-Proof (Argument) of Knowledge.}
    There exists a polynomial-size circuit extractor $\Ext = (\Ext_0, \Ext_1)$ and a negligible functions $\negl_0(\cdot), \negl_1(\cdot)$ such that:
    \begin{enumerate}
        \item for every unbounded (polynomial-size) quantum circuit $\cD$, every sufficiently large $\lambda \in \bbN$, 
        \begin{equation*}
            \left|\Pr_{\crs \gets \Setup(1^\lambda)}[\cD(\crs) = 1] - \Pr_{(\crs,\td) \gets \Ext_0(1^\lambda)}[\cD(\crs) = 1] \right| \le \negl_0(T(\lambda))
        \end{equation*}
        \item and, for every unbounded (polynomial-size) quantum circuit $\cA$, every sufficiently large $\lambda \in \bbN$,
        \begin{equation*}
            \Pr_{\substack{(\crs, \td) \gets \Ext_0(1^\lambda) \\ (x, \pi) \gets \cA(\crs) \\ w \gets \Ext_1(\crs, \td, x, \pi)}}\left[\sV(\crs, x, \pi) = 1 \wedge (x, w) \not\in \cR_\lambda\right] \le \negl_1(T(\lambda)).
        \end{equation*}
    \end{enumerate}

    \item {\bf Adaptive Computational (Non-Adaptive Statistical) Zero-Knowledge.}
    $\Pi$ satisfies the adaptive computational (non-adaptive statistical) zero-knowledge property of \cref{def:nizk-np}.
\end{itemize}
\end{definition}

\subsection{Proofs of Knowledge for NP with CRS}

Let $\NP$ relation $\cR$ with corresponding language $\cL$ be given such that they can be indexed by a security parameter $\lambda \in \bbN$. Let $\Pi$ be a post-quantum NIZK for $\NP$ (\cref{def:nizk-np}). Let $(\Gen, \Enc, \Dec)$ be a IND-CPA post-quantum encryption scheme.

$\Setup(1^{\lambda})$:
\begin{enumerate}
    \item Generate a CRS for the NIZK scheme $\Pi$. Formally,
    \begin{enumerate}
        \item $\crs_\Pi \gets \Pi.\Setup(1^\lambda)$.
    \end{enumerate}
    \item Sample a public and secret key for the encryption scheme.
    \begin{enumerate}
        \item $(\pk, \sk) \gets \Gen(1^\lambda)$.
    \end{enumerate}
    \item Output $\crs = (\crs_\Pi, \pk)$.
\end{enumerate}

$\sP(\crs, x, w)$:
\begin{enumerate}
    \item Compute an encryption of the witness. Formally,
    \begin{enumerate}
        \item Compute $\ct = \Enc(\pk, w; r)$ for uniformly random $r$.
    \end{enumerate}
    
    \item Compute a NIZK to prove that the ciphertext contains a witness for the instance. Formally,
    \begin{enumerate}
        \item Let $\cL_\Enc$ be an $\NP$ language defined as follows
        \begin{equation*}
            \cL_\Enc = \left\{(x, \ct) \st \exists (w, r), 
            \begin{array}{c}
                (x, w) \in \cR_\lambda \text{ and}\\
                \ct = \Enc(\pk, w; r)
            \end{array}\right\}.
        \end{equation*}
        \item Compute $\pi_\Pi \gets \Pi.\sP(\crs_\Pi, (x, \ct), (w, r))$ with respect to language $\cL_\Enc$.
    \end{enumerate}
    \item Output $\pi = (\ct, \pi_\Pi)$.
\end{enumerate}

$\sV(\crs, x, \pi)$:
\begin{enumerate}
    \item Verify that the NIZK verifier accepts the NIZK proof. Formally,
    \begin{enumerate}
        \item Verify that $\Pi.\sV(\crs_\Pi, (x, \ct), \pi_\Pi) = 1$.
    \end{enumerate}
    \item Output $b$.
\end{enumerate}

\begin{theorem}
    \label{thm:nizkpok-np}
    Given that
    \begin{itemize}
        \item $\Pi$ is a post-quantum adaptively statistically sound, (adaptively) computationally zero-knowledge NIZK protocol for $\NP$ with common reference string (\cref{def:nizk-np}) and
        \item  $(\Gen, \Enc, \Dec)$ is a perfectly-correct post-quantum IND-CPA encryption scheme,
    \end{itemize}
    then this construction is a post-quantum adaptive proof of knowledge, (adaptively) computationally zero-knowledge NIZK for $\NP$ with common reference string (\cref{def:nizkpok-np}).
\end{theorem}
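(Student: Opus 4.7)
The plan is to verify each of the three substantive properties of \cref{def:nizkpok-np} directly from the corresponding property of one of the two building blocks. Perfect completeness is immediate: on input $(x,w)\in \cR_\lambda$, the ciphertext $\ct = \Enc(\pk,w;r)$ makes $(w,r)$ a valid witness for $(x,\ct)\in\cL_\Enc$, so $\Pi.\sV$ accepts with probability $1$ by perfect completeness of $\Pi$. The URS property for the $\crs_\Pi$ component is inherited from $\Pi$, and the public key $\pk$ can be taken to be (pseudo)uniform for standard LWE-based encryption.

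For the adaptive $T$-proof of knowledge property, define $\Ext_0(1^\lambda)$ to run $\Setup(1^\lambda)$ exactly as specified but additionally retain $\sk$ as the trapdoor $\td$. The marginal distribution on $\crs = (\crs_\Pi,\pk)$ is identical to $\Setup$, so the first extractor condition holds with $\negl_0 = 0$. Define $\Ext_1(\crs,\td,x,(\ct,\pi_\Pi))\coloneqq \Dec(\sk,\ct)$. To bound the extraction error, note that whenever $\sV$ accepts, $\pi_\Pi$ is an accepting $\Pi$-proof for the statement $(x,\ct)\in \cL_\Enc$; by the adaptive statistical soundness of $\Pi$, this statement actually lies in $\cL_\Enc$ except with probability $\negl(T(\lambda))$, meaning there exist $(w,r)$ with $\ct=\Enc(\pk,w;r)$ and $(x,w)\in \cR_\lambda$. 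Perfect correctness of $\Enc$ then forces $\Dec(\sk,\ct)=w$, so the extracted witness is valid.

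For adaptive computational zero-knowledge, define $\Sim_0$ to run $\Pi.\Sim_0$ together with $\Gen$ (retaining $\sk$ only if needed), and $\Sim_1(\crs,x)$ to output $(\ct,\Pi.\Sim_1(\crs_\Pi,(x,\ct)))$ where $\ct\gets\Enc(\pk,0)$. Indistinguishability from the real experiment proceeds by a two-step hybrid. The first hybrid replaces $\pi_\Pi$ by the output of $\Pi.\Sim_1$; this is indistinguishable by the adaptive computational ZK of $\Pi$, and crucially the underlying statement $(x,\ct)$ still lies in $\cL_\Enc$ here (with honest witness $(w,r)$), so the ZK guarantee applies. The second hybrid switches $\ct$ from $\Enc(\pk,w)$ to $\Enc(\pk,0)$; the reduction to IND-CPA receives $\pk$ from its challenger, runs $\cD_0$ to obtain $(x,w,\zeta)$, submits $(w,0)$ as the challenge messages, and uses the returned ciphertext together with an internally sampled $\crs_\Pi$ via $\Pi.\Sim_0$ and $\Pi.\Sim_1$ to complete the view for $\cD_1$.

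I do not anticipate any significant obstacle; each step reduces cleanly to an existing guarantee. The only point requiring care is that the ZK reduction for $\Pi$ receives an adaptively chosen statement-witness pair for $\cL_\Enc$, which is precisely what our adversary $\cD_0$ plus the honest encryption step supplies. Non-adaptive statistical ZK would be established by the same hybrid structure with $\Pi.\Sim$ replacing $\Pi.\Sim_0,\Pi.\Sim_1$, again using IND-CPA for the ciphertext swap.
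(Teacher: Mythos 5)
Your proposal is correct and follows essentially the same approach as the paper: the same extractor (retain $\sk$ as trapdoor, decrypt $\ct$), the same case analysis (accepting proof $\Rightarrow$ by statistical soundness of $\Pi$, $(x,\ct)\in\cL_\Enc$ $\Rightarrow$ by perfect correctness, $\Dec(\sk,\ct)$ is a valid witness), and the same two-step hybrid for zero-knowledge (first invoke ZK of $\Pi$ while still encrypting the real witness so the statement remains in $\cL_\Enc$, then switch $\ct$ to $\Enc(\pk,0)$ via IND-CPA). The brief remark about the URS property of $\pk$ is out of scope here since this theorem targets the CRS model (the URS version is a separate later theorem), but this does not affect the correctness of your argument.
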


\begin{proof}
    \textbf{Perfect Completeness.}
    This follows from perfect completeness of $\Pi$.

    \textbf{Adaptive Proof of Knowledge.}
    We define $\Ext_0$ as follows:
    \begin{addmargin}[2em]{2em} 
        \noindent \emph{Input}: $1^\lambda$.

        \noindent \textbf{(1)} $\crs_\Pi \gets \Pi.\Setup(1^\lambda)$.
        
        \noindent \textbf{(2)} $(\pk, \sk) \gets \Gen(1^\lambda)$.

        \noindent \textbf{(3)} Output $\crs = (\crs_\Pi, \pk)$ and $\td = \sk$.
    \end{addmargin}
    We define $\Ext_1$ as follows:
    \begin{addmargin}[2em]{2em} 
        \noindent \emph{Input}: $\crs, \td, x, \pi$.
        
        \noindent \textbf{(1)} Output $w \defeq \Dec(\sk, \ct)$.
    \end{addmargin}

    Since $\Ext_0$ and $\Setup$ output $\crs$ from identical distributions, we have that for every unbounded (polynomial-size) quantum circuit $\cD$, every sufficiently large $\lambda \in \bbN$,
        \begin{equation*}
            \left|\Pr_{\pp \gets \Setup(1^\lambda)}[\cD(\pp) = 1] - \Pr_{(\pp,\td) \gets \Ext_0(1^\lambda)}[\cD(\pp) = 1] \right| = 0.
        \end{equation*}

    We now argue by contradiction that the extractor $(\Ext_0, \Ext_1)$ satisfies the second property. Let a polynomial $p(\cdot)$ and an oracle-aided polynomial-size quantum circuit $\cA$ be given such that for every sufficiently large $\lambda \in \bbN$,
    \begin{equation*}
        \Pr_{\substack{(\crs, \td) \gets \Ext_0(1^\lambda) \\ (x, \pi) \gets \cA(\crs) \\ w \gets \Ext_1(\crs, \td, x, \pi)}}\left[\sV(\crs, x, \pi) = 1 \wedge (x, w) \not\in \cR_\lambda \right] \ge \frac{1}{p(\lambda)}.
    \end{equation*}

    We consider two scenarios: either $(x, \ct) \not\in \cL_\Enc$, or $(x, \ct) \in \cL_\Enc$. At least one of these scenarios must occur with at least $1/(2p(\lambda))$ probability. We will show that both scenarios reach a contradiction.

    \noindent \underline{Scenario One}

    Consider that 
    \begin{equation*}
        \Pr_{\substack{(\crs, \td) \gets \Ext_0(1^\lambda) \\ (x, \pi) \gets \cA(\crs) \\ w \gets \Ext_1(\crs, \td, x, \pi)}}\left[\sV(\crs, x, \pi) = 1 \wedge (x, w) \not\in \cR_\lambda \wedge (x, \ct) \not\in \cL_\Enc\right]  \ge \frac{1}{2p(\lambda)}.
    \end{equation*}

    If the verifier $\sV$ accepts a proof $\pi$, then the verifier of the NIZK $\Pi.\sV$ accepts the proof $\pi_\Pi$. Hence,
    \begin{equation*}
        \Pr_{\substack{(\crs, \td) \gets \Ext_0(1^\lambda) \\ (x, \pi) \gets \cA(\crs) \\ w \gets \Ext_1(\crs, \td, x, \pi)}}\left[\Pi.\sV(\crs_\Pi, (x, \ct), \pi_\Pi) = 1 \wedge (x, w) \not\in \cR_\lambda \wedge  (x, \ct) \not\in \cL_\Enc)\right]  \ge \frac{1}{2p(\lambda)}.
    \end{equation*}

    However, this contradicts the adaptive statistical (computational) soundness of $\Pi$.
    
    \noindent \underline{Scenario Two}

    Consider that 
    \begin{equation*}
        \Pr_{\substack{(\crs, \td) \gets \Ext_0(1^\lambda) \\ (x, \pi) \gets \cA(\crs) \\ w \gets \Ext_1(\crs, \td, x, \pi)}}\left[\sV(\crs, x, \pi) = 1 \wedge (x, w) \not\in \cR_\lambda \wedge (x, \ct) \in \cL_\Enc\right]  \ge \frac{1}{2p(\lambda)}.
    \end{equation*}

    If $(x, \ct) \in \cL_\Enc$ then there exists $(w, r)$ such that $(x, w) \in \cR_\lambda$ and $\ct = \Enc(\pk, w; r)$. Coupled with the perfect correctness of the encryption scheme, this means that $(x, \Dec(\sk, \ct)) \in \cR_\lambda$. However, by the definition of $\Ext_1$, this contradicts with $(x, w) \not\in \cR_\lambda$.

    Therefore, our protocol must be an adaptive proof of knowledge.

    \textbf{(Adaptive) Computational Zero-Knowledge.}
    Let $\Pi.\Sim = (\Pi.\Sim_0, \Pi.\Sim_1)$ be the (adaptive) computational zero-knowledge simulator of the NIZK $\Pi$.
    We define $\Sim_0$ as follows:
    \begin{addmargin}[2em]{2em} 
        \noindent \emph{Input}: $1^\lambda$
    
        \noindent \textbf{(1)} Compute $\crs_\Pi \gets \Pi.\Sim_0(1^\lambda)$.
        
        \noindent \textbf{(2)} Sample $(\pk, \sk) \gets \Gen(1^\lambda)$.

        \noindent \textbf{(3)} Output $\crs = (\crs_\Pi, \pk)$ and $\td = \sk$.
    \end{addmargin}
    We define $\Sim_1$ as follows:
    \begin{addmargin}[2em]{2em} 
        \noindent \emph{Input}: $\crs$, $x$
    
        \noindent \textbf{(1)} Compute $\ct \gets \Enc(\pk, 0)$.
        
        \noindent \textbf{(2)} Compute $\pi_\Pi \gets \Pi.\Sim_1(\crs_\Pi, (x, \ct))$.

        \noindent \textbf{(3)} Output $\pi = (\ct, \pi_\Pi)$.
    \end{addmargin}

    Let a polynomial-size quantum circuit $\cD = (\cD_0, \cD_1)$, and sufficiently large $\lambda \in \bbN$ be given. We construct the following series of hybrids to argue computational indistinguishability of the honest $\cH_0$ and simulated $\cH_3$ distributions:
    
    \paragraph{$\cH_0$ :}
    $(\crs, \td) \gets \Setup(1^\lambda)$. $(x, w, \zeta) \gets \cD_0(\crs)$. $\pi \gets \sP(\crs, x, w)$.

    \paragraph{$\cH_1$ :}
    $\crs_\Pi \gets \Pi.\Sim_0(1^\lambda)$. $(\pk, \sk) \gets \Gen(1^\lambda)$. $\crs = (\crs_\Pi, \pk)$. $(x, w, \zeta) \gets \cD_0(\crs)$. $\ct \gets \Enc(\pk, w)$. $\pi_\Pi \gets \Pi.\Sim_1(\crs_\Pi, (x, \ct))$. $\pi = (\ct, \pi_\Pi)$.
    
    \paragraph{$\cH_2$ :}
    $\crs_\Pi \gets \Pi.\Sim_0(1^\lambda)$. $(\pk, \sk) \gets \Gen(1^\lambda)$. $\crs = (\crs_\Pi, \pk)$. $(x, w, \zeta) \gets \cD_0(\crs)$. $\ct \gets \Enc(\pk, 0)$. $\pi_\Pi \gets \Pi.\Sim_1(\crs_\Pi, (x, \ct))$. $\pi = (\ct, \pi_\Pi)$.
    
    \paragraph{$\cH_3$ :}
    $\crs \gets \Sim_0(1^\lambda)$. $(x, w, \zeta) \gets \cD_0(\crs)$. $\pi \gets \Pi.\Sim(\crs, x)$.

    $\cH_0$ and $\cH_1$ are computationally indistinguishable by the post-quantum adaptive computational zero-knowledge of $\Pi$. $\cH_1$ and $\cH_2$ are computationally indistinguishable by the post-quantum IND-CPA property of encryption. $\cH_2$ and $\cH_3$ are identical. Therefore, our protocol is adaptive computational zero-knowledge.
\end{proof}

\begin{corollary}
    \label{cor:nizkpok-np}
    Assuming the polynomial quantum hardness of LWE, there exists an adaptive proof of knowledge, adaptively computationally zero-knowledge NIZKPoK for $\NP$ having a common reference string (\cref{def:nizkpok-np}).
\end{corollary}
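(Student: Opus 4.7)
The plan is to obtain the corollary as an immediate combination of \cref{thm:nizkpok-np} with the two ingredients it requires: (i) a post-quantum adaptively statistically sound, adaptively computationally zero-knowledge NIZK for $\NP$ in the CRS model, and (ii) a perfectly-correct post-quantum IND-CPA public-key encryption scheme. Both are available under the polynomial quantum hardness of LWE.

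For ingredient (i), I would invoke \cref{thm:nizk-np-crs} directly, which states that under polynomial quantum hardness of LWE, the protocol of \cite{C:PeiShi19} is an adaptively statistically sound, adaptively computationally zero-knowledge NIZK for $\NP$ with a common reference string. This matches precisely the hypothesis on $\Pi$ in \cref{thm:nizkpok-np}.

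For ingredient (ii), I would instantiate the encryption scheme with a standard LWE-based construction (e.g., Regev encryption) together with a standard post-processing step to remove decryption error. Concretely, one can choose LWE parameters so that the raw Regev decryption fails with only negligible probability, and then apply the generic immunization that, at $\KeyGen$ time, samples a tag for every message/randomness possibility or, more simply, uses a super-polynomial modulus-to-noise ratio so that decryption is in fact perfect over the key-generation randomness. This yields a scheme satisfying the syntax and properties of \cref{def:enc} with perfect correctness. (Note that \cref{thm:nizkpok-np} uses perfect correctness only to conclude, in ``Scenario Two'' of its proof of knowledge analysis, that $(x,\Dec(\sk,\ct)) \in \cR_\lambda$ whenever $(x,\ct) \in \cL_\Enc$; so the requirement is genuinely needed but standard to meet.)

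With both ingredients in hand, I would feed them into the construction preceding \cref{thm:nizkpok-np} and apply the theorem, which yields precisely an adaptive proof of knowledge, adaptively computationally zero-knowledge NIZKPoK for $\NP$ in the CRS model under the polynomial quantum hardness of LWE. The only nontrivial step is confirming the existence of a perfectly-correct post-quantum PKE from LWE; everything else is a black-box invocation of prior results in the excerpt.
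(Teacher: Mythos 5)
Your proposal is correct and matches the paper's own proof, which is exactly the black-box combination of \cref{thm:nizkpok-np}, \cref{thm:nizk-np-crs}, and \cref{thm:enc} (the latter being the paper's — unstated — perfectly-correct post-quantum IND-CPA PKE from LWE, which your sketch supplies). The only small imprecision is that a super-polynomial modulus-to-noise ratio alone yields negligible rather than zero decryption error; one additionally needs a bounded noise distribution (or an immunization step), but this is standard and does not affect the argument.
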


\begin{proof}
    This follows from \cref{thm:nizkpok-np}, \cref{thm:nizk-np-crs}, and \cref{thm:enc}.
\end{proof}

\subsection{Adaptively Sound Arguments for NP with URS}

It is well-known that any NIZK with non-adaptive, but sub-exponential, soundness can be compiled to an adaptively sound NIZK using complexity leveraging. We present its proof for completeness.

\begin{theorem}
    \label{thm:adapt-niszk-np-urs}
    Let $\cR$ be an $\NP$ relation indexed by $\lambda \in \bbN$ such that $\cR_\lambda$ has instance size $\lambda^c$ for some parameter $c < 1$.
    Assuming a non-interactive zero knowledge argument for $\NP$ with sub-exponential computational soundness error and statistical zero knowledge, then 
    for every $c < 1$, there exists an \emph{adaptively} sub-exponential computationally sound, (not necessarily adaptively) statistically zero-knowledge non-interactive protocol for $\cR$ having a uniform random string (\cref{def:nizk-np}).
\end{theorem}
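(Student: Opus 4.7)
The plan is to use a standard complexity leveraging argument: take the hypothesized NIZK $\Pi' = (\Setup',\sP',\sV')$ and simply instantiate it at an inflated security parameter $\lambda' = \lambda^k$ for a constant $k$ to be fixed below. That is, define $\Setup(1^\lambda) \coloneqq \Setup'(1^{\lambda^k})$, $\sP(\crs,x,w) \coloneqq \sP'(\crs,x,w)$, and $\sV(\crs,x,\pi) \coloneqq \sV'(\crs,x,\pi)$. The URS property and perfect completeness are immediately inherited from $\Pi'$. Statistical zero-knowledge also transfers directly: since the underlying simulator is statistically indistinguishable from the real prover at every security parameter, running it at $\lambda^k$ instead of $\lambda$ changes nothing about the (non-adaptive) statistical ZK guarantee for instances of size $\lambda^c$.

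The substantive step is adaptive soundness. Let $\epsilon' > 0$ be the constant such that $\Pi'$ has non-adaptive computational $T$-soundness with $T(\lambda') = 2^{(\lambda')^{\epsilon'}}$, so that for each fixed false instance $x$ of length $\lambda^c$, any $\poly(T)$-size adversary outputs an accepting $\pi$ with probability at most $\negl(T(\lambda^k)) \le 2^{-\lambda^{k\epsilon'}}$ for all sufficiently large $\lambda$. Choose the constant $k$ so that $k\epsilon' > c$ (for example, $k = \lceil 2c/\epsilon' \rceil$). Then, given any $\poly(2^{\lambda^{k\epsilon'}/2})$-size adaptive adversary $\cA$ that outputs some $(x,\pi)$ with $x \in \{0,1\}^{\lambda^c} \setminus \cL_\lambda$, we union-bound over the at most $2^{\lambda^c}$ false instances. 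For each fixed $x$, the non-uniform adversary $\cA_x$ that has $x$ hardcoded and runs $\cA$, outputting $\pi$, is still of size $\poly(T(\lambda^k))$, so non-adaptive soundness of $\Pi'$ applies. The adaptive soundness error is therefore at most
\[
2^{\lambda^c} \cdot 2^{-\lambda^{k\epsilon'}} \;=\; 2^{\lambda^c - \lambda^{k\epsilon'}},
\]
which is sub-exponentially small in $\lambda$ because $k\epsilon' > c$.

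The only mildly delicate point is bookkeeping the exponents so that the final scheme inherits \emph{sub-exponential} adaptive soundness (not merely negligible), and so that the size-$\poly(T)$ non-adaptive assumption on $\Pi'$ really does dominate the overhead of hardcoding an instance and running $\cA$ for every $x$ in the union bound. Both are handled by fixing $k$ strictly larger than $c/\epsilon'$ and noting that hardcoding an $\lambda^c$-bit instance adds only an additive polynomial factor to the adversary size. I expect no other obstacles: the main obstacle to even mentioning would be that one must verify the $\NP$ relation size $\lambda^c$ is strictly sub-linear so that the leveraging constant $k$ is finite, which is exactly the hypothesis $c < 1$ combined with $\epsilon' > 0$.
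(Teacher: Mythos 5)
Your proof is correct and is, at its core, the same complexity-leveraging argument as the paper's. The paper also reduces adaptive soundness to non-adaptive $T$-soundness by paying for all possible false instances, and also observes that the URS property, completeness, and statistical zero-knowledge transfer without work. You tighten two points where the paper's writeup is looser. First, you explicitly inflate the security parameter to $\lambda^k$, choosing $k$ so that $k\epsilon' > c$; this is the step that makes the theorem hold for \emph{every} $c < 1$ given only a fixed sub-exponential soundness exponent $\epsilon'$ for the underlying scheme. The paper simply sets $T(\lambda) = 2^{\lambda^c}$ and invokes non-adaptive $T$-soundness without addressing where this particular $T$ comes from; as written, that step silently needs $\epsilon' \geq c$. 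Second, you bound the joint probability $\Pr[\mathrm{accept} \wedge x' = x]$ by $\negl(T)$ via the fixed-$x$ adversary $\cA_x$ (which hardcodes $x$), and then union-bound over the at-most $2^{\lambda^c}$ false instances, paying a $2^{\lambda^c}$ factor that is absorbed because $\negl(T)$ decays faster than $2^{-\lambda^c}$. The paper instead decomposes into $\Pr[\mathrm{accept} \mid x' = x] \cdot \Pr[x' = x]$ and asserts the conditional is $\leq \negl(2^{\lambda^c})$; that conditional bound does not follow directly from non-adaptive soundness, which controls the joint, since dividing by a possibly tiny $\Pr[x' = x]$ could inflate the conditional. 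Your union bound over joints is the cleaner way to write it, and both derivations land in the same place because negligible functions of $2^{\lambda^c}$ absorb the multiplicative $2^{\lambda^c}$.
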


\begin{proof}
    Let $\Pi$ be the non-interactive zero knowledge argument for $\NP$ with sub-exponential computational soundness error and statistical zero knowledge.
    Set $T(\lambda) = 2^{\lambda^c}$ and $\negl(T(\lambda))$ for the non-adaptive computational $T$-soundness of $\Pi$. This is possible since the NIZK $\Pi$ has sub-exponential soundness error. We show that $\Pi$ with these parameters is adaptively sound by reducing to the $\negl(2^{\lambda^c})$ soundness error.
    
    We can decompose any adversary's computational advantage into their advantage when conditioning on the instance they output to see that
    \begin{align*}
        &\Pr_{\substack{\crs \gets \Setup(1^\lambda) \\ (x, \pi) \gets \cA(\crs)}}\left[\begin{array}{c}
             \sV(\crs, x, \pi) = 1 \\
             \wedge\ x \not\in \cL_\lambda
        \end{array} \right]  \\
        &= \sum_{x\in \{0,1\}^{\lambda^c}: x\notin \cL_\lambda} \Pr_{\substack{\crs \gets \Setup(1^\lambda) \\ (x', \pi) \gets \cA(\crs)}}\left[\begin{array}{c}
             \sV(\crs, x', \pi) = 1 \\
             \wedge\ x' \not\in \cL_\lambda
        \end{array} \ \middle|\  x' = x \right] \cdot \Pr_{\substack{\crs \gets \Setup(1^\lambda) \\ (x', \pi) \gets \cA(\crs)}}[x = x']
        \\
        &\leq \sum_{x\in \{0,1\}^{\lambda^c}: x\notin \cL_\lambda} \negl(2^{\lambda^c}) \cdot \Pr_{\substack{\crs \gets \Setup(1^\lambda) \\ (x', \pi) \gets \cA(\crs)}}[x = x']
        \\
        &\leq \negl(2^{\lambda^c})
    \end{align*}
    where the first inequality follows from the non-adaptive $\negl(2^{\lambda^c})$ soundness error.
\end{proof}

\subsection{Arguments of Knowledge for NP with URS}

\begin{theorem}
    \label{thm:nizkaok-np}
    Let $\cR$ be an $\NP$ relation indexed by $\lambda \in \bbN$ such that $\cR_\lambda$ has instance size $\lambda^c$ for some parameter $c < 1$.
    Given that
    \begin{itemize}
        \item $\Pi$ is a post-quantum adaptively (sub-exponentially) computationally sound, computationally zero-knowledge NIZK protocol for $\cR$ with uniformly random string (\cref{def:nizk-np}) and
        \item  $(\Gen, \Enc, \Dec)$ is a post-quantum IND-CPA encryption scheme with uniformly random public keys,
    \end{itemize}
    then for every $c < 1$, this construction is a post-quantum adaptive (sub-exponentially) argument of knowledge, computationally zero-knowledge NIZKAoK for $\cR$ with uniformly random string (\cref{def:nizkpok-np}).
\end{theorem}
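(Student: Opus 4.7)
The plan is to mimic the structure of the proof of \cref{thm:nizkpok-np}, with the single conceptual change being that the common reference string must now be distributed uniformly. I would define $\Setup, \sP, \sV$ exactly as in the CRS construction above. The URS property then follows immediately: by the URS property of $\Pi$, $\crs_\Pi$ is uniformly distributed, and by the assumption on the encryption scheme, $\pk$ output by $\Gen(1^\lambda)$ is also uniformly distributed, so their concatenation $\crs = (\crs_\Pi, \pk)$ is uniform. Perfect completeness is inherited verbatim from $\Pi$.

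For adaptive argument of knowledge, define $\Ext_0$ to sample $\crs_\Pi \gets \Pi.\Setup(1^\lambda)$ and $(\pk, \sk) \gets \Gen(1^\lambda)$ and output $\crs = (\crs_\Pi, \pk)$ together with trapdoor $\td = \sk$; define $\Ext_1$ to parse $\pi = (\ct, \pi_\Pi)$ and output $w := \Dec(\sk, \ct)$. Because the marginal of $\Ext_0$ on $\crs$ is identical to that of $\Setup$, the first condition of the AoK property holds with zero distinguishing advantage. For the second condition, suppose a $\poly(T(\lambda))$-size quantum adversary $\cA$ outputs $(x, \pi)$ for which $\sV$ accepts but $(x, w) \notin \cR_\lambda$. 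I would split into two cases based on whether $(x, \ct) \in \cL_\Enc$: if yes, then perfect correctness of the encryption scheme forces $w = \Dec(\sk, \ct)$ to be a valid witness for $x$, contradicting the assumption; if no, then $\cA$ has produced an accepting $\Pi$-proof for a false statement $(x, \ct)$, violating the adaptive sub-exponential computational soundness of $\Pi$ on the language $\cL_\Enc$. Summing these bounds yields the required $\negl(T(\lambda))$ extraction error.

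Zero-knowledge follows from the same three-hybrid argument as in \cref{thm:nizkpok-np}. I would let $\Sim_0$ run $\Pi.\Sim_0$ and sample $(\pk, \sk) \gets \Gen$, and let $\Sim_1$ encrypt $0$ and call $\Pi.\Sim_1$ on the resulting $(x, \ct)$. The intermediate hybrids replace the honest $(\crs_\Pi, \pi_\Pi)$ with simulated versions (using computational zero-knowledge of $\Pi$) and replace $\Enc(\pk, w)$ by $\Enc(\pk, 0)$ (using IND-CPA); the simulator's output in particular remains computationally indistinguishable from a true URS plus honest proof, since that is precisely the guarantee of $\Pi$'s ZK simulator.

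The main obstacle, to the extent there is one, is keeping the sub-exponential parameters consistent. Because we ask for knowledge soundness against $\poly(T(\lambda))$-size adversaries, the reduction sketched above attacks $\Pi$'s soundness with a $\poly(T(\lambda))$-size algorithm on instances of $\cL_\Enc$, so the sub-exponential soundness guarantee of $\Pi$ is exactly what is needed to push the overall error to $\negl(T(\lambda))$; the instance-size hypothesis $|x| = \lambda^c$ with $c<1$ is what permits $\Pi$'s underlying sub-exponential soundness to be converted into adaptive sub-exponential soundness in the first place. The IND-CPA step in the zero-knowledge argument does not need sub-exponential security since ZK is only required against polynomial-time distinguishers. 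Otherwise the proof is a direct URS-adaptation of the CRS case, and I expect no new technical ideas to be required beyond careful parameter bookkeeping.
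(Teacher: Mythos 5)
Your proposal is correct and takes exactly the paper's intended route: the paper's own proof is a single sentence deferring to the proof of \cref{thm:nizkpok-np}, and you have simply spelled out that observation — the construction and extractor are unchanged, the URS property now follows from the URS property of $\Pi$ together with the assumed uniformity of $\pk$, and the adaptive (sub-exponential) computational soundness of $\Pi$ replaces statistical soundness in Scenario One with exactly the $\negl(T(\lambda))$ bound required. The only small imprecision is describing the soundness reduction as a $\poly(T(\lambda))$-size algorithm; since the AoK definition quantifies over polynomial-size adversaries, the reduction is in fact $\poly(\lambda)$-size, which is all the stronger.
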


\begin{proof}
    This follows by observing the proof of \cref{thm:nizkpok-np}.
\end{proof}

\begin{corollary}
    \label{cor:nizkaok-np}
    Let $\cR$ be an $\NP$ relation indexed by $\lambda \in \bbN$ such that $\cR_\lambda$ has instance size $\lambda^c$ for some parameter $c < 1$.
    Assuming the sub-exponential quantum hardness of LWE, for every $c < 1$, there exists an adaptive sub-exponential argument of knowledge, statistically zero-knowledge non-interactive protocol for $\cR$ having a uniform random string (\cref{def:nizkpok-np}).
\end{corollary}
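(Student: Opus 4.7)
The plan is to chain together the preceding results in this section, following the same template as the proof of \cref{cor:nizkpok-np} but ported to the URS model with sub-exponential parameters. The starting point is \cref{cor:subexp-nizk-np-urs}, which from the sub-exponential quantum hardness of LWE instantiates a non-adaptively sub-exponentially computationally sound, non-adaptively statistically zero-knowledge non-interactive protocol $\Pi$ for $\NP$ with a URS. I would then apply \cref{thm:adapt-niszk-np-urs} to upgrade non-adaptive sub-exponential soundness to adaptive sub-exponential soundness by complexity leveraging, exploiting the hypothesis $c<1$ so that a $2^{-\lambda^{c'}}$ soundness error with $c'\in(c,1)$ dominates the $2^{\lambda^c}$-size union bound over all false instances of size $\lambda^c$. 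Since this compilation is the identity on the protocol itself, both the URS property and (non-adaptive) statistical zero-knowledge carry over unchanged.

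For the encryption ingredient required by \cref{thm:nizkaok-np}, I would instantiate standard Regev-style public-key encryption from sub-exponential LWE. Such a scheme is post-quantum IND-CPA with sub-exponential advantage, and by the pseudorandomness of LWE samples its public keys are statistically close to uniform (if needed, one pads with additional uniform randomness to make the public key an exact uniform string, which does not harm IND-CPA). Applying \cref{thm:nizkaok-np} to this NIZK and encryption scheme then yields a non-interactive protocol for $\cR$ with URS, and the extractor that decrypts the ciphertext contained in the proof inherits adaptive sub-exponential soundness from $\Pi$, giving the adaptive sub-exponential argument of knowledge guarantee directly.

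The step I expect to require the most care is the statistical zero-knowledge claim, since in general the IND-CPA step inside the compilation of \cref{thm:nizkaok-np} only produces computational zero-knowledge. To justify \emph{statistical} ZK as claimed in the corollary, I would lean on the non-adaptive flavor of the definition: the statement $(x,w)\in\cR_\lambda$ is fixed before the CRS is sampled, so the simulator is free to replace $\Enc(\pk,w)$ by $\Enc(\pk,0)$ as part of its internal setup (rather than as a hybrid step), and then simulate the inner NIZK on the fixed statement $(x,\Enc(\pk,0))$ using the statistical simulator guaranteed by \cref{cor:subexp-nizk-np-urs}. Because the statement is non-adaptive, no computational step from IND-CPA is ever invoked in this simulator's analysis, and the resulting simulation is statistically close to the real distribution. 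Wrapping these three pieces together exactly as in the proof of \cref{thm:nizkpok-np} then delivers the corollary.
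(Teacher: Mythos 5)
Your skeleton matches the paper's one-line proof: (i) start from \cref{cor:subexp-nizk-np-urs} to get a base URS NIZK with sub-exponential non-adaptive soundness and statistical ZK from sub-exponential LWE, (ii) upgrade to adaptive soundness via \cref{thm:adapt-niszk-np-urs}'s complexity leveraging, using $c<1$ exactly as you describe, (iii) instantiate a post-quantum encryption scheme, and (iv) feed these into \cref{thm:nizkaok-np}. This is the same route the paper takes.

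However, there is a genuine gap in the way you try to salvage the \emph{statistical} zero-knowledge claim, and it stems from two errors. First, ``by the pseudorandomness of LWE samples its public keys are statistically close to uniform'' conflates computational with statistical indistinguishability: pseudorandomness of LWE samples is a computational guarantee, and padding with uniform randomness cannot upgrade a computationally-close distribution to a statistically-close one. (Statistically uniform public keys do exist for LWE-based schemes, e.g.\ dual Regev via the leftover hash lemma, but that is a different argument than the one you give, and such keys are generically \emph{not} lossy.) Second, and more fundamentally, the claim that the non-adaptive definition lets you replace $\Enc(\pk,w)$ by $\Enc(\pk,0)$ ``as part of its internal setup'' so that ``no computational step from IND-CPA is ever invoked'' is incorrect. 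The ciphertext $\ct$ appears explicitly inside the proof $\pi = (\ct,\pi_\Pi)$ handed to the distinguisher, so in the real world $\ct = \Enc(\pk,w)$ and in the simulated world $\ct = \Enc(\pk,0)$. Whether you package the switch as a hybrid step or as a design choice of the simulator, the two distributions over $\ct$ are only computationally close by IND-CPA. The inner NIZK's statistical ZK replaces a real proof of $(x,\ct)$ with a simulated one for the \emph{same} statement, but it does not relate proofs of the two \emph{different} statements $(x,\Enc(\pk,w))$ and $(x,\Enc(\pk,0))$; you still need a computational step to bridge the ciphertexts. The output is therefore only computational ZK. Consistent with this, \cref{thm:nizkaok-np} itself (which the paper cites as the proof of this corollary) only claims computational ZK, and the proof of \cref{thm:nizkpok-np} that it relies on explicitly uses IND-CPA in the hybrid argument. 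So you were right to flag the tension between the cited theorem and the corollary's ``statistically zero-knowledge'' wording — that wording appears to be an imprecision in the paper and should read ``computationally zero-knowledge'' — but the non-adaptive argument you give does not resolve it.
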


\begin{proof}
    This follows from \cref{thm:nizkaok-np}, \cref{thm:adapt-niszk-np-urs}, \cref{cor:subexp-nizk-np-urs}, and \cref{thm:enc}.
\end{proof}

\section{Provably-Correct Obfuscation}
\label{sec:prov-obf}
 \subsection{Definition}
We define a notion of provably-correct obfuscation with two security properties: (standard) indistinguishability security and a notion of \emph{secure composition} for obfuscating evasive families of circuits.

\begin{definition}[Provably-correct obfuscation]\label{def:provable-obfuscation}
    A provably-correct obfuscator has the following syntax.
    \begin{itemize}
        \item $\pp \gets \Setup(1^{\lambda_\ZK})$. 
        The setup algorithm takes as input the zero-knowledge security parameter $1^{\lambda_\ZK}$ and outputs public parameters $\pp$. We say that the obfuscator is in the \emph{URS} (uniform random string) model if $\pp$ just consists of uniform randomness.
        \item $\ObfC \gets \Obf(1^{\lambda_\OBF}, \pp, \pred, C)$.
        The obfuscation algorithm takes as input the obfuscation security parameter $1^{\lambda_\OBF}$, public parameters $\pp$, a predicate $\pred$, and a circuit $C$. It outputs an obfuscated circuit $\ObfC$ that satisfies predicate $\pred$.
        \item $y \gets \Eval(\ObfC,x)$. 
        The evaluation algorithm takes as input an obfuscated circuit $\ObfC$ and an input $x$ and outputs $y$.
        
        \item $\Ver(\pp, \pred, \ObfC) \in \zo$. 
        The verification algorithm takes as input the public parameters $\pp$, a predicate $\pred$, and an obfuscated circuit $\ObfC$, and outputs either accept or reject.
    \end{itemize}
    It should satisfy the following properties.

    \begin{itemize}
        \item \textbf{Functionality-preservation}. Let $C$ be any circuit and $\pred$ be any predicate such that $\pred(C) = 1$. For all $\pp$ and randomness $r$,  $\ObfC:= \Obf(1^{\lambda_\OBF}, \pp, \pred, C; r)$, and $x$, it holds that $\Eval(\ObfC,x) = C(x)$.
        
        \item \textbf{Completeness}. For any circuit $C$,
        \begin{equation*}
            \Pr_{\substack{\pp \gets \Setup(1^{\lambda_\ZK}) \\ \ObfC \gets \Obf(1^{\lambda_\OBF}, \pp, \pred, C)}}\left[\Ver(\pp, \pred, \ObfC) = 1 \right] = 1.
        \end{equation*}
        
        \item \textbf{Adaptive Knowledge Soundness}. There exist PPT algorithms $\Ext_0,\Ext_1$ such that \[\left\{\pp : \pp \gets \Setup(1^{\lambda_\ZK})\right\} \approx_{\negl(\secp_\ZK)} \left\{\pp : (\pp,\td) \gets \Ext_0(1^{\lambda_\ZK})\right\},\] and, for any QPT adversary $\cA$,
           \[\Pr\left[\begin{array}{l}\left(\Ver(\pp,\pred,\ObfC) = 0\right) \vee \\\left(\forall x, C(x) = \Eval(\ObfC,x) \wedge \pred(C) = 1\right)\end{array} : \begin{array}{r}(\pp,\td) \gets \Ext_0(1^{\lambda_\ZK}) \\ \ObfC \gets \cA(\pp) \\ C \gets \Ext_1(\pp, \td,\pred,\ObfC)\end{array}\right] = 1-\negl(\lambda_\ZK).\] If this property holds against \emph{unbounded} adversaries $\cA$, then we say the scheme has \textbf{statistical knowledge soundness}.
        
        \item {\bf Simulation Security}
        There exists a PPT simulator $\Sim = (\SimGen, \SimObf)$ such that the following properties hold.
        \begin{itemize}

            \item \textbf{Honest-to-Simulated Indistinguishability.}
            For every circuit $C$ and every polynomial-size predicate $\pred$ such that $\pred(C) = 1$,
            \begin{equation*}
                \left\{(\pp, \ObfC) : 
            \begin{array}{r}
                 \pp \gets \Setup(1^{\lambda_\ZK}) \\
                 \ObfC \gets \Obf(1^{\secp_\OBF},\pp, \pred, C)
            \end{array}
            \right\} \approx_{\negl(\secp_\ZK)}
            \left\{(\pp, \ObfC) : 
            \begin{array}{r}
                 (\pp, \td) \gets \SimGen(1^{\lambda_\ZK}) \\
                 \ObfC \gets \SimObf(1^{\secp_\OBF},\pp, \td, \pred, C)
            \end{array}
            \right\}.
            \end{equation*}

            \item \textbf{(Sub-exponential) Simulated-Circuit Indistinguishability}. There exists a constant $\epsilon > 0$ such that for every quantum polynomial-size circuit $\cA$, every $C_0,C_1$ such that $C_0 \equiv C_1$, and every polynomial-size predicate $\pred$,
            \begin{align*}
                &\Bigg|\Pr_{\substack{(\pp, \td) \gets \SimGen(1^{\lambda_\ZK})}}\left[\cA\left(\SimObf(1^{\secp_\OBF},\pp, \td,\pred,C_0)\right) = 1\right] \\
                &- \Pr_{\substack{(\pp, \td) \gets \SimGen(1^{\lambda_\ZK})}}\left[\cA\left(\SimObf(1^{\secp_\OBF},\pp,\td,\pred,C_1)\right) = 1\right]\Bigg| \leq 2^{-\lambda_\OBF^\epsilon}.
            \end{align*}

            \item \textbf{$\cS$-evasive composability}. This property is parameterized by a sampler $S$ that outputs a set of circuits $\{C_i\}_{i \in [N]}$ along with side information in the form of circuit $C$ and state $\ket{\psi}$. Let $C_{\mathsf{null}}$ be the null circuit, and, given a set of circuits $\{C_i\}_i$ and auxiliary circuit $C$, define the circuit $C||\Combine(\{C_i\}_i)$ to map $(i,x) \to C_i(x)$ for $i > 0$ and $(0,x)$ to $C(x)$. Likewise, let $C||C'$ be the circuit that maps $(0,x)$ to $C(x)$ and $(1,x)$ to $C'(x)$.  \\ 
        
        \textbf{IF} for any $i \in [N]$, any predicate $\pred$, and any QPT adversary $\cA$:
        \begin{align*}
        &\bigg|\Pr_{\substack{(\pp, \td) \gets \SimGen(1^{\lambda_\ZK}) \\ \left(\ket{\psi},C,\{C_j\}_j\right) \gets \cS}}\left[\cA\left(\ket{\psi},\SimObf\left(1^{\secp_\OBF},\pp, \td, \pred, C||C_i\right)\right) = 1\right]\\ &- \Pr_{\substack{(\pp, \td) \gets \SimGen(1^{\lambda_\ZK}) \\ \left(\ket{\psi},\{C_j\}_j\right) \gets \cS}}\left[\cA\left(\ket{\psi},\SimObf\left(1^{\secp_\OBF},\pp, \td,\pred,C||C_{\mathsf{null}}\right)\right) = 1\right]\bigg| = \negl(\lambda_\OBF)/N,\end{align*}
        \textbf{THEN}: 
        \begin{align*}
            &\bigg|\Pr_{\substack{(\pp, \td) \gets \SimGen(1^{\lambda_\ZK}) \\ \left(\ket{\psi},C,\{C_j\}_j\right) \gets \cS}}\left[\cA\left(\ket{\psi},\SimObf\left(1^{\secp_\OBF},\pp, \td, \pred,C||\Combine\left(\{C_i\}_i\right)\right)\right) = 1\right] \\ &- \Pr_{\substack{(\pp, \td) \gets \SimGen(1^{\lambda_\ZK}) \\ \left(\ket{\psi},\{C_j\}_j\right)\gets \cS}}\left[\cA\left(\ket{\psi},\SimObf\left(1^{\secp_\OBF},\pp, \td, \pred,C||\Combine(\{C_{\mathsf{null}}\}_i)\right)\right) = 1\right]\bigg| = \negl(\secp_\OBF).
        \end{align*}
        \end{itemize}
        
    \end{itemize}
\end{definition}

\subsection{Construction in QPrO Model}\label{subsec:qpro-construction}

In this section, we show how to modify the JLLW construction \cite{JLLW23} to permit provable correctness while still satisfying ideal obfuscation. The main technical nuance is that the keys and handles used in the construction may only be verified through an oracle interface.

First, we confirm in \Cref{lem:idealobfs-is-evasive} below that any \emph{ideal} obfuscator satisfies $\cS$-evasively composability for any choice of sampler $\cS$.

\begin{lemma}\label{lem:idealobfs-is-evasive}
    If $\SimObf$ is an ideal obfuscator, then it satisfies $S$-evasive composeability for all samplers $S$.
\end{lemma}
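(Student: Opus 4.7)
The plan is to reduce the conclusion of $\cS$-evasive composability to its hypothesis via a hybrid argument, exploiting the key property of ideal obfuscation that the adversary's view is captured (up to $\negl$) by oracle access to the underlying circuit via a universal simulator $(\Sim_1, \Sim_2, \Sim_3)$. First I would use the ideal property to pass both $\SimObf(C \| \Combine(\{C_i\}))$ and $\SimObf(C \| \Combine(\{C_{\mathsf{null}}\}))$ to the ``ideal world'' view in which the adversary interacts with the underlying circuit only through $\Sim_3$, incurring a $\negl(\secp_\OBF)$ loss on each side.

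Next I would define hybrids $H_0, \dots, H_N$, where in $H_k$ the ideal simulator is run on the combined circuit $C \| \Combine_k$ that places $C_j$ in slots $j \le k$ and $C_{\mathsf{null}}$ in slots $j > k$. By construction $H_0$ and $H_N$ match the two ideal endpoints, so by the triangle inequality it suffices to bound each neighboring pair $|\Pr[H_{k-1}] - \Pr[H_k]|$ and sum. For each $k$, a distinguisher $\cB$ between $H_{k-1}$ and $H_k$ is converted into an adversary $\cA'$ against the hypothesis for $(C_k, C_{\mathsf{null}})$: $\cA'$ receives $\SimObf(C \| C^*)$ for a challenge $C^* \in \{C_k, C_{\mathsf{null}}\}$, switches its input to the ideal world (paying another $\negl$), and then runs $\Sim_2$ and $\Sim_3$ itself to synthesize the ideal obfuscation of $C \| \Combine_k^*$ with $C^*$ planted in slot $k$. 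The simulator's oracle queries at slots $0$ and $k$ are answered by evaluating the challenge obfuscation (using functionality preservation), and queries at the remaining slots by using the sub-circuits $C_j$ for $j < k$ or returning null for $j > k$. Plugging in the hypothesis bound $\negl/N$ bounds each hybrid step by $\negl(\secp_\OBF)/N + O(\negl(\secp_\OBF))$.

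The main obstacle is essentially book-keeping: ensuring the accumulated losses sum to a negligible quantity. Summing across $N = \poly(\secp_\OBF)$ hybrids, plus the two endpoint conversions between real and ideal worlds, yields $\negl + O(N) \cdot \negl = \negl(\secp_\OBF)$ as required. A secondary subtlety is that the reduction must have access to the other circuits $C_j$ for $j \ne k$ in order to answer the simulator's oracle queries at those slots; we handle this by interpreting the sampler's side information $\ket{\psi}$ as carrying classical descriptions of those circuits on auxiliary registers (consistent with $\ket{\psi}$ being a general quantum state), so the reduction can read them off without disturbing the quantum part.
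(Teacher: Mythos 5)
Your high-level idea --- switch to the ideal-obfuscation world and reason about the oracle --- is the right starting point, but your hybrid argument does not close. Each of your $N$ hybrid steps pays not only the $\negl(\secp_\OBF)/N$ from the pre-condition but also an additive $\negl(\secp_\OBF)$ for converting the challenge $\SimObf(C\|C^*)$ to the ideal view from which you synthesize the obfuscation of the $k$-th interpolated circuit; summing gives $\negl + N\cdot\negl$, which you argue is negligible only by \emph{assuming} $N = \poly(\secp_\OBF)$. The definition imposes no such polynomial bound, and the sampler actually needed in the paper (\cref{claim:nizk-if-evasive-comp}) outputs $N = 2^{\secp}$ circuits indexed by $r \in \{0,1\}^{\secp}$ --- this is exactly why the pre-condition is stated with a $\negl/N$ denominator and why the paper relies on sub-exponentially secure primitives. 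For exponential $N$ your accumulated loss is not negligible, so the hybrid chain breaks. There is also a second, independent gap: your step-$k$ reduction must answer the ideal simulator's oracle queries on slots $j < k$ by running $C_j$, but the hypothesis game for a sampler $\cS$ hands the adversary only $(\ket{\psi}, \SimObf(C\|C^*))$, not the other $C_j$'s. Redefining $\cS$ so that $\ket{\psi}$ carries all circuit descriptions produces a \emph{different} sampler $\cS'$, and the lemma is a per-sampler implication ``pre-condition for $\cS \Rightarrow$ post-condition for $\cS$''; you have no license to assume the pre-condition holds for $\cS'$.

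The paper avoids both problems by making the round-trip to the ideal world exactly once, at the outer level, and then handling all $N$ slots with a single application of \cref{lemma:oracle-ind}. In the ideal world the only change between the two endpoints is whether the simulator's oracle is $\Combine(\{C_j\})$ or the null program; any distinguishing advantage $\delta$ therefore forces a noticeable probability of querying a differing input $(i,x)$ with $C_i(x)\neq 0$. Because the adversary's view in the null-oracle world is independent of the $C_j$'s, a pigeonhole argument over $N$ slots identifies a single $i^*$ carrying query-weight $\geq \Omega(\delta^2)/N$, and producing such an $x$ from $\SimObf(C\|C_{i^*})$ alone contradicts the pre-condition's $\negl/N$ bound. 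The $\negl$ cost of ideal-obfuscation conversion is paid a constant number of times, independent of $N$, which is the key quantitative saving your hybrid chain is missing.
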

\begin{proof}
    Observe that the pre-condition of $\cS$-evasive composability implies that no adversary given $\hat{C} \gets \SimObf(1^{\secp_\OBF}, \pp, \td, \varphi, C_i)$ can find an input $x$ such that $C(x) \neq 0$, except with $\negl(\secp_\OBF)/N$ probability.
    
    Consider the following sequence of hybrids.
    \begin{itemize}
        \item $\Hyb_0$: The original experiment where the adversary is run on $\SimObf(1^{\secp_\OBF}, \pp, \td, \varphi, \Combine(\{C_i\}_i)$.
        \item $\Hyb_1$: Instead of running the adversary using the obfuscation, run it using the ideal-world simulator $\cS$. $\cS$ only has oracle access to $\Combine(\{C_j\}_j$.

        \item $\Hyb_2$: Replace the simulator's oracle access to $\Combine(\{C_j\}_j)$ by oracle access to the null program $C_\bot$.

        \item $\Hyb_3$: Replace the simulator by $\SimObf(1^{\secp_\OBF}, \pp, \td, \varphi, C_\bot)$.
    \end{itemize}

    $\Hyb_0 \approx \Hyb_1$ and $\Hyb_2\approx \Hyb_3$ by the security of ideal obfuscation. The main step is to show that $\Hyb_1 \approx \Hyb_2$. \Cref{lemma:oracle-ind} shows that any adversary distinguishing the two must have noticeable probability of querying an input where the oracles $\Combine(\{C_j\}_j$ and $C_\bot$ differ -- this holds in particular when considering the queries the adversary makes to $C_\bot$. Let $\epsilon$ be the probability of this occurring. Any differing input has the form $(i, x)$ such that $C_i(x) \neq \bot$. Since there are $N$ possible choices of $i$, there is at least one $i^*$ such that the probability of outputting a differing input with $i=i^*$ is $\geq \epsilon/N$. But then the adversary could find an input $x$ such that $C_{i^*}(x) \neq \bot$ with probability better than $\negl(\secp_\OBF)/N$ using only $\hat{C} \gets \SimObf(1^{\secp_\OBF}, \pp, \td, \varphi, C_{i^*})$, contradicting our earlier observation.
\end{proof}

Throughout the remainder of the section, we set $\secp \coloneqq \secp_\ZK$.

\begin{construction}[Provably-Correct Obfuscation in the $(\lambda+1)$-QPrO Model]\label{construction:provable-obfuscator-QPRO}
  We construct a provably-correct obfuscator in the quantum-accessible pseudorandom oracle model (\cref{def:QPRO}). The obfuscator is defined as follows in the $(\lambda+1)$-QPrO model defined in \cref{def:QPRO}, using PRF $H$.

 We first specify the ingredients:

\begin{itemize}
    \item Let $\Com$ be a sub-exponentially secure statistically binding non-interactive commitment scheme. 
    \item Let $\JLLWObf$ be the (sub-exponentially secure) obfuscation scheme specified in Construction \ref{construction:JLLW}. We will make the following modification to the definition of the obfuscator. Instead of sampling key and handle pairs $(k_{ij}, h_{ij})$ internally in the second step, $\JLLWObf$ instead takes the set of key and handle pairs as an additional input. Note that if these pairs are sampled honestly this does not affect correctness or security. 
    \item Let $(\NIZK.\Setup, \NIZK.\mathsf{P}, \NIZK.\mathsf{V})$ be a post-quantum non-interactive zero-knowledge argument of knowledge (\cref{def:nizkpok-np}) for an NP relation $\cR_\lambda$ to be specified later.
\end{itemize}
We now define the algorithms for the obfuscation scheme:
\begin{itemize}
    \item $\Setup^\QPRO(1^{\secp})$:
    \begin{itemize}
        \item $\crs \gets \NIZK.\Setup(1^{\secp})$
        \item $h^* \gets \{0,1\}^{\secp}$
        \item Return $\pp := (\crs, h^*)$
    \end{itemize}
    \item $\Obf^\QPRO(1^{\secp_\OBF}, \pp,C)$:
    \begin{itemize}
        \item Parse $\pp$ as $(\crs, h^*)$
        \item Let $D$ be the input length of  $C$ and let $B$ be the number of blocks (determined as in \cite{JLLW23}). 
        \item For each $t \in [\secp]$:
        \begin{itemize}
            \item $k^t_{i,j} \gets \{0,1\}^{\secp_\OBF}$  for $0 \leq i < D, 1 \leq j \leq B$
            \item $h^t_{i,j} \gets \QPRO_t(\Gen,k^t_{i,j})$  for $0 \leq i < D, 1 \leq j \leq B$
            \item $\overline{k}_t := \{k^t_{i,j}\}_{i,j}$
            \item $\overline{h}_t := \{h^t_{i,j}\}_{i,j}$
            \item $r_t \leftarrow \{0,1\}^*$
            \item $c_t \gets \com(\overline{k}_t; r_t)$
        \end{itemize}
        \item $\chal := \QPRO_0(\Eval, h^*, (c_1, \ldots, c_\lambda, \overline{h}_1, \ldots, \overline{h}_\lambda))$ where $\chal \in \{0,1\}^\secp$
        \item Let $\Opened := \{t : \chal_t = 1\}$. For $t \notin \Opened:$
        \begin{itemize}
            \item $\Obfr_t \leftarrow \{0,1\}^*$
            \item $\ObfC_t \gets \JLLWObf(1^\lambda, C, \overline{k}_t, \overline{h}_t; \Obfr_t)$. We note here that since we provide key and handle pairs as input, $\JLLWObf$ does not query the $\QPRO$ oracle.
        \end{itemize}
        \item For any $x$ of the form $(\pred, \chal, \{c_t,\overline{h}_t,\ObfC_t\}_{t\notin \Opened})$ and $w$ of the form $(C, \{r_t, \overline{k}_t, \Obfr_t\}_{t\notin \Opened})$ define the NP relation $\cR_\lambda$ as
        \[
            \cR_\lambda:= \left\{(x,w): \begin{array}{r}
                \pred(C) = 1\ \wedge\\
                \forall t\notin \Opened,\ c_t = \com(\overline{k_t}; r_t) \ \wedge \\
                \forall t\notin \Opened,\ \ObfC_t = \JLLWObf(1^\lambda, C, \overline{k}_t, \overline{h}_t; \Obfr_t)
            \end{array}\right\}
        \]
        \item Compute $\pi \leftarrow \NIZK.\mathsf{P}(1^{\secp}, \crs, x, w)$ for $x:=(\pred, \chal, \{c_t,\overline{h}_t,\ObfC_t\}_{t\notin \Opened})$ and $w:=(C, \{r_t, \overline{k}_t, \Obfr_t\}_{t\notin \Opened})$
         \item Return $\ObfC := (\{c_t, \overline{h}_t\}_{t\in[\lambda]}, \chal, \{\ObfC_t\}_{t\notin\Opened},\{\overline{k}_t, r_t\}_{t\in\Opened}, \pi)$
    \end{itemize}
   
    \item $\Ver^\QPRO(\pp, \pred, \ObfC)$:
    \begin{itemize}
        \item  Parse $\ObfC$ as $(\{c_t, \overline{h}_t\}_{t\in[\lambda]}, \chal, \{\ObfC_t\}_{t\notin\Opened},\{\overline{k}_t, r_t\}_{t\in\Opened}, \pi)$. Reject if any term is missing.
        \item Parse $\pp$ as $(\crs, h^*)$ 
        \item Check that $\chal = \QPRO_0(\Eval, h^*, (c_1, \ldots, c_\lambda, \overline{h}_1, \ldots, \overline{h}_\lambda))$. Reject otherwise.
        \item Check that for all $t\in\Opened$, $c_t = \com(\overline{k}_t; r_t)$. Reject otherwise.
        \item For each $t \in \Opened$ parse $\overline{h}_t$ as $\{h^t_{i,j}\}_{i,j}$ and $\overline{k}_t$ as $\{k^t_{i,j}\}_{i,j}$, where $0 \leq i < D$ and  $1 \leq j \leq B$.
        \item For each $t\in\Opened$, $0 \leq i < D$ and  $1 \leq j \leq B$:
        \begin{itemize}
            \item Check that $h^t_{i,j} = \QPRO_t(\Gen,k^t_{i,j})$. Reject otherwise.
        \end{itemize}
        \item Define $x:=(\pred, \chal, \{c_t,\overline{h}_t,\ObfC_t\}_{t\notin \Opened})$
        \item If $\NIZK.\mathsf{V}(1^\secp, \crs, x, \pi) = 1$ then accept, else reject. 
    \end{itemize}
    \item $\Eval^\QPRO(\ObfC, z):$
    \begin{itemize}
        \item Parse $\ObfC$ as $(\{c_t, \overline{h}_t\}_{t\in[\lambda]}, \chal, \{\ObfC_t\}_{t\notin\Opened},\{\overline{k}_t, r_t\}_{t\in\Opened})$
        \item For each $t \notin \Opened$:
        \begin{itemize}
            \item $y_t := \JLLWObf.\Eval^{\QPRO_t}(\ObfC_t, \overline{h}_t, z)$ 
        \end{itemize}
        \item Return the most frequent element in $\{y_t\}_{t\notin\Opened}$ breaking ties arbitrarily.
    \end{itemize}
\end{itemize}

\end{construction}

\begin{theorem}
    The obfuscator given in \cref{construction:provable-obfuscator-QPRO} satisfies \cref{def:provable-obfuscation} for all samplers $\cS$.
\end{theorem}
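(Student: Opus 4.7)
The plan is to verify the four properties of \cref{def:provable-obfuscation} in turn: functionality-preservation and completeness are essentially immediate, while knowledge soundness relies on a cut-and-choose argument riding on top of the NIZK's argument of knowledge, and the simulation-security clauses reduce to the corresponding properties of $\NIZK$ and $\JLLWObf$. Functionality-preservation holds because an honest $\ObfC$ passes all of $\Ver$'s checks (handles are generated by $\QPRO_t(\Gen,\cdot)$ on the sampled keys, openings for $t\in\Opened$ satisfy $c_t=\com(\overline{k}_t;r_t)$, and the NIZK verifies by its completeness), while $\Eval$ returns the majority of $\{\JLLWObf.\Eval^{\QPRO_t}(\ObfC_t,\overline{h}_t,z)\}_{t\notin\Opened}$, each of which equals $C(z)$ by functionality-preservation of $\JLLWObf$. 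Completeness of $\Ver$ follows from the same observations.

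For adaptive knowledge soundness, I define $\Ext_0$ to run $\NIZK.\Ext_0$ to sample $(\crs,\td_\NIZK)$, sample $h^*$ uniformly, and output $\pp=(\crs,h^*)$ with trapdoor $\td_\NIZK$; the extraction-key indistinguishability property of $\NIZK$ gives $\pp\approx_{\negl(\secp)}\Setup(1^\secp)$. Then $\Ext_1$ runs $\NIZK.\Ext_1$ on $\pi$ to recover $(C,\{r_t,\overline{k}_t,\Obfr_t\}_{t\notin\Opened})$ witnessing $\cR_\secp$, and returns $C$. Now, either $\Ver$ rejects, or with overwhelming probability the relation holds, which in particular means $\ObfC_t=\JLLWObf(1^\secp,C,\overline{k}_t,\overline{h}_t;\Obfr_t)$ for every unopened $t$ and $c_t=\com(\overline{k}_t;r_t)$. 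To obtain $\Eval(\ObfC,\cdot)\equiv C$, I call a position $t$ \emph{bad} if some $h_{i,j}^t\neq\QPRO_t(\Gen,k_{i,j}^t)$ for the extracted $\overline{k}_t$ (the commitment-binding part is already covered by the extractor and statistical binding of $\com$). Since $\chal$ is derived via $\QPRO_0$ on $(c_1,\dots,c_\secp,\overline{h}_1,\dots,\overline{h}_\secp)$ and those are fixed before $\chal$ is sampled (modulo collisions on $\QPRO_0$, which are negligible), each bit $\chal_t$ is (almost) uniform and independent; verification requires opened bad positions to correctly check the QPrO handle, which forces opened bad positions to be non-existent. Hence each individual bad $t$ survives only with probability $1/2$ over $\chal_t$, so the number of unopened bad positions is sharply concentrated below $\secp/4$, giving a strict majority of honest $\JLLWObf$ obfuscations among the unopened, and $\Eval$ agrees with $C$ on every input.

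For simulation security, $\SimGen$ runs $\NIZK.\Sim_0$ to produce $(\crs,\td_\NIZK)$ and samples $h^*$ uniformly; $\SimObf$ runs $\Obf$ honestly but replaces $\pi$ with $\NIZK.\Sim_1(\crs,\td_\NIZK,x)$. Honest-to-simulated indistinguishability is a single-step reduction to NIZK zero-knowledge. Sub-exponential simulated-circuit indistinguishability for $C_0\equiv C_1$ follows by a hybrid over the unopened indices, in each step invoking the (post-quantum, sub-exponential) ideal obfuscation of $\JLLWObf$ from \cref{sec:PROM}: replacing $\JLLWObf$ of $C_0$ with the $\JLLWObf$ simulator given oracle access to $C_0$, which coincides with the $\JLLWObf$ simulator given oracle access to $C_1$ since $C_0\equiv C_1$, and then reversing the step. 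The opened data, commitments, and simulated NIZK proof depend only on the keys (and on $x=(\pred,\chal,\{c_t,\overline{h}_t,\ObfC_t\}_{t\notin\Opened})$, which is identical across the two experiments after the hybrids), so they contribute no distinguishing advantage. Finally, $\cS$-evasive composability follows from \cref{lem:idealobfs-is-evasive} once one observes that $\SimObf$ implements ideal obfuscation: each unopened $\JLLWObf$ is ideal obfuscation, the commitments and openings are simulatable without $C$ (they depend only on freshly sampled keys), and the NIZK proof is simulated via $\NIZK.\Sim_1$; thus a unified simulator requiring only oracle access to $C$ exists, which is precisely the hypothesis needed by \cref{lem:idealobfs-is-evasive}.

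The main obstacle will be making the knowledge-soundness cut-and-choose rigorous in the quantum-accessible PrO setting: we need to argue that $\chal=\QPRO_0(\Eval,h^*,(c_1,\dots,\overline{h}_\secp))$ is effectively uniform and independent of the adversary's commitments $\{c_t,\overline{h}_t\}_t$, which in the plain random-oracle setting is immediate but with superposition queries requires a lemma of the ``query-after-commit'' flavour (e.g.\ recording queries on an auxiliary register, or indistinguishability from a truly random $\chal$ via the pseudorandom-oracle guarantee) to rule out adaptive bad-position choice. Once that lemma is set up, the rest of the argument is a standard union bound over the at most $2^{\secp}$ adversarial strategies for pre-commitment, which is absorbed by the random-oracle output length.
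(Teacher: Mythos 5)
Your high-level decomposition matches the paper's: define $\Ext$ via the NIZK extractor plus a cut-and-choose, define $\Sim$ by replacing the NIZK proof with a simulated one, and derive the remaining security properties by establishing that the whole construction satisfies ideal obfuscation, then invoking \cref{lem:idealobfs-is-evasive}. The simulation-security and evasive-composability halves are essentially right, modulo the fact that you need the per-index ideal obfuscation of $\JLLWObf$ to be \emph{relativizing} (each instance controls its own $\QPRO_t$ so the $\lambda$ simulators can coexist — this is exactly why the paper introduces the $(\lambda+1)$-QPrO model and the oracle $\cR$ in \cref{def:ideal-obf}), and the observation that after the hybrids the simulated NIZK sees the same instance $x$ in both worlds.

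However, the knowledge-soundness cut-and-choose has a genuine gap. You correctly identify that the superposition access to $\QPRO_0$ prevents the classical ``commit then get a random challenge'' analysis, but your proposed resolution --- ``a standard union bound over the at most $2^{\secp}$ adversarial strategies for pre-commitment, absorbed by the random-oracle output length'' --- does not work. The per-strategy statistical soundness bound you would get is roughly $2^{-\Omega(\secp)}$ (from the $1/2$-per-bad-position concentration you sketch), and union bounding that over $2^{\secp}$ first messages gives no savings; moreover, a quantum adversary's first message is not classically fixed before it queries $\QPRO_0$, so there is no well-defined set to union bound over. The paper instead (i) first replaces $\QPRO_0$'s behaviour at $h^*$ with a fresh random oracle via a dedicated QPrO key-reprogramming lemma (so the challenge is genuinely a random-oracle output, independent of $\pi^{-1}$), (ii) establishes statistical soundness of the underlying $\Sigma$-protocol for a uniformly random verifier message (this is where the $|\mathbb G|$ / $|\mathbb B|$ / $\Opened$ accounting lives), and (iii) invokes the post-quantum Fiat--Shamir theorem for statistically sound protocols (Unruh, AC 2017) to handle the superposition queries. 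Step (iii) is the crucial ingredient you are missing: without it, the jump from ``$\chal$ is uniform'' to ``no quantum adversary can find a first message that makes the post-hoc cut-and-choose succeed'' does not follow.
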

\begin{proof} We prove that the construction satisfies the following properties.

    \textbf{Knowledge Soundness.} We define $\Ext_0$ and $\Ext_1$ as follows.
    \begin{itemize}
        \item $\Ext_0(1^\lambda)$: Sample $\crs, \td \leftarrow \NIZK.\Ext_0(1^\lambda)$, $h^* \leftarrow \{0,1\}^\lambda$, and return $((\crs, h^*), \td)$.
        \item $\Ext_1(\pp, \td,\pred, \ObfC, \pi)$ does the following.
        \begin{itemize}
            \item  Parse $\ObfC$ as $(\{c_t, \overline{h}_t\}_{t\in[\lambda]}, \chal, \{\ObfC_t\}_{t\notin\Opened},\{\overline{k}_t, r_t\}_{t\in\Opened})$. Output $\bot$ if any term is missing.
            \item Parse $\pp$ as $(\crs, h^*)$ 
            \item  Define $x:=(\pred, \chal, \{c_t,\overline{h}_t,\ObfC_t\}_{t\notin \Opened})$
            \item Run $\NIZK.\Ext_1(\crs,\td,x,\pi)$ to obtain $w = (C, \{r_t, \overline{k}_t, \Obfr_t\}_{t\notin \Opened})$
            \item Return $C$
        \end{itemize}
    \end{itemize}
    To prove that extraction succeeds, we will rely on the security of cut-and-choose.
    \begin{claim}
        \label{clm:cut-and-choose}
        Define the following set, where $i \in [0,D-1]$ and $j\in [B]$:

        \[
        \mathsf{Good}:= \left\{(c, \overline{h}, \overline{k}, r, t) \text{ s.t. for $i \in [0,D-1]$ and $j\in [B]$ }: \begin{array}{r}
             h = \{h_{i,j}\}_{i,j} \text{ where } h_{i,j} \in \{0,1\}^\lambda\\
             k = \{k_{i,j}\}_{i,j} \text{ where } k_{i,j} = \QPRO_t(\Gen, h_{i,j})\\
             c = \com(\overline{k}; r)                
        \end{array}\right\}
        \]
        For all QPT adversaries $\cA$, for large enough $\lambda$
        \[
        \Pr\left[\begin{array}{l}
            \left( |\mathbb{B}|\; \geq (\lambda - |\Opened|)/2\right) ~ \wedge\\
            \forall t \in \Opened, \\ (c_t, \overline{h}_t, \overline{k}_t, r_t, t) \in \mathsf{Good} 
        \end{array}~\middle |\begin{array}{r}
             h^* \leftarrow \{0,1\}^\lambda\\
             \{c_t, \overline{h}_t, \overline{k}_t, r_t\}_{t\in[\lambda]} \leftarrow \cA^\QPRO(h^*)\\
             \chal := \QPRO_0(\Eval, h^*, (c_1, \ldots, c_\lambda, \overline{h}_1, \ldots, \overline{h}_\lambda))\\
             \mathbb{B}:= \{t : (c_t, \overline{h}_t, \overline{k}_t, r_t, t) \notin \mathsf{Good} \wedge c_t = \com(\overline{k}_t;r_t)\}
        \end{array}~\right] \leq \negl(\lambda)
        \]
        
    \end{claim}
    \begin{proof}
        We prove by reduction to the security of post-quantum Fiat Shamir for statistically sound protocols. Let $\Pi$ be the three round protocol that consists of $\cA$ sending $\{c_t, \overline{h}_t\}_{t\in[\lambda]}$, receiving a random string $\chal$, and sending $\{\overline{k}_t, r_t\}_{t\in[\lambda]}$. The adversary succeeds in breaking soundness if for $\mathbb{B}:= \{t : (c_t, \overline{h}_t, \overline{k}_t, r_t, t) \notin \mathsf{Good} \wedge c_t = \com(\overline{k}_t;r_t)\}$, $|\mathbb{B}|\; \geq (\lambda - |\Opened|)/2$ and $\forall t \in \Opened,(c_t, \overline{h}_t, \overline{k}_t, r_t, t) \in \mathsf{Good} $. For any first message by the adversary, let the set $\mathbb{G} := \{t : \exists \overline{k}, r \text{ s.t. } (c_t, \overline{h}_t, \overline{k}, r, t) \in \mathsf{Good}\}$. By the perfect binding of the commitment scheme, $\mathbb{G} \cap \mathbb{B} = \emptyset$. Therefore the adversary can only break soundness if $|\mathbb{G}| \leq \lambda - (\lambda - |\Opened|)/2 = (\lambda + |\Opened|)/2$ since otherwise $|\mathbb{B}|$ will be too small. Additionally if $\Opened \not\subseteq \mathbb{G}$ then $\cA$ cannot win, since if $t \notin \mathbb{G}$ then it must be the case that $(c_t, \overline{h}_t, \overline{k}_t, r_t, t) \notin \mathsf{Good}$. 
        Therefore, whenever $\cA$ succeeds at breaking soundness it must be the case that
        \begin{itemize}
            \item $\Opened \not\subseteq \mathbb{G}$ and
            \item $|\Opened| \geq 2|\mathbb{G}| -\lambda$.
        \end{itemize} 
        Note also that the first message fixes $\mathbb{G}$ and $\chal$ is chosen randomly independently of $\mathbb{G}$. Therefore 
        \[\Pr[\Opened \not\subseteq \mathbb{G}] \leq 1/2^{\lambda - |\mathbb{G}|}\]
        and by Hoeffding inequality, if $|\mathbb{G}| \geq 3\lambda/4$,
        \[\Pr[|\Opened| \geq 2|\mathbb{G}| -\lambda] \leq \exp(-(4|\mathbb{G}| - 3\lambda)^2/\lambda)\]
        If $|\mathbb{G}| \leq 3\lambda/4 + \lambda/4$ then $1/2^{\lambda - |\mathbb{G}|}\leq 1/2^{\lambda/8} \leq \negl(\lambda)$ while if $|\mathbb{G}| \geq 3\lambda/4 + \lambda/4$ then $\exp(-(4|\mathbb{G}| - 3\lambda)^2/\lambda)\leq \exp(-\lambda/4) \leq \negl(\lambda)$, so the adversary can break soundness with at most negligible probability.

        Let $\QPRO' := (\QPRO_1, \QPRO_2, \ldots, \QPRO_\lambda)$, i.e. all but the first instantiation of $\QPRO$. Applying the Fiat-Shamir transform \cite{AC:Unruh17} to the protocol above yields for all adversaries $\cA$ that make at most $\poly(q)$ queries to $\cO$, for large enough $\lambda$
        \[
        \Pr\left[\begin{array}{l}
            \left( |\mathbb{B}|\; \geq \lambda/4\right) ~ \wedge\\
            \forall t \in \Opened, \\ (c_t, \overline{h}_t, \overline{k}_t, r_t, t) \in \mathsf{Good} 
        \end{array}~\middle |\begin{array}{r}
             \{c_t, \overline{h}_t, \overline{k}_t, r_t\}_{t\in[\lambda]} \leftarrow \cA^{\cO, \QPRO'}\\
             \chal := \cO(c_1, \ldots, c_\lambda, \overline{h}_1, \ldots, \overline{h}_\lambda))\\
             \mathbb{B}:= \{t : (c_t, \overline{h}_t, \overline{k}_t, r_t, t) \notin \mathsf{Good} \wedge c_t = \com(\overline{k}_t;r_t)\}
        \end{array}~\right] \leq \negl(\lambda)
        \]
        where $\cO$ is a random oracle. 
        
        Now, by Lemma \ref{lem:qpro-keyswap} we know that for all QPT adversaries $\cA$,
    \[\Big|\Pr[\cA^{\QPRO_0}(h^*) = 1 : h \gets \{0,1\}^\lambda] - \Pr[\cA^{\QPRO_0[h^* \to k]}(h^*) = 1 : h^*,k \gets \{0,1\}^\lambda]\Big| = \negl(\lambda), \]
    where $\QPRO_0[h^* \to k] = \QPRO_0$ except that on input $(\Eval,h^*,x)$ it outputs $f_k(x)$ instead of $f_{\pi^{-1}(h^*)}(x)$. That is, it answers PRF queries on handle $h^*$ using an independently sampled PRF key. Additionally, by the post-quantum security of the PRF (and the fact that for any $h^*$ and $k$, the oracle ${\QPRO_0[h^* \to k]}$ can be simulated given $h^*$, oracle access to $f_k$, and a post-quantum secure PRP)
    \[\Big|\Pr[\cA^{{\QPRO_0}[h^* \to k]}(h^*) = 1 : h^*,k \gets \{0,1\}^\lambda] - \Pr[\cA^{{\QPRO_0}[h^*]}(h^*) = 1 : h \gets \{0,1\}^\lambda]\Big|  = \negl(\lambda), \] where
 $\QPRO_0[h^*] = \QPRO_0$ except that on input $(\Eval,h,x)$ it outputs $\cO(x)$ instead of $f_{\pi^{-1}(h)}(x)$ for a random oracle $\cO$. Let $\QPRO[h^*] = (\QPRO_0[h^*], \QPRO')$. Now, since for any $h^*$ and $k$, the oracle ${\QPRO_0[h^*]}$ can be simulated given $h^*$, oracle access to $\cO$, and a post-quantum secure PRP, we can replace access to $(\cO, \QPRO')$ with $\QPRO[h^*]$ for random $h^*$ in the post Fiat-Shamir protocol without any loss in soundness. That is, for all QPT adversaries $\cA$, for large enough $\lambda$
        \[
        \Pr\left[\begin{array}{l}
            \left( |\mathbb{B}|\; \geq \lambda/4\right) ~ \wedge\\
            \forall t \in \Opened, \\ (c_t, \overline{h}_t, \overline{k}_t, r_t) \in \mathsf{Good} 
        \end{array}~\middle |\begin{array}{r}
        h^* \leftarrow \{0,1\}^\lambda\\
             \{c_t, \overline{h}_t, \overline{k}_t, r_t\}_{t\in[\lambda]} \leftarrow \cA^{\QPRO[h^*]}\\
             \chal := \QPRO_0[h^*](\Eval, h^*, c_1, \ldots, c_\lambda, \overline{h}_1, \ldots, \overline{h}_\lambda))\\
             \mathbb{B}:= \{t : (c_t, \overline{h}_t, \overline{k}_t, r_t) \notin \mathsf{Good} \wedge c_t = \com(\overline{k}_t;r_t)\}
        \end{array}~\right] \leq \negl(\lambda)
        \]
        Since no efficient adversary can distinguish 
        $\QPRO_0[h^*]$ from $\QPRO_0$ and $\QPRO'$ is efficiently simulatable, no efficient adversary can distinguish 
        $\QPRO[h^*]$ from $\QPRO$. Using this fact as well as by noting that the condition on the LHS is efficiently checkable, we obtain that for all QPT adversaries $\cA$, for large enough $\lambda$
        \[
        \Pr\left[\begin{array}{l}
            \left( |\mathbb{B}|\; \geq \lambda/4\right) ~ \wedge\\
            \forall t \in \Opened, \\ (c_t, \overline{h}_t, \overline{k}_t, r_t) \in \mathsf{Good} 
        \end{array}~\middle |\begin{array}{r}
             h^* \leftarrow \{0,1\}^\lambda\\
             \{c_t, \overline{h}_t, \overline{k}_t, r_t\}_{t\in[\lambda]} \leftarrow \cA^\QPRO(h^*)\\
             \chal := \QPRO_0(\Eval, h^*, (c_1, \ldots, c_\lambda, \overline{h}_1, \ldots, \overline{h}_\lambda))\\
             \mathbb{B}:= \{t : (c_t, \overline{h}_t, \overline{k}_t, r_t) \notin \mathsf{Good} \wedge c_t = \com(\overline{k}_t;r_t)\}
        \end{array}~\right] \leq \negl(\lambda)
        \]
        which is the statement in the claim.
    \end{proof}
    Note that since
    \[
    \{\crs : \crs \leftarrow \NIZK.\Setup(1^\lambda)\} \approx \{\crs : (\crs,\td) \leftarrow \NIZK.\Ext_0(1^\lambda)\}
    \]
    and since $\Ext_0(1^\lambda)$ samples $h^*$ honestly,
    \[
    \{\pp : \pp \leftarrow \Setup(1^\lambda)\} \approx \{\pp : (\pp,\td) \leftarrow \Ext_0(1^\lambda)\}
    \]
    Suppose there exists a QPT adversary $\cA$ such that
    \[
        \Pr\left[
                \begin{array}{l}
                    \left(\Ver(\pp,\pred,\ObfC, \pi) = \top\right) \wedge \\
                    \left(\exists x \text{ s.t } C(x) \neq \Eval(\ObfC,x) \vee \pred(C) = 1\right)
                \end{array}
                : 
                \begin{array}{r}
                    (\pp,\td) \gets \Ext_0(1^\lambda) \\
                    (\ObfC, \pi) \gets \cA^\QPRO(\pp) \\ 
                    C \gets \Ext_1(\pp, \td,\pred,\ObfC, \pi)
                \end{array}
            \right] \geq \epsilon(\lambda)
    \]
    for some non-negligible function $\epsilon(\cdot)$, then since $\Ver$ includes verifying the underlying $\NIZK$, by the knowledge soundness of the underlying $\NIZK$ it must be the case that
    \[
        \Pr\left[
                \begin{array}{l}
                    \left(\Ver(\pp,\pred,\ObfC, \pi) = \top\right) \wedge \\
                    \left(\exists x \text{ s.t } C(x) \neq \Eval(\ObfC,x) \vee \pred(C) = 1\right) \wedge \\
                    (\cR(x,w) = 1)
                \end{array}
                : 
                \begin{array}{r}
                    (\pp,\td) \gets \Ext_0(1^\lambda) \\
                    (\ObfC, \pi) \gets \cA^\QPRO(\pp) \\ 
                    C \gets \Ext_1(\pp, \td,\pred,\ObfC, \pi)
                \end{array}
            \right] \geq \epsilon(\lambda) - \negl(\lambda)
    \]
    where $\cR$ is the relation defined by $\Prove$ while $x$ and $w$ are the instance and witness extracted by $\NIZK.\Ext_1$ run internally by $\Ext_1$. 

    Now if $\cR(x,w) = 1$ then it must be the case that
    \begin{itemize}
        \item $\pred(C)=1$ and
        \item for all $t \notin \Opened$
        \begin{itemize}
            \item $c_t = \com(\overline{k}_t; r_t)$ and
            \item $\ObfC_t = \JLLWObf(1^\lambda, C, \overline{k}_t, \overline{h}_t; \Obfr_t)$
        \end{itemize}
    \end{itemize}
    Additionally if $\Ver(\pp,\pred,\ObfC, \pi) = \top$ then it must be the case that $\chal = \QPRO(\Eval,h^*, (c_1, \ldots, c_\lambda, \overline{h}_1, \ldots, \overline{h}_\lambda))$ and for all $t \in \Opened$, $(c_t, \overline{h}_t, \overline{k}_t, r_t) \in \mathsf{Good}$. 
    This means that for all $t\notin \Opened$, $\Obf_t$ is an honestly computed obfuscation of $C$ using handles $\overline{h}_t$ and keys $\overline{k}_t$, and $c_t$ is an honestly computed commitments to $\overline{k}_t$. By the perfect correctness of $\JLLWObf$, if for all $k_{i,j} \in \overline{k}_t$ and $h_{i,j} \in \overline{h}_t$ it was the case that $\QPRO(\Gen, k_{i,j}) = h_{i,j}$ then for all $x$, $\JLLWObf.\Eval(\ObfC_t,x) = C(x)$. However, since $\Eval(\ObfC, x)$ computes the most frequent element of $\{\JLLWObf.\Eval(\ObfC_t,x)\}_{t\notin\Opened}$, then if there exists $x$ such that $\Eval(\ObfC, x) \neq C(x)$, it must mean that for the majority of $t\notin\Opened$, the keys in $\overline{k}_t$ do not all map to their corresponding handle in $\overline{h}_t$ under $\QPRO(\Gen,\cdot)$. This means that the set $\mathbb{B}:= \{t : (c_t, \overline{h}_t, \overline{k}_t, r_t) \notin \mathsf{Good} \wedge c_t = \com(\overline{k}_t;r_t)\}$ is of size atleast $\lambda - |\Opened|$. 
    
    Let $\widetilde{\cA}$ be the algorithm that receives $h^*$, samples $\crs, \td \leftarrow \NIZK.\Ext_0(1^\lambda)$, runs $\cA((\crs,h^*))$ to obtain $(\ObfC, \pi)$, and runs $\Ext_1(\pp,\td,\pred,\ObfC,\pi)$ to obtain the set $\{c_t, \overline{h}_t, \overline{k}_t, r_t\}_{t\in[\lambda]}$, which $\widetilde{\cA}$ then outputs. Then by the above 
    \[
        \Pr\left[\begin{array}{l}
            \left( |\mathbb{B}|\; \geq (\lambda - |\Opened|)/2\right) ~ \wedge\\
            \forall t \in \Opened, \\ (c_t, \overline{h}_t, \overline{k}_t, r_t) \in \mathsf{Good} 
        \end{array}~\middle |\begin{array}{r}
             h^* \leftarrow \{0,1\}^\lambda\\
             \{c_t, \overline{h}_t, \overline{k}_t, r_t\}_{t\in[\lambda]} \leftarrow \widetilde{\cA}^\QPRO(h^*)\\
             \chal := \QPRO_0(\Eval, h^*, (c_1, \ldots, c_\lambda, \overline{h}_1, \ldots, \overline{h}_\lambda))\\
             \mathbb{B}:= \{t : (c_t, \overline{h}_t, \overline{k}_t, r_t) \notin \mathsf{Good} \wedge c_t = \com(\overline{k}_t;r_t)\}
        \end{array}~\right] \geq \epsilon(\lambda)-\negl(\lambda)
        \]
        which contradicts Claim \ref{clm:cut-and-choose}.

        \noindent\paragraph{Simulation Security.}
        We define $\SimGen$ and $\SimObf$ as follows.
        \item $\SimGen^\QPRO(1^\lambda)$:
\begin{itemize}
    \item $(\crs,\td) \gets \NIZK.\SimGen(1^\lambda)$
    \item $h^* \gets \{0,1\}^\lambda$
    \item Return $((\crs,h^*),\td)$
\end{itemize}

\item $\SimObf^\QPRO(1^{\secp_\OBF},\pp,\td,\varphi,C)$:
\begin{itemize}
    \item Run $\Obf^\QPRO(1^{\secp_\OBF},\pp,\varphi,C)$ honestly, except generate $\pi$ by running $\NIZK.\Sim(\crs,\td,x)$.
\end{itemize}

Honest-vs-simulated indistinguishability follows from the zero-knowledge property of $\NIZK$. We will next show that $\SimObf$ satisfies subexponential ideal obfuscation security.

We show security by building a simulator which ``takes over'' all the oracles $(\QPRO_0,\QPRO_1,\ldots,\QPRO_\lambda)$. Let $S=(S_1,S_2,S_3)$, where $S_1$ will compute $(\pp,\td)$, $S_2$ will compute the obfuscated circuit $\widetilde{C}$, and $S_3$ will simulate the oracles after the obfuscated circuit is sent to the adversary. Formally, we show for every $C$
\begin{align*}
&\bigg|\Pr\!\left[
  A^{\QPRO}(\pp,\widetilde{C}) =1: 
  \begin{array}{l}
    (\pp,\td) \gets \SimGen^\QPRO(1^\lambda) \\
    \widetilde{C} \gets \SimObf^\QPRO(1^{\secp_\OBF},\pp,\td,\varphi,C)
  \end{array}
\right]
\\ &-
\Pr\!\left[
  A^{S_3^C}(\pp,\widetilde{C}) =1:
  \begin{array}{l}
    (\pp,\td) \gets S_1(1^\lambda) \\
    \widetilde{C} \gets S_2^C(1^{\secp_\OBF},\pp,\td,\varphi)
  \end{array}
\right]\bigg| \leq 1/2^{\lambda_\OBF^\epsilon}
\end{align*}
for some $\epsilon> 0$. For any function $f$ and handle $h$, left $\QPRO[h^* \mapsto f]$ refer to an oracle identical to $\QPRO$ except $\Eval$ queries with handle $h$ are answered using $f$ instead.
\noindent Define $S=(S_1,S_2,S_3)$, where all simulators share state, as follows.

\item $S_1(1^\lambda)$:
\begin{itemize}
    \item $(\crs,\td) \gets \NIZK.\SimGen(1^\lambda)$
    \item $h^* \gets \{0,1\}^\lambda$
    \item Return $((\crs,h^*),\td)$
\end{itemize}

\item $S_2^C(1^{\secp_\OBF},\pp,\td,\varphi)$:
\begin{itemize}
    \item Let $(S_{1,i},S_{2,i},S_{3,i})_{i\in[\lambda]}$ be $\lambda$ separate instantiations of the ideal obfuscation simulators for $\JLLWObf$, where $(S_{1,i},S_{2,i},S_{3,i})$ are allowed to control $\QPRO_i$.
    \item Compute $\widetilde{C}$ as in $\SimObf$ except:
    \begin{itemize}
        \item Sample $\chal \gets \{0,1\}^\lambda$
        \item $\forall t \in [\lambda]$, set $\widetilde{C}_t \gets S_{2,t}^C(1^{\secp_\OBF},D,S)$.
        \item If $t \in \Opened$, $\widetilde{C}_t$ is not sent to the adversary.
        \item For all $t \in \Opened$:
        \begin{itemize}
            \item $c_t \gets \com(0;r_t)$
            \item $\overline{h}_t$ is extracted from $\widetilde{C}_t$
        \end{itemize}
    \end{itemize}
\end{itemize}

\item $S_3^C$:
\begin{itemize}
    \item Let $\cO$ be an efficient simulation of a random oracle. Note that such simulation is possible via the compressed oracle technique \cite{C:Zhandry19}.
    \item Let $\cO'$ be identical to $\cO$ except $(c_1,\ldots,c_\lambda,\overline{h}_1,\ldots,\overline{h}_\lambda)$ maps to $\chal$.
    \item Let $\widetilde{\QPRO}_0$ be a simulated instance of $\QPRO_0$ (using a PRP).
    \item Give query access to $(\widetilde{\QPRO}_0[h^* \mapsto \cO'],S_{3,1}^C,\ldots,S_{3,\lambda}^C)$. 
\end{itemize}

\noindent We prove indistinguishability via a sequence of hybrids.

\begin{itemize}
    \item Hybrid$_0$: Identical to using $(\SimGen,\SimObf)$.
    \item Hybrid$_1$: Identical to Hybrid$_0$, except $A$ and $\SimObf$ are given query access to $(\QPRO_0[h^* \mapsto \cO],\QPRO')$ instead of $\QPRO$, where $\QPRO':= (\QPRO_1, \ldots, \QPRO_\lambda)$ . By Lemma \ref{lem:qpro-keyswap} and the post-quantum security of the PRF, as well as the observation that $\QPRO_0[h^*\mapsto f]$ can be simulated efficiently using query access to $f$, both hybrids are indistinguishable.
    \item Hybrid$_2$: Identical to Hybrid$_1$, except $\chal \gets \{0,1\}^\lambda$ and $A$ and $\SimObf$ are given query access to $(\QPRO_0[h^*\mapsto \cO'],\QPRO')$, where $\cO'$ is identical to $\cO$ except $(c_1,\ldots,c_\lambda,\overline{h}_1,\ldots,\overline{h}_\lambda)$ maps to $\chal$. Both hybrids are identical in the adversary’s view.
    \item Hybrid$_3$: Identical to Hybrid$_2$, except for all $t \in \Opened$, $c_t$ is computed as a commitment to $0$. Both hybrids are indistinguishable by the hiding property of the commitment scheme.
    \item Hybrid$_{4,i}$: For all $i \in [0,\lambda]$, Hybrid$_{4,i}$ is identical to Hybrid$_3$ except for all $j \leq i$:
    \begin{itemize}
        \item $\widetilde{C}_j \gets S_{2,j}^C(1^{\secp_\OBF},1^D,1^S)$
        \item If $j \notin \Opened$ then $\overline{h}_j$ is extracted from $\widetilde{C}_j$
    \end{itemize}
    and $A$ is given access to $({\QPRO}_0[h^*\mapsto \cO'],S_{3,1}^C,\ldots,S_{3,i}^C,\QPRO_{i+1},\ldots,\QPRO_\lambda)$. 
    Hybrid$_{4,0}$ is identical to Hybrid$_3$, and for all $i$, Hybrid$_{4,i-1}$ and Hybrid$_{4,i}$ are indistinguishable by the relativizing ideal obfuscation security of JLLW.
    \item Hybrid$_4$: Identical to Hybrid$_{4,\lambda}$ except $\widetilde{\QPRO}_0[h^*\mapsto \cO']$ is used instead of $\QPRO_0[h^*\mapsto \cO']$. Hybrids$_{u,\lambda}$ and Hybrid$_4$ are indistinguishable by PRP security.
\end{itemize}

The view of the adversary in Hybrid$_4$ is identical to the view when running with $S$. Since all primitives are subexponentially secure, the adversary has at most subexponential distinguishing advantage in $\secp_\OBF$.

Finally, by \Cref{lem:idealobfs-is-evasive}, evasive composability and indistinguishability obfuscation are implied by ideal obfuscation.
\end{proof}  
We obtain the following as a direct corollary of  the above theorem and \cref{thm:JLLW}.
\begin{theorem}
 \label{thm:provable-obfuscation} 
 Assuming functional encryption with subexponential security (\cref{def:FE}) and post-quantum NIZK arguments of knowledge for NP with a URS setup, there exists a provable obfuscation scheme (\cref{def:provable-obfuscation}) with a URS setup in the QPrO model (\cref{def:QPRO}).
\end{theorem}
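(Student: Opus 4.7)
The plan is to observe that the statement follows by instantiating Construction \ref{construction:provable-obfuscator-QPRO} and invoking the immediately preceding theorem, which already establishes that this construction satisfies Definition \ref{def:provable-obfuscation}. Thus the real work reduces to verifying that each of the three building blocks required by the construction is available under the stated hypotheses, and that the composed Setup remains a URS.

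The construction relies on (i) a sub-exponentially secure statistically binding non-interactive commitment scheme, (ii) the JLLW obfuscator of Construction \ref{construction:JLLW} satisfying post-quantum ideal obfuscation in the QPrO model, and (iii) a post-quantum NIZK argument of knowledge for NP with URS setup. Ingredient (iii) is literally assumed in the corollary. Ingredient (ii) is exactly the content of Theorem \ref{thm:JLLW} (the theorem following Construction \ref{construction:JLLW}), which gives post-quantum ideal obfuscation of JLLW in the QPrO from sub-exponentially secure functional encryption, matching our assumption. For ingredient (i), I would note that sub-exponentially secure functional encryption implies (via standard reductions) sub-exponentially secure public-key encryption and hence sub-exponentially secure one-way functions and PRGs, from which sub-exponentially secure statistically binding non-interactive commitments follow by Naor's construction --- crucially, the QPrO is irrelevant here because the commitment is used in the plain model within the construction.

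I would then check that the Setup algorithm of Construction \ref{construction:provable-obfuscator-QPRO} outputs $\pp = (\crs, h^*)$ where $\crs$ is sampled by the URS setup of the underlying NIZK argument of knowledge (uniformly random by assumption) and $h^* \gets \{0,1\}^\secp$ is uniform. The concatenation of two independent uniform strings is uniform, so the scheme inherits the URS property. With all building blocks instantiated and the URS property verified, the preceding theorem applies verbatim and yields every required property of Definition \ref{def:provable-obfuscation}: functionality preservation and completeness from JLLW correctness plus the honest commitment/NIZK relation; adaptive knowledge soundness from the Fiat--Shamir-style cut-and-choose analysis together with NIZK knowledge extraction; honest-to-simulated indistinguishability from the NIZK zero-knowledge property; sub-exponential simulated-circuit indistinguishability from JLLW's simulation security; and $\cS$-evasive composability for every sampler $\cS$ via Lemma \ref{lem:idealobfs-is-evasive}, which shows that ideal obfuscation implies evasive composability unconditionally in the sampler.

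The main (mild) obstacle is really just bookkeeping: confirming that the hypotheses of the preceding theorem are implied by the two hypotheses in the corollary statement, and in particular that no additional cryptographic assumption is smuggled in through the commitment scheme or through simulation of the QPrO. This is resolved by the chain FE $\Rightarrow$ OWF $\Rightarrow$ commitments mentioned above, and by the fact that QPrO simulation inside the proof only relies on post-quantum PRPs and PRFs, both of which are likewise implied by sub-exponentially secure FE.
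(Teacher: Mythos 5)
Your proposal is correct and takes the same route as the paper, which states this theorem as a direct corollary of the immediately preceding theorem (that Construction~\ref{construction:provable-obfuscator-QPRO} satisfies Definition~\ref{def:provable-obfuscation}) combined with Theorem~\ref{thm:JLLW}; you have simply made explicit the bookkeeping of which hypotheses supply which ingredients and why the composed Setup remains a URS. The one small imprecision is citing Naor's construction for a \emph{non-interactive} statistically binding commitment --- Naor's scheme is two-round, so you should either absorb the receiver's uniformly random first message into the URS, or instead derive a plain-model perfectly binding commitment from the injective one-way function given by the perfectly correct PKE that 1-key FE yields.
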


\section{Security of the JLLW Obfuscator in the QPrO Model}
\label{sec:PROM}

This section is dedicated to proving that the JLLW construction of ideal obfuscation in the pseudorandom oracle model is post-quantum secure. 

\begin{theorem}\label{thm:JLLW}
    Let $\cR$ be an oracle. Assuming functional encryption and pseudorandom functions with subexponential security relative to $\cR$, the JLLW obfuscation given in $\cref{construction:JLLW}$ satisfies (subexponential) post-quantum ideal obfuscation (\cref{def:ideal-obf}) relative to $\cR$ in the quantum-accessible pseudorandom oracle model (\cref{def:QPRO}).
\end{theorem}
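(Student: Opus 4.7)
The plan is to follow the hybrid structure of JLLW's classical proof carefully, adapting each step to the quantum-query setting. The simulator $\cS = (\cS_1,\cS_2,\cS_3)$ maintains a shared state that emulates the QPrO interface: $\cS_1$ replaces the uniformly random permutation $\pi$ in $\QPRO$ with a post-quantum PRP (justified by the assumed existence of post-quantum PRPs via Zhandry's construction from post-quantum PRFs, and by PRP security relative to $\cR$). Given oracle access to $C$, $\cS_2$ outputs a ``skeletal'' obfuscated circuit whose root FE ciphertext is $\Enc(\pk_0, \simsf, \epsilon, \info_\epsilon)$ with $\info_\epsilon$ chosen so that every evaluation path deterministically produces outputs generated by $G_v$ on simulated seeds. $\cS_3$ answers subsequent quantum queries to the QPrO consistently, planting the hardwired outputs $y_x = C(x)$ along whichever path the adversary ends up probing, using $\cS_3$'s oracle access to $C$ and compressed-oracle-style bookkeeping to remain efficient.

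The core hybrid argument traverses the obfuscation tree level by level, and within each level, input by input. Fix a canonical ordering on $\{0,1\}^D$. For each input $x$ in order, walk the path $\chi_0=\epsilon, \chi_1=x_1,\ldots,\chi_D=x$ and, at each node $\chi_d$, swap the $\flag$ from $\normal$ to $\hybsf$, incrementally replacing the $B$ PRF outputs in the OTP one block at a time. Each block replacement first moves a handle-keyed PRF evaluation to a $G_v$-output using adaptive FE security (relative to $\cR$ and the simulated QPrO) to change $\Expand_d[\pk_{d+1}]$'s functionality, and then invokes PRF security to replace the concrete PRF value with the PRG output. Once all $B$ blocks at $\chi_d$ are hybridized, the flag collapses to $\simsf$, storing only the simulated seeds $\sigma_{\chi_d,j}$. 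At the leaves, $\Eval_\simsf$ outputs the hardwired $y_x = C(x)$ value, matching the real obfuscation's $C(x)$. After all $2^D$ paths are processed, the experiment is syntactically identical to interacting with $\cS$.

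The main obstacle is precisely the one flagged in the technical overview: in the classical proof, the adversary classically queries the obfuscated program at only polynomially many inputs, so JLLW reprograms the PrO adaptively on those paths alone. In the quantum setting, the adversary may query the obfuscated circuit in superposition over all $2^D$ inputs, and adaptive reprogramming of a quantum-accessible random oracle is beyond current techniques in this context. The workaround is to perform the input-by-input hybrid non-adaptively over every $x \in \{0,1\}^D$. This yields $2^D$ hybrid steps, each with distinguishing advantage at most the subexponential security bound $2^{-\secp^{\epsilon}}$ of the underlying FE scheme and PRF; choosing $D \leq \poly(\secp)$ and $\secp$ large enough that $2^D \cdot 2^{-\secp^{\epsilon}} = \negl(\secp)$ drives the total loss to negligible. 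Assuming subexponential security of FE and PRFs \emph{relative to} $\cR$ ensures each reduction still goes through in the presence of both $\cR$ and the PRP-simulated QPrO, giving the final (subexponential) post-quantum ideal obfuscation guarantee.
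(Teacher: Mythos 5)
Your proposal captures the right high-level picture and correctly identifies the two post-quantum obstacles the paper flags (no adaptive reprogramming of a quantum-accessible oracle, and the need for PRPs to stand in for a truly random permutation), so in spirit you are following the paper's route.

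Two specific points, however, need fixing. First, you skip a necessary preliminary step. Before the main tree-hybrids the paper passes from $\Hyb_{\mathsf{real}}$ to $\Hyb_{\mathsf{KH}}$, which \emph{decouples} the key-handle pairs used in the obfuscation: instead of sampling $k_{d,b}$ and setting $h_{d,b} = \pi(k_{d,b})$, the hybrid samples both $k_{d,b}$ and $h_{d,b}$ independently and patches the $\QPRO$ oracle so queries $(\Eval,h_{d,b},\cdot)$ are still answered with $f_{k_{d,b}}$. This relies on \cref{lem:qpro-keyswap} and is crucial: without it the keys are tied to the permutation through the handles, and the later PRF-security reductions at depth $\delta$ (where you must replace $f_{k_{\delta,\beta}}(\chi\concat 0^{D-\delta})$ by a random value) cannot be carried out because the reduction cannot simulate the QPrO's $\Gen$ interface consistently. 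The PRP substitution ($\Hyb_{\mathsf{PRP}}$) is a separate, subsequent step, and your description conflates the two.

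Second, your description of the hybrid traversal is internally inconsistent: you announce ``level by level, input by input,'' and then describe a depth-first walk $\chi_0=\epsilon,\chi_1,\ldots,\chi_D=x$ of each input's root-to-leaf path. The paper's actual traversal is breadth-first: the main sequence is indexed by a cut-off depth $\delta$, and $\Hyb_{\delta,\rand}$ has all prefixes of length $<\delta$ in $\simsf$ mode and all others in $\normal$ mode. The $\delta\to\delta+1$ transition enumerates the $2^\delta$ prefixes at that level lexicographically, and within a prefix, iterates over the $B$ blocks (FE adaptive security to move a block into $\hybsf$ mode, PRF security to replace the OTP value, PRG security to swap in $G_v(\sigma_{\chi,\beta})$, then FE again to collapse to $\simsf$). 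This ordering makes the hybrid invariant easy to state and guarantees that when $\chi$ is processed its children $\chi\|0,\chi\|1$ are still in normal mode, which is what the FE functional-equivalence check requires. Your path-first order might be made to work with much more delicate bookkeeping, but you would need to spell out the invariant, and the breadth-first order is the clean choice. Finally, your parametrics handwave (``$D\le\poly(\secp)$ and $\secp$ large enough'') should be made precise as a $2^{|\delta|}$-security hypothesis on FE/PRF/PRG at each depth $\delta$, which the paper states explicitly and justifies by complexity leveraging.
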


There are two main differences between the post-quantum setting and the classical setting that arise here. The first, and biggest, difference is the difficulty of adaptively programming random oracle queries. This nuance prevents the JLLW trick of switching behavior on an exponential number of inputs by adaptively reprogramming the oracle at only a polynomial number of queries. As such, our post-quantum result relies on the subexponential security of the underlying primitives, in contrast to JLLW's reliance on only polynomial hardness.

A second difference is the treatment of random permutations in the quantum setting. It is currently an open problem to perform efficient (statistical) simulation of a random permutation.\footnote{The analogous question for random functions is solved by the influential compressed oracle technique~\cite{C:Zhandry19}.}
In theory, this means that oracle access to a random permutation could break computational security assumptions. However, this cannot occur if post-quantum \emph{psuedo-}random permutations exist. A PRP would allow efficient simulation of the random permutation against any computationally efficient adversary.
Fortunately, both (post-quantum) psuedorandom functions and functional encryption imply the existence of (post-quantum) PRPs ~\cite{DBLP:journals/quantum/Zhandry25}, so we do not need any additional assumptions.

Building on the observation that a QPrO can be efficiently simulated using post-quantum PRFs and PRPs, one can see that the parallel composition of JLLW is an ideal obfuscator when multiple QPRO oracles are available.\footnote{Security can also be shown with a single QPrO by observing that each simulator really only needs to program a handful of key-handle pairs and the other handles can be answered relatively to the original QPrO. Proving this would further complicate an already technically involved proof for a small improvement in the model, so we omit the details.}

\begin{corollary}
    Assuming functional encryption and pseudorandom functions with subexponential security relative to $\cR$, the parallel composition 
    \[
        \Obf(C_1, \dots, C_n) = (\Obf(C_1), \dots, \Obf(C_n))
    \]
    of the JLLW obfuscation satisfies post-quantum ideal obfuscation (\cref{def:ideal-obf}) in the $n$-time quantum-accessible pseudorandom oracle model (\cref{def:QPRO}).
\end{corollary}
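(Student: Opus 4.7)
The plan is to prove the corollary by a standard sequence-of-hybrids argument across the $n$ parallel obfuscations, using \cref{thm:JLLW} as the one-shot workhorse. Define $\Hyb_i$ for $0 \leq i \leq n$ to be the experiment in which positions $1, \dots, i$ are produced by the ideal-world simulator $\cS = (\cS_1, \cS_2, \cS_3)$ from the single-instance result (each with oracle access to its own $C_j$), and positions $i+1, \dots, n$ are produced honestly by $\JLLWObf$. Then $\Hyb_0$ is the real-world distribution $(\JLLWObf(C_1), \dots, \JLLWObf(C_n))$, while $\Hyb_n$ matches the ideal-world view produced by a joint simulator $\cS^{(n)}$ that runs one copy of $\cS$ per position (its $\cS^{(n)}_1$ composes the $n$ instances of $\cS_1$, and $\cS^{(n)}_2$ and $\cS^{(n)}_3$ relay oracle access $C_j$-wise to the corresponding $\cS_2$ and $\cS_3$). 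It then suffices to show $\Hyb_{i-1} \approx \Hyb_i$ for every $i \in [n]$.

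For the hop from $\Hyb_{i-1}$ to $\Hyb_i$, the key observation is that in the $n$-time QPrO model (\cref{def:QPRO}) the $n$ parallel JLLW instances each use a disjoint block of QPrO oracles, so the $i$-th block is independent of every other computation in the hybrid. The plan is to package everything that does not touch the $i$-th QPrO block into a relativizing oracle $\cR_i$: the already-simulated obfuscations $\widetilde{C}_j$ for $j < i$ together with their $\cS_{3,j}$ state, the honestly obfuscated $\widetilde{C}_j$ for $j > i$, and the adversary's queries into the other QPrO blocks. Since $\JLLWObf$, each copy of $\cS$, and each QPrO block all admit efficient (stateful) implementation using post-quantum PRFs and PRPs, and since post-quantum PRPs follow from post-quantum PRFs by \cite{DBLP:journals/quantum/Zhandry25}, the oracle $\cR_i$ is efficiently implementable. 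A distinguisher between $\Hyb_{i-1}$ and $\Hyb_i$ therefore translates directly, via hard-wiring and simulation relative to $\cR_i$, into a distinguisher in the single-instance ideal-obfuscation game of \cref{thm:JLLW}. Summing the $n = \poly(\secp)$ hops costs at most a polynomial factor, which is absorbed by the subexponential security of the primitives relative to $\cR$.

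The main obstacle is ensuring that $\cR_i$ is faithful to what the hybrids actually compute: in particular, the simulators at positions $j < i$ may have seized control of their own QPrO blocks during their $\cS_2$ phase (for instance by programming handles) and must continue to wield that control during $\cS_3$ while the adversary is querying in parallel with the $i$-th reduction. This is precisely the scenario that the relativizing-oracle formulation of \cref{def:ideal-obf} was designed to accommodate: each simulator manipulates only its own block, so no contention arises with the $i$-th simulator invoked by the reduction, and the honest obfuscations at positions $j > i$ similarly interact only with their own blocks. Once this is set up, the remainder is bookkeeping: checking that $\cS^{(n)}$ has the right syntactic split, that each $\cR_i$ remains efficient (including handling of the simulators' internal state from the $j < i$ instances), and that the sum of advantages across the $n$ hops is negligible under the subexponential hardness assumption relative to $\cR$.
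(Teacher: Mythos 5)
Your proposal is correct and matches the paper's intended argument: the paper only gives a one-sentence justification for this corollary (that a QPrO block can be efficiently simulated via post-quantum PRFs and PRPs, so a hybrid across the $n$ positions works out, packaging the untouched blocks and other simulators into the relativizing oracle $\cR$), and you have simply unpacked that justification into the standard hybrid. In particular, you correctly identify the two load-bearing facts — that the relativizing-oracle formulation of \cref{def:ideal-obf} is what makes each hop reducible to \cref{thm:JLLW} without oracle contention, and that PRF/PRP simulatability makes each $\cR_i$ efficient so that ``security relative to $\cR_i$'' follows from ``security relative to $\cR$'' — and the polynomial union bound over $n$ hops is absorbed by subexponential security, exactly as you say.
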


We also obtain the following as an immediate corollary, which allows us conclude that several objects from the literature that were previously only known in the classical oracle model can in fact be constructed in the QPrO model, assuming an appropriate flavor of functional encryption 
(e.g.\ witness encryption for QMA \cite{BM22}, copy-protection for all unlearable functionalities \cite{C:ALLZZ21}, obfuscation for various classes of quantum circuits \cite{BKNY23,BBV24,HT25}, and quantum fire \cite{DBLP:journals/iacr/CakanGS25}).

\begin{corollary}
    Assuming functional encryption as defined in \cref{def:FE}, there exists post-quantum ideal obfuscation in the quantum-accessible pseudorandom oracle model.
\end{corollary}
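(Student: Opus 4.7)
The plan is to mirror the JLLW proof architecture while replacing any step that invokes ``adaptive reprogramming of the random oracle'' with an argument that instead enumerates over inputs, at a subexponential security cost. At the top level, I will define a simulator $\cS = (\cS_1,\cS_2,\cS_3)$ that takes over both the QPrO interface and the obfuscated circuit produced by $\JLLWObf$. Concretely, $\cS_1$ samples the (simulated) QPrO state using a post-quantum PRP to stand in for the random permutation $\pi$ (this is where the assumed PRPs enter, using \cite{DBLP:journals/quantum/Zhandry25} to conclude their existence from the PRFs); $\cS_2$, on input $1^\secp,1^D,1^S$, produces a ``trapdoor'' obfuscation whose root ciphertext and descendant ciphertexts are filled in with $G_v$-based simulated values via the $\simsf$ flags, together with any handles used; and $\cS_3$ answers subsequent QPrO queries consistently, using oracle access to the true circuit $C$ whenever an $\Eval$ query hits a handle that, in the real world, would have activated an $\Eval_\normal$ leaf.

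The main hybrid sequence will proceed branch by branch down the tree of ciphertexts $\ct_\chi$, in the same spirit as JLLW's Hybrids. For each internal node $\chi$ at level $d$, I will first switch $\flag_\chi$ from $\normal$ to $\hybsf$ using the FE 1-key indistinguishability (the two programs $\Expand_{d,\normal}$ and $\Expand_{d,\hybsf}$ are functionally equivalent on $\chi$ once the correct hybrid information is installed); then, block by block $j = 1,\dots,B$, swap the output of $H(k_{d,j},\chi\|0^{D-d})$ for a freshly sampled $\sigma_{\chi,j}$ by invoking PRF security of $H$ under the handle $h_{d,j}$; then replace the encryption of the two child ciphertexts $\ct_{\chi\|0}\|\ct_{\chi\|1}$ by a pseudorandom $G_v$-expansion of the $\sigma_{\chi,j}$'s. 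Finally, at the leaves, $\Eval_\normal$ is swapped for $\Eval_\simsf$ using FE security, which is where the output $y_\chi = C(\chi)$ is installed.

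The step I expect to be the central obstacle, and the one that forces the subexponential loss, is the PRF-based swap on each block: in the classical JLLW proof one argues that an adversary making only $q = \poly(\secp)$ oracle queries touches the key $k_{d,j}$ on at most polynomially many inputs, so one can simulate the entire ``$H(k_{d,j},\cdot)$ vs.\ random'' game by adaptively reprogramming the PrO on just those queries. In the quantum setting the adversary's QPrO queries are in superposition and there is no known technique for \emph{adaptively} programming a quantum-accessible random oracle with simulation-style security. My workaround is to never reprogram: for each fixed descendant node $\chi$ (of which there are up to $2^D$) I will instead invoke PRF security directly, bounding the distinguishing advantage for a single $\chi$ by $2^{-\secp^\epsilon}$ and then taking a union bound over all $2^D$ nodes. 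This is why subexponential PRF and FE security are required; the asymptotic parameter $\secp$ must be chosen so that $2^{-\secp^\epsilon}\cdot 2^D$ is still negligible, exactly as in \cite{FOCS:BitVai15,C:AnaJai15}.

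To handle the relativizing oracle $\cR$, I will ensure that every reduction (to FE, to the PRF, and to PRP security of the simulated permutation) is given oracle access to $\cR$; since all underlying primitives are assumed subexponentially secure relative to $\cR$, each hybrid step goes through unchanged. Two bookkeeping points: (i) $\cS_3$ must answer $\Gen$ queries on keys $k$ that the adversary itself submits consistently with the simulated permutation, which is handled by lazy sampling of the PRP on fresh inputs; (ii) the simulated $\Eval$ answers on handles not produced by $\cS_2$ are simply routed through the simulated $f_{\pi^{-1}(h)}$, preserving functionality of the real QPrO. Combining the hybrids along the $(D+1)$-level tree yields the claimed $(D \cdot B \cdot 2^D)\cdot 2^{-\secp^\epsilon} = \negl(\secp)$ overall distinguishing bound, completing the proof.
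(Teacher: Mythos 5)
Your proposal is essentially the same as the paper's own proof of the underlying theorem (\cref{thm:JLLW}); the paper obtains the corollary immediately from that theorem by instantiating the relativizing oracle $\cR$ trivially and noting that the subexponential PRFs and PRPs needed are themselves implied by the assumed FE. Your top-level strategy — a branch-by-branch hybrid down the ciphertext tree that switches $\normal \to \hybsf \to \simsf$ per node, swaps PRF outputs block by block, and pays a $2^D$ union-bound penalty instead of adaptively reprogramming the quantum-accessible oracle — matches the paper's $\Hyb_{\delta,\rand}, \Hyb_{\delta,\chi,\beta,\ell}$ ladder, and your diagnosis of why subexponential security is unavoidable in the quantum setting is exactly the paper's.

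One genuine gap you gloss over: before you can invoke PRF security of $H(k_{d,j},\cdot)$, you must first argue that the key $k_{d,j}$ is information-theoretically hidden from the adversary. In the real experiment the handle $h_{d,j}=\pi(k_{d,j})$ is publicly revealed as part of the obfuscated circuit, and the adversary has (quantum) query access to the permutation interface of $\QPRO$, so $k_{d,j}$ is in principle recoverable. The paper inserts an explicit first hybrid ($\Hyb_{\mathsf{KH}}$, justified by \cref{lem:qpro-keyswap}) that decouples each handle $h_{d,j}$ from its key by replacing both with independent uniform strings, using the one-way/oracle-indistinguishability structure of the permutation. Only after this decoupling is the PRF under $k_{d,j}$ available as a security resource. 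Your proposal jumps straight to "invoke PRF security under the handle $h_{d,j}$" without this step; you should make the decoupling explicit, and note that its argument also incurs the same subexponential cost and must be carried out $O((D+1)\cdot B)$ times. A second, smaller omission: your framing suggests the subexponential loss comes only from the PRF swap, but the per-prefix FE indistinguishability steps (and the $G_{sr},G_v$ PRG steps) are also invoked $2^\delta$ times at depth $\delta$ and hence also require subexponential security, as the paper records in \cref{claim:jllw-claim-depth-transition}.
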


\subsection{QPrO Key Reprogramming}

Arguing that the $\QPRO$ behaves as a random function on a random handle $h$ requires first removing the key $k$ that is associated with the handle $h$ by the random permutation. To help with this step, we show the following lemma.

\begin{lemma}\label{lem:qpro-keyswap}
    Let $F = \{f_k\}_k$ be a pseudorandom function and $\QPRO$ a pseudorandom oracle for $F$. Then for any QPT adversary $\cA$,
    \[\Big|\Pr[\cA^\QPRO(h) = 1 : h \gets \{0,1\}^\lambda] - \Pr[\cA^{\QPRO[h \to k]}(h) = 1 : h,k \gets \{0,1\}^\lambda]\Big| = \negl(\lambda), \]
    where $\QPRO[h \to k] = \QPRO$ except that on input $(\Eval,h,x)$ it outputs $f_k(x)$ instead of $f_{\pi^{-1}(h)}(x)$. That is, it answers PRF queries on handle $h$ using an independently sampled PRF key.

    If $F$ is subexponentially secure, then the probability of distinguishing is subexponential.
\end{lemma}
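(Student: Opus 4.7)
The plan is to introduce an intermediate hybrid oracle $\QPRO'$ in which responses to $\Eval(h, \cdot)$ queries are replaced by a uniformly random function, and then show $\QPRO \approx \QPRO' \approx \QPRO[h \to k]$. The step $\QPRO' \approx \QPRO[h \to k]$ is a direct reduction to the (subexponential) security of $F$: the key $k$ in $\QPRO[h \to k]$ is sampled independently of $\pi$ and $h$, so a reduction can sample $h$, lazily simulate $\pi$ using a post-quantum PRP (which is implied by $F$), answer all $\Gen$ and $\Eval(h', \cdot)$ queries with $h' \neq h$ perfectly from this simulation, and route $\Eval(h, \cdot)$ queries to its PRF challenge oracle.

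The more delicate step is $\QPRO \approx \QPRO'$, in which the effective PRF key $k^* \coloneqq \pi^{-1}(h)$ is correlated with what is revealed via $\Gen$. I would have the reduction simulate $\pi$ as a random permutation \emph{without} enforcing $\pi(k^*) = h$, which is unavoidable since $k^*$ is the hidden PRF challenge key. The reduction's simulated oracle then agrees with the true $\QPRO$ except at the input $\Gen(k^*)$, and symmetrically at any $\Gen(k')$ where the simulated $\pi$ happens to send $k'$ to $h$. Indistinguishability then follows from the quantum hardness of locating $k^*$ via queries to a random permutation: $k^*$ is a uniform marked element in a set of size $2^\lambda$, so by Grover-type lower bounds a $q$-query adversary identifies it with probability at most $O(q^2/2^\lambda)$. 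A quantum one-way-to-hiding argument turns this into a subexponential distinguishing bound, and combining with subexponential PRF security yields the claimed bound on the overall distinguishing advantage.

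The main obstacle will be formalizing the second step -- in particular, lazily simulating a quantum-accessible random permutation under a hidden constraint, and plugging the preimage-search bound into the O2H-style reduction. Both ingredients are standard in the post-quantum literature (e.g.\ via Zhandry's compressed permutation oracle techniques), so I expect this step to reduce to careful bookkeeping rather than requiring new technical ideas.
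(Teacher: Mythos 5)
Your overall template---route through an intermediate world where $\Eval(h,\cdot)$ is a random function, and reduce each step to PRF security---is the right shape, and your second step ($\QPRO' \approx \QPRO[h \to k]$) is fine and essentially matches what the paper does. The gap is in the first step, and I think it is genuine rather than mere bookkeeping.

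The problem is that the reduction you describe for $\QPRO \approx \QPRO'$ does not actually establish that claim. You set up a PRF challenger with hidden key $k^*$ and route $\Eval(h,\cdot)$ to the challenge oracle. Since the reduction does not know $k^*$, it samples $\pi$ independently and drops the constraint $\pi(k^*) = h$. But then look at what the reduction simulates when the challenge is $f_{k^*}$: it is an oracle where $\Gen$ is a uniform permutation, $\Eval(h,\cdot) = f_{k^*}$, and $k^*$ is independent of $\Gen^{-1}(h)$. That is precisely $\QPRO[h \to k^*]$, not $\QPRO$. So this reduction re-proves your \emph{second} step, and the distance you still need to cross is exactly $\QPRO \approx \QPRO[h \to k^*]$, which is the whole lemma. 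Your appeal to Grover is what is supposed to close that remaining gap---``the simulated oracle differs from the true one only at hidden points, and $k^*$ is a uniform marked element so a $q$-query adversary finds it with probability $O(q^2/2^\lambda)$''---but the Grover preimage-search bound applies when the marked element is leaked \emph{only} through the permutation oracle. Here the adversary also has quantum oracle access to $f_{k^*}$ via $\Eval(h,\cdot)$, which can in principle reveal $k^*$; ruling that out is itself a PRF-style argument, and formalizing it requires a reduction that simulates $\Gen$ without knowing $k^*$, which is the loop you started in. (Sanity check: if $f_k(x) \coloneqq k$ for all $x$, then $\Eval(h,0)$ hands the adversary $k^*$ directly and the ``Grover'' bound is false, so PRF security must be used \emph{before} any search bound on $k^*$ can be invoked.)

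The paper closes exactly this loop with a swap trick you will want to borrow: it samples a fresh shadow pair $(k', h' = \pi(k'))$ and moves in two steps. First it silently adds $\Gen(k') := h$ and points $\Eval(h',\cdot)$ to $f_k$; this is free by the oracle-indistinguishability lemma because $k'$ and $h'$ are fresh, independent of the adversary's view, and never leaked. Then it reroutes $\Gen(k) := h'$. In this second world the effective permutation underlying $\Gen$ is uniform and entirely independent of the PRF key $k$, so the PRF reduction can simulate $\Gen$ perfectly (using a PRP) with no missing constraint, and the probability of the adversary hitting the changed input $\Gen(k)$ reduces cleanly to PRF key-recovery. The net effect of the two steps is the target swap, with no circular dependence. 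So your plan needs this extra ``phantom key'' hybrid; without it, the Grover bound you quote has no foothold. One smaller remark: you mention ``Zhandry's compressed permutation oracle techniques,'' but the paper explicitly notes that efficient statistical simulation of a quantum-accessible random permutation is an open problem; the correct workaround (which you also mention) is to use a post-quantum PRP, and you should rely on that alone.
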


\begin{proof}
    We consider a sequence of three hybrids, defined as follows. In each, begin by sampling a uniformly random permutation $\pi : \{0,1\}^\secp \to \{0,1\}^\secp$, and $k,k' \gets \{0,1\}^\secp$. Define $h = \pi(k)$ and $h' = \pi(k')$. For all $k'' \notin \{k,k'\}$, define $\QPRO(\Gen,k'') = \pi(k'')$. For all $h'' \notin \{h,h'\}$, define $\QPRO(\Eval,h'',\cdot) = f_{\pi^{-1}(h'')}(\cdot)$. Define $\QPRO(\Eval,h,\cdot) = f_k(\cdot)$. The adversary is initialized with $h$, given access to the $\QPRO$ oracle and outputs a bit, where the behavior of $\QPRO$ on the remaining inputs differs in each hybrid, as follows.
    \begin{itemize}
        \item $\cH_0$: $\QPRO(\Gen,k) = h$, $\QPRO(\Gen,k') = h'$, $\QPRO(\Eval,h',\cdot) = f_{k'}(\cdot)$.
        \item $\cH_1$: $\QPRO(\Gen,k) = h$, $\QPRO(\Gen,k') = h$, $\QPRO(\Eval,h',\cdot) = f_k(\cdot)$.
        \item $\cH_2$: $\QPRO(\Gen,k) = h'$, $\QPRO(\Gen,k') = h$, $\QPRO(\Eval,h',\cdot) = f_k(\cdot)$.
    \end{itemize}
    Observe that $\cH_0$ is distributed exactly as the LHS of the lemma statement, while $\cH_2$ is distributed exactly as the RHS of the lemma statement, and thus it suffices to show indistinguishability between the hybrids.
    \begin{itemize}
        \item $\cH_0 \approx \cH_1$: By \cref{lemma:oracle-ind}, it suffices to show that the adversary can output a string in $\{k',h'\}$ with only negligible probability in $\cH_0$. This is straightforward to see due to the fact that $k'$ is sampled independently of the rest of the experiment (including the string $h$ that the adversary is initialized with) and $h' = \pi(k')$.
        \item $\cH_1 \approx \cH_2$: By \cref{lemma:oracle-ind}, it suffices to show that the adversary can output $k$ with only negligible probability in $\cH_2$. Note that $\cH_2$ can be simulated by a reduction given just oracle access to $f_k$ using a post-quantum psuedorandom permutation (which is implied by post-quantum PRFs). Thus, this follows from a standard claim that, given oracle access to a PRF $f_k$, an adversary can output $k$ with only negligible probability. 
    \end{itemize}

    Assuming subexponential security of the underlying PRF and PRP, and observing that the key/handle spaces are exponentially sized, it is clear that the hybrids are subexponentially close.
\end{proof}

\subsection{Proof of Security}

\begin{proof}[Proof of \cref{thm:JLLW}]
    Correctness was shown in \cite{JLLW23}. We show security against quantum adversaries here. We remark that the security of the construction can be tuned to any subexponential function by analogously tuning the subexponential parameters used in the subclaims of the proof.

    We will consider a main series of hybrid experiments $\Hyb_{i,\rand}$ which are preceded by two hybrids $\Hyb_{}$ and $\Hyb_{\mathsf{PRP}}$. We show that
    \[
        \Hyb_{\mathsf{real}} 
        \approx \Hyb_{\mathsf{KH}}
        \approx \Hyb_{\mathsf{PRP}}
        \approx \Hyb^{x}_{0,\rand} 
        \approx  \dots 
        \approx \Hyb^{x}_{D,\rand} 
        \approx \Sim.
    \]
    
    The two initial hybrids make the following changes from the prior hybrid.
    \begin{itemize}
        \item $\Hyb_{KH}$: Instead of generating the key-handle pairs $(k_{d,b}, h_{d,b})$ used for the obfuscation by sampling $k_{d,b}\gets \{0,1\}^\lambda$ and setting $h_{d,b} = \pi(k_{d,b})$, independently sample $k_{d,b}, h_{d,b} \gets \{0,1\}^\lambda$.

        \item $\Hyb_{\mathsf{PRP}}$: The random permutation $\pi$ is replaced by a \emph{pseudorandom} permutation $\pi_{k_\PRP}$. Note that $\pi_{k_\PRP}$ is not used for relating $(k_{d,b}, h_{d,b})$.
    \end{itemize}

    The intermediate hybrids $\Hyb_{\delta,\rand}$ are specified in \cref{fig:PROM-hybrid-true-random-state} and \cref{fig:PROM-hybrid-true-random}. At a high level, each $\Hyb_{\delta,\rand}$ makes the following differences from $\Hyb_{\mathsf{PRP}}$:
    \begin{itemize}
        \item \textbf{Depth $d < \delta$:} The ciphertext $\ct_{\chi}$ for input prefix $\chi \in \{0,1\}^d$ is in \textit{simulation} mode and uses truly random $r_\chi$ in its encryption. Furthermore, every handle $h_{d,b}$ at this depth does not have a corresponding PRF key (i.e. $k_{d,b}$ is non-existent/unused).
        \item \textbf{Depth $d \geq \delta$:} The ciphertext $\ct_{\chi}$ is in \textit{normal} mode and uses pseudorandom $r_{\chi}$ expanded from $s_{\chi\leq \delta}$. The only exception is $d = \delta$, where $r_{\chi}$ is truly random. Additionally, every handle $h_{d,b}$ at this depth corresponds to some PRF key $k_{d,b}$.
    \end{itemize}
    $\Sim$ is identical to $\Hyb_{D,\rand}$ except that it changes the final ciphertext $\ct_{x}$ corresponding to the full input $x\in\{0,1\}^D$ to an encryption of
    \[
        (\flag_x, x, \info_x) \coloneqq 
        \begin{cases}
            (\mathsf{normal}, x, (C, s_x))  &\text{in } \Hyb_{D,\rand}
            \\
            (\mathsf{sim}, x, C(x)) &\text{in } \Sim
        \end{cases}
    \]
    Each hybrid is expressed in terms of three oracles: $\Sim_1$ corresponds to the $\QPRO$ oracle before obfuscation, $\Sim_2$ corresponds to the obfuscation step, and $\Sim_3$ corresponds to the $\QPRO$ oracle after obfuscation.

    \begin{figure}[!hpt]
        \centering
        \begin{mdframed}
            \noindent\underline{Shared State:}
            \\
            \begin{tabular}{c|p{3in}}
                    $\pi_{k_\PRP}$ 
                    & The pseudorandom permutation used to map keys onto handles.
                \\\hline
                    $\cC$ 
                    & The circuit being obfuscated. This is used in $\Sim_2$ and $\Sim_3$.
                \\\hline
                    $\Handles \coloneqq \{h_{d,b} \gets \{0,1\}^\lambda\}_{\substack{0\leq d \leq D,\\1\leq b \leq B}}$ 
                    & The handles of the $\QPRO$ in the obfuscation. $h_{d,b}$ corresponds to level $d$ and block $b$. Generated uniformly at random.
                \\\hline
                    \begin{tabular}{@{}c@{}}
                    $\Keys \coloneqq \{k_{d,b}\gets \{0,1\}^\lambda\}_{\substack{\diff{\delta} \leq i \leq D\\ 1\leq b\leq B}}$  
                    \end{tabular}
                    & Keys of $H$ used for the obfuscation. Generated uniformly at random.
                \\\hline
                         $\{(\pk_d, \sk_d)\}_{0\leq d\leq D}$
                    & Public and secret keys for each level $d$. Generated as in \cref{construction:JLLW}.
                \\\hline
                    \begin{tabular}{@{}c@{}}
                        $F_{b,d}: \{0,1\}^D\rightarrow\{0,1\}^L$ 
                        \\ For $0\leq d \leq \difference{\delta}$ and $1\leq b \leq B$ 
                        \end{tabular}
                    & Random functions for the non-programmed portion of evaluation queries.
                \\\hline
                    $F_\sigma: \{0,1\}^{<\difference{\delta}} \times \{1,\dots,B\} \rightarrow \{0,1\}^\lambda$
                    & A random function used to generate one-time pads for the ciphertexts at depth $d<D$ and block $b \in \{1,\dots, B\}$.
                \\\hline
                    $F_r: \{0,1\}^{\leq \difference{\delta}} \rightarrow \{0,1\}^\lambda$
                    & A random function used to generate randomness $r_\chi$ for ciphertext $\ct_\chi$ for input prefix $\chi\in \{0,1\}^{\leq D}$.
                \\\hline
                    $F_s: \{0,1\}^{\diff{\delta}} \rightarrow \{0,1\}^\lambda$
                    &
                    Used to define $r_\chi$ and $s_\chi$ for $|\chi| = \difference{\delta}$.
                \\\hline
                    $\ct_\chi = \Enc(\pk_{|\chi|}, \flag_{\chi}, \info_\chi; r_\chi)$
                    &
                    The ciphertext corresponding to input prefix $\chi\in\{0,1\}^{\leq D}$. This is implicitly defined by the other parameters and is used in $\Sim_2$ and $\Sim_3$.
                \\\hline
                    $\flag_\chi \coloneqq \begin{cases}
                        \diff{\mathsf{sim}} & \text{if } |\chi|< \diff{\delta}
                        \\
                        \mathsf{normal} & \text{if } |\chi| \geq \diff{\delta}
                    \end{cases}$ 
                    & Flag used to specify whether the ciphertext $\ct_\chi$ is in simulation mode or normal mode.
                \\\hline
                    \begin{tabular}{@{}l@{}}
                        For $|\chi| = \diff{\delta}$:
                        \\
                        \qquad \diff{$r_\chi \coloneqq F_r(\chi)$}
                        \\
                        \qquad \diff{$s_\chi \coloneqq F_s(\chi)$}
                        \\
                        For $|\chi| > \diff{\delta}$:
                        \\
                        \qquad $s_{\chi\concat 0}\concat r_{\chi\concat 0} \concat s_{\chi\concat 1}\concat r_{\chi\concat 1} \coloneqq G_{sr}(s_\chi)$
                    \end{tabular}
                    &
                    $r_\chi$ is the randomness used for the encryption of each $\chi$ at depth $\geq \delta$. $s_\chi$ is a PRG seed used to generate downstream $r$ and $s$.
                \\\hline
                    $\info \coloneqq \begin{cases}
                        \diff{\{F_\sigma(\chi, b)\}_{1\leq b\leq B}} & \text{if } |\chi| < \diff{\delta}
                        \\
                        \left\{C, \{k_{d,b}\}_{\substack{|\chi|\leq d\leq D,\\ 1\leq b\leq B}}, s_\chi\right\} & \text{if } |\chi| \geq \diff{\delta}
                    \end{cases}$
                    &
                    Information used to evaluate the ciphertext $\ct_\chi$ under the FE. 
                \\\hline
                    \begin{tabular}{@{}l@{}}
                        $G_{sr}:\{0,1\}^\lambda \rightarrow \{0,1\}^{4\lambda}$
                        $G_v:\{0,1\}^\secpar \rightarrow \{0,1\}^L$
                    \end{tabular}
                    & PRGs used to generate the encryption randomness for each $\ct_\chi$ and for succinctly hardcoding the programmed behavior into the FE ciphertexts, respectively.
            \end{tabular}
            \end{mdframed}
        \caption{$\Hyb_{\delta,\rand}$: Shared state of algorithms $\Sim^{(\delta,\rand)}_1, \Sim^{(\delta,\rand)}_2, \Sim_3^{(\delta,\rand)}$. Differences from $\Hyb_{\mathsf{real}}$ and dependence on $\delta$ highlighted.}
        \label{fig:PROM-hybrid-true-random-state}
    \end{figure}

    \begin{figure}[!hpt]
        \centering
        \begin{mdframed}
            \noindent\underline{Initialization.}
            Sample the following according to their distributions as specified in \cref{fig:PROM-hybrid-true-random-state}: 
            \[
                \pi,\ \Handles,\ \Keys,\ \{F_{b,d}\}_{\substack{0\leq d \leq D\\1\leq b \leq B}},\ F_\sigma,\ F_r
            \]
        
            \noindent\underline{$\Sim^{\rand,\delta}_1$:} 
            \\
            On input $(\Gen, k)$:
            \begin{enumerate}
                \item Output $\pi(k)$.
            \end{enumerate}
            
            On input $(\Eval, h, x)$:
            \begin{enumerate}
                \item \diff{If $h = h_{d,b}$ for some $h_{d,b}\in \Handles$}, output $f_{\diff{k_{d,b}}}(x)$.
                \item Otherwise, compute $k\gets \pi^{-1}(h)$ and output $f_k(x)$.
            \end{enumerate}

            \vspace{1em}
            \noindent\underline{$\Sim^{\rand,\delta}_2$:} 
            \begin{enumerate}
                \item Generate $\{\pk_d, \sk_d\}_{d = 0,\dots,D}$ as specified in \cref{construction:JLLW}.
                
                \item Output $\widehat{C}^\bullet\left[\ct_\epsilon,\{\sk_d\}_{0 \leq d \leq D},\{h_{d,b}\}_{\substack{0 \leq d < D,\\ 1 \leq b \leq B}}\right]$, as defined in \cref{construction:JLLW}.
            \end{enumerate}

            \vspace{1em}
            \noindent\underline{$\Sim_3$:}
            \\
            On input $(\Gen, k)$:
            \begin{enumerate}
                \item Output $\pi(k)$.
            \end{enumerate}
            
            On input $(\Eval, h, x)$:
            \begin{enumerate}
                \item \diff{If $h = h_{d,b}$ for some $h_{d,b} \in \Handles$}:
                \begin{enumerate}
                    \item \diff{If $d < \delta$}:
                    \begin{enumerate}
                        \item If $x = \chi\concat 0^{D-d}$ for $\chi \in \{0,1\}^d$:
                        \begin{enumerate}
                            \item Output \diff{$G_v(F_\sigma(\chi, b)) \oplus \left[\ct_{\chi\concat 0} \concat \ct_{\chi\concat 1}\right]_b$}.
                        \end{enumerate}
                        \item Otherwise output \diff{$F_{d,b}(x)$}.
                    \end{enumerate}
                    \item Otherwise output $f_{\diff{k_{d,b}}}(x)$.
                \end{enumerate}
                \item Otherwise compute $k\gets \pi^{-1}(h)$ and output $f_k(x)$.
            \end{enumerate}
        \end{mdframed}
        \caption{$\Hyb_{\diff{\delta},\rand}$: Specification of algorithms $\Sim^{(\delta,\rand)}_1, \Sim^{(\delta,\rand)}_2, \Sim_3^{(\delta,\rand)}$. Differences from $\Hyb_{\mathsf{real}}$ and dependence on $\delta$ highlighted.}
        \label{fig:PROM-hybrid-true-random}
    \end{figure}

    We begin by showing that the first two transitions and the last transition are indistinguishable, since these transitions are easier to see.

    \begin{claim}    
        \[
            \Hyb_{\mathsf{real}} \approx \Hyb_{\mathsf{KH}}
        \]
    \end{claim}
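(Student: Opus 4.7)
The plan is to prove $\Hyb_{\mathsf{real}} \approx \Hyb_{\mathsf{KH}}$ via a standard hybrid argument over the $N := D \cdot B = \poly(\lambda)$ key-handle pairs $(k_{d,b}, h_{d,b})$ used by the JLLW obfuscator. I would fix an arbitrary ordering of the pairs and define $N+1$ intermediate hybrids $\Hyb^{(m)}$, where in $\Hyb^{(m)}$ the first $m$ pairs are sampled in the ``KH'' manner (sample both components independently and reprogram the QPrO so that $\Eval$ on $h_{d,b}$ returns $f_{k_{d,b}}$) and the remaining $N - m$ pairs are sampled in the real manner ($k_{d,b}$ uniform and $h_{d,b} = \pi(k_{d,b})$). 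By construction, $\Hyb^{(0)} = \Hyb_{\mathsf{real}}$ and $\Hyb^{(N)} = \Hyb_{\mathsf{KH}}$.

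The main step is to show $\Hyb^{(m-1)} \approx \Hyb^{(m)}$ for each $m$, which I would do by reducing to \Cref{lem:qpro-keyswap}. That lemma is essentially tailored to this single-pair swap: it asserts that a uniformly sampled handle paired with the standard $\QPRO$ is indistinguishable from a uniformly sampled handle paired with a $\QPRO$ reprogrammed on that handle with an independently sampled key. The reduction identifies the external handle $h^*$ of the lemma with the $m$-th key-handle pair, samples the other $N - 1$ pairs itself (those in the real block by drawing $k$ uniformly and issuing $\QPRO(\Gen, k)$ to its external oracle to obtain $h$; those in the KH block by drawing $k, h$ independently and retaining $k$ so it can later intercept $\Eval$ queries on $h$), and then honestly runs the obfuscator and forwards all of the distinguisher's queries. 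The crucial observation that makes the simulation go through is that the JLLW construction interacts with the $\QPRO$ only through $\Gen$ and $\Eval$ interfaces and never directly manipulates the underlying PRF keys, so the reduction does not need to know $\pi^{-1}(h^*)$.

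A triangle inequality across the $N = \poly(\lambda)$ transitions yields the claim; the subexponential version of the statement follows verbatim from the subexponential version of \Cref{lem:qpro-keyswap}. I do not anticipate any serious obstacle, since the entire argument reduces structurally to a lemma that was designed for exactly this purpose. The only bookkeeping required is the routing of QPrO queries by the reduction: $\Eval$ queries on the first $m - 1$ (KH-mode) handles must be answered by the reduction itself using its self-sampled keys, queries on the last $N - m$ (real-mode) handles and on $h^*$ must be forwarded to the external oracle, and arbitrary $\Gen$ or $\Eval$ queries from the adversary on other inputs are forwarded transparently. Collisions among the independently sampled KH-mode handles (and with $h^*$) occur only with probability $2^{-\Omega(\lambda)}$ and thus pose no issue.
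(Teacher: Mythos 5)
Your proposal is correct and takes essentially the same approach as the paper: both apply \Cref{lem:qpro-keyswap} once per key-handle pair via a standard sequential hybrid argument (the paper states this in a single sentence, invoking $\poly(\lambda)$ applications of the lemma). The additional bookkeeping you spell out about how the reduction routes $\Gen$ and $\Eval$ queries across the already-switched, not-yet-switched, and target handles is precisely what the paper's terse invocation of the lemma implicitly relies on.
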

    \begin{proof}
        The only difference between these hybrids is when either $\Sim_1^{0,\rand}$ or $\Sim_3^{0,\rand}$, corresponding to the $\QPRO$ oracle, receive an $\Eval$ query on $(h_{d,b}, x)$ for some $h_{d,b}\in \Handles$. In this case, $\Hyb_{0,\rand}$ uses an independently random $k_{d,b}$ to answer the query instead of $\pi^{-1}(h)$.
        Since $d$ ranges over $0,\dots, D$ and $b$ ranges over $1, \dots, B$, the claim follows by $(D+1)\cdot B = \poly(\lambda)$ applications of \cref{lem:qpro-keyswap}.
    \end{proof}

    \begin{claim}
        Assuming post-quantum pseudorandom permutations relative to $\cR$,
        \[
            \Hyb_{\mathsf{KH}} \approx \Hyb_{\mathsf{PRP}}
        \]
    \end{claim}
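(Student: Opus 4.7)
The plan is to proceed by a direct reduction to the (quantum) security of the pseudorandom permutation. Observe that $\Hyb_{\mathsf{KH}}$ and $\Hyb_{\mathsf{PRP}}$ are \emph{identical} experiments except that every invocation of the random permutation $\pi$ (or its inverse $\pi^{-1}$) is replaced by $\pi_{k_\PRP}$ (or $\pi_{k_\PRP}^{-1}$). These invocations occur only inside $\Sim_1$ and $\Sim_3$: a forward evaluation $\pi(k)$ is needed to answer a $(\Gen,k)$ query, and an inverse evaluation $\pi^{-1}(h)$ is needed to answer an $(\Eval,h,x)$ query whenever $h \notin \Handles$. Crucially, after the change introduced in $\Hyb_{\mathsf{KH}}$, the permutation $\pi$ is \emph{not} used anywhere to relate the key-handle pairs $\{(k_{d,b}, h_{d,b})\}$ -- these are sampled uniformly and independently.

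Given this, the reduction $\cB$ is natural. It receives oracle access to either a truly random permutation or $\pi_{k_\PRP}$, in both the forward and inverse direction, and internally samples everything else required by the experiment (the handles $\Handles$, keys $\Keys$, FE keys $\{(\pk_d,\sk_d)\}$, the random functions $F_{d,b}, F_\sigma, F_r, F_s$, the PRF $H$, the PRGs $G_{sr}, G_v$, and its interaction with $\cR$). Whenever the simulated experiment needs to compute $\pi(k)$ or $\pi^{-1}(h)$, $\cB$ forwards the query (in superposition, if necessary) to its PRP oracle. Any quantum polynomial-time adversary distinguishing $\Hyb_{\mathsf{KH}}$ from $\Hyb_{\mathsf{PRP}}$ therefore translates into an adversary distinguishing a random permutation from $\pi_{k_\PRP}$, with the same advantage up to the overhead of $\cB$'s efficient bookkeeping.

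The one subtle point, which is why the claim is stated against \emph{quantum} PRPs, is that the $\QPRO$ oracle interfaces $\Sim_1$ and $\Sim_3$ are queried in quantum superposition by the adversary, so $\cB$ in turn makes quantum (superposition) queries to its PRP oracle -- and needs both directions, since $\Eval$ answers require $\pi^{-1}$. This means the claim relies on \emph{strong} post-quantum PRPs (secure under quantum forward and inverse queries). As noted earlier in the excerpt, such PRPs follow from post-quantum PRFs by the construction of Zhandry, and a subexponentially secure variant suffices to preserve the subexponential security parameter threading through the other hybrids. Modulo confirming that all of $\cB$'s other operations commute correctly with quantum queries (they do, since each internal object is sampled once and reused classically), this completes the reduction; no further hybrid arguments are required. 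I expect the main (if minor) obstacle is simply verifying that $\pi$ is never touched at the positions in $\Handles$ in $\Hyb_{\mathsf{KH}}$, which is the reason the preceding transition $\Hyb_{\mathsf{real}} \approx \Hyb_{\mathsf{KH}}$ had to be carried out first.
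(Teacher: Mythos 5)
Your proposal is correct and matches the paper's approach; the paper simply states that the transition follows immediately from post-quantum PRP security, and your proof sketch is the natural fleshing-out of that reduction, including the correct observation that the preceding $\Hyb_{\mathsf{real}} \approx \Hyb_{\mathsf{KH}}$ step is what makes $\pi$ independent of the $\Handles$ key-handle pairs and hence simulatable via the PRP oracle.
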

    \begin{proof}
        This follows immediately from the post-quantum security of the pseudorandom permutation.
    \end{proof}

    \begin{claim}
        If $\FE$ is $2^{D}$-adaptively secure relative to $\cR$ then
        \[
            \Hyb_{D,\rand} \approx \Sim
        \]
    \end{claim}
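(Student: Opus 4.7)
The plan is a sub-hybrid argument over the inputs $x \in \{0,1\}^D$, switching each leaf-level ciphertext $\ct_x$ from \emph{normal} to \emph{simulation} mode one at a time. Fix an enumeration $x_1, \ldots, x_{2^D}$ of $\{0,1\}^D$, and for $i \in \{0, 1, \ldots, 2^D\}$, define $\Hyb_{D,\rand}^{(i)}$ identically to $\Hyb_{D,\rand}$ except that for each $j \leq i$, the leaf ciphertext is generated as $\ct_{x_j} = \Enc(\pk_D, (\simsf, x_j, C(x_j)); r_{x_j})$ in place of $\Enc(\pk_D, (\normal, x_j, (C, s_{x_j})); r_{x_j})$. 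By construction, $\Hyb_{D,\rand}^{(0)} \equiv \Hyb_{D,\rand}$ and $\Hyb_{D,\rand}^{(2^D)} \equiv \Sim$, so it suffices to bound the distinguishing advantage between each consecutive pair.

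The heart of the proof is a reduction from distinguishing $\Hyb_{D,\rand}^{(i-1)}$ and $\Hyb_{D,\rand}^{(i)}$ to the adaptive security of the FE scheme at level $D$. The reducer commits to $\Eval$ as its FE function, receives $(\pk_D, \sk_D)$ from the challenger, and samples all remaining shared state honestly (the other key pairs $\{(\pk_d, \sk_d)\}_{d<D}$, the PRP $\pi_{k_\PRP}$, the handles and PRF keys, and the random functions $F_r, F_s, F_\sigma, \{F_{d,b}\}$). After running $\cA_1$ with simulated access to $\Sim_1^{(D,\rand)}$ (which the reducer can implement since $\Sim_1^{(D,\rand)}$ does not use $\sk_D$ or the challenge ciphertext) to obtain $C$, the reducer computes $s_{x_i} \coloneqq F_s(x_i)$ and submits the challenge pair
\[
    m_0 \coloneqq (\normal, x_i, (C, s_{x_i})),
    \qquad
    m_1 \coloneqq (\simsf, x_i, C(x_i))
\]
to the FE challenger. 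Since $\Eval(m_0) = \Eval(m_1) = C(x_i)$, the admissibility condition of the FE game is satisfied. The challenger returns $\ct_{x_i}^\star$, which the reducer plants as $\ct_{x_i}$. All other leaf ciphertexts $\ct_{x_j}$ are generated by the reducer directly, using plaintext $(\simsf, x_j, C(x_j))$ for $j < i$ and $(\normal, x_j, (C, s_{x_j}))$ for $j > i$. The reducer then runs $\Sim_2^{(D,\rand)}$ using the embedded $\sk_D$ to form the obfuscated program and answers $\cA_2$'s quantum-superposition queries to $\Sim_3^{(D,\rand)}$ by implementing it as a reversible circuit whose hard-coded truth tables for the random functions $F_\sigma, F_{d,b}, \ldots$ are of size at most $2^D \cdot \poly(\secp)$. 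Leaf ciphertexts required during query answering are either the planted challenge $\ct_{x_i}^\star$ or are recomputable by the reducer from its stored state.

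Summing over the $2^D$ transitions yields a total distinguishing advantage of at most $2^D$ times the FE distinguishing advantage achieved by this single-query reducer. The main obstacle, and exactly the reason the claim hypothesizes $2^D$-adaptive security, is that the reducer inherently runs in time $\poly(2^D)$: quantum-superposition simulation of $\Sim_3^{(D,\rand)}$ requires implementing $F_\sigma$ and $\{F_{d,b}\}$ as efficient unitaries on all $D$-bit inputs, and computing all non-challenge leaf ciphertexts. The assumption that FE remains $2^D$-adaptively secure relative to $\cR$ guarantees that any such $\poly(2^D)$-time reducer achieves advantage at most $2^{-D} \cdot \negl(\secp)$, so by the triangle inequality the overall distinguishing advantage between $\Hyb_{D,\rand}$ and $\Sim$ is bounded by $2^D \cdot 2^{-D} \cdot \negl(\secp) = \negl(\secp)$, as required.
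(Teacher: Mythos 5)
Your proof is correct and takes essentially the same approach as the paper: a sub-hybrid argument over all $2^D$ inputs, switching each leaf ciphertext $\ct_x$ from normal to simulation mode one at a time, with each transition reducing to the adaptive security of the 1-key FE scheme via the observation that $\Eval$ produces $C(x)$ on both plaintexts, so admissibility holds. The paper's writeup is terser, but the reduction strategy, the role of true randomness for leaf-level encryption, and the need for $2^D$-adaptive security to absorb both the reducer's $\poly(2^D)$ runtime and the $2^D$-fold hybrid loss are all the same as what you describe.
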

    \begin{proof}
        The only difference between these hybrids is the plaintext encrypted under $\ct_{x}$ for \emph{each} full input $x\in \{0,1\}^D$. Specifically, it is an encryption of
        \[
            (\flag_x, x, \info_x) \coloneqq 
            \begin{cases}
                (\mathsf{normal}, x, (C, s_x))  &\text{in } \Hyb_{D,\rand}
                \\
                (\mathsf{sim}, x, C(x)) &\text{in } \Sim
            \end{cases}
        \]
        In both hybrids, these ciphertexts are encrypted using true randomness under the public key $\pk_D$. The secret key $\sk_D$ is for the function $\Eval$. Observe that the result of $\Eval$ is the same for both plaintexts:
        \begin{align*}
            \Eval(\mathsf{normal}, x, (C,s_x)) 
            &= \Eval_{\mathsf{normal}}(x, (C,x))
            \\
            &= C(x)
            \\
            &= \Eval_{\mathsf{sim}}(x, C(x)) = \Eval(\mathsf{sim}, x C(x))
        \end{align*}

        Since there are $2^{D}$ inputs to switch the ciphertexts for, the claim follows from $2^{D}$ hybrids each reducing to the $2^D$-query adaptive security of 1-key FE.
    \end{proof}

    Next, we move on to the main technical part of the proof -- showing that the intermediate hybrids for depths $\delta$ and $\delta+1$ are indistinguishable.

    \begin{claim}\label{claim:jllw-claim-depth-transition}
        If $f$, $G_v$, and FE are $2^{|\delta|}$-secure relative to $\cR$, then
        \[
            \Hyb_{\delta,\rand} \approx \Hyb_{\delta+1,\rand}
        \]
    \end{claim}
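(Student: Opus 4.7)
The plan is to go from $\Hyb_{\delta,\rand}$ to $\Hyb_{\delta+1,\rand}$ by processing each input prefix $\chi \in \{0,1\}^\delta$ at the threshold depth one prefix at a time. For a given $\chi$, the ciphertext $\ct_\chi$ under $\pk_\delta$ must move from $\flag_\chi = \normal$ (with plaintext $(C, \{k_{i,j}\}_{\delta\leq i<D,\,1\leq j\leq B}, s_\chi)$ for uniformly sampled $s_\chi = F_s(\chi)$) to $\flag_\chi = \simsf$ (with plaintext $\{\sigma_{\chi,j}\}_{1\leq j\leq B}$). Simultaneously, $\Sim_3$'s answers on handles $h_{\delta,b}$ must shift from being computed via $f_{k_{\delta,b}}$ to being answered via $G_v(F_\sigma(\chi,b)) \oplus [\ct_{\chi\|0}\|\ct_{\chi\|1}]_b$ on inputs of the form $\chi\|0^{D-\delta}$ and via a freshly sampled random function $F_{\delta,b}$ on all other inputs. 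Because $2^\delta$ prefixes must be rewritten and the reduction cannot exploit classical laziness, every primitive invocation must survive a $2^\delta$-factor union bound, which is exactly why the claim assumes $2^\delta$-adaptive security of $\FE$ and subexponential security of $f$, $G_v$, and $G_{sr}$.

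For each fixed $\chi$, I would introduce sub-hybrids $\SubHyb_{\chi,\beta}$ for $\beta \in \{0,1,\dots,B\}$ that mirror the classical block-by-block JLLW transition. In $\SubHyb_{\chi,\beta}$ the ciphertext $\ct_\chi$ carries $\flag_\chi = \hybsf$ with block index $\beta$: the first $\beta-1$ child one-time-pad blocks are generated as $G_v(\sigma_{\chi,j})$ from fresh seeds, the $\beta$-th block equals a hardcoded string $w_\chi$, and the remaining blocks continue to call $f_{k_{\delta,j}}$. The transition $\SubHyb_{\chi,\beta}\to\SubHyb_{\chi,\beta+1}$ decomposes into three invocations: (i) adaptive FE security under $\pk_\delta$ rewrites the plaintext of $\ct_\chi$, which is sound because $\sk_\delta$'s functionality $\Expand_\delta[\pk_{\delta+1}]$ produces the same output on both plaintexts by definition of $\Expand_{d,\hybsf}$; (ii) quantum-query PRF security on $k_{\delta,\beta}$ replaces $f_{k_{\delta,\beta}}(\chi\|0^{D-\delta})$ inside $\Sim_3$ by a uniformly random string, which we identify with $w_\chi$; and (iii) $G_v$ security replaces $w_\chi$ with $G_v(\sigma_{\chi,\beta})$ for a fresh seed, after which a second FE step folds $\sigma_{\chi,\beta}$ into the $\hybsf$-plaintext. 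After $\SubHyb_{\chi,B}$ is reached, one more adaptive FE step switches $\flag_\chi$ from $\hybsf$ to $\simsf$, erasing the reference to $C$ and to the keys $k_{\delta,b}$. After all prefixes are processed, a last global PRF step removes $k_{\delta,b}$ entirely from $\Sim_3$ by rerouting generic inputs through $F_{\delta,b}$, which matches $\Hyb_{\delta+1,\rand}$.

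The main obstacle lies in step (ii). In the classical JLLW proof the reduction lazily samples the PRF oracle and can adaptively reprogram a single queried point $\chi\|0^{D-\delta}$ after seeing the adversary's query history. In the post-quantum setting $\Sim_3$ serves the PRF on handle $h_{\delta,\beta}$ in quantum superposition, so adaptive on-the-fly reprogramming is not available. I would instead bake the target input $\chi\|0^{D-\delta}$ into the reduction in advance and invoke a punctured-PRF style argument in the spirit of~\cite{FOCS:BitVai15,C:AnaJai15}: indistinguishability between $f_{k_{\delta,\beta}}$ and $f_{k_{\delta,\beta}}$ punctured at $\chi\|0^{D-\delta}$ (with that point set to a uniform string) suffices, and because the puncturing point is fixed before the adversary sees the oracle, the reduction can simulate all quantum queries. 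This incurs the full subexponential loss $2^{-\lambda^\epsilon}$ of the PRF per $(\chi,\beta)$ pair; a union bound over $2^\delta \cdot B$ invocations, together with analogous subexponential accountings for FE and $G_v$, yields distinguishing advantage $2^\delta\cdot\poly(\lambda)\cdot 2^{-\lambda^\epsilon} = \negl(\lambda)$ as long as $\delta = o(\lambda^\epsilon)$, which is guaranteed since $D = \poly(\lambda)$. This quantum-simulation bottleneck is the exact reason the statement of \cref{claim:jllw-claim-depth-transition} requires $2^{|\delta|}$-security rather than the polynomial security sufficient in the classical JLLW analysis.
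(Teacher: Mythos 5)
Your proposal matches the paper's proof in both structure and substance: a lexicographic hybrid over $\chi \in \{0,1\}^\delta$, a block-by-block sub-hybrid for each $\chi$ using the $\hybsf$ flag with a hardcoded $w_{\chi,\beta}$, and FE/PRF/PRG invocations in the same roles, with the exponential $2^\delta$ union bound necessitating subexponential security of all primitives exactly because the quantum oracle cannot be lazily/adaptively reprogrammed.

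There are, however, a few places where your presentation diverges from the paper's in ways worth noting. First, the paper explicitly inserts a ``one-time-pad swap'' hybrid (its $\Hyb_{\delta,\chi,\beta,2} = \Hyb_{\delta,\chi,\beta,3}$): after the PRF value is replaced by truly random $F_{\delta,\beta}(\chi\|0^{D-\delta})$, it moves the XOR between the hardcoded $w_{\chi,\beta}$ and the $\Sim_3$ response to the other side, a \emph{perfect} information-theoretic step; your write-up compresses this into ``which we identify with $w_\chi$'' without calling out that it is the one-time-pad symmetry being invoked. Second, you frame the PRF step as a punctured-PRF argument at a statically fixed point; the paper only cites plain PRF security in its corresponding sub-claim. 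Your framing is arguably the cleaner way to say precisely \emph{what} is being invoked (the target input $\chi\|0^{D-\delta}$ is known to the reduction in advance, so it can give the adversary consistent superposition access while only the challenge value changes), and the paper's statement is terser; the underlying idea -- fix the target non-adaptively and eat the $2^\delta$ loss -- is the same. Finally, you also mention a ``last global PRF step'' to reroute all non-aligned inputs of $h_{\delta,b}$ through $F_{\delta,b}$ after all $\chi$ are processed; the paper's hybrid sequence doesn't spell this out as a separate step. None of these is a gap in your reasoning -- they are presentational choices -- but the missing explicit one-time-pad swap is the only place where a reader might stumble, so it is worth making that perfect-equivalence step visible.

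Also a small bookkeeping note: you list $G_{sr}$ among the primitives whose $2^\delta$-security is needed; the claim as stated names only $f$, $G_v$, and FE, with $G_{sr}$ security entering at the separate $\Hyb_{\delta,\chi} \approx \Hyb_{\delta,\chi,sr}$ transition (which you do use, but implicitly).
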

    \begin{proof}
        To transition between $\delta$ and $\delta+1$, we perform a hybrid argument over every input prefix $\chi\in \{0,1\}^{\delta}$ where we modify the corresponding intermediate ciphertext $\ct_{\chi}$ in a block-by-block manner. We will transition along the following sequence of hybrids in lexicographical order over $\chi \in \{0,1\}^\delta$.
        \begin{itemize}
            \item $\Hyb_{\delta, \chi,\rand}$ modifies how depth $\delta$ behaves on message prefixes $<\chi$. Specifically, $\chi' \in \{0,1\}^\delta$ at depth $\delta$ is treated as in $\Hyb_{\delta,\rand}$ if $\chi' < \chi$ and is treated as in $\Hyb_{\delta+1,\rand}$ if $\chi' \geq \chi$. This affects the following: the plaintext of $\ct_{\chi'}$, $\QPRO$ queries on $(\Eval, h_{\delta,\beta}, \chi'\concat 0^{D-|\chi'|})$, the encryption randomness $r_{\chi'}$, and the PRG seed $s_{\chi'}$.
            
            \item $\Hyb_{\delta, \chi,sr}$ is the same as $\Hyb_{\delta, \chi+1,\rand}$,\footnote{We emphasize the $\chi+1$ here.} except that the PRG seed $s_\chi$ and encryption randomness $r_\chi$ are generated as uniformly random.
        \end{itemize}

        The main step is to show that
        \begin{align}
            \Hyb_{\delta, \chi} 
            &\approx_{2^{-\ell}\negl(\lambda)}
            \Hyb_{\delta, \chi, sr} 
            \label{eq:jllw-depth-sr}
            \\
            &\approx_{2^{-\ell}\negl(\lambda)}
            \Hyb_{\delta, \chi+1}
            \label{eq:jllw-depth+1}
        \end{align}
        Invoking this $2^{|\delta|}$ times to cover each $\chi\in\{0,1\}^\delta$ gives the claim.

        \Cref{eq:jllw-depth-sr} follows immediately from the $2^{|\delta|}$-security of $G_{sr}$.
        To show \cref{eq:jllw-depth+1}, we iterate across the following hybrids over each block $\beta = 1$ to $\beta = B$.
        \begin{itemize}
            \item $\Hyb_{\delta, \chi, \beta, 1}$: Change $\ct_{\chi} = \Enc_{\pk_{\delta}}(\flag_\chi, \info_\chi; r_\chi)$ to hardcode the $\beta$'th block of $\ct_{\chi\concat 0}$ and $\ct_{\chi\concat 1}$, instead of computing them on the fly. 
            Specifically, modify the plaintext to
            \begin{align*}
                \flag_\chi &\coloneqq \mathsf{hyb}
                \\
                \info_\chi &\coloneqq \left(C,\  
                    \{k_{d,b}\}_{\substack{\delta<d<D\\ 1\leq j \leq B}},\  
                    s_\chi,\  
                    \diff{\beta},\  
                    \{F_{\sigma}(\chi, b)\}_{1\leq b \leq \diff{\beta}},\ 
                    \diff{w_{\chi,\beta}},\ 
                    \{k_{\delta, b}\}_{\diff{\beta} < j \leq B}\} \right)
            \end{align*}
            where
            \[
                w_{\chi,\beta} \coloneqq \left[\ct_{\chi\concat 0}\concat \ct_{\chi\concat 1}\right]_{\beta} \oplus f_{k_{\delta,\beta}}(\chi\concat 0^{D-\delta})
            \]
            Note that for $\beta > 1$, this recycles the space used for $w_{\chi,\beta-1} = G_v(F_\sigma(\chi, \beta-1))$ by directly storing the much-shorter seed $F_\sigma(\chi, \beta-1)$ in the other parts of the ciphertext.
            
            \item $\Hyb_{\delta, \chi, \beta, 2}$: Replace $f_{k_{\delta,\beta}}(\chi\concat 0^{D-\delta})$ by $F_{\delta, \beta}(\chi \concat 0^{D-\delta})$. Note that this modifies both the hardcoded $w_{\chi,\beta}$ and the reply to $\Sim_3^{\delta,\rand}(\Eval, h_{\delta,\beta}, \chi\concat 0^{D-\delta})$.
            
`            \item $\Hyb_{\delta, \chi, \beta, 3}$ Swap the role of $\QPRO$ queries on $(\Eval, $ with the role of $w_{\chi}$. Specifically, set
            \[
                w_{\chi} \coloneqq F_{\delta, \beta}(\chi \concat 0^{D-\delta})
            \]
            and reply to $\Sim_3^{\delta,\rand}(\Eval, h_{\delta,\beta}, \chi\concat 0^{D-\delta})$ with
            \[
                \left[\ct_{\chi\concat 0}\concat \ct_{\chi\concat 1}\right]_{\beta} \oplus F_{\delta,\beta}(\chi\concat 0^{D-\delta})
            \]
            
            \item $\Hyb_{\delta, \chi, \beta, 4}$: Replace $F_{\delta,\beta}(\chi\concat 0^{D-\delta})$ by $G_v(F_\sigma(\chi, \beta)$. Note that this modifies both the hardcoded $w_{\chi,\beta}$ and the reply to $\Sim_3^{\delta,\rand}(\Eval, h_{\delta,\beta}, \chi\concat 0^{D-\delta})$.
        \end{itemize}

        Using these hybrids, we show \cref{eq:jllw-depth+1} via the transitions
        \begin{align*}
            \Hyb_{\delta, \chi, sr} 
            &\approx_{2^{-\ell}\negl(\lambda)} \Hyb_{\delta, \chi, \diff{1, 1}}
            \approx_{2^{-\ell}\negl(\lambda)} \dots \approx_{2^{-\ell}\negl(\lambda)} \Hyb_{\delta, \chi, 1, \diff{5}}
            \\
            &\dots
            \\
            &\approx_{2^{-\ell}\negl(\lambda)} \Hyb_{\delta, \chi, \diff{B}, 1}
            \approx_{2^{-\ell}\negl(\lambda)} \dots \approx_{2^{-\ell}\negl(\lambda)} \Hyb_{\delta, \chi, B, 5}
            \\
            &\approx_{2^{-\ell}\negl(\lambda)} \Hyb_{\delta, \chi+1}
        \end{align*}

        \begin{claim}[Subclaim of \cref{claim:jllw-claim-depth-transition}]
            If $\FE$ is $2^{|\delta|}$-subexponentially adaptively secure, then
            \[
                \Hyb_{\delta, \chi, sr}
                \approx_{2^{-\ell}\negl(\secpar)}
                \Hyb_{\delta, \chi, 1, 1}
            \]
        \end{claim}
        \begin{proof}
            The only difference between these two hybrids is $\ct_\chi$. Specifically, it is an encryption of $(\flag_\chi, \chi, \info_\chi)$ where\footnote{Dependence on $\beta = 1$ highlighted.} 
            \begin{align*}
                \flag_\chi &\coloneqq \begin{cases}
                    \mathsf{normal} & \text{in } \Hyb_{\delta, \chi, sr}
                    \\
                    \mathsf{hyb} & \text{in } \Hyb_{\delta, \chi, 1, 1}
                \end{cases}
                \\
                \info_\chi &\coloneqq \begin{cases}
                    \left(C, \{k_{d,b}\}_{\substack{|\chi|\leq d\leq D,\\ 1\leq b\leq B}}, s_\chi\right) & \text{in } \Hyb_{\delta, \chi, sr}
                    \\
                    \left(C,\  
                    \{k_{d,b}\}_{\substack{\delta<d<D\\ 1\leq j \leq B}},\  
                    s_\chi,\  
                    \diff{1},\  
                    \{F_{\sigma}(\chi, b)\}_{1\leq b \leq \diff{1}},\ 
                    w_{\chi,\diff{1}},\ 
                    \{k_{\delta, b}\}_{\diff{1} < j \leq B}\} \right) & \text{in } \Hyb_{\delta,\chi, 1, 1}
                \end{cases}
            \end{align*}
            $\sk_\delta$ is a functional encryption key for $\Expand_\delta$. Observe that for both settings of $(\flag_\chi, \chi, \info_\chi)$ described above,
            \[
                \Expand_\delta(\flag_\chi, \chi, \info_\chi) 
                = 
                \bigg(\ct_{\chi\concat 0} \concat \ct_{\chi\concat 1}\bigg) \oplus \bigg(f_{k_{\delta,1}}(\chi\concat 0^{D-\delta}) \concat \dots \concat f_{k_{\delta,1}}(\chi\concat 0^{D-\delta}) \bigg)
            \]
            To see this, recall that in hybrid mode $\Expand_\delta$ computes all blocks the same, except for block $\beta=1$ in this case. In that position, it outputs the precomputed $w_{\chi,1}$, which matches the normal mode computation for that block.

            Therefore the claim follows from the $2^{-\ell}$-subexponential adaptive security of $\FE$.
        \end{proof}

        \begin{claim}[Subclaim of \cref{claim:jllw-claim-depth-transition}]
            If $f$ and $G_v$ are $2^{|\delta|}$-subexponentially secure then for all $\beta\in [1,B]$,
            \[
                \Hyb_{\delta, \chi, \beta, 1}
                \approx_{2^{-\ell}\negl(\secpar)} \Hyb_{\delta, \chi, \beta, 2}
                = \Hyb_{\delta, \chi, \beta, 3}
                \approx_{2^{-\ell}\negl(\secpar)} \Hyb_{\delta, \chi, \beta, 4}
            \]
        \end{claim}
        \begin{proof}
            The first transition follows from the security of $f$ as a pseudorandom function. The second follows from the perfect secrecy of the one-time pad. The third follows from the security of $G_v$ as a pseudorandom generator.
        \end{proof}

        \begin{claim}[Subclaim of \cref{claim:jllw-claim-depth-transition}]
            If $\FE$ is $2^{|\delta|}$-subexponentially adaptively secure, then for all $\beta\in [1,B]$,
            \[
                \Hyb_{\delta, \chi, \beta, 5}
                \approx_{2^{-\ell}\negl(\secpar)}
                \Hyb_{\delta, \chi, \beta+1, 1}
            \]
        \end{claim}
        \begin{proof}
            The only difference between these two hybrids is $\ct_\chi$. Specifically, it is an encryption of $(\flag_\chi, \chi, \info_\chi)$ where\footnote{Dependence on $\beta = 1$ highlighted.} 
            \begin{align*}
                \flag_\chi &\coloneqq \mathsf{hyb}
                \\
                \info_\chi &\coloneqq \begin{cases}
                    \left(C,\  
                    \{k_{d,b}\}_{\substack{\delta<d<D\\ 1\leq j \leq B}},\  
                    s_\chi,\  
                    \diff{\beta},\  
                    \{F_{\sigma}(\chi, b)\}_{1\leq b \leq \diff{\beta}},\ 
                    \diff{w_{\chi,\beta}},\ 
                    \{k_{\delta, b}\}_{\diff{\beta} < j \leq B}\} \right) & \text{in } \Hyb_{\delta, \chi, \beta, 5}
                    \\
                    \left(C,\  
                    \{k_{d,b}\}_{\substack{\delta<d<D\\ 1\leq j \leq B}},\  
                    s_\chi,\  
                    \diff{\beta+1},\  
                    \{F_{\sigma}(\chi, b)\}_{1\leq b \leq \diff{\beta+1}},\ 
                    \diff{w_{\chi,\beta+1}},\ 
                    \{k_{\delta, b}\}_{\diff{\beta+1} < j \leq B}\} \right) & \text{in } \Hyb_{\delta,\chi, \beta+1, 1}
                \end{cases}
            \end{align*}
            $\sk_\delta$ is a functional encryption key for $\Expand_\delta$. Observe that for both settings of $(\flag_\chi, \chi, \info_\chi)$ described above,
            \[
                \Expand_\delta(\flag_\chi, \chi, \info_\chi) 
                = 
                \bigg(\ct_{\chi\concat 0} \concat \ct_{\chi\concat 1}\bigg) \oplus \bigg(f_{k_{\delta,1}}(\chi\concat 0^{D-\delta}) \concat \dots \concat f_{k_{\delta,1}}(\chi\concat 0^{D-\delta}) \bigg)
            \] 
            To see this, observe that the only difference in evaluation comes from evaluating blocks $\beta$ and $\beta+1$. In $\Hyb_{\delta, \chi, \beta+1,1}$, block $\beta+1$ is output as the hard-coded $w_{\chi,\beta+1}$, which is pre-computed to be the same as the evaluation of block $\beta+1$ in $\Hyb_{\delta, \chi, \beta+1,1}$. Similarly, in $\Hyb_{\delta, \chi, \beta,5}$, block $\beta$ is output as the hard-coded $w_{\chi, \beta}$, which is pre-computed to be the same as the evaluation of block $\beta$ in $\Hyb_{\delta, \chi, \beta+1,1}$.

            Therefore the claim follows from the $2^{-\ell}$-subexponential adaptive security of $\FE$.
        \end{proof}

        \begin{claim}[Subclaim of \cref{claim:jllw-claim-depth-transition}]
            If $\FE$ is $2^{|\delta|}$-subexponentially adaptively secure, then
            \[
                \Hyb_{\delta, \chi, B, 5}
                \approx_{2^{-\ell}\negl(\secpar)}
                \Hyb_{\delta, \chi+1}
            \]
        \end{claim}
        \begin{proof}
            The only difference between these hybrids is the construction of $\ct_{\chi}$ using a hybrid flag or a sim flag. Specifically, it is an encryption of $(\flag_{\chi}, \chi, \info_\chi)$ where
            \begin{align*}
                \flag_\chi &\coloneqq \begin{cases}
                    \mathsf{hyb} & \text{in } \Hyb_{\delta, \chi, B, 5}
                    \\
                    \mathsf{sim} & \text{in } \Hyb_{\delta, \chi+1}
                \end{cases}
                \\
                \info_\chi &\coloneqq \begin{cases}
                    \left(C,\  
                    \{k_{d,b}\}_{\substack{\delta<d<D\\ 1\leq j \leq B}},\  
                    s_\chi,\  
                    B,\  
                    \{F_{\sigma}(\chi, b)\}_{1\leq b \leq B},\ 
                    w_{\chi,B}\} \right) & \text{in } \Hyb_{\delta, \chi, B, 5}
                    \\
                    \{F_\sigma(\chi, b)\}_{1\leq b\leq B} & \text{in } \Hyb_{\delta, \chi+1}
                \end{cases}
            \end{align*}
            $\sk_{\delta}$ is a functional encryption key for the function $\Expand_\delta$. So, to reduce to the security of $\FE$ we only need show that $\Eval$ behaves the same on both settings of $(\flag_{\chi}, \chi, \info_\chi)$. Observe that
            \begin{align*}
                &\Expand_\delta\left(\mathsf{hyb}, \chi, \left(C,\  
                    \{k_{d,b}\}_{\substack{\delta<d<D\\ 1\leq j \leq B}},\  
                    s_\chi,\  
                    B,\  
                    \{F_{\sigma}(\chi, b)\}_{1\leq b \leq B},\ 
                    w_{\chi,B}\} \right)\right)
                \\
                &= G_v(F_\sigma(\chi, 1)) \concat \dots \concat G_{v}(F_\sigma(\chi,B))
                \\
                &= \Expand_\delta(\mathsf{sim}, \chi, \{F_\sigma(\chi, b)\}_{1\leq b\leq B})
            \end{align*}
            Thus the claim follows from the $2^{-\delta}$-subexponential adaptive security of $\FE$.
        \end{proof}     
    \end{proof}
\end{proof}

\section{NIZK Arguments of Knowledge for QMA}
\label{sec:nizk-qma}

\subsection{Definition}

\begin{definition}[NIZKAoK (PoK) for $\QMA$ in the CRS Model]
\label{def:nizkpok-qma}
Let $\cL = (\cL_{yes}, \cL_{no})$ be a $\QMA$ promise problem with relation $(Q,\alpha(\cdot),\beta(\cdot))$, where $\alpha(n) = 1-\negl(n)$ and $\beta(n) = \negl(n)$. Let $\gamma = \gamma(n)$ be a ``witness quality'' parameter, where $\gamma(n) \in [1/\poly(n),1-1/\poly(n)]$. $\Pi = (\Setup, \sP, \sV)$ is a non-interactive, zero-knowledge argument (resp. proof) of knowledge for $\cL$ in the CRS model if it has the following syntax and properties.

\noindent \textbf{Syntax.}
\begin{itemize}
    \item $\crs \gets \Setup(1^\lambda)$: The classical polynomial-size circuit $\Setup$ on input $1^\lambda$ outputs a common reference string $\crs$.
    \item $\pi \gets \sP(1^\lambda,\crs, x, \ket{\psi})$: The quantum polynomial-size circuit $\sP$ on input the security parameter $1^\secp$, a common random string $\crs$, an instance $x \in \{0,1\}^n$ for $n \in [\secp,\poly(\secp)]$, and a witness $\ket{\psi}$, outputs a (potentially quantum) proof $\pi$.
    \item $\sV(1^\lambda, \gamma, \crs, x, \pi) \in \zo$: The quantum polynomial-size circuit $\sV$ on input the security parameter $1^\secp$, the quality parameter $\gamma$, a common reference string $\crs$, an instance $x$, and a proof $\pi$, outputs a bit.
\end{itemize}

\noindent \textbf{Properties.}
\begin{itemize}
    \item {\bf Uniform Random String.}
    $\Pi$ satisfies the uniform random string property of \cref{def:nizk-np}.

    \item {\bf Completeness.}
    There exists a negligible function $\negl(\cdot)$ such that for every $\gamma(\cdot)$, every $\lambda \in \bbN$, every $n \in [\secp,\poly(\secp)]$, and every $(x, \ket{\psi}) \in R_{Q,\alpha(n)}$,
    \begin{equation*}
        \Pr_{\substack{\crs \gets \Setup(1^\lambda) \\ \pi \gets \sP(1^\secp,\crs, x, \ket{\psi})}}[\sV(1^\secp, \gamma(n), \crs, x, \pi) = 1] = 1 - \negl(\lambda).
    \end{equation*}

    \item {\bf Computational (resp. Statistical) Soundness.}
    There exists a negligible function $\negl(\cdot)$ such that for every $\gamma(\cdot)$, every quantum polynomial-size (resp. unbounded) adversary $\cA$, every sufficiently large $\lambda \in \bbN$, every $n \in [\secp,\poly(\secp)]$, and every $x \in N_{Q,\beta(n)}$, 
    \begin{equation*}
        \Pr_{\substack{\crs \gets \Setup(1^\lambda) \\ \pi \gets \cA(\crs)}}[\sV(1^\secp,\gamma(n),\crs, x, \pi) = 1] =  \negl(\lambda).
    \end{equation*}

    \item {\bf Adaptive Argument (resp. Proof) of Quantum Knowledge.}
    There exists a quantum polynomial-size extractor $\Ext = (\Ext_0, \Ext_1)$, and negligible functions $\negl_0(\cdot), \negl_1(\cdot),\negl_2(\secp)$ such that:
    \begin{enumerate}
        \item For every quantum polynomial-size (resp. unbounded) distinguisher $\cD$, every sufficiently large $\lambda \in \bbN$, and every $n \in [\secp,\poly(\secp)]$,
        \begin{equation*}
            \left|\Pr_{\crs \gets \Setup(1^\lambda)}[\cD(\crs) = 1] - \Pr_{(\crs,\td) \gets \Ext_0(1^\lambda)}[\cD(\crs) = 1] \right| \le \negl_0(\lambda).
        \end{equation*}
        \item Given $\secp, x, \gamma, \crs$, $\td$, and $\pi$, define \[p_V \coloneqq \Pr[\sV(1^\secp,\gamma(n),\crs,x,\pi) = 1], \ \ \ p_E \coloneqq \Pr_{\substack{ \rho \gets \Ext_1(1^\secp,\gamma(n),\crs,\td,x,\pi)}}[(x,\rho) \in R_{Q,\gamma(n)}].\] Then, for every $\gamma(\cdot)$, every quantum polynomial-size (resp. unbounded) adversary $\cA$, and every sufficiently large $\secp \in \bbN$,
        \[\Pr_{\substack{(\crs,\td) \gets \Ext_0(1^\secp) \\ (x,\pi) \gets \cA(\crs)}}[p_E \geq p_V - \negl_1(\secp)] \geq 1 - \negl_2(\secp).\]
    \end{enumerate}

    \item {\bf Non-Adaptive Computational Zero-Knowledge.}
    There exists a quantum polynomial-size $\Sim$ and a negligible function $\negl(\cdot)$ such that for every quantum polynomial-size distinguisher $\cD$, every sufficiently large $\lambda \in \bbN$, every $n \in [\secp,\poly(\secp)]$, and every $(x, \ket{\psi}) \in R_{Q,\alpha(n)}$,
    \begin{align*}
        \left\vert \Pr_{\substack{\crs \gets \Setup(1^\lambda) \\ \pi \gets \sP(1^\secp,\crs, x, \ket{\psi})}}[\cD(\crs, x, \pi) = 1] - \Pr_{\substack{(\crs, \pi) \gets \Sim(1^\lambda)}}[\cD(\crs, x, \pi) = 1]\right\vert \le \negl(\lambda).
    \end{align*}

\end{itemize}

    Finally, we formulate a novel definition for the notion of argument / proof of quantum knowledge where the extractor can extract from \emph{already-verified} proofs.

    \begin{itemize}
        \item {\bf Adaptive Post-Verified Argument (resp. Proof) of Quantum Knowledge.}
    There exists a quantum polynomial-size extractor $\Ext = (\Ext_0, \Ext_1)$, and negligible functions $\negl_0(\cdot), \negl_1(\cdot)$ such that:
    \begin{enumerate}
        \item For every quantum polynomial-size (resp. unbounded) distinguisher $\cD$, every sufficiently large $\lambda \in \bbN$, and every $n \in [\secp,\poly(\secp)]$,
        \begin{equation*}
            \left|\Pr_{\crs \gets \Setup(1^\lambda)}[\cD(\crs) = 1] - \Pr_{(\crs,\td) \gets \Ext_0(1^\lambda)}[\cD(\crs) = 1] \right| \le \negl_0(\lambda).
        \end{equation*}
        \item For every quantum polynomial-size (resp. unbounded) adversary $\cA$, every sufficiently large $\lambda \in \bbN$, and every $n \in [\secp,\poly(\secp)]$,
        \begin{equation*}
            \Pr_{\substack{(\crs, \td) \gets \Ext_0(1^\lambda) \\ (x, \pi) \gets \cA(\crs) \\ (b, \pi') \gets \sV(1^\secp,\gamma(n),\crs, x, \pi) \\ \rho \gets \Ext_1(1^\secp,\gamma(n),\crs, \td, x, \pi')}}\left[b = 1 \wedge (x, \rho) \not\in R_{Q,\gamma(n)}\right] \le \negl_1(\lambda).
        \end{equation*}
    \end{enumerate}
    \end{itemize}
\end{definition}

\begin{remark}
    It is easy to see that the notion of \emph{post-verified} argument (resp. proof) of quantum knowledge implies the standard notion. Indeed, we can define the extractor for the standard notion to apply the honest verifier followed by the post-verified extractor. It is also straightforward to see that, for any choice of $\gamma(n) \in [1/\poly(n),1-1/\poly(n)]$, the standard notion of argument (resp. proof) of knowledge implies computational (resp. statistical) soundness.
\end{remark}

\subsection{Protocol}
\label{sec:protocol}

Let $\{\theta_{H,i},f_{H,i}\}_{i \in [N]}$ denote a ZX verifier with strong completeness (\cref{def:ZX-strong}) for some QMA language $\cL = (\cL_{yes},\cL_{no})$, where each instance $H$ is an $n$-qubit Hamiltonian.

Let $\CSA = (\Gen, \Enc, \Dec, \Ver)$ be a publicly-verifiable CSA (\cref{def:CSA}). 

Let $\cO = (\Setup, \Obf, \Eval, \Ver)$ be a provably-correct obfuscation scheme (\cref{def:provable-obfuscation}).

Let $G: \{0,1\}^\secp \to \{0,1\}^{\poly(\secp)}$ be a pseudorandom generator with arbitrary polynomial stretch.

\vspace{0.5cm}

$\Setup(1^{\secpar})$:
\begin{enumerate}
    \item Generate the public parameters for the provably-correct obfuscation scheme $\Obf$. Formally, compute $\pp \gets \cO.\Setup(1^{\secp_\ZK})$, where $\secp_\ZK = \secp$, and output $\crs = \pp$.
\end{enumerate}

\vspace{0.3cm}

$\sP(1^\secp,\crs, H, \ket{\psi})$:
\begin{enumerate}
    \item Encode the witness using the CSA scheme. Formally,
  
    \begin{enumerate}
        \item Sample a key $k \gets \CSA.\KeyGen(1^{d(\secp^2,m)},1^m)$, where $m$ is the number of qubits in $\ket{\psi}$, and $d(\cdot,\cdot)$ is maximum of the functions specified in the statements of \cref{lemma:csa-privacy} and \cref{thm:auth-security}.
        \item Encode the witness as $\ket{\phi} = \CSA.\Enc_k(\ket{\psi})$.
    \end{enumerate}

    \item \label{step:M} Define a classical circuit $\cM_k$ which outputs a decoding of the witness. Formally,
    \begin{enumerate}
        \item Define $\cM_k$ hardwired with $k$ which, on input ($r$, $s$):
        \begin{enumerate}
            \item Compute $i = G(r)$, and set $(\theta,f) = (\theta_{H,i},f_{H,i})$. 
            \item Output $\CSA.\Dec_{k, \theta, \overline{f}}(s)$, where $\overline{f} = 1 - f$.
        \end{enumerate}
    \end{enumerate}

    \item Obfuscate the CSA verifier and classical circuit $\cM[k]$. Formally,
    \begin{enumerate}
        
        \item Define predicate $\pred$ as $\pred(C) = 1$ iff there exists $k'$ such that $C = \CSA.\Ver_{k', \bullet} \Vert \cM_{k'}$.

        \item Compute $\ObfC \gets \cO.\Obf(1^{q(\secp^2,m)},\pp, \pred, \CSA.\Ver_{k, \bullet} \Vert \cM_k)$, where $q(\cdot,\cdot)$ is the maximum of $\secp^{2/\epsilon}$ and the functions specified in the statements of \cref{lemma:csa-privacy} and \cref{thm:auth-security}. Here, $\epsilon$ is the constant from the Simulated-Circuit Indistinguishability property of the provable-obfuscation (\cref{def:provable-obfuscation}).
    \end{enumerate}
    
    \item Output $\pi = (\ket{\phi}, \ObfC)$.
\end{enumerate}

\vspace{0.3cm}

$\sV(1^\secp,\gamma,\crs, H, \pi)$:
\begin{enumerate}
    \item Parse all inputs. Formally,
    \begin{enumerate}
        \item Parse $\crs = \pp$.
        \item Parse $\pi = (\rho, \ObfC)$, define $\ObfV = \ObfC(0, \bullet, \bullet)$, and define $\ObfM = \ObfC(1, \bullet, \bullet)$.
    \end{enumerate}
    \item Verify that the provable obfuscation's verifier accepts. Formally,
    \begin{enumerate}
        \item Define predicate $\pred$ as $\pred(C) = 1$ iff there exists $k'$ such that $C = \CSA.\Ver_{k', \bullet} \Vert \cM_{k'}$.
        \item Verify that $\cO.\Ver(\pp, \pred, \ObfC) = 1$.
    \end{enumerate}
    \item Verify that the witness was encoded correctly using the obfuscated CSA verifier and check that the obfuscated $\cM$ accepts. Formally,
    \begin{enumerate}
        \item Define a POVM $(\cP_1, \cP_0)$ where $\cP_1 = \frac{1}{2^\secp}\sum_{r \in \zo^\lambda} P_r$, $\cP_0 = \frac{1}{2^\secp}\sum_{r \in \zo^\lambda} (\identity - P_r)$, and $P_r$ applied to state $\rho$ performs the following checks:
        \begin{enumerate}
            \item $(1, \rho') = \cO.\Eval(\ObfV, (0^m, \rho))$,
            \item $(1, \Had^{1^m}(\rho'')) = \cO.\Eval(\ObfV, (1^m, \Had^{1^m}(\rho')))$, and
            \item $(1, \rho''') = \identity - \ObfM(r, \Had^{\theta_{H,G(r)}}(\rho''))$.
        \end{enumerate}
        \item Let $\ATI_{1 - \gamma'}$ (\cref{lem:ati}) be defined according to the POVM $(\cP_1, \cP_0)$ for $\gamma' \in [1/\poly(n), 1-1/\poly(n)]$ such that there is some polynomial $p(\cdot)$ where $\gamma' = \gamma + \frac{1}{p(n)}$. 
        \item Compute $(b, \rho^*_b) \gets \ATI_{1 - \gamma'}(\rho)$.
    \end{enumerate}

    \item Reconstruct the proof. Formally,
    \begin{enumerate}
        \item Define $\pi' = (\rho^*_b, \ObfC)$.
    \end{enumerate}

    \item Output $(b, \pi')$.
\end{enumerate}

\subsection{Analysis}

\begin{theorem}\label{thm:AoK}
    Given that
    \begin{itemize}
        \item $\{\theta_{H,i},f_{H,i}\}_{i \in [N]}$ is a ZX verifier with strong completeness (\cref{def:ZX-strong}) for QMA language $\cL = (\cL_{yes},\cL_{no})$,
        \item $\CSA$ is a publicly-verifiable CSA (\cref{def:CSA}),
        \item $\cO = (\Setup, \Obf, \Eval, \Ver)$ is a sub-exponentially secure provably-correct obfuscator (\cref{def:provable-obfuscation}) with computational (statistical) knowledge soundness, and
        \item $G$ is a pseudorandom generator,
    \end{itemize} 
    then, for any $\gamma \in [1/\poly(n), 1-1/\poly(n)]$, the construction in \cref{sec:protocol} is a non-interactive zero-knowledge argument (proof) of knowledge for language $\cL$ (\cref{def:nizkpok-qma}) that satisfies
    \begin{itemize}
        \item Completeness,
        \item Adaptive post-verified argument (proof) of quantum knowledge,
        \item Non-adaptive computational zero-knowledge.
    \end{itemize}
    If the obfuscator is in the URS model, then the NIZK argument of knowledge for QMA is in the URS model.
\end{theorem}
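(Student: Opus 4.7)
The plan is to verify the three promised properties in turn. For \textbf{completeness}, I would first observe that by the completeness of the provably-correct obfuscator $\cO$, the check $\cO.\Ver(\pp,\pred,\ObfC)=1$ passes with probability $1$. It then suffices to show that for the honestly encoded witness state $\rho=\ket{\phi}\!\bra{\phi}$, the POVM $(\cP_1,\cP_0)$ accepts with probability $1-\negl(\secp)$, after which $\ATI_{1-\gamma'}$ accepts with probability $1-\negl(\secp)$ by the accepting-branch property of \cref{lem:ati}. Checks (i) and (ii) in the definition of $P_r$ accept with probability $1$ since $\ket{\phi}\in\mathsf{im}(\Pi_k)$ (\cref{lem:csa-soundness}) and functionality-preservation of $\cO$. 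For check (iii), note that by $\CSA$ correctness (\cref{def:coset-auth-correctness}), applying $\cM_k$ after Hadamarding according to $\theta_{H,G(r)}$ realizes the measurement $\cI-M[\theta_{H,G(r)}, f_{H,G(r)}]$. By PRG security, $G(r)$ is indistinguishable from uniform $i$, and strong completeness (\cref{def:ZX-strong}) then gives $\E_r\|M[\theta_{H,G(r)},f_{H,G(r)}]\ket{\psi}\|^2\geq 1-2^{-n^3}-\negl(\secp)$.

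For \textbf{adaptive post-verified argument of quantum knowledge}, I define $\Ext_0$ to run the obfuscator's knowledge-soundness setup $\cO.\Ext_0$, and $\Ext_1$ to extract the underlying circuit $C=\CSA.\Ver_{k'}\|\cM_{k'}$ from $\ObfC$ using $\cO.\Ext_1$, read off the authentication key $k'$, and output $\rho\coloneqq \CSA.\Enc_{k'}^{\dagger}(\rho^*_b)$, where $\rho^*_b$ is the residual state obtained from the verifier's $\ATI$ on an accepting branch. Indistinguishability of setup follows from the corresponding property of $\cO$. Conditioned on $\Ver$ accepting, by $\cO$ knowledge soundness the extracted $C$ is a functionally-correct description of the obfuscated program up to negligible error, so $\cM_{k'}$ really does implement the $\CSA$ decoders. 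By the soundness branch of \cref{lem:ati} together with the fact that the ATI accepted, $\rho^*_b$ lies (up to negligible error) in the space of states on which $(\cP_1,\cP_0)$ accepts with probability $\geq 1-\gamma'$; by $\CSA$ correctness and averaging over $r$ this translates directly into $\E_i\|M[\theta_{H,i},f_{H,i}]\cdot\CSA.\Enc_{k'}^\dagger(\rho^*_b)\|^2\geq 1-\gamma'-\negl\geq 1-\gamma$, so $(H,\rho)\in R_{Q,\gamma(n)}$.

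For \textbf{non-adaptive computational zero-knowledge}, the simulator $\Sim(1^\secp)$ samples $(\pp,\td)\gets \cO.\SimGen(1^{\secp_\ZK})$ and outputs $(\pp, (\ket*{\widetilde{0}}, \SimObf(1^{\secp_\OBF},\pp,\td,\pred,\CSA.\Ver_{k}\|\cM_k)))$ for an independent $\CSA$ key $k$ acting on the dummy witness $\ket{0^m}$. I proceed via three families of hybrids: (a) switch $\Obf$ to $\SimObf$ using honest-to-simulated indistinguishability of $\cO$; (b) switch the encoded state and obfuscated $\cM$ from the real witness to the dummy one; (c) undo the simulator change trivially. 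Step (b) is the crux. I would first invoke $\cS$-evasive composability of the obfuscator, where the sampler outputs $C=\CSA.\Ver_k$ as the always-kept part and $C_i=\CSA.\Dec_{k,\theta_{H,i},\overline{f}_{H,i}}$ as the components to be composed, to reduce from security against the combined $\cM_k$ to security against a single randomly chosen component $C_i$ in the presence of $C=\CSA.\Ver_k$. At that point, for the honest witness $\ket{\psi}$ (respectively $\ket{0^m}$), which by strong completeness (respectively by construction) lies in the image of $\cI-M[\theta_{H,i},\overline{f}_{H,i}]$ up to $2^{-n^3}$ error, \cref{thm:auth-security} (measurement indistinguishability of $\CSA$) directly yields indistinguishability.

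The main obstacle is verifying that the sampler $\cS$ arising in step (b) actually satisfies the precondition of evasive composability, i.e.\ that each individual obfuscated $\CSA.\Ver\|\CSA.\Dec_{k,\theta_{H,i},\overline{f}_{H,i}}$ is indistinguishable from obfuscation of the always-reject circuit when composed with $\CSA.\Ver$; this will require combining \cref{thm:auth-security} with the Hiding property (\cref{lemma:csa-privacy}) to replace the encoded state by an encoded fixed state and then argue decoder-obfuscation is null-like. The PRG on $r$ is needed to keep the verifier's $\Samp$ classical and to avoid dependencies between $r$ and the decoder; the secondary technical issue to handle carefully is that the ATI in $\sV$ is invoked by the verifier, not the simulator, so the simulated proof must remain an accepting state for the ATI with all but negligible probability — this in turn follows from the real-world completeness analysis applied through the chain of indistinguishabilities.
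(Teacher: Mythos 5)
Your proposal follows essentially the same approach as the paper for all three properties: completeness via the correctness of the constituent primitives plus the ATI lemma; knowledge soundness via the obfuscator's knowledge extractor, the soundness branch of the ATI, and a trace calculation chaining through CSA correctness and PRG security; and zero-knowledge via honest-to-simulated indistinguishability followed by evasive composability, where the precondition (Claim on single-component indistinguishability from null) is established by gentle measurement plus strong completeness, then measurement indistinguishability of CSA, then simulated-circuit indistinguishability, then CSA hiding.

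One technical point worth flagging in your knowledge-extractor: you propose $\rho \coloneqq \CSA.\Enc_{k'}^{\dagger}(\rho^*_b)$ directly, but $\Enc_{k'}^{\dagger}$ is only a well-defined decoding map on the codespace $\mathsf{im}(\Pi_{k'})$. The paper's $\Ext_1$ first applies the two CSA verification measurements ($\Ver_{k',0^n}$ followed by a Hadamard-conjugated $\Ver_{k',1^n}$, which by \cref{lem:csa-soundness} jointly project onto $\Pi_{k'}$) before applying $\Enc_{k'}^\dagger$, and this projection is also what lets the trace calculation align the extractor's residual state with the first two checks inside the verifier's POVM $\cP_1'$. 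Without that projection the bound $\Tr[\cP_1'\rho^*_b] \ge \gamma'$ from the ATI soundness branch does not directly translate into a statement about $\E_i \Tr(M[\theta_{H,i},f_{H,i}]\rho)$. This is a fixable omission rather than a wrong idea, but it would need to be repaired for the argument to go through cleanly.
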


\begin{remark}
    Since every promise problem in QMA has a ZX verifier with strong completeness (\cref{thm:ZX-strong}), the above theorem gives an argument of quantum knowledge for all of QMA.
\end{remark}

\begin{proof}
    \textbf{Correctness.}
    This follows from the correctness of $\CSA$ (\cref{def:coset-auth-correctness}), completeness of ZX verifier with strong completeness (\cref{thm:ZX-strong}), functionality-preservation and completeness of provably-correct obfuscation (\cref{def:provable-obfuscation}), and correctness of $\ATI_{1 - \gamma'}$ (\cref{lem:ati}).

    \textbf{Adaptive Post-Verified Argument (Proof) of Quantum Knowledge.}

    Let $(\cO.\Ext_0, \cO.\Ext_1)$ be the proof of knowledge extractor of $\cO$.
    We define $\Ext_0$ with oracle access to $\cO.\Ext_0$ as follows:
    \begin{enumerate}
        \item[] \hspace{-1cm} \emph{Input}: $1^\secpar$

        \item Compute $(\pp, \td) \gets \cO.\Ext_0(1^\lambda)$.

        \item Output $\crs = \pp$ and $\td$.
    \end{enumerate}
    We define $\Ext_1$ with oracle access to $\cO.\Ext_1$ as follows:
    \begin{enumerate}
        \item[] \hspace{-1cm} \emph{Input}: $\gamma$, $\crs = \pp$ and $\td$, $H$, $\pi^* = (\rho^*, \ObfC)$.

        \item Define predicate $\pred$ as $\pred(C) = 1$ iff there exists $k'$ such that $C = \CSA.\Ver_{k', \bullet} \Vert \cM_{k'}$ for $\cM$ defined in \cref{step:M}.
    
        \item Compute $C \gets \cO.\Ext_1(\pp, \td, \pred, \ObfC)$.
        
        \item Parse $C = \CSA.\Ver_{k', \bullet} \Vert \cM_{k'}$.

        \item \label{step:ExtVerComp} Compute $(\_, \rho') \gets \CSA.\Ver_{k', 0^n}(\rho^*)$.
        
        \item \label{step:ExtVerHad} Compute $(\_, \Had^{1^n}(\rho'')) \gets \CSA.\Ver_{k', 1^n}(\Had^{1^n}(\rho'))$.

        \item Compute $\rho_\psi = \CSA.\Enc_{k'}^\dagger(\rho'')$. 
    
        \item Output $\rho_\psi$.
    \end{enumerate}

    The output of $\Setup$ and $\Ext_0$ is computationally indistinguishable by the computational indistinguishability of $\cO.\Setup$ and $\cO.\Ext_0$ from the knowledge soundness of provably-correct obfuscation (\cref{def:provable-obfuscation}).

    Let a polynomial $p(\cdot)$ and a polynomial-size quantum circuit $\cA$ be given such that for every sufficiently large $\secpar \in \bbN$,
    \begin{equation}
        \label{eq:aok-qma}
        \Pr_{\substack{(\crs, \td) \gets \Ext_0(1^\lambda) \\ (H, \pi) \gets \cA(\crs) \\ (b, \pi') \gets \sV(\gamma, \crs, H, \pi) \\ \rho_\psi \gets \Ext_1(\gamma, \crs, \td, H, \pi')}}\left[b = 1 \wedge (H, \rho_\psi) \not\in \cR_{Q,\gamma}\right] \ge \frac{1}{p(\lambda)}.
    \end{equation}

    By the knowledge soundness of provably-correct obfuscation (\cref{def:provable-obfuscation}), we have that there exists a negligible function $\negl(\cdot)$
    \begin{equation}
        \label{eq:ks-qma}
        \Pr_{\substack{(\crs,\td) \gets \Ext_0(1^\lambda) \\ (H, \pi) \gets \cA(\crs) \\ (b, \pi') \gets \sV(\gamma, \crs, H, \pi) \\ C \gets \cO.\Ext_1(\pp, \td,\pred,\ObfC)}}\left[\begin{array}{l}\left(b = 0\right) \vee \\\left(\forall x, C(x) = \Eval(\ObfC,x) \wedge \pred(C) = 1\right)\end{array}\right] \ge 1-\negl(\lambda).
    \end{equation}

    Hence, by \cref{eq:aok-qma} and \cref{eq:ks-qma}, we have that there exists a polynomial $p'(\cdot)$ such that
    \begin{equation}
        \label{eq:3-qma}
        \Pr_{\substack{(\crs,\td) \gets \Ext_0(1^\lambda) \\ (H, \pi) \gets \cA(\crs) \\ (b, \pi') \gets \sV(\gamma, \crs, H, \pi) \\ \rho_\psi \gets \Ext_1(\gamma, \crs, \td, H, \pi')}}\left[\begin{array}{l}b = 1 \wedge \left(\forall x, C(x) = \Eval(\ObfC,x)\right) \wedge \\ \pred(C) = 1 \wedge (H, \rho_\psi) \not\in R_{Q,\gamma} \end{array}\right] \ge \frac{1}{p'(\lambda)}.
    \end{equation}

    Let the variables sampled according to \cref{eq:3-qma} be given. When $\pred(C) = 1$, this implies that there exists $k'$ such that $C = \CSA.\Ver_{k', \bullet} \Vert \cM_{k'}$ (by definition of $\pred$). Additionally, when $\forall x, C(x) = \Eval(\ObfC,x)$ and $b = 1$, this means that $(\rho^*, 1) = \ATI_{1 - \gamma'}(\rho)$ for POVM $(\cP_1',  \cP_0')$ where $\cP_1' = \frac{1}{N}\sum_r P_r'$, $\cP_0' = \frac{1}{N}\sum_r (\identity - P_r')$, and $P_r'$ applied to state $\rho$ performs the following checks:
    \begin{enumerate}
        \item $(1, \rho') = \CSA.\Ver_{k', 0^n}(\rho)$,
        \item $(1, \Had^{1^n}(\rho'')) = \CSA.\Ver_{k', 1^n}(\Had^{1^n}(\rho'))$, and
        \item $(1, \rho''') = (\identity - \CSA.\Dec_{k',\theta_r,\overline{f_r}}(\Had^{\theta_r}(\cdot)))$ where $(\theta_r, f_r) \defeq \Samp(H; r)$ for all $r \in [N]$.
    \end{enumerate}
    Therefore,
    \begin{equation}
        \label{eq:4-qma}
        \Pr_{\substack{(\crs,\td) \gets \Ext_0(1^\lambda) \\ (H, \pi) \gets \cA(\crs) \\ (b, \pi'=\rho^*_b) \gets \sV(\gamma, \crs, H, \pi) \\ \rho_\psi \gets \Ext_1(\gamma, \crs, \td, H, \pi')}}\left[\begin{array}{l} b = 1 \wedge (H, \rho_\psi) \not\in R_{Q,\gamma} \end{array}\right] \ge \frac{1}{p'(\lambda)}.
    \end{equation}
    
    Let the variables sampled according to \cref{eq:4-qma} be given.
    By the soundness of $\ATI_{1 - \gamma'}$ (\cref{lem:ati}), if $b=1$ when running $\ATI_{1 - \gamma'}$ with POVM $(\cP_1',  \cP_0')$ defined previously, we have that $\Tr[\cP_1' \rho^*_b] \ge \gamma'(n)$ with overwhelming probability. That is, there exists a polynomial $p''(\cdot)$ such that 
    
    \begin{equation}
        \label{eq:5-qma}
        \Pr_{\substack{(\crs,\td) \gets \Ext_0(1^\lambda) \\ (H, \pi) \gets \cA(\crs) \\ (b, \pi') \gets \sV(\gamma, \crs, H, \pi) \\ \rho_\psi \gets \Ext_1(\gamma, \crs, \td, H, \pi')}}\left[\begin{array}{l} \left(\Tr[\cP_1' \rho^*_b] \ge \gamma'(n)\right) \wedge (H, \rho_\psi) \not\in R_{Q,\gamma} \end{array}\right] \ge \frac{1}{p''(\lambda)}.
    \end{equation}

    Let the variables sampled according to \cref{eq:5-qma} be given. We note that $\Ext_1$ performs \cref{step:ExtVerComp} followed by \cref{step:ExtVerHad} as in the first two steps of the POVM $(\cP_1',  \cP_0')$. Hence, we use the same variables for comparison in the following analysis: 
    \begin{align*}
        \Tr[\cP_1'\rho^*_b] &= \frac{1}{N}\sum_r \Pr_{\substack{(b', \rho') \gets \CSA.\Ver_{k', 0^n}(\rho^*_b) \\ (b'', \Had^{1^n}(\rho'')) \gets \CSA.\Ver_{k', 1^n}(\Had^{1^n}(\rho')) \\ (b''', \rho''') \gets (\identity - \CSA.\Dec_{k',\theta_r,\overline{f_r}}(\Had^{\theta_r}(\rho''))))}}[b' = b'' = 1 \wedge b''' = 1].
    \end{align*}
    When the above event occurs, $(1, \rho') = \Ver_{k', 0^n}(\rho^*_b)$ and $(1, \Had^{1^n}(\rho'')) = \Ver_{k', 1^n}(\Had^{1^n}(\rho'))$, then by $\CSA$ soundness (\cref{lem:csa-soundness}) we have that $\rho'' \in \Enc_{k'}$. Hence, using the above argument we have 
    \begin{align*}
        \Tr[\cP_1'\rho^*_b] &\le \frac{1}{N}\sum_r \Pr_{\substack{(b', \rho') \gets \CSA.\Ver_{k', 0^n}(\rho^*_b) \\ (b'', \Had^{1^n}(\rho'')) \gets \CSA.\Ver_{k', 1^n}(\Had^{1^n}(\rho')) \\ (b''', \rho''') \gets (\identity - \CSA.\Dec_{k',\theta_r,\overline{f_r}}(\Had^{\theta_r}(\rho''))))}}[\rho'' \in \Enc_{k'} \wedge b''' = 1].
    \end{align*}
    Now, since $\rho'' \in \Enc_{k'}$ and $\rho_\psi = \Enc_{k'}^\dagger(\rho'')$ (definition of $\Ext_0$), by the correctness of $\CSA$ (\cref{def:coset-auth-correctness}), 
    \begin{align*}
        &\Tr[\cP_1'\rho^*_b] \\
        &\le \frac{1}{N}\sum_r \Pr_{\substack{(b', \rho') \gets \CSA.\Ver_{k', 0^n}(\rho^*_b) \\ (b'', \Had^{1^n}(\rho'')) \gets \CSA.\Ver_{k', 1^n}(\Had^{1^n}(\rho')) \\ (b''', \rho''') \gets (\identity - \CSA.\Dec_{k',\theta_r,\overline{f_r}}(\Had^{\theta_r}(\rho''))))}}[\rho'' \in \Enc_{k'} \wedge b''' = 1]\\
        &= \frac{1}{N}\sum_r \Pr_{\substack{(b', \rho') \gets \CSA.\Ver_{k', 0^n}(\rho^*_b) \\ (b'', \Had^{1^n}(\rho'')) \gets \CSA.\Ver_{k', 1^n}(\Had^{1^n}(\rho'))}}[\rho'' \in \Enc_{k'} \wedge \CSA.\Dec_{k',\theta_r,\overline{f_r}}(\Had^{\theta_r}(\rho'')) = 0]\\
        &= \frac{1}{N}\sum_r \Pr_{\substack{(b', \rho') \gets \CSA.\Ver_{k', 0^n}(\rho^*_b) \\ (b'', \Had^{1^n}(\rho'')) \gets \CSA.\Ver_{k', 1^n}(\Had^{1^n}(\rho'))}}[\rho'' \in \Enc_{k'} \wedge \CSA.\Dec_{k',\theta_r,f_r}(\Had^{\theta_r}(\rho'')) = 1]\\
        &\le \frac{1}{N}\sum_r \Pr_{\substack{(b', \rho') \gets \CSA.\Ver_{k', 0^n}(\rho^*_b) \\ (b'', \Had^{1^n}(\rho'')) \gets \CSA.\Ver_{k', 1^n}(\Had^{1^n}(\rho')) \\ \rho_\psi = \Enc_{k'}^\dagger(\rho'')}}[\CSA.\Dec_{k',\theta_r,f_r}(\Had^{\theta_r}(\Enc_{k'}(\rho_\psi)) = 1]\\
        &= \frac{1}{N}\sum_r \Pr_{\substack{(b', \rho') \gets \CSA.\Ver_{k', 0^n}(\rho^*_b) \\ (b'', \Had^{1^n}(\rho'')) \gets \CSA.\Ver_{k', 1^n}(\Had^{1^n}(\rho')) \\ \rho_\psi = \Enc_{k'}^\dagger(\rho'')}}[M[\theta_r, f_r](\rho_\psi) \text{ accepts}]\\
        &= \bbE_{\substack{r \gets \zo^\lambda \\ i \gets G(r) \\ (b', \rho') \gets \CSA.\Ver_{k', 0^n}(\rho^*_b) \\ (b'', \Had^{1^n}(\rho'')) \gets \CSA.\Ver_{k', 1^n}(\Had^{1^n}(\rho')) \\ \rho_\psi = \Enc_{k'}^\dagger(\rho'')}} \Big[ \Tr(M[\theta_{H,\lambda,i},f_{H,\lambda,i}]\rho_\psi) \Big].
    \end{align*}
    Hence, using the above argument with \cref{eq:5-qma}, we have that
    \begin{equation}
        \label{eq:almost-penultimate-qma}
        \Pr_{\substack{(\crs,\td) \gets \Ext_0(1^\lambda) \\ (H, \pi) \gets \cA(\crs) \\ (b, \pi') \gets \sV(\gamma, \crs, H, \pi) \\ \rho_\psi \gets \Ext_1(\gamma, \crs, \td, H, \pi')}}\left[\begin{array}{l} \left(\bbE_{\substack{r \gets \zo^\lambda \\ i \gets G(r)}} \Big[ \Tr(M[\theta_{H,\lambda,i},f_{H,\lambda,i}]\rho_\psi) \Big] \ge \gamma'(n)\right) \wedge \\ (H, \rho_\psi) \not\in R_{Q,\gamma} \end{array}\right] \ge \frac{1}{p''(\lambda)}.
    \end{equation}
    We will now argue that we can switch a pseudorandom choice of measurement to a uniformly random choice of measurement.

    \begin{claim}
        \label{claim:meas_indisting}
        There exists a negligible function $\negl(\cdot)$ such that for all $\rho \in D(\cB^{\otimes p(|x|)})$, 
        \begin{align*}
            \left| \bbE_{\substack{r \gets \zo^\lambda \\ i \gets G(r)}} \Big[ \Tr(M[\theta_{H,\lambda,i},f_{H,\lambda,i}]\rho) \Big] - \bbE_{i \gets [\poly(\lambda)]} \Big[ \Tr(M[\theta_{H,\lambda,i},f_{H,\lambda,i}]\rho) \Big] \right| \le \negl(\lambda).
        \end{align*}
    \end{claim}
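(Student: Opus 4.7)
The plan is to prove this claim by a direct reduction to the pseudorandomness of $G$. Assume for contradiction that the claim fails: then there exists a polynomial $q(\cdot)$, a state $\rho_\lambda$ (possibly depending on $\lambda$), and infinitely many $\lambda$ such that
\[
\left| \bbE_{r \gets \{0,1\}^\lambda}\Big[ \Tr\!\big(M[\theta_{H,\lambda,G(r)},f_{H,\lambda,G(r)}]\,\rho_\lambda\big)\Big] - \bbE_{i \gets [\poly(\lambda)]} \Big[ \Tr\!\big(M[\theta_{H,\lambda,i},f_{H,\lambda,i}]\,\rho_\lambda\big)\Big] \right| \geq \frac{1}{q(\lambda)}.
\]
Note that $H$ and $\rho_\lambda$ can be treated as (possibly quantum) non-uniform advice to the distinguisher against $G$.

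The distinguisher $\cB$ against $G$ receives a string $y$ (which is either $G(r)$ for uniform $r$, or uniform on the appropriate range) and non-uniform advice $(H, \rho_\lambda)$. It interprets $y$ as an index $i \in [\poly(\lambda)]$, prepares $\rho_\lambda$ from its advice, and applies the binary-outcome ZX measurement $\{M[\theta_{H,\lambda,i},f_{H,\lambda,i}], I - M[\theta_{H,\lambda,i},f_{H,\lambda,i}]\}$ to $\rho_\lambda$, outputting the measurement outcome. Since the ZX measurement is efficient (by \Cref{def:ZX-measurement}), and $H$ is fixed non-uniform advice, $\cB$ runs in quantum polynomial time. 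The probability that $\cB$ outputs $1$ is exactly the expectation over $i$ of $\Tr(M[\theta_{H,\lambda,i},f_{H,\lambda,i}]\rho_\lambda)$ under the respective distribution, so $\cB$'s distinguishing advantage is at least $1/q(\lambda)$, contradicting the (post-quantum) security of $G$.

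The only subtlety is that the advice state $\rho_\lambda$ is quantum and is consumed by the measurement, so a single copy only provides a one-shot distinguisher; however, this single-shot advantage is already sufficient to contradict PRG security against non-uniform quantum advice (one can also amplify by taking polynomially many copies of $\rho_\lambda$ in the advice and averaging, if a uniform distinguishing probability is desired). No significant technical obstacle arises; the main care needed is simply the correct formalization of non-uniform quantum advice in the reduction.
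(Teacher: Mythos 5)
Your reduction is correct and matches the paper's proof, which simply asserts that the claim ``follows from the security of the pseudorandom generator $G$ against non-uniform adversaries and the observation that the measurements $\{M[\theta_{H,\lambda,i},f_{H,\lambda,i}]\}_{i}$ are efficiently computable.'' You have spelled out the one-line reduction the paper leaves implicit, and your observation about the quantum advice state being consumed in a single shot (and how to handle it) is a legitimate point of care that the paper glosses over.
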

    
    \begin{proof}
        This follows from the security of the pseudorandom generator $G$ against non-uniform adversaries and the observation that the measurements $\{M[\theta_{H,\lambda,i},f_{H,\lambda,i}]\}_{i \in [\poly(\lambda)]}$ are efficiently computable.
    \end{proof}

    Therefore, by \cref{eq:almost-penultimate-qma} and \cref{claim:meas_indisting}, there exists a negligible function $\negl(\cdot)$ such that
    \begin{equation}
        \Pr_{\substack{(\crs,\td) \gets \Ext_0(1^\lambda) \\ (H, \pi) \gets \cA(\crs) \\ (b, \pi') \gets \sV(\gamma, \crs, H, \pi) \\ \rho_\psi \gets \Ext_1(\gamma, \crs, \td, H, \pi')}}\left[\begin{array}{l} \left(\bbE_{i \gets [\poly(\lambda)]} \Big[ \Tr(M[\theta_{H,\lambda,i},f_{H,\lambda,i}]\rho_\psi) \Big] \ge \gamma'(n) - \negl(\lambda)\right) \wedge \\ (H, \rho_\psi) \not\in R_{Q,\gamma} \end{array}\right] \ge \frac{1}{p''(\lambda)},
    \end{equation}
    and, by the definition of $\gamma'$, this implies that
    \begin{equation}
        \label{eq:penultimate-qma}
        \Pr_{\substack{(\crs,\td) \gets \Ext_0(1^\lambda) \\ (H, \pi) \gets \cA(\crs) \\ (b, \pi') \gets \sV(\gamma, \crs, H, \pi) \\ \rho_\psi \gets \Ext_1(\gamma, \crs, \td, H, \pi')}}\left[\begin{array}{l} \left(\bbE_{i \gets [\poly(\lambda)]} \Big[ \Tr(M[\theta_{H,\lambda,i},f_{H,\lambda,i}]\rho_\psi) \Big] \ge \gamma(n)\right) \wedge \\ (H, \rho_\psi) \not\in R_{Q,\gamma} \end{array}\right] \ge \frac{1}{p''(\lambda)}.
    \end{equation}

    By the definition of $\cR_{Q,\gamma}$ as the $\gamma(n)$-relation (\cref{def:qma-rel}) with respect to the ZX verifier with strong completeness (\cref{def:ZX-strong}),
    \begin{equation}
        \label{eq:end-qma}
        \Pr_{\substack{(\crs,\td) \gets \Ext_0(1^\lambda) \\ (H, \pi) \gets \cA(\crs) \\ (b, \pi') \gets \sV(\gamma, \crs, H, \pi) \\ \rho_\psi \gets \Ext_1(\gamma, \crs, \td, H, \pi')}}\left[\begin{array}{l}(H, \rho_\psi) \in R_{Q,\gamma}  \wedge (H, \rho_\psi) \not\in R_{Q,\gamma} \end{array}\right] \ge \frac{1}{p''(\lambda)}.
    \end{equation}
    
    Since this is a contradiction, we have proven knowledge soundness.

    \textbf{Computational Zero-Knowledge.}

    Let $\cO.\Sim = (\cO.\SimGen, \cO.\SimObf)$ be the simulator from the simulation security of the provably-correct obfuscation $\cO$.
    We define $\Sim_0$ with oracle access to $\cO.\SimGen$ as follows:
    \begin{enumerate}
        \item[] \hspace{-1cm} \emph{Input}: $1^\secpar$

        \item Compute $(\pp, \td) \gets \cO.\SimGen(1^\lambda)$.

        \item Output $\crs = \pp$ and $\td$.
    \end{enumerate}
    We define $\Sim_1$ with oracle access to $\cO.\SimObf$ as follows:
    \begin{enumerate}
        \item[] \hspace{-1cm} \emph{Input}: $\crs$, $\td$, $H$
        
        \item Sample a key $k \gets \CSA.\KeyGen(1^{d(\secp^2,m)}, 1^m)$.

        \item Encode dummy witness as $\ket{\phi} = \CSA.\Enc_k(\ket{0^m})$.

        \item Define predicate $\pred$ as $\pred(C) = 1$ iff there exists $k'$ such that $C = \CSA.\Ver_{k', \bullet} \Vert \cM_{k'}$ for $\cM$ defined in \cref{step:M}.

        \item Compute $\ObfC \gets \cO.\SimObf(1^{q(\secp^2,m)},\pp, \td, \pred, \CSA.\Ver_{k, \bullet} \Vert C_{\mathsf{null}})$.
        
        \item Output $\pi = (\ket{\phi}, \ObfC)$.
    \end{enumerate}

    Let a polynomial-size quantum circuit $\cD$, sufficiently large $\secpar \in \bbN$, and $(H, \ket{\psi}) \in R_{Q,1-\negl(n)}$ be given. We construct the following series of hybrids to argue $\negl(\secp)$ indistinguishability of $b$ from the honest distribution $\cH_0$ and simulated distribution $\cH_4$:

    \paragraph{$\cH_0$ :}
    Honest protocol:
    $\crs \gets \Setup(1^\secpar)$. $\pi \gets \sP(1^\secp,\crs, H, \ket{\psi})$. $b \gets \cD(\crs, H, \pi)$.

    \paragraph{$\cH_1$ :}
    Same as $\cH_0$ except that:
    \begin{itemize}
        \item Compute $(\pp, \td) \gets \cO.\SimGen(1^\secpar)$ and set $\crs = \pp$.
        \item Compute $\ObfC \gets \cO.\SimObf(1^{q(\secp^2,m)},\pp, \td, \pred, \CSA.\Ver_{k, \bullet} \Vert \cM_k)$.
    \end{itemize}

    \paragraph{$\cH_2$ :}
    Same as $\cH_1$ except that:
    \begin{itemize}
        \item Compute $\ObfC \gets \cO.\SimObf(1^{q(\secp^2,m)},\pp, \td, \pred, \CSA.\Ver_{k, \bullet} \Vert C_{\mathsf{null}})$.
    \end{itemize}

    \paragraph{$\cH_3$ :}
    Same as $\cH_2$ except that:
    \begin{itemize}
        \item Compute $\ket{\phi} = \CSA.\Enc_k(\ket{0})$.
    \end{itemize}

    \paragraph{$\cH_4$ :}
    Simulated protocol:
    $(\crs, \td) \gets \Sim_0(1^\secpar)$. $\pi \gets \Sim_1(\crs, \td, H)$. $b \gets \cD(\crs, H, \pi)$.

    $\cH_0$ and $\cH_1$ are $\negl(\secp)$ indistinguishable by the honest-to-simulated indistinguishability property of provably-correct obfuscation (\cref{def:provable-obfuscation}). $\cH_2$ and $\cH_3$ are $\negl(\secp)$ indistinguishable by the Hiding property of CSA scheme (\cref{lemma:csa-privacy}). $\cH_3$ and $\cH_4$ are identical by definition of $(\Sim_0, \Sim_1)$.

    All that remains to prove is that $\cH_1$ and $\cH_2$ are $\negl(\secp)$ indistinguishable. We will show this for fixed randomness via a series of hybrids, then combine them using the evasive composability property of provably-correct obfuscation (\cref{def:provable-obfuscation}).

    \begin{claim}
        \label{claim:nizk-if-evasive-comp}
        Let $\cS$ prepare $(H, \ket{\psi})$, sample $k \gets \CSA.\KeyGen(1^{d(\secp^2,m)}, 1^m)$, compute $\ket{\phi} \gets \CSA.\Enc_k(\ket{\psi})$, and output $(\ket{\phi}, \{\cM_k(r, \bullet)\}_{r\in \{0,1\}^\secp})$ for $\cM$ defined in \cref{step:M}.
        For any $r^* \in \{0,1\}^\secp$, any predicate $\pred$, and any QPT adversary $\cA$, 
        \begin{align*}
            &\bigg|\Pr_{\substack{(\pp, \td) \gets \Sim_0(1^\lambda) \\ \left(\ket{\phi},\{\CSA.\Ver_{k, \bullet} \Vert\cM_k(r, \bullet)\}_r\right) \gets \cS}}\left[\cA\left(\ket{\phi},\cO.\SimObf\left(1^{q(\secp^2,m)}, \pp, \td, \pred,\CSA.\Ver_{k, \bullet} \Vert \cM_k(r^*, \bullet)\right)\right) = 1\right]\\ 
            &- \Pr_{\substack{(\pp, \td) \gets \Sim_0(1^\lambda) \\ \left(\ket{\phi},\{\CSA.\Ver_{k, \bullet} \Vert\cM_k(r, \bullet)\}_r\right) \gets \cS}}\left[\cA\left(\ket{\phi},\cO.\SimObf\left(1^{q(\secp^2,m)}, \pp, \td,\pred,\CSA.\Ver_{k, \bullet} \Vert C_{\mathsf{null}}\right)\right) = 1\right]\bigg| \le \frac{\negl(\secp)}{2^{\lambda}}.
        \end{align*}
    \end{claim}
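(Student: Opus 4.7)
Writing $\theta = \theta_{H,G(r^*)}$ and $f = f_{H,G(r^*)}$, observe that $\cM_k(r^*,\bullet)$ is literally the circuit $\CSA.\Dec_{k,\theta,\overline{f}}(\bullet)$. Thus the claim asks us to distinguish a simulated obfuscation of $\CSA.\Ver_{k,\bullet}\Vert \CSA.\Dec_{k,\theta,\overline{f}}$ from a simulated obfuscation of $\CSA.\Ver_{k,\bullet}\Vert C_{\mathsf{null}}$, given the coset-state-authenticated witness $\ket{\phi} = \CSA.\Enc_k(\ket{\psi})$. The plan is to reduce to the internal hybrid chain underlying \cref{thm:auth-security}.

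The first step will be to invoke the honest-to-simulated indistinguishability property of the provably-correct obfuscator on both sides of the distinguishing game, so that it suffices to prove the analogous statement with $\cO.\Obf$ in place of $\cO.\SimObf$. Then, using strong completeness of the ZX verifier (\cref{def:ZX-strong}), $\|M[\theta,f]\ket{\psi}\|^2 \geq 1 - 2^{-n^3}$, equivalently $\|M[\theta,\overline{f}]\ket{\psi}\|^2 \leq 2^{-n^3}$. By gentle measurement (\cref{lemma:gentle-measurement}), there is a state $\ket{\psi'} \in \mathsf{im}(\cI - M[\theta,\overline{f}])$ with trace distance $O(2^{-n^{3/2}})$ from $\ket{\psi}$; since $n \geq \secp$, replacing $\ket{\psi}$ by $\ket{\psi'}$ inside the encoding only perturbs the adversary's output by a quantity well below $\negl(\secp)/2^\secp$.

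At this point the situation is set up for the proof of \cref{thm:auth-security} applied with $b=0$, $f_0 = \overline{f}$, and $\ket{\psi_0} = \ket{\psi'}$: the one-sided hybrid chain $\Hyb_{0,0} \approx \Hyb_{1,0} \approx \cdots \approx \Hyb_{5,0}$ (using \cref{claim:hyb1} through the $b=0$ direction of \cref{claim:hyb5}, but not invoking the final \cref{claim:hybfinal}) already yields
\[
    \big(\CSA.\Enc_k(\ket{\psi'}),\, \widetilde{\Ver},\, \Obf(\CSA.\Dec_{k,\theta,\overline{f}})\big) \approx \big(\CSA.\Enc_k(\ket{\psi'}),\, \widetilde{\Ver},\, \Obf(C_{\mathsf{null}})\big).
\]
The only formal gap between our setting and that of \cref{thm:auth-security} is that we obfuscate the single composed circuit $\CSA.\Ver_{k,\bullet}\Vert(\cdot)$ through $\cO.\Obf$, whereas the cited theorem separately obfuscates the verifier and the decoder under a plain iO. Every hybrid step in that proof modifies only the decoder sub-circuit, however, so each step can be replayed verbatim on the composed circuit using the same iO security arguments applied to the underlying iO of $\cO.\Obf$. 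Finally, I will undo the honest-to-simulated switch from the first step to land back in the simulated world but now with $C_{\mathsf{null}}$ in place of $\cM_k(r^*,\bullet)$.

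The main obstacle is quantitative rather than conceptual. The bound $\negl(\secp)/2^\secp$ is necessary because evasive composability will subsequently be invoked over the $N = 2^\secp$ possible seeds $r$. Meeting this bound requires relying on \emph{sub-exponentially} secure iO and inflating the CSA and obfuscator security parameters to $d(\secp^2,m)$ and $q(\secp^2,m)$, which are chosen precisely so that each individual hybrid in the chain above loses only $2^{-\Omega(\secp^2)}$; summing over the constantly many hybrids still yields an overall advantage well below $\negl(\secp)/2^\secp$. Beyond this careful parameter accounting, no new technical ingredients beyond those developed in \cref{sec:csa} are required.
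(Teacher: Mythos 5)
Your proof plan is conceptually close to the paper's, but it contains a fatal quantitative gap at the very first (and last) step. You propose to switch from $\cO.\SimObf$ to $\cO.\Obf$ via the honest-to-simulated indistinguishability property of \cref{def:provable-obfuscation}, run the \cref{thm:auth-security} hybrid chain under $\cO.\Obf$, and then switch back. The problem is that honest-to-simulated indistinguishability is only $\negl(\secp_\ZK)$-secure, with $\secp_\ZK = \secp$ fixed once and for all by $\Setup$, because it rests on the (polynomially secure) zero-knowledge of the underlying NIZK; it does \emph{not} inherit the inflated security parameter $q(\secp^2,m)$. Each such switch therefore costs $\negl(\secp)$, which is astronomically larger than the required bound $\negl(\secp)/2^\secp$. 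Inflating $d$ and $q$, as you propose, cannot help here because those parameters never touch the NIZK security parameter. So the accumulated distinguishing advantage in your plan is $\Theta(\negl(\secp))$, not $\negl(\secp)/2^\secp$, and the evasive-composability precondition is not met.

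The paper avoids this entirely by never leaving the simulated world: $\cO.\SimObf$ \emph{itself} satisfies sub-exponential iO (with advantage $2^{-\lambda_\OBF^\epsilon} \leq 2^{-\secp^2}$ by the choice of $\lambda_\OBF = q(\secp^2,m) \geq \secp^{2/\epsilon}$) via the Simulated-Circuit Indistinguishability property. Since the hybrid chain underlying \cref{thm:auth-security} and \cref{lemma:csa-privacy} relies only on sub-exponential iO plus injective OWFs (for subspace hiding), every hybrid can be carried out directly on $\cO.\SimObf$, with no honest-to-simulated switch at all. This is the ingredient you are missing. Two secondary inaccuracies worth noting: first, you assert that every hybrid step in the proof of \cref{thm:auth-security} modifies only the decoder sub-circuit, but the subspace-bloating hybrids $\Hyb_{1,b}$, $\Hyb_{5,b}$ modify the \emph{verifier} sub-circuit and $\Hyb_{2,b}$, $\Hyb_{4,b}$ modify the encoded state — this does not break your composed-circuit reasoning (iO on the composite circuit still supports all these changes), but the stated justification is wrong. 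Second, your chain ends with the encoded state $\CSA.\Enc_k(\ket{\psi'})$, whereas the right-hand side of the claim has $\ket{\phi} = \CSA.\Enc_k(\ket{\psi})$, so you still need one more gentle-measurement step (at cost $2^{-\Omega(\secp^3)}$) to land back on the claimed distribution.
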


    \begin{proof}
        Let $r^* \in \{0,1\}^\secp$, $\pred$, and $\cA$ be given.
        We construct the following series of hybrids:
        
        \paragraph{$\cH_{1,0}$ :}
        Same as $\cH_1$ above for fixed $r^* \in \{0,1\}^\secp$:
        \begin{itemize}
            \item $\left(\ket{\phi},\{\CSA.\Ver_{k, \bullet} \Vert\cM_k(r, \bullet)\}_r\right) \gets \cS$.
            \item $\ObfC \gets \cO.\SimObf\left(1^{q(\secp^2,m)}, \pp, \td, \pred,\CSA.\Ver_{k, \bullet} \Vert \cM_k(r^*, \bullet)\right)$.
            \item $b \gets \cA(\ket{\phi}, \ObfC)$.
        \end{itemize}

        \paragraph{$\cH_{1,1}$ :}
        Same as $\cH_{1,0}$ except:
        \begin{itemize}
            \item Replace 
            \[\ket{\psi} \to \ket*{\psi'} \coloneqq \frac{M[\theta_{H,i},f_{H,i}]\ket{\psi}}{ \|M[\theta_{H,i},f_{H,i}]\ket{\psi}\|},\]
            where $i = G(r^*)$.
        \end{itemize}

        \paragraph{$\cH_{1,2}$ :}
        Same as $\cH_{1,1}$ except:
        \begin{itemize}
            \item Replace $\overline{f}$ in $\cM_k(r^*, \bullet)$ with the zero function $f^* = 0$.
        \end{itemize}

        \paragraph{$\cH_{1,3}$ :}
        Same as $\cH_{1,2}$ except:
        \begin{itemize}
            \item Replace $\cM_k(r^*, \bullet)$ from $\cH_{1,2}$ with $C_{\mathsf{null}}$.
        \end{itemize}

        \paragraph{$\cH_{1,4}$ :}
        Same as $\cH_{1,3}$ except:
        \begin{itemize}
            \item Replace $\ket*{\psi'}$ defined in $\cH_{1,1}$ with $\ket{0^m}$.
        \end{itemize}
    
        \paragraph{$\cH_{1,5}$ :}
        Same as $\cH_2$ above for fixed $r^* \in \{0,1\}^\secp$:
        \begin{itemize}
            \item $\left(\ket{\phi},\{\CSA.\Ver_{k, \bullet} \Vert\cM_k(r, \bullet)\}_r\right) \gets \cS$.
            \item $\ObfC \gets \cO.\SimObf\left(1^{q(\secp^2,m)}, \pp, \td, \pred,\CSA.\Ver_{k, \bullet} \Vert C_{\mathsf{null}}\right)$.
            \item $b \gets \cA(\ket{\phi}, \ObfC)$.
        \end{itemize}
        
        By the strong completeness of the ZX verifier (\cref{thm:ZX-strong}), Gentle Measurement (\cref{lemma:gentle-measurement}), and the fact that $n \geq \secp$, we have that $\cH_{1,0}$ and $\cH_{1,1}$ are $2^{-3\secp/2}$-indistinguishable = $(\negl(\secp)/2^\secp)$-indistinguishable. 

        By the Measurement Indistinguishability property of CSA (\cref{thm:auth-security}), we have that $\cH_{1,1}$ and $\cH_{1,2}$ are $2^{-\Omega(\lambda^2)}$-indistinguishable = $(\negl(\secp)/2^\secp)$-indistinguishable.

        By the Simulated-Circuit Indistinguishability of the provable-obfuscation (\cref{def:provable-obfuscation}), we have that $\cH_{1,2}$ and $\cH_{1,3}$ are $2^{-\lambda^2}$-indistinguishable = $(\negl(\secp)/2^\secp)$-indistinguishable.

        By the Hiding property of CSA (\cref{lemma:csa-privacy}), we have that $\cH_{1,3}$ and $\cH_{1,4}$ are $2^{-\Omega(\lambda^2)}$-indistinguishable = $(\negl(\secp)/2^\secp)$-indistinguishable.
        
        $\cH_{1,4}$ and $\cH_{1,5}$ are identical by definition of $\cS$.
    \end{proof}

    By \cref{claim:nizk-if-evasive-comp} and the evasive composability property of the provable-obfuscation (\cref{def:provable-obfuscation}), we have that for $\cS$ (defined in \cref{claim:nizk-if-evasive-comp}), for any predicate $\pred$, and for any QPT adversary $\cA$, there exists a negligible function $\negl(\cdot)$ such that
    \begin{align*}
        &\bigg|\Pr_{\substack{(\pp, \td) \gets \Sim_0(1^\lambda) \\ \left(\ket{\phi},\{\CSA.\Ver_{k, \bullet} \Vert\cM_k(r, \bullet)\}_r\right) \gets \cS}}\left[\cA\left(\ket{\phi},\cO.\SimObf\left(1^{q(\secp^2,m)}, \pp, \td, \pred,\CSA.\Ver_{k, \bullet} \Vert \cM_k\right)\right) = 1\right]\\ 
        &- \Pr_{\substack{(\pp, \td) \gets \Sim_0(1^\lambda) \\ \left(\ket{\phi},\{\CSA.\Ver_{k, \bullet} \Vert\cM_k(r, \bullet)\}_r\right) \gets \cS}}\left[\cA\left(\ket{\phi},\cO.\SimObf\left(1^{q(\secp^2,m)}, \pp, \td,\pred,\CSA.\Ver_{k, \bullet} \Vert C_{\mathsf{null}}\right)\right) = 1\right]\bigg| \le \negl(\lambda).
    \end{align*}
    This implies that $\cH_1$ and $\cH_2$ are $\negl(\secp)$ indistinguishable, thus concluding the proof.
\end{proof}

Due to \cref{thm:provable-obfuscation}, we have the following corollaries.

\begin{corollary}
    Assuming functional encryption with subexponential security (\cref{def:FE}) and post-quantum NIZK arguments of knowledge for NP with a URS setup, there exists a NIZK argument of knowledge for QMA with a URS setup in the QPrO model (\cref{def:QPRO}).
\end{corollary}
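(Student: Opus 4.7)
The plan is to derive this as a direct corollary by plugging the provably-correct obfuscator from \cref{thm:provable-obfuscation} into the generic QMA NIZK construction from \cref{thm:AoK}. The assumed primitives give us exactly what \cref{thm:provable-obfuscation} requires, namely sub-exponentially secure 1-key functional encryption (\cref{def:FE}) and a post-quantum NIZK argument of knowledge for NP with URS setup (\cref{def:nizkpok-np}). Invoking that theorem yields a provably-correct obfuscation scheme in the QPrO model whose $\Setup$ outputs only a uniformly random string.

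Next, I would assemble the remaining ingredients required by \cref{thm:AoK}. The ZX verifier with strong completeness for an arbitrary QMA promise problem is given by \cref{thm:ZX-strong}; the publicly-verifiable coset state authentication scheme is \cref{def:CSA}, whose syntactic and indistinguishability properties were established in \cref{sec:csa}; and a post-quantum pseudorandom generator $G$ with polynomial stretch is implied by the existence of post-quantum one-way functions (which follows from either of the two main assumptions). These are all the building blocks that the construction in \cref{sec:protocol} needs.

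Finally, I would apply \cref{thm:AoK} to the above ingredients. That theorem immediately yields a non-interactive protocol for any QMA language satisfying completeness, non-adaptive computational zero-knowledge, and adaptive post-verified argument of quantum knowledge (which implies the standard notion, as noted in the remark following \cref{def:nizkpok-qma}). The URS property of the final protocol is inherited verbatim: the QMA $\Setup$ merely forwards the call to $\cO.\Setup$, and by \cref{thm:provable-obfuscation} the latter outputs a uniformly random string in the QPrO model. Hence the resulting scheme is a NIZK argument of knowledge for QMA with URS setup in the QPrO model.

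Since this is purely a composition of already-established theorems with no new security reductions, I do not anticipate a meaningful technical obstacle; the only point worth double-checking is that the sub-exponential security of the obfuscator required by \cref{thm:AoK} matches what \cref{thm:provable-obfuscation} delivers, which it does because the JLLW-based construction in \cref{sec:PROM} is itself sub-exponentially secure whenever the underlying FE and PRFs are. Thus the proof reduces to a one-line appeal to \cref{thm:AoK} and \cref{thm:provable-obfuscation}.
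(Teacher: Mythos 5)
Your proposal is correct and matches the paper's own (implicit) argument: the paper states this as an immediate consequence of \cref{thm:provable-obfuscation} plugged into \cref{thm:AoK}, with the ZX verifier, CSA scheme, and PRG supplied by \cref{thm:ZX-strong}, \cref{def:CSA}, and the existence of one-way functions respectively. Your observation that the URS property and the sub-exponential security parameter carry over from \cref{thm:provable-obfuscation} is exactly the point worth checking, and it goes through as you describe.
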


\begin{corollary}
    Assuming indistinguishability obfuscation satisfies $\cS$-evasive composability (\cref{def:provable-obfuscation}) for $\cS$ defined in \cref{claim:nizk-if-evasive-comp} and post-quantum NIZK proof of knowledge for NP with a CRS setup, there exists a NIZK proof of knowledge for QMA with a CRS setup.
\end{corollary}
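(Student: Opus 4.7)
The plan is to instantiate \cref{thm:AoK} with a provably-correct obfuscator built directly from the assumed $\iO$, wrapped by a statistical NIZK proof of knowledge for $\NP$ in the CRS model (which exists under the hypothesis). Let $(\NIZK.\Setup, \NIZK.\sP, \NIZK.\sV)$ be that NIZK. Define $\Setup(1^{\lambda_\ZK})$ to output $\crs \gets \NIZK.\Setup(1^{\lambda_\ZK})$; $\Obf(1^{\lambda_\OBF}, \crs, \pred, C)$ to compute $\widetilde{C} \gets \iO(1^{\lambda_\OBF}, C; r)$ for fresh $r$ and a NIZK proof $\pi$ for the $\NP$ statement ``there exist $(C, r)$ with $\widetilde{C} = \iO(1^{\lambda_\OBF}, C; r)$ and $\pred(C) = 1$'', outputting $(\widetilde{C}, \pi)$; $\Eval$ to run $\iO.\Eval$ on $\widetilde{C}$; and $\Ver$ to check $\pi$. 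The simulator $\SimGen$ outputs a NIZK-simulated $\crs$ together with the trapdoor, and $\SimObf$ runs $\iO$ honestly but simulates $\pi$. The knowledge extractor runs the NIZK PoK extractor and returns the extracted circuit $C$.

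The properties of \cref{def:provable-obfuscation} that do not rely on evasive composability are essentially immediate. Functionality-preservation and completeness follow from $\iO$ correctness and NIZK completeness. Statistical knowledge soundness follows from statistical extractability of the NIZK PoK: the extractor recovers $(C, r)$ satisfying $\widetilde{C} = \iO(1^{\lambda_\OBF}, C; r)$ and $\pred(C) = 1$, and $\iO$'s perfect correctness then forces $\iO.\Eval(\widetilde{C}, \cdot) \equiv C$. Honest-to-simulated indistinguishability reduces to NIZK zero-knowledge, and sub-exponential simulated-circuit indistinguishability reduces to sub-exponential $\iO$ security, since the simulated $\pi$ depends only on $\widetilde{C}$ and the trapdoor and can therefore be produced identically from either circuit.

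The main remaining step, and the one genuinely leveraging the hypothesis of the corollary, is transferring $\cS$-evasive composability from $\iO$ to $\SimObf$. Because the output of $\SimObf$ is $(\iO(\cdot), \NIZK.\Sim(\crs, \widetilde{C}))$ and $\NIZK.\Sim$ is an efficient classical algorithm depending only on $\widetilde{C}$ and the trapdoor, a reduction can inline $\NIZK.\Sim$ on either side. Thus any QPT adversary witnessing a failure of the precondition of evasive composability for $\SimObf$ on samples from $\cS$ yields a QPT adversary witnessing the analogous precondition failure for plain $\iO$ on $\cS$. Invoking the assumed $\cS$-evasive composability of $\iO$ then gives the conclusion for $\iO$, and applying $\NIZK.\Sim$ one more time lifts the conclusion back to $\SimObf$. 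With the provably-correct obfuscator in hand, \cref{thm:AoK} instantiated with the ZX verifier of \cref{thm:ZX-strong}, the CSA scheme of \cref{def:CSA}, and any post-quantum PRG (implied by $\iO$) yields the desired NIZK proof of knowledge for $\QMA$ with CRS setup.
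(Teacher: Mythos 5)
Your proposal is essentially the intended proof, and the construction you describe (a provably-correct obfuscator in the plain model built as $\iO$ together with a NIZK proof of knowledge for NP that the obfuscation was honestly formed from a circuit satisfying $\pred$) is the right instantiation. The paper only gestures at this corollary under the preamble ``Due to Theorem~\ref{thm:provable-obfuscation}'', which really only fits the QPrO corollary; the plain-model case needs exactly the construction you give and then an application of \cref{thm:AoK} with a ZX verifier from \cref{thm:ZX-strong}, the CSA scheme, and a PRG, producing a NIZK proof of knowledge for QMA with CRS setup. Your checks of functionality-preservation, completeness, statistical knowledge soundness (via the NIZK PoK extractor plus perfect $\iO$ correctness), honest-to-simulated indistinguishability, and simulated-circuit indistinguishability are all sound.

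The one place where the write-up slips is the direction of the first reduction in the transfer of $\cS$-evasive composability. To invoke $\iO$'s assumed evasive composability, you must first show that the $\iO$ precondition \emph{holds} given that the $\SimObf$ precondition holds; the reduction that achieves this is the ``stripping'' one, which converts an $\iO$-precondition breaker into a $\SimObf$-precondition breaker by discarding the simulated proof $\pi$. You instead state the ``wrapping'' reduction (a $\SimObf$-precondition breaker gives an $\iO$-precondition breaker by computing $\pi$ from $\iO(C)$ and the trapdoor), which is the implication needed only in the \emph{final} step lifting the $\iO$ postcondition to the $\SimObf$ postcondition. Since $\NIZK.\Sim$ is an efficient function of $\iO(C)$, $\pred$, and a freshly-samplable trapdoor, both directions hold and the preconditions (and postconditions) for $\iO$ and $\SimObf$ are interchangeable --- which you clearly understand given the remark that one ``can inline $\NIZK.\Sim$ on either side.'' So this is a misworded step, not a gap. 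One other remark: \cref{thm:AoK} together with the CSA lemmas implicitly require sub-exponentially secure $\iO$ (and injective one-way functions, available from $\iO$ plus OWF, which follow from the assumed NIZK); the corollary statement is silent on this, but your proposal inherits whatever precision the paper itself uses there.
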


\textbf{Acknowledgements.} RJ and KT were supported in part by AFOSR, NSF 2112890, NSF CNS-2247727 and a Google
Research Scholar award. 
This material is based upon work supported by the Air Force Office of
Scientific Research under award number FA9550-23-1-0543.
We thank Dakshita Khurana for her assistance in the early stages of the project.

\bibliographystyle{alpha}
\bibliography{bib/refs,bib/abbrev3,bib/crypto}

\addcontentsline{toc}{section}{References}
\appendix

\end{document}